\newif\ifconf
  \setlist[itemize]{leftmargin=*}
  \setlist[enumerate]{leftmargin=*}
  \crefname{step}{Step}{Steps}
  \crefname{item}{Item}{Items}
  \crefname{condition}{Condition}{Conditions}
  \crefname{enumi}{Step}{Steps}
  \crefname{case}{Case}{Cases}
\crefname{algorithm}{reduction}{reductions}
\Crefname{algorithm}{Reduction}{Reductions}
\theoremstyle{plain} 
\newtheorem{itheorem}{Theorem}
\newtheorem{ilemma}{Lemma}
\newtheorem{theorem}{Theorem}[section]
\newtheorem{lemma}[theorem]{Lemma}
\newtheorem{proposition}[theorem]{Proposition}
\newtheorem{claim}[theorem]{Claim}
\newtheorem{definition}[theorem]{Definition}
\newtheorem{corollary}[theorem]{Corollary}
\newtheorem{problem}{Problem}
\theoremstyle{definition} 
\newtheorem{construction}[theorem]{Construction}
\newtheorem{fact}[theorem]{Fact}
\newtheorem{remark}[theorem]{Remark}
\newtheorem{example}[theorem]{Example}
\theoremstyle{remark} %
\newcommand{\N}{\mathbb{N}}
\newcommand{\Field}{\mathbb{F}}
\newcommand{\Intersect}{\cap}
\newcommand{\Bits}{\{0,1\}}
\newcommand{\Code}{C}
\newcommand{\Dual}[1]{#1^\perp}
\newcommand{\DualCode}{\Dual\Code}
\newcommand{\Domain}{D}
\newcommand{\Subdomain}{I}
\newcommand{\SubdomainElt}{i}
\newcommand{\Codeword}{w}
\newcommand{\BlockLength}{\ell}
\newcommand{\Constraint}{z}
\newcommand{\ReedMuller}{\mathrm{RM}}
\newcommand{\SigReedMuller}{\Sigma\ReedMuller}
\newcommand{\MessageLength}{k}
\newcommand{\HammingDist}{\Delta}
\newcommand{\Message}{m}
\newcommand{\Enc}{\mathsf{ENC}}
\newcommand{\CodeSimulator}{\mathcal{S}}
\newcommand{\SigRMEnc}[2]{\Enc_{\SigmaRM(#1,#2)}}
\newcommand{\RMEnc}[2]{\Enc_{\ReedMuller(#1,#2)}}
\newcommand{\SumCode}[2]{\Sigma_{#1} #2}
\newcommand{\SumWord}[2]{\Sigma_{#1}[#2]}
\newcommand{\SigmaRM}{\SigReedMuller}
\newcommand{\Generator}{G}
\newcommand{\ParityCheck}{H}
\newcommand{\ZeroPoly}{\tilde{Z}}
\newcommand{\ProdPoly}{\vec{R}}
\newcommand{\Intset}{\Subdomain'}
\newcommand{\SumDomain}{\mathcal{A}}
\newcommand{\SumCompletion}{\bar{\SumDomain}}
\newcommand{\SumSet}{A}
\newcommand{\rev}{\mathsf{rev}}
\newcommand{\PrefixFree}{\mathsf{PrefixFree}}
\newcommand{\SymSets}{\mathsf{SymSets}}
\newcommand{\revvec}[1]{\vec{#1}_{\rev}}
\newcommand{\AntiSym}{\mathrm{AntiSym}}
\newcommand{\SigmaAntiSym}{\Sigma\AntiSym}
\newcommand{\Alphabet}{\Sigma}
\newcommand{\Prover}{\mathcal{P}}
\newcommand{\Verifier}{\mathcal{V}}
\newcommand{\MalVerifier}{\Verifier^*}
\newcommand{\Instance}{x}
\newcommand{\Proof}{\pi}
\newcommand{\MalProof}{\pi^*}
\newcommand{\Simulator}{\mathsf{Sim}}
\newcommand{\View}{\mathsf{View}}
\newcommand{\EncIn}{\Enc_{\text{in}}}
\newcommand{\EncOut}{\Enc_{\text{out}}}
\newcommand{\CLIn}{\CL_{\text{in}}}
\newcommand{\CLOut}{\CL_{\text{out}}}
\newcommand{\DomainIn}{\Domain_{\text{in}}}
\newcommand{\DomainOut}{\Domain_{\text{out}}}
\newcommand{\QueryIn}{\ell_{\text{in}}}
\newcommand{\QueryOut}{\ell_{\text{out}}}
\newcommand{\ConstraintsIn}{Z_{\text{in}}}
\newcommand{\ConstraintsOut}{Z_{\text{out}}}
\newcommand{\SubmessageIn}{R_{\text{in}}}
\newcommand{\SubmessageOut}{R_{\text{out}}}
\newcommand{\RandOut}{\mu_{\text{out}}}
\newcommand{\RandIn}{\mu_{\text{in}}}
\newcommand{\CD}{\mathsf{CD}}
\newcommand{\Uniform}{\mathcal{U}}
\newcommand{\Randomness}{\mu}
\newcommand{\TimeBound}{t}
\newcommand{\NumQuery}{\ell}
\newcommand{\QueryAnsSet}{T}
\newcommand{\Query}{\alpha}
\newcommand{\MsgSpace}{\mathcal{M}}
\newcommand{\CodeSpace}{\mathcal{C}}
\newcommand{\SearchSet}{S}
\newcommand{\SearchElt}{s}
\newcommand{\Accept}{\textsf{yes}}
\newcommand{\Reject}{\textsf{no}}
\newcommand{\SearchSetDim}{m}
\newcommand{\OutputSet}{T}
\newcommand{\SearchPath}{t}
\newcommand{\SearchSpace}{M}
\newcommand{\SearchAlg}{\mathrm{Search}}
\newcommand{\SubsetField}{S}
\newcommand{\SubsetTuple}{\mathcal{S}}
\newcommand{\HRoot}{z}
\newcommand{\CL}{\mathsf{CL}}
\newcommand{\SubConstraintMatrix}{Z}
\newcommand{\ConstraintMatrix}{Z'}
\newcommand{\CoefficientMatrix}{A_{\Subdomain}}
\newcommand{\BasisVecs}[1]{b_1, \dots, b_{#1}}
\newcommand{\Basis}{\mathcal{B}}
\newcommand{\BasisElt}{b}
\newcommand{\ZeroMatrix}{\mathbf{0}}
\newcommand{\VanishingFunc}{\mathcal{Z}}
\newcommand{\Degree}{d}
\newcommand{\NumVars}{m}
\newcommand{\Subcube}{H}
\newcommand{\SumVal}{\gamma}
\newcommand{\RandSumVal}{\kappa}
\newcommand{\SCPoly}{F}
\newcommand{\Polys}[3]{#1^{\leq #2}[X_{1},\ldots,X_{#3}]}
\newcommand{\RandPoly}{Q}
\newcommand{\RandLDPoly}{T}
\newcommand{\HPt}{w}
\newcommand{\HptB}{x}
\newcommand{\UndetPoly}{p}
\newcommand{\UnconPoly}{q}
\newcommand{\LagrangePoly}[1]{L_{#1}}
\newcommand{\Point}{w}
\newcommand{\ZCode}[2]{\mathbf{Z}_{#1}(#2)}
\newcommand{\ZVec}{\vec{0}}
\newcommand{\SmallElt}{u}
\newcommand{\SmallSubdomain}{U}
\newcommand{\BigElt}{v}
\newcommand{\BigSubdomain}{V}
\newcommand{\ExtElt}{x}
\newcommand{\VLen}[1]{\Length(\vec #1)}
\newcommand{\Image}{\mathrm{Im}}
\newcommand{\Length}{\ell}
\DeclareMathOperator{\colspan}{colspan}
\DeclareMathOperator{\Span}{span}
\newcommand{\Language}{\mathcal{L}}
\newcommand{\OracleTable}{\QueryAnsSet}
\newcommand{\MsgDomain}{H}
\newcommand{\Oracle}{\mathcal{O}}
\newcommand{\tdash}{\text{-}}
\newcommand{\NEXP}{\mathsf{NEXP}}
\newcommand{\NP}{\mathsf{NP}}
\newcommand{\BPP}{\mathsf{BPP}}
\newcommand{\SharpP}{\#\mathsf{P}}
\newcommand{\PZK}{\mathsf{PZK}}
\newcommand{\SZK}{\mathsf{SZK}}
\newcommand{\CZK}{\mathsf{CZK}}
\newcommand{\PCP}{\mathsf{PCP}}
\newcommand{\PZKPCP}{\PZK\tdash\PCP}
\newcommand{\SharpSAT}{\#\mathsf{SAT}}
\newcommand{\Union}{\cup}
\newcommand{\Poly}{\mathrm{poly}}
\newcommand{\supp}{\mathsf{supp}}
\newcommand{\eqdef}{\ {:=} \ }
\newcommand{\Alg}{\mathcal{A}}
\newcommand{\ConstrucSpacing}{\vspace{2mm}\newline}
\newcommand{\defemph}[1]{\textbf{\emph{#1}}}
\title{Perfect Zero-Knowledge PCPs for \texorpdfstring{$\SharpP$}{\#P}}
\author{Tom Gur \thanks{University of Cambridge. Email: \texttt{tom.gur@cl.cam.ac.uk}.}
\and Jack O'Connor\thanks{University of Cambridge. Email: \texttt{jo496@cam.ac.uk}.}
\and Nicholas Spooner\thanks{University of Warwick/NYU. Email: \texttt{nicholas.spooner@warwick.ac.uk}.}}
\date{\today}
\date{}
\begin{document}

\maketitle

\begin{abstract}
    \noindent We construct perfect zero-knowledge probabilistically checkable proofs (PZK-PCPs) for every language in $\SharpP$. This is the first construction of a PZK-PCP for any language outside $\BPP$. Furthermore, unlike previous constructions of (statistical) zero-knowledge PCPs, our construction simultaneously achieves non-adaptivity and zero knowledge against arbitrary (adaptive) polynomial-time malicious verifiers.

    Our construction consists of a novel masked sumcheck PCP, which uses the combinatorial nullstellensatz to obtain antisymmetric structure within the hypercube and randomness outside of it.  To prove zero knowledge, we introduce the notion of locally simulatable encodings: randomised encodings in which every local view of the encoding can be efficiently sampled given a local view of the message. We show that the code arising from the sumcheck protocol (the Reed--Muller code augmented with subcube sums) admits a locally simulatable encoding. This reduces the algebraic problem of simulating our masked sumcheck to a combinatorial property of antisymmetric functions.
\end{abstract}

\newpage
\tableofcontents
\newpage

\section{Introduction}
The notion of a \emph{zero-knowledge (ZK) proof}, a proof that conveys no information except the truth of a statement, is one of the most influential ideas in cryptography and complexity theory of the past four decades.

Zero knowledge was originally defined by Goldwasser, Micali and Rackoff \cite{GoldwasserMR89} in the context of interactive proofs (IPs). The deep and beautiful insight in this work is that it is possible to rigorously \emph{prove} that an interaction does not convey information, by exhibiting an efficient algorithm called the \emph{simulator} which can generate the distribution of protocol transcripts without interacting with the prover. In that work, the authors identify three different notions of zero knowledge, depending on the quality of the simulation. These are: \begin{inparaenum}[(1)]
 		\item \emph{perfect} zero knowledge (PZK), where the simulator's distribution is \emph{identical} to the real distribution of transcripts;
 		\item \emph{statistical} zero knowledge (SZK), where the distributions are inverse-exponentially close; and
 		\item \emph{computational} zero knowledge (CZK), where the distributions are computationally indistinguishable.
 	\end{inparaenum}

This hierarchy led naturally to the study of the complexity classes $\PZK,\SZK,\CZK$ of languages admitting ZK interactive proofs. Seminal results show that $\PZK$ contains interesting ``hard'' languages, including quadratic residuosity and nonresiduosity \cite{GoldwasserMR89} and graph isomorphism and nonisomorphism \cite{GoldreichMW91}. Despite this, the structure of the class $\PZK$, and its relation to other complexity classes, remains poorly understood. In light of this difficulty, the study of zero knowledge has followed two main routes: \begin{inparaenum}[(1)]
 		\item studying the ``relaxed'' notions of $\SZK$ and $\CZK$, and
 		\item studying zero knowledge in other models.
 	\end{inparaenum}

The former line of work has proved highly fruitful. A seminal result of \cite{GoldreichMW91} showed that $\CZK = \mathsf{IP} = \mathsf{PSPACE}$, which launched the cryptographic study of ZK proofs, yielding a plethora of theoretical and practical results \cite{thaler2022proofs}. The study of $\SZK$ has revealed that this class has a rich structure, and deep connections within complexity theory (cf.\ \cite{vadhan1999study}).

The present work lies along the second route, also hugely influential. The seminal work of Ben-Or, Goldwasser, Kilian, and Wigderson \cite{BenOrGKW88} introduced the model of multi-prover interactive proofs (MIP) in order to achieve perfect zero-knowledge without any computational assumptions \cite{LapidotS95, DworkFKNS92}. This in turn inspired some of the most important models and results in contemporary theoretical computer science, including entangled-prover interactive proofs (MIP*), interactive oracle proofs (IOPs), and most notably, probabilistically checkable proofs (PCPs). Indeed, the celebrated PCP theorem \cite{ALMSS92,AroraS03,Din07} is widely recognised as one of the most important achievements of modern complexity theory \cite{aharonov2013guest}.

\parhead{Zero-knowledge PCPs.}
Recall that a PCP is a proof which can be verified, with high probability, by a verifier that only makes a small number of queries (even $O(1)$) to the proof (cf. \cite{Arora:2009:CCM:1540612}). A zero-knowledge PCP is a \emph{randomised} proof which, in addition to being locally checkable, satisfies a zero knowledge condition: the view of any efficient (malicious) verifier can be efficiently simulated (see \cite{Weiss22} for a survey). Similarly to the IP case, we can distinguish PCPs with perfect, statistical and computational zero knowledge. Note that there appears to be no formal relationship between ZK-IPs and ZK-PCPs: standard transformations from IP to PCP spoil zero knowledge.

The first \emph{statistical} zero-knowledge PCPs appeared in the seminal work of Kilian, Petrank and Tardos \cite{KilianPT97}. Later works \cite{IshaiMS12,IshaiW14} simplified this construction, and extended it to PCPs of proximity. These constructions operate via a 2-step compilation of (non-zero-knowledge) PCP: first the PCP is endowed with honest-verifier statistical zero-knowledge (HVSZK), and then the latter property is boosted into full statistical zero knowledge by ``forcing'' the malicious verifier's query pattern to be similarly distributed to that of the honest verifier, using an object called a locking scheme. Kilian, Petrank and Tardos use this transformation to prove that $\NEXP$ has SZK-PCPs (with a polynomial-time verifier) that are zero knowledge against all polynomial-time malicious verifiers; i.e., $\SZK\text{-}\PCP[\Poly,\Poly] = \NEXP$.

Unfortunately, locking schemes have two inherent drawbacks: they require the honest PCP verifier to be adaptive and they cannot achieve perfect zero knowledge. Another line of work, motivated by cryptographic applications, focuses on obtaining SZK-PCPs for NP with a \emph{non-adaptive} honest verifier from leakage resilience. These results come with caveats, achieving either a weaker notion of zero knowledge known as \emph{witness indistinguishability} \cite{IshaiWY16}, or simulation against adversaries making only quadratically many more queries than the honest verifier \cite{HazayVW22}.

The complexity landscape of perfect zero knowledge is subtle. As discussed above, we know that $\mathsf{PZK\text{-}MIP} = \mathsf{MIP} = \NEXP$ \cite{BenOrGKW88}, where $\mathsf{MIP}$ is the class of languages with a multi-prover interactive proof. The quantum analogue of this result, $\mathsf{PZK\text{-}MIP}^* = \mathsf{MIP}^* = \mathsf{RE}$, is also known to hold \cite{ChiesaFGS22,GriloSY19}. We know a similar result for interactive PCPs (IPCPs) \cite{KalaiR08}, an interactive generalisation of PCPs (and special case of IOPs): $\mathsf{PZK\text{-}IPCP} = \mathsf{IPCP} = \mathsf{NEXP}$ \cite{ChiesaFS17}. For IPs, the picture is very different. We know that $\PZK \subseteq \SZK \subseteq \mathsf{AM} \cap \mathsf{coAM}$ \cite{Fortnow87,AielloH91}, and so it is unlikely that $\PZK = \mathsf{IP}$ ($= \mathsf{PSPACE}$), or even $\NP \subseteq \PZK$. It is unknown whether $\PZK = \SZK$; indeed, it was recently shown that there is an oracle relative to which this equality does not hold \cite{BoulandCHTV20}.

 In contrast, \emph{nothing at all} is known about the class $\PZKPCP$ except for the trivial inclusion $\mathsf{BPP} \subseteq \PZKPCP$. In particular, the following key question in the study of perfect zero knowledge remains open:
\begin{center}
    \emph{Do perfect zero-knowledge PCPs exist for \textbf{any} non-trivial language?}
\end{center}

\subsection{Our results}
In this work, we give strong positive answer to this question, showing that there exist perfect zero-knowledge PCPs (PZK-PCPs) for all of $\SharpP$.

\begin{itheorem}[Informally stated, see \cref{thm:main}]
\label{thm:main_informal}
    $\SharpP \subseteq \PZKPCP[\Poly,\Poly]$.
\end{itheorem}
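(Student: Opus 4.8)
The plan is to instantiate the three ingredients from the abstract in sequence. \emph{First}, reduce an arbitrary $\SharpP$ language to a single ``sumcheck'' promise problem: given (a $3$CNF, or Boolean circuit, whose arithmetisation is) a low-degree polynomial $\phi \colon \Field^m \to \Field$ that the verifier can evaluate in polynomial time, together with a target $N \in \Field$, decide whether $\sum_{x \in \{0,1\}^m} \phi(x) = N$. This is the classical arithmetisation underlying $\SharpP \subseteq \mathsf{IP}$; take $\Field$ large enough (say $|\Field| = \Poly(m)$) that the counts fit and that Schwartz--Zippel gives negligible error.

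\emph{Second}, recall the non-zero-knowledge PCP obtained by turning the sumcheck interaction into an oracle: the honest proof is the $\SigmaRM$ codeword of $\phi$, i.e.\ the Reed--Muller encoding of $\phi$ together with, for every prefix $(\alpha_1,\dots,\alpha_j) \in \Field^j$, the subcube sum $\sum_{x_{j+1},\dots,x_m \in \{0,1\}} \phi(\alpha_1,\dots,\alpha_j,x_{j+1},\dots,x_m)$. The verifier runs a (non-adaptive, line--point) low-degree test on the polynomial part and, at one uniformly random point, checks the $m$ recursive sumcheck identities relating consecutive layers of subcube sums, plus the top identity ``$=N$''. Soundness is the usual sumcheck argument composed with low-degree-test soundness; completeness and non-adaptivity are immediate. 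To leave room for zero knowledge I will actually have the prover send \emph{two} $\SigmaRM$ codewords --- one for $\phi + Z$ and one for the mask $Z$ --- with a random-point check that their difference equals the public polynomial $\phi$, and a direct check that the top subcube sum of the $Z$-part is $0$; this forces $\sum_x (\phi+Z)(x) = \#\varphi$ without the verifier ever having to know $\#\varphi$.

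\emph{Third}, and this is the crux, design the distribution of the mask $Z$ so that the resulting encoding is locally simulatable. I want $Z$ to be a random low-degree polynomial with: (i) $\sum_{x\in\{0,1\}^m} Z(x)=0$, so completeness and soundness are untouched; (ii) every evaluation of $Z$ at a point outside $\{0,1\}^m$ uniform, so low-degree-test views are trivially simulatable; and (iii) for any polynomially large set of query prefixes, the corresponding subcube sums of $Z$ jointly uniform even conditioned on (i) and on the degree constraints. Requirements (i) and (iii) are in tension: $Z\big|_{\{0,1\}^m}$ already determines \emph{all} subcube sums, so the entropy of the subcube-sum vector is capped by that of $Z\big|_{\{0,1\}^m}$, which must itself be constrained. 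The resolution --- the ``antisymmetric structure within the hypercube'' --- is to force $Z\big|_{\{0,1\}^m}$ to live in a carefully chosen space of antisymmetric functions on which the subcube-sum map stays generic, while letting $Z$ be fully random off the cube. The combinatorial nullstellensatz enters in its coefficient form: it certifies that a low-degree polynomial with prescribed antisymmetric values on $\{0,1\}^m$ and prescribed values at the off-grid query points exists, and that the linear map from mask-space to (subcube-sum answers, off-grid answers) is surjective with uniform fibres. Abstracting this as a combinatorial property of antisymmetric functions lets me describe the encoding simulator purely combinatorially: given an adaptively chosen query set it samples off-grid answers uniformly, samples subcube-sum answers from the efficiently samplable conditional distribution guaranteed by the antisymmetry property, and --- since $\phi$ is public and, on a YES instance, $N=\#\varphi$ --- adds known quantities and plants $N$ at the top to produce the $(\phi+Z)$-codeword's answers. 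Efficiency of this simulator and a polynomial bound on the ``message footprint'' $S(Q)$ of a query set $Q$ give the locality.

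\emph{Finally}, assemble: the PZK-PCP for the sumcheck problem is the double-$\SigmaRM$ proof with the sumcheck-PCP verifier of Step~2; completeness and negligible soundness are inherited, the honest verifier is non-adaptive, and perfect zero knowledge against an arbitrary adaptive polynomial-query malicious verifier follows by running the encoding simulator query-by-query, each step drawing the next answer from the correct conditional distribution. Composing with Step~1 yields $\SharpP \subseteq \PZKPCP[\Poly,\Poly]$. I expect the main obstacle to be Step~3: identifying the right notion of ``antisymmetric structure'' so that the subcube-sum map is simultaneously (a) compatible with summing to zero over the cube, (b) generic enough that polynomially many subcube-sum queries see perfectly uniform answers, and (c) equipped with an \emph{efficient} conditional sampler with polynomial locality --- the combinatorial nullstellensatz should supply existence and genericity, but getting all of (a)--(c) at once is the delicate point.
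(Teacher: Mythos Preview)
Your Step~2 contains a fatal leak. If the prover sends \emph{two} $\SigmaRM$ codewords, $\SumWord{}{\phi+Z}$ and $\SumWord{}{Z}$, then a malicious verifier queries both at the same prefix $(c_1,\ldots,c_i)$ and subtracts, recovering $\SumWord{}{\phi}(c_1,\ldots,c_i) = \sum_{\vec a \in \{0,1\}^{m-i}} \phi(c_1,\ldots,c_i,\vec a)$ exactly. These partial counts are $\SharpP$-hard to compute from $(\phi,N)$, so no polynomial-time simulator can match them. This is precisely the trap the paper's Techniques section flags: any attempt to \emph{verify} that $\sum_{\vec a} Z(\vec a)=0$ via a separate sumcheck-style proof of $Z$ lets the verifier cancel the mask. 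The paper's resolution is to never check this sum at all. Instead the prover sends the \emph{building blocks} of the mask, namely evaluation tables of $Q$ and $T_1,\ldots,T_m$, and defines $R(\vec X) = Q(\vec X) - Q(\revvec X) + \sum_i \VanishingFunc_H(X_i)\,T_i(\vec X)$. The antisymmetry $Q-Q_\rev$ forces $\sum_{\vec a \in H^m} R(\vec a)=0$ \emph{by construction}, so soundness needs only a low-degree test on $Q$ and the $T_i$; the verifier can evaluate $R$ at any single point with $m+2$ queries but cannot compute any subcube sum of $R$ without exponentially many.

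Your Step~3 has the right ingredients but underestimates the hard case. Queries to $\Pi_\Sigma$ are indexed by $\Field^{\le m}$, not $\{0,1\}^{\le m}\cup \Field^m$: a prefix $(c_1,\ldots,c_i)$ with some $c_j\notin\{0,1\}$ carries low-degree structure in the first $i$ variables \emph{and} a subcube sum over the last $m-i$, and you must simulate many such queries jointly. The paper handles this via a structural theorem (\cref{theorem:sigma-rm-dual-z}) decomposing $\Dual{(\SigmaRM|_S)}$ over any closed $S$ into per-level Reed--Muller duals plus the obvious summation constraints, which reduces the problem to, for each level $i$, efficiently locating the ``bad'' hypercube points determined by $S\cap\Field^i$ in $\ReedMuller[\Field,i,\Degree]$. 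That in turn requires an efficient search-to-decision reduction for Reed--Muller constraint location (\cref{sec:search-to-decision}), not just the existence statement your nullstellensatz argument sketches. None of this is visible from the surjectivity-with-uniform-fibres picture you describe.
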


We prove \cref{thm:main_informal} by constructing a PZK-PCP for the $\SharpP$-complete language
\[ \SharpSAT = \{ (\Phi,N) : \text{$\Phi$ is a CNF, } \sum_{x \in \Bits^n} \Phi(x) = N \} . \]
This is the first construction of a PZK-PCP for a language (believed to be) outside $\mathsf{BPP}$. Furthermore, unlike previous constructions of zero-knowledge PCPs, our construction simultaneously achieves \emph{non-adaptivity} for the honest verifier and zero knowledge against arbitrary (adaptive) polynomial-time malicious verifiers. We stress that \cref{thm:main_informal} is unconditional and does not rely on any cryptographic assumptions. 

\begin{remark}
    As with its IP counterpart $\PZK$, $\PZKPCP$ is a class of \emph{decision} problems, and so the inclusion $\SharpP \subseteq \PZKPCP$ refers to the decision version of $\SharpP$ (for which $\SharpSAT$ is complete). This class contains $\mathsf{UP}$ and $\mathsf{coNP}$ (in a natural way) but may not contain $\mathsf{NP}$. Toda's theorem, that $\mathsf{PH} \subseteq \mathsf{P}^{\SharpP}$, does not directly imply that $\mathsf{PH} \subseteq \PZKPCP$, as the $\SharpP$ oracle in this inclusion is a function oracle.
\end{remark}

On the way to proving \cref{thm:main_informal}, we solve the following general algebraic-algorithmic problem, which we consider to be of independent interest.

\begin{problem}[Local simulation of low-degree extensions]
	\label{problem:ldes}
	Given oracle access to a function $f \colon \Bits^\NumVars \to \Field$, where $\Field$ is a finite field, efficiently simulate oracle access to a uniformly random \defemph{degree-$d$ extension} of $f$; that is, a function $\tilde{f}$ drawn uniformly at random from the set of $\NumVars$-variate polynomials over $\Field$ of individual degree $d$ that agree with $f$ on $\Bits^\NumVars$.
\end{problem}

By ``efficient'' we mean polynomial in $\NumVars$, $\Degree$ and $\log |\Field|$. Chen et al. \cite{ChenCGOS23} give a \emph{query-efficient} simulator for $d \geq 2$, based on observations of Aaronson and Wigderson \cite{AaronsonW09}. (For $d = 1$, there is a query lower bound of $2^\NumVars$ \cite{JumaKRS09}.) In this work we give a \emph{computationally} efficient simulator for $d \geq 2$. 

\begin{itheorem}[informal, see \cref{thm:rm-constraint-detector}]
	There is an efficient algorithm solving \cref{problem:ldes} for $d \geq 2$.
\end{itheorem}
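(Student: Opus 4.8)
\medskip\noindent\emph{Proof idea.}~
The plan is to build an efficient \emph{constraint detector} for the code $\CL$ whose codewords are the evaluation tables over $\Field^\NumVars$ of the degree-$d$ extensions of $f$, and then to convert it into a perfect local simulator by the standard lazy-sampling paradigm: process the (adaptive) queries $\QueryPoint^{(1)},\QueryPoint^{(2)},\dots$ one at a time, maintaining a basis for the local linear constraints of $\CL$ on the points seen so far; when a fresh point arrives, answer with the value it is forced to take if it is determined by those constraints, and with a uniformly random element of $\Field$ otherwise. To describe $\CL$ concretely, let $\hat f$ be the multilinear extension of $f$. Reducing the powers $X_i^{e}$ with $e\ge 2$ modulo $X_i^2-X_i$ one variable at a time shows that every individual-degree-$d$ polynomial agreeing with $f$ on $\Bits^\NumVars$ can be written uniquely as
\[
\tilde f \;=\; \hat f \;+\; \sum_{i=1}^{\NumVars}(X_i^2-X_i)\,g_i,
\]
where $g_i$ has individual degree at most $1$ in $X_1,\dots,X_{i-1}$, at most $d-2$ in $X_i$, and at most $d$ in $X_{i+1},\dots,X_\NumVars$ (the fully reduced part is multilinear and agrees with $f$ on the cube, hence equals $\hat f$), and a dimension count ($\sum_{i=1}^{\NumVars}2^{i-1}(d-1)(d+1)^{\NumVars-i}+2^{\NumVars}=(d+1)^{\NumVars}$) shows $(g_1,\dots,g_\NumVars)\mapsto\tilde f$ is a bijection. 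Hence a uniformly random degree-$d$ extension of $f$ is $\hat f$ plus $\sum_i(X_i^2-X_i)g_i$ for independent uniform coefficient vectors $g_i$; equivalently, $\CL$ is a coset $\hat f+\CL_0$ of the linear code $\CL_0$ of individual-degree-$d$ polynomials vanishing on $\Bits^\NumVars$, whose dual is $\Dual{\CL_0}=\Dual{\ReedMuller}+\Span\{\,e_b:b\in\Bits^\NumVars\,\}$, where $\ReedMuller$ is the individual-degree-$d$ code and $e_b$ the indicator of the cube point $b$.

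The constraint detector must output, for a query set $\Subdomain\subseteq\Field^\NumVars$, a basis for the dual words $z$ of $\CL_0$ with $\supp(z)\subseteq\Subdomain$ together with the constant $\langle z,\hat f\rangle$ to which the corresponding constraint pins the answers. Decompose $z=z_{\ReedMuller}+z_{\mathsf{cube}}$ with $z_{\ReedMuller}\in\Dual{\ReedMuller}$ and $z_{\mathsf{cube}}$ supported on the cube. Since $\hat f$ is a Reed--Muller codeword and agrees with $f$ on $\Bits^\NumVars$,
\[
\langle z,\hat f\rangle \;=\; \langle z_{\ReedMuller},\hat f\rangle+\langle z_{\mathsf{cube}},\hat f\rangle \;=\; 0+\langle z_{\mathsf{cube}},f\rangle,
\]
so the constant of every local constraint depends only on the values of $f$ at the cube points in $\supp(z_{\mathsf{cube}})$. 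Off the cube $z$ equals $z_{\ReedMuller}$, so the local constraints of $\CL_0$ on $\Subdomain$ are generated by the restrictions to $\Subdomain$ of the Reed--Muller dual words supported on $\Subdomain\cup\Closure(\Subdomain)$, where $\Closure(\Subdomain)\subseteq\Bits^\NumVars$ is the set of cube points those dual words are allowed to touch. The detector therefore (i) computes a polynomial-size superset of $\Closure(\Subdomain)$; (ii) runs a constraint detector for $\ReedMuller$ on $\Subdomain\cup\Closure(\Subdomain)$, which is efficient by recursing on the number of variables via the tensor-product structure $\ReedMuller=\mathrm{RS}^{\otimes\NumVars}$; and (iii) restricts the resulting dual words to $\Subdomain$ and reads the constants off the cube coordinates via the displayed identity, querying $f$ only at the (polynomially many) points of $\Closure(\Subdomain)$. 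Feeding this detector into the lazy-sampling simulator yields, by the coset structure of $\CL$, a distribution on the answers that is \emph{identical} to that of a uniformly random degree-$d$ extension of $f$ on those points -- not merely statistically close -- so the simulation is perfect.

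The heart of the argument, and the only place where $d\ge 2$ enters, is step~(i): that $\Closure(\Subdomain)$ can be taken polynomial in $|\Subdomain|$, $\NumVars$ and $d$, and computed efficiently. For $d=1$ the code $\CL_0$ is trivial, $\CL=\{\hat f\}$, and $\hat f(\QueryPoint)$ at a single off-cube point is the fixed combination $\sum_{b\in\Bits^\NumVars}\chi_b(\QueryPoint)f(b)$ with generically all $2^\NumVars$ coefficients nonzero, so $\Closure(\{\QueryPoint\})$ is the entire cube -- matching the $2^\NumVars$-query lower bound of \cite{JumaKRS09}. For $d\ge2$, a local constraint on $\Subdomain$ can ``reach'' a coordinate subspace $\{x_J=b^*_J\}$ only if $\Subdomain$ contains roughly $(d+1)^{\NumVars-|J|}$ of its points, because the degree-$d$ restriction to that subspace is not determined by its $2^{\NumVars-|J|}$ cube corners; hence when $|\Subdomain|$ is polynomial only subspaces of co-dimension $\NumVars-O(\log|\Subdomain|)$ are relevant, and their cube corners -- which is all that $\Closure(\Subdomain)$ can contain -- number only polynomially many. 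We expect the clean way to make this rigorous, and simultaneously to extract the algorithm computing $\Closure(\Subdomain)$, is an induction on $\NumVars$ mirroring the variable-by-variable decomposition of $\tilde f$ above. (That $\Closure(\Subdomain)$ is polynomial-size is in fact implied information-theoretically by the query-efficient simulator of \cite{ChenCGOS23}; the new content here is carrying out all of the above in polynomial \emph{time}, which requires exploiting the tensor structure of $\ReedMuller$ rather than manipulating its exponentially many coordinates.)
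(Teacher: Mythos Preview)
Your framework is correct and matches the paper: lazy sampling via a constraint locator, the coset description $\hat f + \CL_0$ with $\CL_0 = \ZCode{\Bits^\NumVars}{\ReedMuller}$, the combinatorial-nullstellensatz decomposition of $\CL_0$, the identity $\langle z,\hat f\rangle = \langle z_{\mathsf{cube}},f\rangle$, and the use of the \cite{BenSassonCFGRS17} Reed--Muller constraint detector on $\Subdomain\cup\Closure(\Subdomain)$. All of this appears in the paper (Sections~5.1--5.2 and the reduction from constraint location to local simulation in \cref{claim:cl-implies-local-sim}).

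The gap is exactly where you flag it: step~(i), computing $\Closure(\Subdomain)$ in polynomial time. You admit this is ``the heart of the argument'' and offer only a sketch (``induction on $\NumVars$'') plus an appeal to \cite{ChenCGOS23} for the size bound. The paper's solution is not an induction on $\NumVars$; it is a \emph{search-to-decision reduction} that you are missing. First, they give an efficient decision procedure: $\Subdomain\cup\SubsetTuple$ is constrained with respect to $\ReedMuller$ iff $\Subdomain\setminus\SubsetTuple$ is constrained with respect to $\ZCode{\SubsetTuple}{\ReedMuller}$ (\cref{lem:general-zcode-constraint-implies-constraint}), and the latter can be tested using the nullstellensatz-based detector of \cref{construc:rm-zero-constraint-locator}. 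Then they binary-search over subcubes $A = \{a_1\}\times\cdots\times\{a_i\}\times\Bits^{\NumVars-i}$ for bad points. The subtlety you do not address is that the naive test ``is $\LD[Z_A,d]|_\Subdomain = \ReedMuller|_\Subdomain$?'' is \emph{not} equivalent to ``does $A$ contain a bad point'': a bad $w\in A$ may only be visible via a constraint that also touches cube points outside $A$. The paper's fix (\cref{lem:constrained-implies-determined}) is to drop the degree: test instead whether $A$ contains a point determined by $\Subdomain$ with respect to $\ReedMuller[\Field,\NumVars,\vec\Degree']$ where $\Degree_i' = \Degree_i - (|S_i|-1)$. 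This test has no false negatives (if $w\in A$ is bad at degree $\vec\Degree$, multiply the witness by a Lagrange factor to see it is determined at degree $\vec\Degree'$), and the number of false positives is still $\le|\Subdomain|$ by the Aaronson--Wigderson bound applied at degree $\vec\Degree'$ (this is where $\Degree\ge 2$, i.e.\ $\Degree'_i\ge|S_i|-1$, is used). Your heuristic about coordinate subspaces of codimension $\NumVars - O(\log|\Subdomain|)$ does not capture this: bad points need not be organised by coordinate subspaces (they can arise from curves through $\Subdomain$), and the binary search only works because of the degree-drop trick.
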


\subsection{Acknowledgements}
The authors thank Alessandro Chiesa, Hendrik Waldner and Arantxa Zapico for helpful discussions. TG is supported by UKRI Future Leaders Fellowship MR/S031545/1, EPSRC New Horizons Grant EP/X018180/1, and EPSRC RoaRQ Grant EP/W032635/1.

\section{Techniques}
We start by outlining how to construct (non-ZK) PCPs for $\SharpSAT$ via the sumcheck protocol \cite{LundFKN92}. Recall that the prover and verifier in the sumcheck protocol have oracle access to an $\NumVars$-variate polynomial $P$ (the arithmetisation of $\Phi$) of individual degree $\Degree$ over a finite field $\Field$, and the prover wishes to convince the verifier that $\sum_{\vec a \in \{0,1\}^m} P(\vec a) = \gamma$ for some $\gamma \in \Field$. This is achieved via a $2\NumVars$-message protocol, in which for every $i \in [\NumVars]$,
\begin{itemize}
	\item the $2i$-th message, sent by the verifier, is a uniformly random challenge $c_i \in \Field$; and
	\item the $(2i-1)$-th message, sent by the prover, is the univariate polynomial \[g_i(X) = \sum_{\vec a \in \{0,1\}^{\NumVars-i}} P(c_1,\ldots,c_{i-1},X,a_{i+1},\ldots,a_\NumVars).\]
\end{itemize}
The verifier accepts if
\begin{inparaenum}[(a)]
	\item $g_1(0) + g_1(1) = \gamma$; 
	\item for each $1 \leq i \leq \NumVars-1$, $g_i(c_i) = g_{i+1}(0) + g_{i+1}(1)$; and
	\item $g_\NumVars(c_\NumVars) = P(c_1,\ldots,c_\NumVars)$.
\end{inparaenum}
This protocol is complete and sound for sufficiently large $\Field$. Moreover, we can ``unroll'' the interactive sumcheck protocol into an (exponentially large) PCP by writing down the prover's answers to each possible (sub-)sequence of challenges.

Unfortunately, the sumcheck PCP is \emph{clearly not zero knowledge}: even the prover's first message $g_1$ reveals information about $P$ which is $\SharpP$-hard to compute. In this overview, we will explain how to modify the sumcheck PCP in order to achieve perfect zero knowledge.

\paragraph{Zero knowledge IOPs.}
Our approach to constructing a PZK-PCP for sumcheck is inspired by the \cite{BenSassonCFGRS17} construction of a perfect zero knowledge sumcheck in the \emph{interactive oracle proof} (IOP) model. The IOP model is an interactive generalisation of PCPs, in which the prover and verifier interact across multiple rounds, with the prover sending a PCP oracle in each round.

In their IOP, the prover first sends the evaluation table of a uniformly random $\NumVars$-variate polynomial $R$ such that $\sum_{\vec a \in \{0,1\}^\NumVars} R(\vec a) = 0$, which will be used as a mask. Next, to ensure soundness in case $R$ does not sum to zero, the verifier sends a challenge $\alpha$. Finally, the prover sends a sumcheck PCP for the statement $\sum_{\vec a \in \{0,1\}^\NumVars} \alpha P(\vec a) + R(\vec a) = \alpha \gamma$.

Intuitively, this protocol does not leak much information about $P$, because $\alpha P + R$ is a \emph{uniformly random} polynomial that sums to $\alpha \gamma$. \cite{BenSassonCFGRS17} shows, using techniques from algebraic complexity theory, that this IOP is indeed perfect zero knowledge; we will make use of these techniques later in our construction.

Unfortunately, the interaction in this protocol is crucial in order to balance soundness and zero knowledge. Indeed, one could imagine ``unrolling'' this IOP into a PCP in the same way as for sumcheck itself: writing down, for each $\alpha \in \Field$, a sumcheck PCP $\Pi_\alpha$ for $\alpha P + R$. This preserves soundness, but completely breaks zero knowledge: since the sumcheck PCP is a linear function of the underlying polynomial, given any \emph{two} $\Pi_\alpha,\Pi_{\alpha'}$ for distinct $\alpha,\alpha'$, we can recover the sumcheck PCP for $P$ as $(\Pi_\alpha - \Pi_{\alpha'})/(\alpha - \alpha')$.

On the other hand, if we have the prover send only \emph{one} $\Pi_\alpha$, then soundness is lost, as the prover can easily cheat by sending $R$ that does not sum to zero. We could attempt to fix the soundness issue by having the prover additionally \emph{prove}, via another sumcheck PCP $\Pi_R$, that $\sum_{\vec a} R(\vec a) = 0$. While this would restore soundness, we once again lose zero knowledge, as the sumcheck PCP for $P$ can be recovered as $(\Pi_\alpha - \Pi_R)/\alpha$.

\subsection{Structure versus randomness}
From the above discussion, we see that the central obstacle to obtaining a zero knowledge PCP is that the prover can break soundness by sending some $R$ with $\sum_{\vec a} R(\vec a) \neq 0$, and known methods for detecting this strategy (without interaction) break zero knowledge.

Our approach is to \emph{prevent} this malicious prover strategy by choosing a polynomial $R$ so that $\sum_{\vec a} R(\vec a) = 0$ \emph{by definition}. Of course, we must take care to ensure that $R$ still hides certain information about $P$. But what kind of information? We observe that, without loss of generality, we can view the sumcheck PCP for $P+R$ as an oracle $\Pi$ such that
\begin{equation*}
	\Pi(c_1,\ldots,c_i) = \sum_{\vec a \in \{0,1\}^{\NumVars-i}} (P+R)(c_1,\ldots,c_{i},a_{i+1},\ldots,a_\NumVars)~,
\end{equation*}
for all $i \in [\NumVars]$ and $c_1,\ldots,c_i \in \Field$. Computing the ``subcube sum'' $\sum_{\vec a \in \{0,1\}^{\NumVars-i}} P(c_1,\ldots,c_{i},a_{i+1},\ldots,a_\NumVars)$ requires $2^{\NumVars-i}$ queries to $P$. Hence, in order to hope to simulate, we must ensure that $R$ hides such subcube sums for any $1 \leq i < \NumVars - O(\log \NumVars)$. So, at a minimum, it should be the case that for all such $i$, and all $c_1,\ldots,c_i \in \Field$, the subcube sum $\sum_{\vec a \in \{0,1\}^{\NumVars-i}} R(c_1,\ldots,c_{i},a_{i+1},\ldots,a_\NumVars)$ is marginally uniform.
	
In short:
(1) for soundness, we need that the \emph{full} sum of the masking polynomial $R$ over the hypercube $\{0,1\}^n$ is fixed, and
(2) for zero knowledge, other than sum over the hypercube, we would like $R$ to be distributed as random as possible.

We start by ensuring that any \emph{partial} subcube sum of $R$ is distributed uniformly at random. Our key observation here is that while the full sum is \emph{invariant} under any reordering of the variables $X_1,\ldots,X_\NumVars$, any nontrivial subcube sum ($i < m$) is \emph{not}. This leads us to the following choice for the masking polynomial:
	\begin{equation*}
		R(X_1,\ldots,X_\NumVars) \eqdef Q(X_1,X_2,\ldots,X_\NumVars) - Q(X_\NumVars,X_{\NumVars-1},\ldots,X_1)~,
	\end{equation*}
	where $Q$ is a uniformly random $\NumVars$-variate polynomial of individual degree $\Degree$.

Clearly, by the permutation invariance of the complete sum, $\sum_{\vec a \in \{0,1\}^m} R(\vec a) = 0$. On the other hand, consider the special case of a subcube sum of $R$ on the hypercube, i.e., for $(c_1,\ldots,c_i) \in \{0,1\}^i$:
\begin{equation*}
	\sum_{\vec a \in \{0,1\}^{\NumVars-i}} Q(c_1,\ldots,c_{i},a_{i+1},\ldots,a_\NumVars) - \sum_{\vec a \in \{0,1\}^{\NumVars-i}} Q(a_{\NumVars},\ldots,a_{i+1},c_{i},\ldots,c_1)~.
\end{equation*}
This expression can only be identically zero for all $Q$ if $\{ c_1 \} \times \cdots \times \{c_i\} \times \{0,1\}^{m-i} = \{0,1\}^{m-i} \times \{ c_i \} \times \cdots \times \{ c_1 \}$, which is clearly only possible if $i \in \{0,m\}$. By linearity, if this subcube sum is not identically zero for all $Q$, then it is uniform when $Q$ is uniform.

Still, compared to the IOP case, the masking polynomial $R$ has extra structure: it satisfies an ``antisymmetry'' $R(\alpha_1,\ldots,\alpha_\NumVars) = -R(\alpha_\NumVars,\ldots,\alpha_1)$ for all $(\alpha_1,\ldots,\alpha_\NumVars) \in \Field^\NumVars$. This structure means that $P + R$ is no longer uniformly random and may leak nontrivial information about $P$.

To mitigate this, we modify $R$ to make it ``as random as possible'' while preserving antisymmetry on $\Bits^\NumVars$ and low-degree structure over $\Field^\NumVars$. Our final choice of $R$ is as follows:
\begin{equation}
\label{eq:R}
	R(\vec X) = Q(\vec X) - Q(\revvec X) + \sum_{i=1}^{\NumVars} X_i (1-X_i) T_i(\vec X)~,
\end{equation}
where $Q$ is as before, $\revvec X \eqdef (X_\NumVars,\ldots,X_1)$, and each $T_i$ is a uniformly random $\NumVars$-variate polynomial where $\deg_{X_i}(T_i) = \Degree - 2$ and $\deg_{X_j}(T_i) = \Degree$ for $j \neq i$. What is this extra term? It was observed by \cite{ChenCGOS23}, via the combinatorial nullstellensatz \cite{Alon99}, that $Z(\vec X) \eqdef \sum_{i=1}^{\NumVars} X_i (1-X_i) T_i(\vec X)$ is a uniformly random polynomial subject to the condition that, for all $\vec a \in \Bits^\NumVars$, $Z(\vec a) = 0$. Adding this to $R$ retains the antisymmetric structure on the hypercube, but has the effect of ``masking out'' any structure that appears outside of $\Bits^\NumVars$; in particular, $R$ is no longer antisymmetric over all of $\Field^\NumVars$.

\paragraph{Key challenge: local simulation of combinatorial and algebraic structure.}
Showing the completeness and soundness of the PCP obtained via the foregoing approach is straightforward. Hence, it remains to prove that the perfect zero-knowledge condition holds.

Starting with the trivial case, note that our observation about subcube sums in the discussion above provides a simple strategy for simulating any \emph{single} query $(c_1,\ldots,c_i)$ \emph{within the hypercube} $\{0,1\}^i$ to $\Pi$. Alas, this clearly does not suffice, as even the honest sumcheck verifier makes multiple queries to the proof, for $(c_1,\ldots,c_i) \notin \{0,1\}^i$, Hence, we must simulate multiple queries beyond the hypercube.

Indeed, the simulation of multiple queries to the low-degree extension of the hypercube is where the key challenge arises. More specifically, the difficulty is that our masking polynomial $R$ has both \emph{algebraic} structure arising from the polynomial degree bound and \emph{combinatorial} structure arising from the antisymmeric  reordering of the variables. In order to simulate, we must not only understand both types of structure, but crucially, also \emph{how they interact}.

In the remainder of this proof overview, we explain how we design an efficient simulator that can simulate responses to any number of queries to $\Pi$ (the sumcheck PCP for $P+R$), over the entire space $\Field^{\leq \NumVars}$. Towards this end, we introduce the notion of \emph{locally-simulatable encodings}.

Loosely speaking, locally-simulatable encodings are randomised encodings in which every local view of the encoding can be efficiently sampled from a local view of the message. More precisely, we say that a randomised encoding function $\Enc \colon (\MsgDomain \to \Field) \to (\Domain \to \Field)$ is locally simulatable if there is an algorithm $\CodeSimulator$ which, given oracle access to a message $\Message \colon \MsgDomain \to \Field$ and a set $S \subseteq \Domain$, samples from the distribution of $\Enc(\Message)|_S \subseteq \Field^S$ in time polynomial in $|S|$ (which may be much less than $|\MsgDomain|$). 

The notion of  locally-simulatable encodings is tightly connected to ZKPCPs: we can view the mapping from instances to corresponding PCPs as a randomised encoding. A local simulator for this encoding is a zero knowledge simulator for the PCP system.\footnote{To handle malicious verifiers that make adaptive queries, we will actually require a stronger notion of local simulation that allows for conditional sampling; see \cref{sec:lses} for details.} In the following sections, we will provide an overview of the proofs that the combinatorial antisymmetric structure of our masking, the algebraic structure of the Reed-Muller code, and their augmentations with subcube sums that arise in the sumcheck protocol admit locally simulatable encodings.

\subsection{Combinatorial structure of antisymmetric functions}

As discussed above, the response to any \emph{single} $(c_1,\ldots,c_i) \in \Bits^i$ to $\Pi$ can be perfectly simulated. In this section, we explain how to simulate any number of queries to $\Pi$, provided that those queries lie in the set $\Bits^{\leq \NumVars} \eqdef \cup_{i=1}^\NumVars \Bits^i$. Note that evaluated over the hypercube, the last term of \cref{eq:R} cancels, hence we can think of the masking polynomials as $R(\vec X) = Q(\vec X) - Q(\revvec X)$, where $\revvec X \eqdef (X_\NumVars,\ldots,X_1)$.

Observe that, when restricted to $\Bits^\NumVars$, the distribution of $R$ is well structured; namely, it is a uniformly random element of the vector space $\AntiSym$ of ``antisymmetric functions''\footnote{This is terminology that we introduce, which is distinct from the usual notion of antisymmetry for polynomials.}, i.e., functions $f \colon \Bits^\NumVars \to \Field$ such that for all $\vec c \in \Bits^\NumVars$, $f(\vec c) = -f(\revvec c)$, where $\revvec c = (c_\NumVars,\ldots,c_1)$ is the reverse of $\vec c$. We can view the restriction of $\Pi$ to $\Bits^{\leq \NumVars}$ as the output of a randomised encoding function $\Enc_{\SigmaAntiSym} \colon (\Bits^\NumVars \to \Field) \to (\Bits^\NumVars \to \Field)$, applied to the restriction of $P$ to $\Bits^\NumVars$. The function $\Enc_{\SigmaAntiSym}$ has the following description.
\begin{itemize}
\item[] \underline{$\Enc_{\SigmaAntiSym}(f)$:}
	\begin{enumerate}[nolistsep]
		\item Choose a uniformly random $r \in \AntiSym$.
		\item Output the function $\Sigma[f+r]$, given by
			\begin{equation}
                    \label{eq:tech-sigmaf}
				\Sigma[f+r](c_1,\ldots,c_i) \eqdef \sum_{\vec a \in \Bits^{\NumVars-i}} (f+r)(c_1,\ldots,c_i,a_{i+1},\ldots,a_\NumVars)~,
			\end{equation}
	\end{enumerate}
\end{itemize}
 
 for all $i \in [m]$; i.e., the function $f+r$, augmented with all of its subcube sums.

In this perspective, our task is now to provide a local simulator for $\Enc_{\SigmaAntiSym}$. That is, for a query set $S \subseteq \Bits^{\leq \NumVars}$, we want to efficiently simulate $\Sigma[f+r]|_S$ for a uniformly random $r \in \SigmaAntiSym$, given oracle access to $f$.

Let $\SigmaAntiSym|_S$ denote the vector space $\{ \Sigma[r]|_S : r \in \AntiSym \}$, and let $B$ be a basis for its dual code $\Dual{(\SigmaAntiSym|_S)}$; that is, $v \in \SigmaAntiSym|_S$ if and only if $Bv = 0$. Then, by linearity, $\Sigma[f+r]|_S$ is distributed as a uniformly random vector $w \colon S \to \Field$ such that $Bw = B(\Sigma[f]|_S)$.

With this in mind, our local simulator $\CodeSimulator$ could work as follows.

\begin{itemize}
	\item[] \underline{$\CodeSimulator^f(S)$:}
\begin{enumerate}[nolistsep]
	\item Compute a basis $B$ for $\Dual{(\SigmaAntiSym|_S)}$.
	\item Compute $y = B(\Sigma[f]|_S)$ by querying $f$ as necessary.
	\item Output a random $w$ such that $Bw = y$.
\end{enumerate}
\end{itemize}

Correctness is straightforward. The only potential issue is efficiency: evaluating $\Sigma[f]$ at a point in $S$ may require exponentially many queries to $f$. Note, however, that if the column of $B$ corresponding to $\vec c \eqdef (c_1,\ldots,c_i)$ is all zero, then we do not need to know $\Sigma[f](c_1,\ldots,c_i)$ to compute $B(\Sigma[f]|_S)$. Hence, for efficiency, it would suffice to (efficiently) find a basis $B$ for $\Dual{(\SigmaAntiSym|_S)}$ such that $\sum_{\vec c \in \supp(B)} 2^{\NumVars - \VLen{c}} = \Poly(|S|)$, where $\supp(B)$ is the set of nonzero columns of $B$ and $\VLen(c) = i$ for $c \in \Bits^i$.

Does such a basis exist? Not exactly: for example, if $S = \{ (0), (1) \}$, then $\supp(B) = \{ (0), (1) \}$, since $\Sigma[r](0) + \Sigma[r](1) = \sum_{\vec a \in \Bits^{\NumVars-1}} r(0,\vec a) + \sum_{\vec a \in \Bits^{\NumVars-1}} r(1,\vec a) = 0$ for all $r \in \AntiSym$. However, this counterexample does not actually cause a problem for zero knowledge, since for this basis $B$, it holds that $B(\Sigma[P]|_S) = \sum_{\vec a \in \Bits^\NumVars} P(\vec a) = \gamma$, which is part of the input to the problem. Our key technical result in this section is that all possible counterexamples are essentially of this form.
\begin{ilemma}
	\label{lem:informal-antisym}
	There is a polynomial $p$ such that, for any prefix-free\footnote{A set $S \subseteq \Bits^{\leq \NumVars}$ is \emph{prefix-free} if for any $\vec c_1, \vec c_2 \in S$, if $\vec c_1$ is a prefix of $\vec c_2$ then $\vec c_1 = \vec c_2$.} $S \subseteq \Bits^{\leq \NumVars}$, there exists a basis $B$ of $\Dual{(\SigmaAntiSym|_S)}$ where each row $b_i$ of $B$ is a $0$-$1$ vector, and letting $T(b_i) \eqdef \{ (\vec c, \vec a) : b_i(\vec c) = 1, \vec a \in \Bits^{\NumVars-|\vec c|} \}$, either
	\begin{equation*}
		|T(b_i)| \leq p(|S|) \qquad \text{ or } \qquad |T(b_i)| \geq 2^{\NumVars} - p(|S|)~.
	\end{equation*}
	Moreover, $B$ can be efficiently computed from $S$.
\end{ilemma}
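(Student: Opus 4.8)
The plan is to translate the dual code $\Dual{(\SigmaAntiSym|_S)}$ into a purely combinatorial object, read a canonical $0$-$1$ basis off the connected components of a small auxiliary graph, and then prove a ``structure versus randomness'' dichotomy for the supports of those basis vectors. Throughout I use that $S$ prefix-free makes the subcubes $C_{\vec c}\eqdef\{\vec c\}\times\Bits^{\NumVars-|\vec c|}$ ($\vec c\in S$) pairwise disjoint, and that $\mathrm{char}(\Field)\neq 2$ (as in the sumcheck instantiation).

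First, for $b\in\Field^S$ one computes $\sum_{\vec c\in S}b(\vec c)\,\Sigma[r](\vec c)=\langle g_b,r\rangle$, where $g_b\colon\Bits^\NumVars\to\Field$ is the step function equal to $b(\vec c)$ on $C_{\vec c}$ and $0$ outside $\bigcup_{\vec c\in S}C_{\vec c}$; hence $b\in\Dual{(\SigmaAntiSym|_S)}$ iff $g_b\in\Dual\AntiSym$. The coordinate-reversal involution $\sigma\colon f\mapsto f\circ\rev$ splits $\Field^{\Bits^\NumVars}$ into its $\pm1$-eigenspaces, $\AntiSym$ is the $-1$-eigenspace, and a dimension count (using odd characteristic) identifies $\Dual\AntiSym$ with the $+1$-eigenspace, i.e.\ the $\rev$-invariant functions. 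So $b\in\Dual{(\SigmaAntiSym|_S)}$ iff $g_b$ is $\rev$-invariant, which unwinds to: $b(\vec c)=b(\vec c')$ whenever $C_{\vec c}\cap\rev(C_{\vec c'})\neq\emptyset$, and $b(\vec c)=0$ whenever $\rev(C_{\vec c})\not\subseteq\bigcup_{\vec c'\in S}C_{\vec c'}$. Defining a graph $\mathcal{G}$ on $S\cup\{*\}$ with an edge $\{\vec c,\vec c'\}$ in the first case and $\{\vec c,*\}$ in the second, we get that $\Dual{(\SigmaAntiSym|_S)}$ is exactly the span of the indicators $\mathbbm{1}_K$ of the connected components $K$ of $\mathcal{G}$ that avoid $*$. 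These $\mathbbm{1}_K$ are $0$-$1$ and linearly independent (disjoint supports), $T(\mathbbm{1}_K)$ is in bijection with $V_K\eqdef\bigsqcup_{\vec c\in K}C_{\vec c}$ (a $\rev$-symmetric union of at most $|S|$ prefix-subcubes), and $\mathcal{G}$ together with its components is computable in time $\Poly(|S|,\NumVars)$: an edge $\{\vec c,\vec c'\}$ is decided by comparing $\vec c$ and $\revvec{c'}$ on the explicit overlap of their fixed coordinates, and a $*$-edge by a walk over the trie of $S$.

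It remains to bound $|V_K|=|T(\mathbbm{1}_K)|$: I claim each is either $\leq p(|S|)$ or $\geq 2^\NumVars-p(|S|)$ for a universal polynomial $p$ (if $2^\NumVars\le|S|^{O(1)}$ this is vacuous, so assume $\NumVars$ large). The crucial feature is that $V_K$ \emph{and} its complement $\overline{V_K}$ are both $\rev$-symmetric, and $\overline{V_K}$ is itself a union of at most $|S|\cdot\NumVars$ prefix-subcubes. Let $\ell$ be the least length among the subcubes of $V_K$, attained by $\vec c^*\in K$; then $C_{\vec c^*}\subseteq V_K$, so $\rev(C_{\vec c^*})\subseteq V_K$ by $\rev$-symmetry, so $\rev(C_{\vec c^*})$ is disjoint from every subcube of $\overline{V_K}$. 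Since $\rev(C_{\vec c^*})$ fixes the \emph{last} $\ell$ coordinates whereas a prefix-subcube fixes an initial segment, this disjointness forces every subcube of $\overline{V_K}$ to have length $\geq\NumVars-\ell+1$, and symmetrically every subcube of $V_K$ to have length $\geq\NumVars-\ell'+1$ with $\ell'$ the least length in $\overline{V_K}$. Feeding $|V_K|\leq|S|\cdot2^{\NumVars-\ell}$, $|\overline{V_K}|\leq|S|\NumVars\cdot2^{\NumVars-\ell'}$ and $|V_K|+|\overline{V_K}|=2^\NumVars$ against each other pins $\ell$ into one of two regimes: within $O(\log|S|)$ of $\NumVars$, whence every subcube of $V_K$ is tiny and $V_K$ is small; or close enough to $0$ that $\overline{V_K}$ omits only $\Poly(|S|)$ points, whence $V_K$ is co-small. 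The step I expect to be the main obstacle is excluding the ``medium'' regime $|V_K|\approx2^{\NumVars/2}$ while keeping the bound polynomial in $|S|$ only — this needs \emph{both} length constraints together with the partition identity, plus the sharper observation that a prefix-subcube meeting the co-subcube $\rev(C_{\vec c^*})$ can do so in at most one point, so that covering $\rev(C_{\vec c^*})$ by the $\le|S|$ (resp.\ $\le|S|\NumVars$) available subcubes forces $2^{\NumVars-\ell}\le|S|$ (resp.\ $2^{\NumVars-\ell'}\le|S|\NumVars$) and hence pins the side in question to the vicinity of polynomially many points. Finally, reading off $B=\{\mathbbm{1}_K\}$ completes the argument; efficiency is the component computation already noted.
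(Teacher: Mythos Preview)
Your structural analysis --- identifying $\Dual{(\SigmaAntiSym|_S)}$ with $\rev$-invariant step functions, building the intersection graph on $S$, and taking the basis to be indicators of the ``symmetric'' connected components --- is exactly the paper's route (Lemma~7.6 and Lemma~7.10); your $*$-vertex device is equivalent to the paper's post-hoc test $\cup H = \cup H_\rev$. One minor point: your eigenspace identification $\Dual\AntiSym = \mathrm{Sym}$ uses $\mathrm{char}\,\Field \neq 2$, whereas the paper's route via Proposition~7.3 ($w(\vec x)=f(\vec x)-f(\revvec x)$) is characteristic-free.

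For the size dichotomy the paper does something different and cleaner than your complement/min-length argument: it writes $|\cup H| = \sum_{\vec a,\vec b \in H} |\vec a \cap \revvec b|$, splits the sum according to whether $\VLen a + \VLen b \geq \NumVars$, and shows the ``short-pair'' contribution $N$ satisfies $|\SumDomain|\cdot N \leq |\cup H|^2$. The resulting quadratic inequality in $|\cup H|$ gives $\min(|\cup H|,\,2^\NumVars - |\cup H|) \leq |H|^2 + O(|H|^4/2^\NumVars)$ directly, with no use of $\overline{V_K}$ and no case split.

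Your route can be made to work, but it has a gap as written: the assertion that a prefix-subcube meets a suffix-subcube $\rev(C_{\vec c^*})$ in at most one point is false in general --- it holds only when the two lengths sum to more than $\NumVars$. What rescues you is precisely the inequality you already derived, $\ell + \ell' \geq \NumVars + 1$: this forces at least one of $\ell,\ell'$ to exceed $\NumVars/2$, and on \emph{that} side every subcube has length $> \NumVars/2$, so the ``at most one point'' claim is valid there. The covering argument then yields either $2^{\NumVars-\ell} \leq |K|$ (hence $|V_K|\leq|K|^2$) or $2^{\NumVars-\ell'} \leq |S|\NumVars$ (hence $|\overline{V_K}|\leq(|S|\NumVars)^2$). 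You should state this case split explicitly rather than asserting both conclusions; once you do, your bound is polynomial in $|S|$ and $\NumVars$, a little looser than the paper's $|S|^2$.
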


\noindent For the purposes of this overview, we will assume that $S$ is indeed prefix-free; in the full proof we show that this holds without loss of generality.

To gain some intuition for this result, let us consider the 2-dimensional case of antisymmetric functions $A \colon [n]^2 \to \Field$; i.e., $n \times n$ antisymmetric matrices $A$ over $\Field$. Suppose that $X \subseteq [n] \times [n]$ is a set consisting of $r$ full rows and $t$ individual entries (so $|X| = rn + t$) such that for all antisymmetric $A$, it holds that $\sum_{(i,j) \in X} a_{ij} = 0$. The latter condition implies that the indicator matrix for $X$ must be symmetric; one possible choice of $X$ is highlighted in \cref{fig:antisym}. It is straightforward to show that this symmetry implies $t \geq r \cdot (n - r)$, from which we obtain the pair of inequalities
\begin{equation*}
	r \leq \frac12 \cdot n \left(1 - \sqrt{1 - \frac{4t}{n^2}}\right) \approx \frac{t}{n} + \frac{t^2}{n^3}  \quad \text{or} \quad
	r \geq \frac12 \cdot n \left(1 + \sqrt{1 - \frac{4t}{n^2}}\right) \approx n - \frac{t}{n} - \frac{t^2}{n^3}~.
\end{equation*}
This in turn yields appropriate bounds on $|X|$: $|X| \lesssim 2t + t^2/n^2$ or $|X| \gtrsim n^2 - t^2/n^2$. The proof of \cref{lem:informal-antisym} generalises this approach to higher dimensions.

\begin{figure}
	\newcommand{\hlcolor}{yellow!40}
	\newcommand{\mathcolorbox}[2]{\text{\colorbox{#1}{$#2$}}}
	\newcommand{\mathhl}[1]{\mathcolorbox{\hlcolor}{#1}}
	\newcommand{\matrixspacer}{\vphantom{\Big|}}
	\begin{equation*}
		\left(
			\begin{array}{ccccc}
			0 & \matrixspacer \mathhl{a_{12}} & \mathhl{a_{13}} & a_{14} & a_{15} \\
			\rowcolor{\hlcolor} -a_{12} & 0 & a_{23} & a_{24} & a_{25} \\
			\rowcolor{\hlcolor} -a_{13} & -a_{23} & 0 & a_{34} & a_{35} \\
			-a_{14} & \matrixspacer \mathhl{-a_{24}} & \mathhl{-a_{34}} & 0 & a_{45} \\
			-a_{15} & \mathhl{-a_{25}} & \mathhl{-a_{35}} & -a_{45} & \mathhl{0}
			\end{array}
		\right)
	\end{equation*}
	\caption{A general antisymmetric matrix ($n = 5$), with an element of the dual highlighted ($r = 2, t = 7$).}
	\label{fig:antisym}
\end{figure}

\cref{lem:informal-antisym} suggests the following modified local simulator, which we additionally provide with the total sum $\gamma$:
\begin{itemize}
	\item[] \underline{$\CodeSimulator_{\AntiSym}^{f}(S; \gamma)$:}
\begin{enumerate}[nolistsep]
	\item Compute a basis $B$ for $\Dual{(\SigmaAntiSym|_S)}$ as in \cref{lem:informal-antisym}.
	\item For each row $b_i$ of $B$:
	\begin{enumerate}[nolistsep]
		\item if $|T(b_i)| \leq 2^{\NumVars-1}$, compute $y_i = \sum_{\vec x \in T(b_i)} f(\vec x)$;
		\item otherwise, compute $y_i = \gamma - \sum_{\vec x \in \Bits^\NumVars \setminus T(b_i)} f(\vec x)$.
	\end{enumerate}
	\item Output a random $w$ such that $Bw = y$.
\end{enumerate}
\end{itemize}
Clearly, the number of queries to $f$ required is $\min(|T(b_i)|, 2^\NumVars - |T(b_i)|)$, which is $\Poly(|S|)$ by \cref{lem:informal-antisym}; hence the overall running time of the simulator is $\Poly(|S|)$. 

\subsection{Local simulation of random low-degree extensions}
\label{sec:tech-rm-sim}
In the previous section we addressed the \emph{combinatorial} problem of simulating queries to $\Pi$ that lie in the set $\Bits^{\leq \NumVars}$, which exhibits antisymmetric strucutre. In this section we consider the \emph{algebraic} problem of simulating queries to $\Pi$ that also lie outside of the hypercube (i.e., general point queries), which exhibits pseudorandom structure. We then bring these two parts together in \cref{sec:tech-sigmarm}.

Recall that our choice of $R$ is
\begin{equation*}
	R(\vec X) = Q(\vec X) - Q(\revvec X) + \sum_{i=1}^{\NumVars} X_i (1-X_i) T_i(\vec X)~,
\end{equation*}
where $Q$ is as before, $\revvec X \eqdef (X_\NumVars,\ldots,X_1)$, and each $T_i$ is a uniformly random $\NumVars$-variate polynomial where $\deg_{X_i}(T_i) = \Degree - 2$ and $\deg_{X_j}(T_i) = \Degree$ for $j \neq i$.

\newcommand{\LD}{\mathsf{LD}}
Denote by $\ReedMuller[\Field,\NumVars,\Degree] \subseteq \Field^\NumVars \to \Field$ the \emph{Reed--Muller} code of evaluations of $\NumVars$-variate polynomials of individual degree $\Degree$ over $\Field$. For a function $f \colon \Bits^\NumVars \to \Field$, denote by $\LD[f,\Degree] \eqdef \{ \hat{f} \in \ReedMuller[\Field,\NumVars,\Degree] : \hat{f}(x) = f(x)\, \forall \vec x \in \Bits^\NumVars \}$ the set of \emph{degree-$\Degree$ extensions} of $f$. With the above modification, we can describe the sumcheck PCP for $P + R$, restricted to $\Field^\NumVars$, as the output of the following randomised encoding:

\begin{itemize}
\item[] \underline{$\Enc_{\ReedMuller\AntiSym}(P)$:}
\begin{enumerate}[nolistsep]
	\item Sample a random antisymmetric function $f \colon \Bits^\NumVars \to \Field$.
	\item Sample $R$ uniformly at random from $\LD[f,\Degree]$.
	\item Output $P+R$.
\end{enumerate}
\end{itemize}
Our problem now becomes to design a local simulator for $\Enc_{\ReedMuller\AntiSym}$.

We will do this by solving a much more general problem: \emph{we prove that random low-degree extensions are locally simulatable.} More precisely, let $\Enc_{\ReedMuller}^\Degree(f)$ be the encoding function that outputs a uniformly random element of $\LD[f,\Degree]$. We prove that $\Enc_\ReedMuller^\Degree$ has a local simulator $\CodeSimulator_{\ReedMuller}^\Degree$ for any $\Degree \geq 2$.
We can then easily obtain a simulator for $\Enc_{\ReedMuller\AntiSym}$ by composing $\CodeSimulator_{\ReedMuller}$ with the local simulator for $\AntiSym$ described in the previous section.

\parhead{Our starting point}
We will build on the ``stateful emulator'' that was constructed in \cite{ChenCGOS23} (for a particular oracle model that they introduced), which can be conceptualised as an \emph{inefficient} local simulator for $\Enc_{\ReedMuller}$.
To describe it, we first introduce some notation.

For $w \in \Bits^{\NumVars}$, we denote by $\delta_w$ the unique multilinear polynomial satisfying $\delta_w(w) = 1$ and $\delta_w(x) = 0$ for all $x \in \Bits^{\NumVars} \setminus \{ w \}$. For a set $S \subseteq \Field^{\NumVars}$, we say that $w$ is \emph{$S$-good} if there exists an $\NumVars$-variate polynomial $q_w$ of individual degree at most $d$ such that \begin{inparaenum}[(i)] \item $q_w(x) = 0$ for every $x \in \Bits^{\NumVars} \setminus \{ w \}$; \item $q_w(z) = 0$ for every $z \in S$; and \item $q_w(w) = 1$\end{inparaenum}. We say that $w$ is \emph{$S$-bad} if it is not $S$-good. Intuitively, when $\hat{f}$ is a uniformly random low-degree extension of $f$, $\hat{f}|_S$ only conveys information about $f(w)$ for $S$-bad points $w$. For example, any point in $\Bits^\NumVars \cap S$ is trivially $S$-bad. Less trivially, if $S$ consists of sufficiently many points on a curve passing through $w \in \Bits^\NumVars$, then $w$ may be $S$-bad even if $\Bits^\NumVars \cap S$ is empty.

The simulator of Chen et al. works roughly as follows:\footnote{Their analysis requires sampling $P$ in a different (and more complicated) way.}

\begin{itemize}
\item[] \underline{$\CodeSimulator^{f}_{\ReedMuller}(S)$:}

\begin{enumerate}[nolistsep]
\item \label[step]{step:bad-points} Compute the set of $S$-bad points $W$.

\item \label[step]{step:sample-p} Query $f(w)$ at all $w \in W$, sample a uniformly random polynomial $P$ such that $P(w) = f(w)$ for all $w \in W$, and output $P|_S$.
\end{enumerate}
\end{itemize}
Chen et al. prove that $\CodeSimulator_{\ReedMuller}$ is correct, and moreover that it is \emph{query-efficient}, provided that $\Degree \geq 2$: the number of queries it makes to $f$ (equal to $|W|$) is at most $|S|$ (this follows from \cite[Lemma 4.3]{AaronsonW09}).

However, they leave open the question of whether $\CodeSimulator_{\ReedMuller}$ can be made \emph{computationally} efficient. We note that \cref{step:sample-p} can be achieved efficiently using an algorithm of \cite{BenSassonCFGRS17}. The issue is in \cref{step:bad-points}; namely, it is not clear how to compute the set $W$ from $S$. Indeed, at first glance this appears to be hopeless: naively, computing whether a single $w$ is $S$-bad requires solving an exponentially large linear system, and there are $2^\NumVars$ possible choices for $w$. Nonetheless, we will present an efficient algorithm which, given $S$, outputs a list of all $S$-bad points.\footnote{Chen et al. follow a different route: roughly, they show that because their $f$ is \emph{extremely sparse}, it is very unlikely that $f(w) \neq 0$ for any $w \in W \setminus S$, and so for simulation one can pretend that $f$ is zero on $\Bits^\NumVars \setminus S$. This does not work for us for two reasons: first, a random antisymmetric function is not sparse with overwhelming probability, and second, this strategy introduces a small statistical error in simulation, which would spoil perfect zero knowledge.}

\parhead{Step 1: solving the decision problem}
Our first step is to consider the \emph{decision} variant of this problem: given a set $S$, does there \emph{exist} an $S$-bad $w$? Our key insight here is that we can relate the existence of an $S$-bad $w$ to the dimension of the vector space $\LD[Z,\Degree]|_S$, where $Z$ is the constant zero function. We show that all $w$ are $S$-good if and only if $\LD[Z,\Degree]|_S = \ReedMuller[\Field,\NumVars,\Degree]|_S$. Indeed, if every $w$ is $S$-good, then for any degree-$\Degree$ polynomial $P$, the polynomial $P - \sum_{w \in \Bits^\NumVars} P(w) \cdot q_w$ is a degree-$\Degree$ extension of $Z$ that agrees with $P$ on $S$. On the other hand, if $\LD[Z,\Degree]|_S = \ReedMuller[\Field,\NumVars,\Degree]|_S$, then for every $w$, there is a $\hat{Z}_w \in \LD[Z,\Degree]$ that agrees with $\delta_w$ on $S$; hence we can take $q_w = \delta_w - \hat{Z}_w$, and so $w$ is $S$-good.

Hence to solve the decision problem it suffices to compute (the dimensions of) $\ReedMuller[\Field,\NumVars,\Degree]|_S$ and $\LD[Z,d]|_S$. The former can be achieved efficiently (in time $\Poly(|S|,\NumVars,\Degree,\log |\Field|)$) using the \emph{succinct constraint detector} for Reed--Muller discovered by \cite{BenSassonCFGRS17}.

To compute the latter, we build on ideas introduced in \cite{ChenCGOS23}. As discussed above, in that work the authors observe that to sample a vector $v \gets \LD[Z,d]|_S$, it suffices to (lazily) sample random polynomials $T_i$, $i \in [\NumVars]$, of the appropriate degrees, and then set $v(\vec \alpha) \eqdef \sum_{i=1}^{\NumVars} \alpha_i (1-\alpha_i) T_i(\vec \alpha)$ for each $\vec \alpha \in S$. Similarly, we show how to compute a basis for $\LD[Z,d]|_S$ by combining bases for the subspaces $\mathcal{P}_i \eqdef \{ T_i|_S \in \ReedMuller[\Field,\Degree,\NumVars]|_S : \deg_{X_i}(T_i) = d-2 \}$, for $i \in [m]$. These bases can also be computed using the succinct constraint detector for Reed--Muller; see \cref{sec:rm-cl}.

\parhead{Step 2: search-to-decision reduction}
Next, we will build upon the techniques we developed in Step 1 to solve the original search problem. A natural strategy is to employ a binary search: for each $b \in \{0,1\}$, test if $\{b\} \times \Bits^{\NumVars-1}$ contains any $S$-bad points, and recurse if it does. Since the number of $S$-bad points is bounded by $|S|$, the recursion will terminate quickly. It therefore suffices to give an efficient algorithm that, given a set $A$ of the form $\{ a_1 \} \times \cdots \times \{ a_i \} \times \Bits^{\NumVars-i}$, determines whether $A$ contains any $S$-bad point.

For $A \subseteq \Bits^\NumVars$, let $\LD[Z_A,\Degree]|_S \eqdef \{ P \in \ReedMuller[\Field,\NumVars,\Degree] : P(\vec x) = 0 \, \forall \vec x \in A \}$. Our algorithm for computing a basis for $\LD[Z,\Degree]|_S$ can be straightforwardly extended to compute a basis for $\LD[Z_A,\Degree]|_S$. We can also extend the reasoning from Step 1 to show that if every $w \in A$ is $S$-good, then $\LD[Z_A,\Degree]|_S = \ReedMuller[\Field,\NumVars,\Degree]|_S$. Unfortunately, the converse does not hold: it may be that $A$ contains $S$-bad points but $\LD[Z_A,\Degree]|_S = \ReedMuller[\Field,\NumVars,\Degree]|_S$. This can happen, for example, if $S$ contains points on a curve passing through both $w \in A$ and $w' \in \Bits^\NumVars \setminus A$. In particular, the argument from Step 1 fails: for $\hat{Z}_w \in \LD[Z_A,d]$, $\delta_w - \hat{Z}_w$ is not necessarily zero on $\Bits^\NumVars \setminus A$.

To obtain our final algorithm, we will instead exploit the algebraic relationship between $\ReedMuller[\Field,\NumVars,\Degree]$ and its subcode $\ReedMuller[\Field,\NumVars,\Degree-1]$ to derive a \emph{sufficient} condition for every $w \in A$ being $S$-good. Observe that if $\LD[Z_A,\Degree-1]|_S = \ReedMuller[\Field,\NumVars,\Degree-1]|_S$, then for every $w \in A$ there is a $\hat{Z}_w \in \LD[Z_A,\Degree-1]$ such that $p_w \eqdef \delta_w - \hat{Z}_w$ satisfies $p_w(w) = 1$ and $p_w(\vec \alpha) = 0$ for all $\vec \alpha \in S$. It follows that all $w \in A$ are $S$-good, as we can set $q_w = p_w \cdot \delta_w$.

Thus, our efficient test for whether $A$ contains any $S$-bad point is as follows: compute bases for $\LD[Z_A,\Degree-1]|_S$ and $\ReedMuller[\Field,\NumVars,\Degree-1]|_S$, and output ``no'' if they are of the same dimension; otherwise output ``maybe''. By the above discussion, this test does not have any false negatives; however, there may be false positives. Provided there are not too many false positives, this does not cause a problem (we can either include them in $W$ or perform an extra test to filter them out). We bound the number of false positives by noting that if the test says ``maybe'', this in fact means that there exists an ``$S$-bad'' point in $A$ with respect to polynomials of degree $\Degree-1$. By \cite{AaronsonW09}, provided $\Degree \geq 2$, there are at most $|S|$ such points.

\subsection{Local simulation of subcube sums of LDEs}
\label{sec:tech-sigmarm}
So far, we have built a simulator that can answer arbitrary queries to our sumcheck PCP $\Pi$, provided those queries lie within the set $\{0,1\}^{\leq \NumVars} \cup \Field^{\NumVars}$, handling the combinatorial structure on the hypercube and the algebraic, pseudorandom algebraic structure outside of it. In this final part of the overview, we outline how this simulator can be extended to handle all queries; i.e., queries in the set $\Field^{\leq \NumVars}$.

To do this, it suffices to show that the code $\SigmaRM$ admits a locally-simulatable encoding, where $\SigmaRM[\Field,\NumVars,\Degree] = \{ \Sigma[P] : P \in \ReedMuller[\Field,\NumVars,\Degree] \}$ and $\Sigma[P]$ is defined as in \cref{eq:tech-sigmaf} (where $(c_1,\ldots,c_i)$ now ranges over $\Field^i$). Specifically, we show that the following encoding function is locally simulatable:
\begin{itemize}
    \item[] \underline{$\Enc_{\SigmaRM}^\Degree(\Sigma[f]):$}
\begin{enumerate}[noitemsep]
    \item Sample $\hat{f}$ uniformly at random from $\LD[f,\Degree]$.
    \item Output $\Sigma[\hat f]$.
\end{enumerate}
\end{itemize}
Note that the \emph{message} (input to $\Enc_{\SigmaRM}$) is $\Sigma[f]$, even though $\Enc_{\SigmaRM}$ operates only on $f$. This is necessary for local simulation, since individual locations in $\Sigma[\hat{f}]$ depend on partial sums of $f$ which cannot be computed from few queries to $f$ itself.

We would like to follow the strategy from the previous section: given a set $S \subseteq \Field^{\leq \NumVars}$ on which we want to simulate $\Enc_{\SigmaRM}(\Sigma[f])$, compute the set of all ``bad'' points $W \subseteq \Bits^{\leq \NumVars}$, i.e., $w \in W$ if $\Enc_{\SigmaRM}(\Sigma[f])|_S$ conveys information about $\Sigma[f](w)$. Unfortunately, unlike in the ``plain'' Reed--Muller case, it is not clear how to even bound the \emph{size} of $W$, let alone compute it.

The issue here is that a single evaluation of $\Sigma[\hat{f}](\vec \alpha)$ for $\vec \alpha \in \Field^{i}$ depends on $\hat{f}(\vec \beta)$ for every point $\vec \beta \in \vec \alpha \times \Bits^{\NumVars-i}$. Naively applying the lemma of \cite{AaronsonW09} to this set of points yields a set $W \subseteq \Bits^\NumVars$ of size $2^{\NumVars-i}$. To bound $|W|$ by a polynomial, we must therefore crucially make use of the fact that making a few queries to $\Sigma[\hat f]$ can reveal only a few (possibly large) \emph{partial sums} of $\hat{f}$ --- which we can hope to deduce from a small number of queries to $\Sigma[f]$.

To achieve this we will give a \emph{decomposition} of $\SigmaRM$ as a sequence of $\ReedMuller$ codes on different numbers of variables, ``tied together'' with constraints that enforce summation structure.
In more detail, let $T \subseteq \Field^{\leq \NumVars}$ be a set with the following special structure, which we call ``closed'': if $(s_1,\ldots,s_i) \in S$, then its ``parent'' $(s_1,\ldots,s_{i-1})$ and its ``siblings'' $(s_1,\ldots,s_{i-1},0)$ and $(s_1,\ldots,s_{i-1},1)$ are also in $S$. The set $\Bits^{\leq \NumVars}$ is closed; it is also easy to see that any set $T$ can be made closed by adding at most $3\NumVars|T|$ points. We prove the following theorem  about the structure of restrictions of $\SigmaRM$ to closed sets $T$:

\begin{theorem}[Informally stated, see \cref{theorem:sigma-rm-dual-z}]
    \label{theorem:sigma-rm-intro}
    For any closed set $T$, any constraint $z \in \Dual{(\SigmaRM[\Field,\NumVars,\Degree]|_T)}$ lies in the span of the following two types of constraint:
    \begin{itemize}
        \item \emph{summation constraints}, which ensure that $\Sigma[\hat{f}](t_1,\ldots,t_i) = \Sigma[\hat{f}](t_1,\ldots,t_i,0) + \Sigma[\hat{f}](t_1,\ldots,t_i,1)$ for $(t_1,\ldots,t_i) \in T$, $i < \NumVars$; and
        \item \emph{low-degree constraints}, which are given by, for each $i$, the $i$-variate Reed--Muller constraints on the set $T_i \eqdef T \cap \Field^i$; i.e., $\Dual{(\ReedMuller[\Field,i,\Degree]|_{T_i})}$.
    \end{itemize}
\end{theorem}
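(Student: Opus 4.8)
\medskip
\noindent The plan is to prove the theorem by induction on $\NumVars$, the key ingredient being a \emph{fibre-product} (pullback) decomposition of $\SigmaRM[\Field,\NumVars,\Degree]|_T$ obtained by peeling off the last variable. For a polynomial $P$, write $P^{\flat}(X_1,\ldots,X_{\NumVars-1})\eqdef P(X_1,\ldots,X_{\NumVars-1},0)+P(X_1,\ldots,X_{\NumVars-1},1)$, which has individual degree at most $\Degree$ whenever $P$ does. Summing out $X_\NumVars$ shows that $\Sigma[P](t_1,\ldots,t_i)=\Sigma[P^{\flat}](t_1,\ldots,t_i)$ for all $i<\NumVars$ (the right-hand $\Sigma$ being the analogous operator on $(\NumVars-1)$-variate functions), while $\Sigma[P](t_1,\ldots,t_\NumVars)=P(t_1,\ldots,t_\NumVars)$. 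Since $T$ is closed, its level-$\NumVars$ points organise into fibres $\{s\}\times S_s$ indexed by the level-$(\NumVars-1)$ points $s\in T$ that have a child in $T$ --- call this index set $B$ --- and closedness forces $\{0,1\}\subseteq S_s$ for every $s\in B$. Setting $T_{<\NumVars}\eqdef T\cap\Field^{<\NumVars}$ (again closed) and $T_{\NumVars}\eqdef T\cap\Field^{\NumVars}$, I would establish the decomposition
\[
\SigmaRM[\Field,\NumVars,\Degree]|_T=\bigl\{\,(w',w'')\in\mathcal{C}_{<\NumVars}\times\mathcal{C}_{\NumVars}\ :\ w'(s)=w''(s,0)+w''(s,1)\ \text{for all } s\in B\,\bigr\},
\]
where $\mathcal{C}_{<\NumVars}\eqdef\SigmaRM[\Field,\NumVars-1,\Degree]|_{T_{<\NumVars}}$ and $\mathcal{C}_{\NumVars}\eqdef\ReedMuller[\Field,\NumVars,\Degree]|_{T_{\NumVars}}$; the gluing conditions are exactly the level-$(\NumVars-1)$ summation constraints. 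The inclusion ``$\subseteq$'' is immediate from the two identities above.

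For the reverse inclusion, take $(w',w'')\in\mathcal{C}_{<\NumVars}\times\mathcal{C}_{\NumVars}$ satisfying the gluing conditions, and pick an $(\NumVars-1)$-variate $Q$ and an $\NumVars$-variate $\rho$, both of individual degree at most $\Degree$, with $\Sigma[Q]|_{T_{<\NumVars}}=w'$ and $\rho|_{T_{\NumVars}}=w''$. For $s\in B$ the gluing condition reads $Q(s)=w'(s)=w''(s,0)+w''(s,1)=\rho^{\flat}(s)$, so $E\eqdef Q-\rho^{\flat}$ vanishes on $B$. Then $P\eqdef\rho+E(X_1,\ldots,X_{\NumVars-1})\cdot X_\NumVars$ has individual degree at most $\Degree$ (using $\Degree\ge1$ for the extra factor $X_\NumVars$), agrees with $\rho$ on $T_{\NumVars}$ since $P(s,\beta)=\rho(s,\beta)+E(s)\beta=\rho(s,\beta)$ for $(s,\beta)\in T_{\NumVars}$, and satisfies $P^{\flat}=\rho^{\flat}+E=Q$; hence $\Sigma[P]|_T=(w',w'')$, which proves ``$\supseteq$''.

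Granting the decomposition, the theorem follows by dualising. Write $\SigmaRM[\Field,\NumVars,\Degree]|_T=\mathcal{A}\cap\mathcal{B}$, where $\mathcal{A}=\mathcal{C}_{<\NumVars}\times\mathcal{C}_{\NumVars}$ and $\mathcal{B}=\ker\mu$ for the ``gluing map'' $\mu\colon\Field^T\to\Field^{B}$ given by $w\mapsto\bigl(w(s)-w(s,0)-w(s,1)\bigr)_{s\in B}$. The elementary identity $\Dual{(\mathcal{A}\cap\mathcal{B})}=\Dual{\mathcal{A}}+\Dual{\mathcal{B}}$ then gives
\[
\Dual{(\SigmaRM[\Field,\NumVars,\Degree]|_T)}=(\Dual{\mathcal{C}_{<\NumVars}}\times\Dual{\mathcal{C}_{\NumVars}})+\Span\{\text{rows of }\mu\}.
\]
The rows of $\mu$ are precisely the level-$(\NumVars-1)$ summation constraints; $\Dual{\mathcal{C}_{\NumVars}}=\Dual{(\ReedMuller[\Field,\NumVars,\Degree]|_{T_{\NumVars}})}$ is exactly the space of level-$\NumVars$ low-degree constraints; and since $T_{<\NumVars}$ is closed with $T_{<\NumVars}\cap\Field^{i}=T\cap\Field^{i}$ for all $i\le\NumVars-1$, the inductive hypothesis applied to $T_{<\NumVars}$ shows that $\Dual{\mathcal{C}_{<\NumVars}}$ is spanned by the summation constraints at levels $<\NumVars-1$ together with the low-degree constraints at levels $\le\NumVars-1$ --- and these are the same constraints, level by level, whether computed for $T_{<\NumVars}$ or for $T$. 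Collecting everything, $\Dual{(\SigmaRM[\Field,\NumVars,\Degree]|_T)}$ is spanned by all summation constraints (levels $0,\ldots,\NumVars-1$) and all low-degree constraints (levels $0,\ldots,\NumVars$), which is the claim; the base case $\NumVars=0$ is immediate.

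I expect the main obstacle to be the ``$\supseteq$'' direction of the fibre-product decomposition: one must produce a \emph{single} low-degree polynomial $P$ that simultaneously matches the prescribed subcube sums on all levels below $\NumVars$ (a global condition, carried by $P^{\flat}$) and the prescribed evaluations on the level-$\NumVars$ fibres, all within individual degree $\Degree$. The correction term $E\cdot X_\NumVars$ does the job, but seeing that it leaves the level-$\NumVars$ evaluations untouched uses both (i) the gluing conditions, which annihilate $E$ on every fibre base $s\in B$, and (ii) closedness, which puts both Boolean points $0$ and $1$ into each fibre, so that the single available summation constraint at $s$ already controls the whole fibre. A secondary point needing care is the bookkeeping that identifies the level-$i$ summation and low-degree constraints of the reduced instance $T_{<\NumVars}$ with those of $T$, together with the routine verification that $T_{<\NumVars}$ is again closed so that the induction applies.
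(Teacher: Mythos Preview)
Your argument is correct and takes a genuinely different route from the paper's.

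The paper works on the \emph{dual} side: it introduces an explicit ``flattening map'' $Q_{m,a}$ that sends a constraint on $\SigmaRM[\Field,\NumVars,\Degree]$ to a constraint on $\SigmaRM[\Field,\NumVars-1,\Degree]$ while preserving the leading nonzero position $\lambda(\cdot)$ (for a carefully chosen ordering), then uses a dimension-count: take an echelon-form basis of $\Dual{(\SigmaRM|_T)}$, send each basis vector either to a summation constraint $z_{\lambda(b_i)}$ or (after repeated flattening) to a Reed--Muller constraint, and check that the images are still in echelon form, hence linearly independent. You instead work on the \emph{primal} side: you prove a fibre-product decomposition $\SigmaRM[\Field,\NumVars,\Degree]|_T=(\mathcal{C}_{<\NumVars}\times\mathcal{C}_{\NumVars})\cap\ker\mu$ and then invoke the finite-dimensional identity $(A\cap B)^{\perp}=A^{\perp}+B^{\perp}$, finishing by induction on $\NumVars$. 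Both approaches ``peel off the last variable''; the difference is whether one does so on codewords or on constraints. Your approach is more structural and avoids the echelon-form bookkeeping; the paper's flattening map is more constructive and is stated uniformly for both $\SigmaRM$ and its zero-subcode $\ZCode{\SumDomain}{\SigmaRM}$, which the paper needs later (your decomposition would extend to that case as well---$E=Q-\rho^{\flat}$ then vanishes on $\{0,1\}^{\NumVars-1}$, so $P=\rho+E\cdot X_\NumVars$ still lands in the zero-subcode---but you did not need this for the informal statement). One small point worth making explicit in your write-up: every $(s,\beta)\in T_\NumVars$ has $s\in B$ by closedness, so $E(s)=0$ handles \emph{all} level-$\NumVars$ evaluations, not just those over $\{0,1\}$.
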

If we take $T = \Bits^{\leq \NumVars} \cup S$, then $T_i = \Bits^i \cup (S \cap \Field^i)$ (assuming $S$ is closed). Using this decomposition, we show that we can take as the ``bad'' set $W \eqdef \bigcup_{i=1}^\NumVars W_i$, where $W_i$ is the set of $(S \cap \Field^i)$-bad points in $\ReedMuller[\Field,i,\Degree]$. Each $W_i$ can be computed efficiently using the algorithm described in \cref{sec:tech-rm-sim}. 

We conclude by giving some brief intuition on how we prove \cref{theorem:sigma-rm-intro}. We show that any constraint $z$ supported on $T_i \cup T_{i+1} \cup \cdots \cup T_{\NumVars}$ can be ``flattened'' into a constraint supported on $T_i$ only. Such constraints belong to $\Dual{(\ReedMuller[\Field,i,\Degree]|_{T_i})}$. We then use a dimension-counting argument to show that these, augmented with the summation constraints, span $\Dual{(\SigmaRM[\Field,\NumVars,\Degree]|_T)}$.

\section{Preliminaries}
\label{sec:prelims} 

For a vector $\vec a \in \Alphabet^i$, we denote by $\VLen{a}$ the \emph{number of entries} in $\vec a$, i.e. $\VLen{a} = i$.
Throughout, $\Field$ is a finite field. For two vectors $x, y \in \Field^n$, we denote the dot product by $x \cdot y \eqdef \sum_{i = 1}^n x_i y_i$.
For sets $A,B$, we denote by $A \sqcup B$ the \emph{disjoint union} of $A$ and $B$.

\parhead{Algorithms} We write $\mathcal{A}^{\Proof}(\Instance)$ to denote the output of $\mathcal{A}$ when given input $\Instance$ (explicitly) and oracle access to $\Proof$. We use the abbreviation PPT to denote probabilistic polynomial-time. We generally omit the internal randomness of an algorithm from probability statements, that is, we write $\Pr[\Alg(x) = 0]$ to mean $\Pr_{r \gets \Bits^n}[\Alg(x; r) = 0]$.

\parhead{Polynomials} For $\Degree \in \N$, we write $\Polys{\Field}{\Degree}{\NumVars}$ for the ring of polynomials in $\NumVars$ variables over $\Field$ of individual degree $\Degree$. For $\vec{\Degree} = (\Degree_1,\dots,\Degree_m) \in \N^\NumVars$, we use $\Polys{\Field}{\vec{\Degree}}{\NumVars}$ to denote the ring of polynomials in $\NumVars$ variables over $\Field$ of degree $\Degree_i$ in $X_i$ for each $i \in [m]$.

\parhead{Low-degree extensions} Given a function $f\colon \SumSet_1\times\dots\times\SumSet_\NumVars \to \Field$, $\SumSet_1,\ldots,\SumSet_\NumVars \subseteq \Field$, we denote the set of all low-degree extensions of $f$ by $\LD_{\vec{\Degree}}[f] \eqdef \{p \in \Polys{\Field}{\vec{\Degree}}{\NumVars} : p(x) = f(x) ~\forall x \in \SumSet_1\times\dots\times\SumSet_m\}$.

\parhead{Lagrange polynomials}
	Let $\Field$ be a field, and let $\NumVars \in \N$. Given $\SubsetField_1,\dots \SubsetField_\NumVars \subseteq \Field$ and a point $\HPt \in \SubsetField_1\times\dots\times\SubsetField_{\NumVars}$, we define the \defemph{Lagrange interpolating polynomial for $\SubsetTuple$ and $\HPt$} to be 
	\begin{equation*}
		\LagrangePoly{\SubsetTuple, \HPt}(x) \eqdef \prod_{i=1}^{\NumVars}\prod_{\HRoot\in\SubsetField_i \setminus\{\HPt_i\}} \frac{x_i - \HRoot_i}{\HPt_i-\HRoot_i}\enspace,
	\end{equation*}
	where denote $\SubsetTuple \eqdef \SubsetField_1\times\dots\times \SubsetField_\NumVars$. $\LagrangePoly{\SubsetTuple, \HPt}$ is of individual degree $|\SubsetField_i|-1$ in its $i$-th variable.

\begin{fact}
	\label{fact:lagrange-basis}
	The set of Lagrange interpolators $\{\LagrangePoly{\SubsetTuple, \Point}\}_{\Point\in\SubsetTuple}$ form a basis for the space $\Polys{\Field}{\vec{\Degree}}{\NumVars}$ where $\vec{\Degree}\eqdef (|\SearchSet_1|-1, \dots, |\SearchSet_\NumVars|-1)$.
\end{fact}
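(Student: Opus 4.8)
The plan is the standard multivariate interpolation argument. First I would record the Kronecker-delta property of the interpolators: for $\Point, \Point' \in \SubsetTuple$ one has $\LagrangePoly{\SubsetTuple, \Point}(\Point') = 1$ if $\Point = \Point'$ and $\LagrangePoly{\SubsetTuple, \Point}(\Point') = 0$ otherwise. This is immediate from the product form: if $\Point_j \neq \Point'_j$ in some coordinate $j$, then $\Point'_j$ is one of the elements $\HRoot \in \SubsetField_j \setminus \{\Point_j\}$ over which the inner product ranges, so the factor corresponding to $\HRoot = \Point'_j$ vanishes; and if $\Point = \Point'$ every factor is of the form $(\Point_i - \HRoot)/(\Point_i - \HRoot) = 1$. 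At the same time, by construction each $\LagrangePoly{\SubsetTuple, \Point}$ has individual degree exactly $|\SubsetField_i| - 1 = \Degree_i$ in the variable $X_i$, so $\LagrangePoly{\SubsetTuple, \Point} \in \Polys{\Field}{\vec{\Degree}}{\NumVars}$.

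Next I would establish linear independence: if $\sum_{\Point \in \SubsetTuple} c_\Point \LagrangePoly{\SubsetTuple, \Point}$ is the zero polynomial, then evaluating it at each $\Point' \in \SubsetTuple$ and applying the delta property gives $c_{\Point'} = 0$, so the whole family is independent. Finally I would count dimensions: the monomials $X_1^{e_1} \cdots X_\NumVars^{e_\NumVars}$ with $0 \leq e_i \leq \Degree_i$ form a basis of $\Polys{\Field}{\vec{\Degree}}{\NumVars}$, so this space has dimension $\prod_{i=1}^\NumVars (\Degree_i + 1) = \prod_{i=1}^\NumVars |\SubsetField_i| = |\SubsetTuple|$. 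Since the $|\SubsetTuple|$ interpolators are linearly independent elements of a space of dimension $|\SubsetTuple|$, they form a basis. (Equivalently, one obtains spanning directly: for any $p \in \Polys{\Field}{\vec{\Degree}}{\NumVars}$, the polynomial $p - \sum_{\Point \in \SubsetTuple} p(\Point)\, \LagrangePoly{\SubsetTuple, \Point}$ vanishes on the entire grid $\SubsetTuple = \SubsetField_1 \times \cdots \times \SubsetField_\NumVars$ while still having individual degree $\leq \Degree_i$ in $X_i$, hence is identically zero.)

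There is no real obstacle here; the only point that deserves a one-line justification is the fact underlying the parenthetical remark — that a polynomial of individual degree $\leq \Degree_i$ in $X_i$ vanishing on $\SubsetField_1 \times \cdots \times \SubsetField_\NumVars$ is the zero polynomial. This follows by induction on $\NumVars$: writing $p = \sum_{e=0}^{\Degree_\NumVars} p_e(X_1,\ldots,X_{\NumVars-1}) X_\NumVars^e$ and fixing the first $\NumVars-1$ coordinates to any point of $\SubsetField_1 \times \cdots \times \SubsetField_{\NumVars-1}$, the resulting univariate polynomial in $X_\NumVars$ has degree $\leq \Degree_\NumVars = |\SubsetField_\NumVars| - 1$ but $|\SubsetField_\NumVars|$ roots, hence is zero; thus each $p_e$ vanishes on $\SubsetField_1 \times \cdots \times \SubsetField_{\NumVars-1}$ and the inductive hypothesis applies. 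The base case $\NumVars = 1$ is the classical univariate fact.
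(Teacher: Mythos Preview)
Your argument is correct and is the standard proof of this well-known fact. The paper does not actually prove it at all---it is stated as a Fact without proof---so there is nothing to compare against; your write-up would serve perfectly well as a self-contained justification.
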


\parhead{Vanishing functions} Given a subset $\SearchSet \subseteq \Field$ we denote the function which is vanishing on $\SearchSet$ by $\VanishingFunc_{\SearchSet}(x) \eqdef \prod_{\SearchElt \in \SearchSet}(x - \SearchElt)$.

\parhead{Vector reversal} For a vector $\vec{x} = (x_1,x_2,\ldots,x_n)$, we denote by $\revvec{x}$ the \emph{reversed} vector $(x_n,x_{n-1},\ldots,x_1)$.

\subsection{Coding theory}
Let $\Alphabet$ an alphabet and let $\BlockLength \in \N$. A \emph{code} $\Code$ is a subset $\Code \subseteq \Alphabet^\BlockLength$. Given two strings $x, y \in \Alphabet^n$, we denote the Hamming distance between $x$ and $y$ by $\HammingDist(x, y) \eqdef \left|\{i \in [n]:x_i \neq y_i\}\right|$. We say that a vector $x$ is $\varepsilon$-far from a set $S \subseteq \Alphabet^n$ if $\min_{y \in S} \HammingDist(x,y)/n \geq \varepsilon$.

\begin{definition}
    Given a function $\Codeword \colon \Domain \to \Field$ and a subset $\Subdomain \subseteq \Domain$, we denote by $\Codeword|_{\Subdomain}$ the restriction of $\Codeword$ to $\Subdomain$, that is the function $\Codeword|_{\Subdomain} \colon \Subdomain \to \Field$ such that $\Codeword(x) = \Codeword|_{\Subdomain}(x)$ for all $x \in \Subdomain$.
\end{definition}

\begin{definition}
	Given a linear code $\Code \subseteq \Field^\Domain$ and a subset $\Subdomain \subseteq \Domain$, we denote by $\Code|_{\Subdomain}$ 
	the code consisting of codewords from $\Code$ restricted to $\Subdomain$, that is 
	\begin{equation*}
		\Code|_{\Subdomain} = \{ \Codeword|_{\Subdomain} : \Codeword \in \Code \}\enspace. 
	\end{equation*}
\end{definition}

Note that $\Code|_{\Subdomain}$ is itself a linear code, as it is a linear subspace of $\Field^\Subdomain$.

\parhead{Linear codes.}
Let $\Field$ be a finite field. A code $\Code\colon\Field^\MessageLength \to \Field^\BlockLength$ is \emph{linear} if it is an $\Field$-linear map from $\Field^\MessageLength$ to $\Field^\BlockLength$. 

\parhead{Reed-Muller codes and low-degree extensions} 
The Reed-Muller (RM) code is the code consisting of evaluations of multivariate low-degree polynomials over a finite field.

\begin{definition}
    Given a field $\Field$, a positive integer $\NumVars$, and a degree vector $\vec{\Degree} = (\Degree_1, \dots, \Degree_{\NumVars}) \in \N^\NumVars$, we denote by $\ReedMuller[\Field, \NumVars, \vec{\Degree}]$ the code of evaluations over $\Field^\NumVars$ of polynomials in $\Polys{\Field}{\vec \Degree}{\NumVars}$. For $\Degree \in \N$, $\ReedMuller[\Field, \NumVars, \Degree] \eqdef \ReedMuller[\Field, \NumVars, (\Degree, \ldots, \Degree)]$.
\end{definition}

\parhead{Zero and sum codes}
A \emph{zero code} is a subcode of a linear code that is zero on a given subset.
\begin{definition}[Zero codes]
    Given a $\Code \subseteq \Field^\Domain$, and $S \subseteq \Domain$, we define the \defemph{zero code}
    \begin{equation*}
        \ZCode{S}{\Code} = \{\Codeword \in \Code : \Codeword|_S = \vec{0} \}.
    \end{equation*}
\end{definition}

\noindent A \emph{sum code} is the extension of a linear code obtained by appending all sums of entries over a given subdomain.

\begin{definition}[Sum codes]
    \label{def:sumcodes}
    For a product set $\SumDomain \eqdef \SumSet_1 \times \cdots \times \SumSet_\NumVars$, we define $\SumDomain_{i} \eqdef \SumSet_1 \times \SumSet_2 \times \cdots \times \SumSet_{i}$ and $\SumDomain_{> i} \eqdef \SumSet_{i+1} \times \cdots \times \SumSet_{\NumVars}$; $\SumDomain_{0} \eqdef \{ \bot \}$. We denote $\SumCompletion \eqdef \cup_{i=0}^\NumVars \SumDomain_{i}$.

    Given a code $\Code \subseteq \Field^\Domain$ where $\Domain = \Domain_1 \times \cdots \times \Domain_\NumVars$ and $\SumDomain \subseteq \Domain$, we define the \defemph{sum code} $\SumCode{\SumDomain}{\Code} \subseteq \Field^{\bar{\Domain}}$ to be the code consisting of all of the subcube sums over $\SumDomain$ of the codewords in $\Code$. That is,

    \[ \SumCode{\SumDomain}{\Code} \eqdef \{ \SumWord{\SumDomain}{w} : w \in \Code \} \]
    where $\SumWord{\SumDomain}{w} \colon \Field^{\Domain} \to \Field^{\bar{\Domain}}$ is given by
    \begin{equation*}
    	\SumWord{\SumDomain}{w}(\vec x) \eqdef \sum_{\vec y \in \SumDomain_{>\VLen{x}}} w(\vec x, \vec y)~.
    \end{equation*}
    If $\Domain = \SumDomain$, we will typically omit $\SumDomain$ from the notation (i.e., we write $\SumCode{}{C}$, $\SumWord{}{w}$).
\end{definition}

\subsection{Linear-algebraic claims}
We will make use of the following simple linear-algebraic algorithms and facts.
\begin{claim}
	\label{claim:basis-for-dual}
	Let $U \leq \Field^n$ be a vector space. There is a polynomial-time algorithm which, given a basis for $U$, computes a basis for $U^{\perp}$.
\end{claim}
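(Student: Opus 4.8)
The plan is to reduce the problem to computing the null space of a matrix, which is solved by Gaussian elimination. First I would take the given basis $\BasisVecs{k}$ of $U$ and form the $k \times n$ matrix $M$ over $\Field$ whose $i$-th row is $b_i$. By definition of the dual, $U^{\perp} = \{ x \in \Field^n : b_i \cdot x = 0 \text{ for all } i \in [k] \} = \ker M$, since a vector is orthogonal to all of $U$ iff it is orthogonal to a spanning set of $U$. So it suffices to compute a basis for $\ker M$.

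Next I would run Gaussian elimination to put $M$ into reduced row echelon form $M'$; this takes $\Poly(k, n)$ field operations, each of which costs $\Poly(\log|\Field|)$ bit operations, so the whole step is polynomial time. Since the rows $b_i$ are linearly independent, $M'$ has exactly $k$ pivots, say in columns $J \subseteq [n]$ with $|J| = k$; the remaining $n-k$ columns are ``free''. For each free column index $j \notin J$, I would construct the vector $v_j \in \Field^n$ that has a $1$ in coordinate $j$, has in each pivot coordinate the (negated) entry of $M'$ in that pivot row and column $j$, and has $0$ in every other free coordinate. Standard linear algebra gives that $M v_j = 0$, that the $v_j$ are linearly independent (they are distinguished by their free coordinates), and that they span $\ker M$ because $\dim \ker M = n - \operatorname{rank} M = n - k$ by rank–nullity. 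Hence $\{ v_j : j \notin J \}$ is the desired basis for $U^{\perp}$, and it is output in polynomial time.

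There is essentially no substantive obstacle here: the only points requiring (routine) care are confirming that the rank of $M$ equals $k$ — which holds precisely because we were handed a \emph{basis} rather than an arbitrary spanning set — and keeping track of the cost model so that the bound is polynomial in $n$ and $\log|\Field|$ rather than merely polynomial in $n$ over an abstract field. Both are handled by the observations above, so the claim follows.
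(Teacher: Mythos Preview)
Your proposal is correct and gives the standard Gaussian-elimination argument for computing a basis of the kernel of the matrix whose rows are the given basis vectors. The paper itself states this claim without proof, treating it as a well-known fact, so your write-up is strictly more detailed than what the paper provides; there is nothing to compare against.
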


\begin{claim}
	\label{claim:basis-for-image}
	Let $U \leq \Field^n, V \leq \Field^m$ be vector spaces. There is a polynomial-time algorithm which, given a linear map $M \in \Field^{m \times n}$ such that $\{ Mu : u \in U \} = V$ and basis $B \in \Field^{n \times k}$ for $U^{\perp}$, outputs a basis for $V^{\perp}$.
\end{claim}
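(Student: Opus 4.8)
The plan is to reduce the problem to basic linear-algebra primitives, chiefly Gaussian elimination and the algorithm of \cref{claim:basis-for-dual}. The key observation is that $V^\perp = \{ y \in \Field^m : y^\top M u = 0 \text{ for all } u \in U \}$, since $V = \{ Mu : u \in U\}$. Equivalently, $y \in V^\perp$ if and only if the vector $M^\top y \in \Field^n$ annihilates all of $U$, i.e.\ $M^\top y \in U^\perp$. We are given a basis $B$ for $U^\perp$ as the columns of a matrix $B \in \Field^{n \times k}$; the condition $M^\top y \in U^\perp = \colspan(B)$ is what we must turn into an explicit linear system in $y$.

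The concrete steps I would carry out are as follows. First, from the basis $B$ for $U^\perp$, compute a basis for $(U^\perp)^\perp = U$ using \cref{claim:basis-for-dual}; call the resulting matrix $C \in \Field^{n \times r}$ whose columns span $U$ (here $r = \dim U$). Then $M^\top y \in U^\perp$ if and only if $C^\top M^\top y = 0$, i.e.\ $(MC)^\top y = 0$. Thus $V^\perp$ is exactly the (left) kernel condition $\{ y : (MC)^\top y = 0 \}$, which is the null space of the matrix $(MC)^\top \in \Field^{r \times m}$. Compute the product $MC \in \Field^{m \times r}$, then compute a basis for the null space of $(MC)^\top$ by Gaussian elimination. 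Output this basis; it is a basis for $V^\perp$. Every step---matrix multiplication, Gaussian elimination to extract a null-space basis, and the invocation of \cref{claim:basis-for-dual}---runs in time polynomial in $n$, $m$, $k$, and $\log|\Field|$.

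Correctness follows by unwinding the equivalences: $y \in V^\perp \iff \langle y, Mu\rangle = 0\ \forall u \in U \iff \langle M^\top y, u\rangle = 0\ \forall u \in U \iff M^\top y \in U^\perp \iff M^\top y \perp \ker(\text{nothing})$, and since $U^\perp{}^\perp = U = \colspan(C)$ over a finite field, this is $\iff C^\top M^\top y = 0$. The set of such $y$ is a subspace of $\Field^m$ whose dimension is $m - \mathrm{rank}(MC)$, and Gaussian elimination returns a genuine basis for it. I do not anticipate a serious obstacle here; the only mild subtlety is making sure we work with $U$ rather than $U^\perp$ directly—hence the extra dualization step via \cref{claim:basis-for-dual}—and being careful that the null space we extract is the \emph{left} null space of $MC$ (equivalently the right null space of $(MC)^\top$), so that the output vectors live in $\Field^m$ as required. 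Everything else is standard.
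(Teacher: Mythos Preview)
Your proposal is correct and follows essentially the same approach as the paper: both first apply \cref{claim:basis-for-dual} to recover a basis $C$ for $U$ from $B$, then form $MC$, whose column span is $V$. The only cosmetic difference is that the paper extracts a basis for $\colspan(MC)$ and applies \cref{claim:basis-for-dual} once more, whereas you compute the left null space of $MC$ directly by Gaussian elimination; these are equivalent computations.
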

\begin{proof}
	Use \cref{claim:basis-for-dual} to compute a basis $B' \in \Field^{n \times (n-k)}$ for $U$. Then compute $A = MB' \in \Field^{m \times (n-k)}$; we have that $V = \colspan(A)$ by assumption on $M$. Compute a basis for $V$ from $A$ by Gaussian elimination, and from this a basis for $\Dual{V}$ by \cref{claim:basis-for-dual}.
\end{proof}

\begin{claim}
    \label{claim:linear-trans-uniform}
    Let $T:V\to W$ be a linear map. If $v \sim \mathcal{U}(V)$, then $T(v) \sim \mathcal{U}(T(V))$.
\end{claim}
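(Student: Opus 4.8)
The plan is to use the standard fiber-counting argument. Since $\Field$ is finite, $V$ is a finite-dimensional vector space over a finite field, so $\mathcal{U}(V)$ is well-defined, and $T(V)$ is a finite subspace of $W$. The key point is that the restriction of $T$ to $V$ has a well-defined kernel $K \eqdef \{ v \in V : T(v) = 0 \}$, a subspace of $V$, and every nonempty fiber of $T|_V$ is a coset of $K$, hence has size exactly $|K|$.

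First I would note that $T(V)$ is precisely the support of the distribution of $T(v)$, so it suffices to show this distribution is flat on $T(V)$. Fix an arbitrary $w \in T(V)$ and choose some $v_0 \in V$ with $T(v_0) = w$; then $\{ v \in V : T(v) = w \} = v_0 + K$, which is in bijection with $K$ via $k \mapsto v_0 + k$. Consequently $\Pr_{v \sim \mathcal{U}(V)}[T(v) = w] = |v_0 + K| / |V| = |K| / |V|$, which is independent of $w$. A distribution on the finite set $T(V)$ that assigns the same probability to each element is by definition $\mathcal{U}(T(V))$, so $T(v) \sim \mathcal{U}(T(V))$.

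There is no genuine obstacle here; the statement is elementary and the only thing to be careful about is that finiteness of $\Field$ is what makes the uniform distributions and the counting argument meaningful, which holds throughout the paper.
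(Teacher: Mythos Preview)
Your proof is correct and follows essentially the same fiber-counting argument as the paper: both compute $\Pr[T(v)=w]$ as $|\ker(T)|/|V|$ by identifying the fiber over $w$ with a coset of the kernel. Your version is in fact slightly more careful in explicitly picking a preimage $v_0$ rather than writing the coset as $\ker(T)+w$, and in noting the finiteness hypothesis.
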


\begin{proof}
    We have that 
    \begin{align*}
        \Pr_{v\gets V}[T(v) = w] = \Pr_{v\gets V}[v \in \ker(T) + w] = \frac{|\ker(T)|}{|V|} = \frac{1}{|T(V)|}.
    \end{align*}
\end{proof}

\begin{definition}
	Let $(D, <)$ be an ordered domain. For a vector $v \colon D \to \Field$, we define $\lambda(v) \eqdef \min \{ i \in D : v(i) \neq 0 \}$, where the minimum is taken with respect to $<$. A sequence of vectors $v_1,\ldots,v_n \colon D \to \Field$ is in \emph{echelon form} if, for all $i < j$, $\lambda(v_i) < \lambda(v_j)$, and any all-zero vectors are at the end of the sequence.
\end{definition}
It is a straightforward consequence of Gauss--Jordan elimination that every subspace of $(D \to \Field)$ has a basis in echelon form. Conversely, a sequence of vectors in echelon form (not containing the all-zero vector) forms a linearly independent set.

\begin{definition}
    Let $(R, <_R)$ and $(C, <_C)$ be ordered domains and let $A \in \Field^{R\times C}$. The \defemph{leading entry} of a row $w \in \Field^C$ of $A$ is $w(\lambda(w))$, i.e., the first non-zero entry in that row, with respect to the ordering $<_C$.
\end{definition}

\begin{definition}
    Let $(R, <_R)$ and $(C, <_C)$ be ordered domains. A matrix $A \in \Field^{R\times C}$ is said to be in \defemph{reduced row echelon form} if \begin{inparaenum}[(i)]
        \item the columns of $A$, listed in increasing order by their indices with respect to $<_C$ are in row echelon form (with respect to $<_D$);
        \item all leading entries are 1; and
        \item all entries in the same column as a leading entry are 0.
    \end{inparaenum}
\end{definition}
Every matrix can be converted to reduced row echelon form via Gauss-Jordan elimination, and moreover, the resulting matrix is unique. We will often represent systems of linear equations in their matrix form. We refer to variables with a leading entry in their column of the reduced row echelon form matrix as \emph{leading variables}, and the remaining variables as \emph{free variables}.

\subsection{The combinatorial nullstellensatz}
The combinatorial nullstellensatz, due to Alon \cite{Alon99}, is a powerful tool in combinatorial number theory. It is stated as follows.
\begin{lemma}
    \label{lemma:combinatorial-nullstellensatz}
    \newcommand{\Polyn}[2]{#1[X_{1},\ldots,X_{#2}]}
	Let $\Field$ be an arbitrary field, and let $f$ be a polynomial in $\Polyn{\Field}{n}$. Let $S_1, \ldots, S_n$ be nonempty subsets of $\Field$ and define $\VanishingFunc_{S_i}(x_i) = \prod_{s\in S_i} (x_i - s)$. If $f$ vanishes over all the common zeros of $\VanishingFunc_{S_1}, \ldots, \VanishingFunc_{S_n}$ (that is, if $f(s_1, \ldots, s_n) = 0$ for all $s_i \in S_i$), then there are polynomials $h_1, \ldots, h_n \in \Polyn{\Field}{n}$ so that $\deg(h_i) \leq \deg(f) - \deg(\VanishingFunc_{S_i})$ so that
    \begin{equation*}
        f = \sum_{i=1}^n h_i \VanishingFunc_{S_i}
        \enspace.
    \end{equation*}
\end{lemma}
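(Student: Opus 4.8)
The plan is to run the classical argument underlying Alon's combinatorial nullstellensatz: first reduce $f$ modulo the vanishing polynomials $\VanishingFunc_{S_i}$ to obtain a remainder with small individual degrees, and then argue that the remainder must be the zero polynomial because it vanishes on the entire grid $S_1 \times \cdots \times S_n$.

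Write $g_i \eqdef \VanishingFunc_{S_i}$ and $t_i \eqdef |S_i| = \deg(g_i)$, and note that each $g_i$, viewed as a polynomial in $X_i$, is monic of degree $t_i$. First I would perform a multivariate polynomial division. Starting from $f$, I repeatedly locate a monomial $c\,X_1^{e_1}\cdots X_n^{e_n}$ of the current remainder with $e_i \geq t_i$ for some $i$, and subtract $c\,X_1^{e_1}\cdots X_i^{e_i - t_i}\cdots X_n^{e_n}\cdot g_i$, adding this factor into $h_i$. Since $g_i - X_i^{t_i}$ has $X_i$-degree at most $t_i - 1$, this step cancels the chosen monomial and introduces only monomials of strictly smaller total degree; hence the total degree of the running remainder is non-increasing and stays at most $\deg(f)$, so every monomial ever accumulated into $h_i$ has total degree at most $\deg(f) - t_i$, which gives $\deg(h_i) \leq \deg(f) - \deg(\VanishingFunc_{S_i})$. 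Termination follows from a standard well-ordering argument on monomials (e.g., degree-lexicographic order, in which each step replaces a monomial by finitely many strictly smaller ones). The result is a decomposition $f = \sum_{i=1}^n h_i \VanishingFunc_{S_i} + r$ with $\deg_{X_i}(r) < |S_i|$ for every $i$.

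It then remains to show $r \equiv 0$. Since each $\VanishingFunc_{S_i}$ vanishes on $S_1 \times \cdots \times S_n$ and $f$ does too by hypothesis, so does $r$. I would conclude using the standard vanishing lemma, proved by induction on $n$: any polynomial with $\deg_{X_i} < |S_i|$ for all $i$ that vanishes on $S_1 \times \cdots \times S_n$ is identically zero. The base case $n = 1$ is just that a nonzero univariate polynomial of degree $< |S_1|$ cannot have $|S_1|$ roots; for the inductive step, I expand $r = \sum_{j < t_n} c_j(X_1,\ldots,X_{n-1}) X_n^j$, fix an arbitrary $(a_1,\ldots,a_{n-1}) \in S_1 \times \cdots \times S_{n-1}$, apply the base case to the univariate polynomial $\sum_j c_j(a_1,\ldots,a_{n-1}) X_n^j$ (which vanishes on $S_n$ and has degree $< |S_n|$) to conclude every $c_j(a_1,\ldots,a_{n-1}) = 0$, and then apply the induction hypothesis to each $c_j$. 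Hence $r \equiv 0$ and $f = \sum_i h_i \VanishingFunc_{S_i}$ with the claimed degree bounds. The only delicate point, and the main obstacle, is the bookkeeping in the reduction step, where one must simultaneously certify termination and the bound $\deg(h_i) \leq \deg(f) - \deg(\VanishingFunc_{S_i})$; both fall out of the single invariant that the total degree of the running remainder never increases.
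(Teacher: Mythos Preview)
Your proof is correct and follows the standard argument due to Alon. Note that the paper itself does not prove this lemma: it is stated in the preliminaries as a known result, attributed to \cite{Alon99}, and used as a black box. Your argument is essentially Alon's original proof, so there is nothing to compare.
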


\section{Locally simulatable encodings}
\label{sec:lses}
In this section we introduce the notion of locally simulatable encodings. First, we define randomised encoding functions, which are randomised mappings from a message space $\MsgSpace$ into a code $\CodeSpace$.

\begin{definition}[Randomised encoding function]
    Let $\Alphabet$ be an alphabet, $S,D$ be sets, and $\MsgSpace \subseteq \Alphabet^S,\CodeSpace \subseteq \Alphabet^D$ be the \emph{message space} and \emph{code space}, respectively. A \defemph{randomised encoding function} is a random variable $\Enc$ taking values in $\MsgSpace \to \CodeSpace$.
\end{definition}

Next, we define what it means for a randomised encoding function to be locally simulatable. At a high level, we require that the (conditional) distribution of any local view of the encoding can be efficiently simulated using only a local view of the message.

\begin{definition}[Locally simulatable encoding]
	Let $\Enc\colon \MsgSpace \to \CodeSpace$ be a randomised encoding function, let $\TimeBound\colon\N \to \N$ and let $\NumQuery\colon\N\to\N$. We say that $\Enc$ is \defemph{$(\TimeBound,\NumQuery)$-locally simulatable} if there is a probabilistic algorithm $\CodeSimulator$, called a \defemph{local simulator}, which receives oracle access to a message $m \in \MsgSpace$, a query-answer set $\QueryAnsSet \in (\Domain \times \Alphabet)^n$ and a new query $\Query \in \Domain$, runs in time $\TimeBound(n)$ and makes at most $\NumQuery(n)$ queries to $m$, satisfying:
	
	\begin{equation*}
		\Pr[\CodeSimulator^m(\QueryAnsSet, \Query) = \beta] = \Pr\left[\Enc(m)_\Query = \beta ~|~ \Enc(m)_x = y ~\forall (x,y) \in \QueryAnsSet \right],
	\end{equation*}
	for all $\Message \in \MsgSpace, \Query \in \Domain, \beta \in \Alphabet$, and for all $\QueryAnsSet$ such that there exists $c \in \supp(\Enc_m)$ with $c_x = y$ for all $(x,y) \in \QueryAnsSet$.
\end{definition}

To give a sense of the definition, we consider two examples.

\begin{example}
	The randomised encoding function which maps $m \in \Field^S$, $S \subseteq \Field$, to the evaluation of a random univariate polynomial $p$ over $\Field$ of degree $2(|S|-1)$ such that $p(x) = m(x)$ for all $x \in S$, is $(\Poly(n,d,\log |\Field|),n)$-locally simulatable. Indeed, for $I \subseteq \Field$, $|I| \leq |S|-1$, $p|_I$ is distributed as $(m(x))_{x \in I \cap S} \| (\Uniform(\Field))_{x \in I \setminus S}$. For $|I| \geq |S|$, the simulator $\CodeSimulator$ can simply fully determine $m$ and directly conditionally sample $p$.
\end{example}

\begin{example}
	The encoding function which maps $m \in \Field^S$, $S \subseteq \Field$, to the evaluation of the \emph{unique} univariate polynomial $p$ over $\Field$ of degree $|S|-1$ such that $p(x) = m(x)$ for all $x \in S$, is \emph{not} $(\TimeBound,\NumQuery)$-locally simulatable unless $\NumQuery(n) \geq |S|$ for all $n > 0$. Indeed, determining $p(x)$ for $x \in \Field \setminus S$ requires knowledge of the entire message $m$.
\end{example}

\subsection{Linear randomised encoding functions}
\label{sec:linear-refs}

In this work we will focus on a special family of randomised encoding functions, for which the randomised encoding is given by appending a random vector to the message, and then applying a linear function.

\begin{definition}[Linear randomised encoding function]
    We say that $\Enc\colon \MsgSpace \to \CodeSpace$ is a \defemph{linear randomised encoding function} if $\Alphabet = \Field$ for some finite field $\Field$, and $\Enc(\Message)$ is distributed as $\Enc(\Message; \Randomness)$ where $\Randomness \sim \mathcal{U}(\Field^r)$ and $\Enc\colon \MsgSpace \times \Field^r \to \CodeSpace$ is a (deterministic) linear function (of both message and randomness).
\end{definition}
\noindent That is, we use $\Enc$ to refer to both the fixed encoding function and the resulting random variable.

Linear randomised encodings have certain structural properties that make them easier to work with. One important property of linear randomised encodings is that the distribution of $\Enc(\Message)|_I$ can be characterised as the solution space of an affine system.  An $\ell$-constraint locator is an algorithm which outputs a description of this system, defined as follows.

\begin{definition}
	\label{def:cons-locate}
	Let $\Enc\colon \MsgSpace \to \CodeSpace$ be a linear randomised encoding function, and let $\ell \colon \N \to \N$. We say that an algorithm $\CL$ is an \defemph{$\ell$-constraint locator for $\Enc$} if, for every subdomain $\Subdomain \subseteq \Domain$, it satisfies 
	\begin{equation*}
		\CL(\Subdomain) = (R, Z),
	\end{equation*}
	where $R \subseteq S$ and $Z \in \Field^{k\times (R \sqcup \Subdomain)}$ are such that $|R| = \ell(|\Subdomain|)$, and for every message $m \in \MsgSpace$ and $\vec \beta \in \Field^I$, $(m|_{R}, \vec{\beta})^T \in \ker(Z) \subseteq \Field^{R \sqcup \Subdomain}$ if and only if there exists randomness $\mu \in \Field^r$ such that:
	\begin{equation*}
		\Enc(m; \mu)|_{\Subdomain} = \vec{\beta}.
	\end{equation*}
\end{definition}

Our notion of an $\ell$-constraint locator generalises the notion of a ``constraint detector'' introduced in \cite{BenSassonCFGRS17}. A constraint detector for a linear code $\Code \subseteq \Field^\Domain$ takes as input a subdomain $\Subdomain \subseteq \Domain$ and outputs a matrix representing all linear relations that hold between codewords in $\Code|_{\Subdomain}$. Setting the randomised encoding function for our $\ell$-constraint locator to be the uniform distribution over all codewords in $\Code$, we recover the definition of a constraint detector.

\begin{definition}[Constraint detector]
	\label{def:cons-detect}
    Let $\CodeSpace \subseteq \Field^\Domain$ be a linear code. A \defemph{constraint detector for $\CodeSpace$} is a $0$-constraint locator for $\Enc_{\CodeSpace} \colon \varnothing \to \CodeSpace$, the linear randomised encoding function given by $\Enc_{\CodeSpace} \sim \Uniform(\CodeSpace)$.
\end{definition}

Given an efficient constraint locator, it is straightforward to construct a local simulator.

\begin{claim}
\label{claim:cl-implies-local-sim}
	Let $\Enc\colon \MsgSpace \to \CodeSpace$ be a linear randomised encoding with an $\NumQuery$-constraint locator running in time $\TimeBound$. Then $\Enc$ is $(\TimeBound,\NumQuery)$-locally simulatable. 
\end{claim}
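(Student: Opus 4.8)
The plan is to show that the constraint locator directly gives us the machinery needed to conditionally sample, and that the linear-algebraic primitives from \cref{claim:basis-for-dual,claim:basis-for-image,claim:linear-trans-uniform} let us carry out this sampling efficiently. Given a query-answer set $\QueryAnsSet \in (\Domain \times \Alphabet)^n$ and a new query $\Query$, the simulator $\CodeSimulator$ will first form the subdomain $\Subdomain \eqdef \{ x : (x,y) \in \QueryAnsSet \} \cup \{ \Query \}$ of size at most $n+1$, run the constraint locator $\CL(\Subdomain) = (R, Z)$ in time $\TimeBound(n+1)$, and query $m$ at the $\ell(n+1)$ locations in $R$ to obtain $m|_R$. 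By the defining property of the constraint locator, the set of $\vec\beta \in \Field^{\Subdomain}$ that are consistent with some choice of randomness is exactly $\{ \vec\beta : (m|_R, \vec\beta)^T \in \ker(Z) \}$, which is an affine subspace of $\Field^{\Subdomain}$ (a translate of $\{ \vec\beta : (0,\vec\beta)^T \in \ker Z\}$ by any particular solution, using the known values $m|_R$ to move the $R$-block to the right-hand side).

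The key step is to argue that the conditional distribution $\Enc(m)|_{\Subdomain}$, given that $\Enc(m)|_x = y$ for all $(x,y) \in \QueryAnsSet$, is \emph{uniform} over the subset of this affine space that additionally agrees with the given answers on $\Subdomain \setminus \{\Query\}$. This follows from \cref{claim:linear-trans-uniform}: since $\Enc(m;\cdot)$ is linear in the randomness $\Randomness \sim \Uniform(\Field^r)$, the random variable $\Enc(m)|_{\Subdomain}$ is the image of the uniform distribution on $\Field^r$ under the affine map $\Randomness \mapsto \Enc(m;\Randomness)|_{\Subdomain}$, hence uniform over its image $\{ \Enc(m;\mu)|_{\Subdomain} : \mu \in \Field^r \}$; conditioning on the $\Subdomain \setminus \{\Query\}$-coordinates being equal to the prescribed answers restricts to a coset and remains uniform. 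Therefore $\CodeSimulator$ should: solve the linear system $Z (m|_R, \vec\beta)^T = 0$ together with the affine constraints $\vec\beta|_{x} = y$ for each $(x,y) \in \QueryAnsSet$, obtaining an affine solution space; sample $\vec\beta$ uniformly from it (pick values for free variables uniformly at random, back-substitute); and output $\vec\beta(\Query)$. Each of these is a $\Poly(n, \log|\Field|)$-time linear-algebra operation via Gauss–Jordan elimination, so the total running time is $\TimeBound(n+1) + \Poly(n,\log|\Field|)$ and the number of queries to $m$ is $\ell(n+1)$; after adjusting $\TimeBound$ and $\NumQuery$ by this additive $O(1)$ shift in argument (or simply assuming monotonicity, which is WLOG), we get $(\TimeBound, \NumQuery)$-local simulatability.

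I should also handle the degenerate hypothesis carefully: the definition only requires the simulator to be correct when there exists $c \in \supp(\Enc_m)$ consistent with $\QueryAnsSet$, so the affine solution space for $\vec\beta|_{\Subdomain \setminus \{\Query\}}$ agreeing with the answers is guaranteed nonempty, and uniform sampling is well-defined. One subtlety worth a sentence: the constraint locator's guarantee is phrased in terms of $\ker(Z) \subseteq \Field^{R \sqcup \Subdomain}$ with the message block $m|_R$ plugged in, so I need to observe that plugging in known constants for the $R$-coordinates turns $\ker(Z)$ into an affine (not linear) condition on $\vec\beta$, which is exactly what I want. The main obstacle, such as it is, is purely bookkeeping: making sure the query-answer constraints on $\Subdomain \setminus \{\Query\}$ and the locator constraints are assembled into a single consistent system whose solution space projects correctly onto the $\Query$-coordinate, and confirming that uniform sampling from an explicitly-presented affine space (via reduced row echelon form and random free-variable assignment) is standard and runs in polynomial time. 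There is no deep mathematical content here — the real work was in building the constraint locators for the specific codes, which is done elsewhere; this claim is just the glue.
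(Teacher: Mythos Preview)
Your proposal is correct and follows essentially the same approach as the paper: run $\CL$ on $\supp(\QueryAnsSet)\cup\{\Query\}$, query $m$ on $R$, and use the resulting affine system to output the $\Query$-coordinate of a uniformly random consistent vector. The paper's proof differs only in presentation---it splits into two explicit cases (whether some row of $Z$ has a nonzero entry at $\Query$, in which case the answer is determined, versus not, in which case the answer is uniform, established via an explicit bijection between randomness sets $Q_\beta$)---whereas you handle both cases at once by invoking \cref{claim:linear-trans-uniform} and the coset structure; both arguments are equivalent.
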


\begin{mdframed}[nobreak=true]
	\begin{construction}
		A local simulator $\CodeSimulator$ for a linear randomised encoding function $\Enc\colon \Field^S \to \Field^\Domain$ given an $\ell$-constraint locator $\CL$ for $\Enc$. \ConstrucSpacing
        $\CodeSimulator_{\Enc}^{\Message}(\OracleTable, \Query)$:
		\begin{enumerate}[nolistsep]
			\item Set $(R, Z) = \CL(\supp(\OracleTable)\Union \{\Query\})$.
			\item For each $\gamma \in R$, query the message and set $\Message_\gamma \eqdef \Message(\gamma)$.
			\item \label{step:output-specific-elt}If there exists a row $z \in \Field^{R\Union \supp(\OracleTable)\Union \{\Query\}}$ of the matrix $Z$ with $z(\Query) \neq 0$, output\\ $\beta \eqdef -\frac{1}{z(\Query)}\sum_{x \in\supp(\OracleTable)}y_x z(x) + \sum_{x \in R}\Message_x z(x)$, where $(x, y_x) \in \OracleTable$.
			\item \label{step:output-random-elt} Otherwise, output $\beta \gets \Field$. 
		\end{enumerate}
	\end{construction}
\end{mdframed}

\begin{proof}
    First we introduce some notation. Denote $p_{\beta} \eqdef \Pr[\Enc(\Message)_\alpha = \beta \mid \Enc(\Message)_x = y ~\forall(x, y) \in \OracleTable]$. Denote $\Subdomain \eqdef \supp(\OracleTable) \Union\{\alpha\}$ and denote $\kappa \eqdef -\frac{1}{z(\Query)}\sum_{x \in \supp(\OracleTable)} y_x z(x) + \sum_{x \in R}\Message_x z(x)$. Define the vector $\vec{y} \in \Field^{\supp(\OracleTable)}$ to be the vector consisting of values $y_x$ for each $(x, y_x) \in \OracleTable$.

    We will break the proof into two parts, corresponding to the two possibilities for the output of the construction in \cref{step:output-specific-elt} and \cref{step:output-random-elt}. First, we show that if there exists a row $z \in \Field^{R\Union \supp(\OracleTable)\Union \{\Query\}}$ of the matrix $Z$ with $z(\Query) \neq 0$ (as in \cref{step:output-specific-elt}), then $p_\beta = 1$ if $\beta = \kappa$ and $p_\beta=0$ otherwise. If there exists such a row, then the vector $(\beta, \vec{y}, \Message|_{R}) \in \ker(Z)$ if and only if $\beta = \kappa$. By the correctness of $\CL$, $(\beta, \vec{y}, \Message|_{R}) \in \ker(Z)$ if and only if there exists randomness $\mu$ such that $\Enc(\Message; \mu)|_{\Subdomain} = (\beta, \vec{y})$. So for all randomness $\mu \in \Field^r$, $\Enc(\Message;\mu)|_{\Subdomain} = (\beta, \vec{y})$ if and only if $\beta = \kappa$; hence $p_\beta = 1$ if $\beta = \kappa$ and $p_\beta=0$ otherwise.

    Second, we show that if there does not exist a row $z$ with $z(\alpha) \neq 0$ (as in \cref{step:output-random-elt}), then $p_\beta = \frac{1}{|\Field|}$, for all $\beta \in \Field$. For each $\beta \in \Field$, consider the set of all possible choices of randomness which agree with $\OracleTable \Union (\alpha, \beta)$, that is $Q_{\beta} \eqdef \{\mu \in \Field^r : \Enc(\Message, \mu)|_{\{\alpha\}\Union\supp(\OracleTable)} = (\beta,\vec{y})\}$.  We will show that $|Q_\beta| = |Q_{\beta'}|$ for each $\beta , \beta' \in \Field$ by giving a bijection, which implies that $p_\beta = \frac{1}{|\Field|}$, since $\mu$ is chosen uniformly at random. 
    
    In this case, $(\beta, \vec{y}, \Message|_{R}) \in \ker(Z)$ for all $\beta \in \Field$. Then, by the correctness of $\CL$, for all $\beta \in \Field$, there exists randomness $\mu \in \Field^r$ such that $\Enc(\Message,\mu)|_{I} = (\beta, \vec{y})$, that is, $Q_\beta$ is non-empty for each $\beta \in \Field$. Let $\mu_\beta \in Q_\beta$ and let $\mu_{\beta'} \in Q_{\beta'}$. It is easy to verify that the map $M\colon Q_\beta \to Q_{\beta'}$ defined by $M(\mu) \eqdef \mu + (\mu_{\beta'} - \mu_{\beta})$ is bijective, completing the proof. 
\end{proof}

We prove that $\ell$-constraint locators satisfy a useful composition property. That is, given constraint locators for encodings $\EncIn$ and $\EncOut$, we can construct a constraint locator for the composed encoding $\EncOut \circ \EncIn$.

\begin{claim}
\label{claim:cl-composes}
    Let $\EncIn \colon \Field^S \to \Field^{\DomainIn}$ and $\EncOut \colon \Field^{\DomainIn} \to \Field^{\DomainOut}$ admit $\QueryIn$-constraint location and $\QueryOut$-constraint location respectively. Then $\EncOut \circ \EncIn (M; \RandIn, \RandOut) \eqdef \EncOut(\EncIn(\Message; \RandIn); \RandOut)$ admits $(\QueryIn\circ\QueryOut)$-constraint location.
\end{claim}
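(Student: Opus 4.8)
The plan is to build a constraint locator $\CL$ for $\EncOut \circ \EncIn$ by chaining $\CLOut$ and $\CLIn$ and then eliminating the intermediate coordinates by linear algebra. First I would record that $\EncOut \circ \EncIn$ is itself a linear randomised encoding function: $\EncOut(\EncIn(\Message;\RandIn);\RandOut)$ is a composition of $\Field$-linear maps and hence linear in $(\Message,\RandIn,\RandOut)$, while the combined randomness $(\RandIn,\RandOut)$ is uniform over $\Field^{r_{\text{in}}+r_{\text{out}}}$; moreover every output of $\EncIn$ lies in $\Field^{\DomainIn}$, which is the message space of $\EncOut$, so the composition is well-defined.

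On input $\Subdomain \subseteq \DomainOut$, $\CL$ proceeds as follows. Run $\CLOut(\Subdomain)$ to obtain $(\RUp,\ConstraintsOut)$ with $\RUp \subseteq \DomainIn$, $|\RUp| = \QueryOut(|\Subdomain|)$, and $\ConstraintsOut \in \Field^{k_{\text{out}} \times (\RUp \sqcup \Subdomain)}$. Then run $\CLIn(\RUp)$ to obtain $(R,\ConstraintsIn)$ with $R \subseteq S$, $|R| = \QueryIn(|\RUp|) = \QueryIn(\QueryOut(|\Subdomain|)) = (\QueryIn \circ \QueryOut)(|\Subdomain|)$, and $\ConstraintsIn \in \Field^{k_{\text{in}} \times (R \sqcup \RUp)}$. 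Now form a homogeneous system on the variable set $R \sqcup \RUp \sqcup \Subdomain$ by padding $\ConstraintsIn$ with zero columns on $\Subdomain$ and $\ConstraintsOut$ with zero columns on $R$, and let $K \subseteq \Field^{R \sqcup \RUp \sqcup \Subdomain}$ be the kernel of the stacked matrix, so that $K = \{(x,\vec\gamma,\vec\beta) : (x,\vec\gamma) \in \ker(\ConstraintsIn),\ (\vec\gamma,\vec\beta) \in \ker(\ConstraintsOut)\}$. Compute a basis of $K$ by Gaussian elimination, restrict each basis vector to the coordinates in $R \sqcup \Subdomain$, let $V \leq \Field^{R \sqcup \Subdomain}$ be the span of these restrictions (i.e.\ the projection of $K$), and use \cref{claim:basis-for-dual} to compute a basis $Z$ for $V^{\perp}$; output $(R,Z)$. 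Every step is polynomial-time and $|R| = (\QueryIn \circ \QueryOut)(|\Subdomain|)$, as required.

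For correctness, fix $\Message \in \Field^S$ and $\vec\beta \in \Field^\Subdomain$. By construction $\ker(Z) = (V^{\perp})^{\perp} = V$, so $(\Message|_R,\vec\beta) \in \ker(Z)$ if and only if there exists $\vec\gamma \in \Field^{\RUp}$ with $(\Message|_R,\vec\gamma) \in \ker(\ConstraintsIn)$ and $(\vec\gamma,\vec\beta) \in \ker(\ConstraintsOut)$. By correctness of $\CLIn$ on the subdomain $\RUp$, the vectors $\vec\gamma$ satisfying $(\Message|_R,\vec\gamma) \in \ker(\ConstraintsIn)$ are exactly the vectors $\EncIn(\Message;\RandIn)|_{\RUp}$ ranging over all $\RandIn$. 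By correctness of $\CLOut$, applied to the message $y \eqdef \EncIn(\Message;\RandIn) \in \Field^{\DomainIn}$, the condition $(y|_{\RUp},\vec\beta) \in \ker(\ConstraintsOut)$ holds if and only if there is $\RandOut$ with $\EncOut(y;\RandOut)|_\Subdomain = \vec\beta$. Chaining these equivalences, $(\Message|_R,\vec\beta) \in \ker(Z)$ if and only if there exist $\RandIn,\RandOut$ with $(\EncOut \circ \EncIn)(\Message;\RandIn,\RandOut)|_\Subdomain = \vec\beta$, which is precisely the defining property of a $(\QueryIn \circ \QueryOut)$-constraint locator for $\EncOut \circ \EncIn$.

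The only delicate point — and it is bookkeeping rather than a genuine obstacle — is to keep the shared coordinate block $\RUp$ consistently identified between $\ConstraintsIn$ and $\ConstraintsOut$ when assembling $K$, and to observe that restricting a basis of $K$ to $R \sqcup \Subdomain$ and then dualising really does yield the dual of the projection of $K$, which holds because $(U^{\perp})^{\perp} = U$ for any subspace $U$ of the finite-dimensional space $\Field^{R \sqcup \Subdomain}$. Everything else is an application of the polynomial-time linear-algebra primitives of \cref{claim:basis-for-dual,claim:basis-for-image}.
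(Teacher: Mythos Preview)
Your proposal is correct and matches the paper's own argument essentially step for step: run $\CLOut$ on $\Subdomain$, feed the resulting intermediate set into $\CLIn$, stack the two padded constraint matrices, take the kernel, project out the intermediate block, and dualise. Your correctness chain (existence of $\vec\gamma$ in both kernels $\Leftrightarrow$ existence of $\RandIn,\RandOut$) is precisely the paper's reasoning, and your bookkeeping remarks about aligning the shared $\RUp$ block and using $(V^\perp)^\perp = V$ are the only points the paper leaves implicit.
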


\begin{mdframed}[nobreak=true]
	\begin{construction}
		\label{construc:composed-local-simulator}
        A constraint locator $\CL_{*}$ for $\Enc^*$, which receives oracle access to the message $\Message$ and is given constraint locators $\CLIn$ and $\CLOut$ for $\EncIn$ and $\EncOut$ respectively. It receives as input a subdomain $\Subdomain \subseteq \DomainOut$.\ConstrucSpacing
		$\CL_{*}(\Subdomain)$:
		\begin{enumerate}[nolistsep]
            \item Compute $(\SubmessageOut, \ConstraintsOut) \eqdef \CLOut(I)$.
            \item Compute $(\SubmessageIn, \ConstraintsIn) \eqdef \CLIn(\SubmessageOut)$.

            \item Define $Z_*$ to the $(k_{\text{in}}+k_{\text{out}})\times (|\Subdomain|+|\SubmessageOut|+|\SubmessageIn|)$ matrix defined as
            \begin{equation*}
                Z_{*} \eqdef \left(
                \begin{array}{ccc}
                     \ConstraintsOut|_{\Subdomain} & \ConstraintsOut|_{\SubmessageOut} & \ZeroMatrix \\
                     \ZeroMatrix & \ConstraintsIn|_{\SubmessageOut} & \ConstraintsIn|_{\SubmessageIn} \\
                \end{array}
                \right),
            \end{equation*}
            where $k_{\text{in}}$ and $k_{\text{out}}$ are the number of rows in $\ConstraintsIn$ and $\ConstraintsOut$ respectively. 
            
            \item Using Gaussian elimination, compute a basis $\Basis$ for the space $V \eqdef \ker(Z_*)$.

            \item Compute a basis $\Basis'$ for the space $\Dual{(V|_{\Subdomain \Union \SubmessageIn})}$, using \Cref{claim:basis-for-dual} and \Cref{claim:basis-for-image}. Define $Z$ to be the matrix whose rows are the elements of $\Basis'$.
            
            \item Output $(\SubmessageIn, Z)$.
		\end{enumerate}
	\end{construction}
\end{mdframed}

\begin{proof}
Let $m \in \Field^S$, and denote $w_m \eqdef \EncIn(m;\RandIn)$ and $c_w \eqdef \EncOut(w; \RandOut)$.

By the correctness of $\CLOut$, it holds that $\vec{\alpha} \in \Field^{\Subdomain}$ satisfies $(\vec{\alpha}, w|_{\SubmessageOut}) \in \ker(\ConstraintsOut)$ if and only if there exists randomness $\RandOut$ such that for every message $w \in \Field^{\DomainIn}$, $c_w|_{\Subdomain} = \vec{\alpha}$. Similarly, we have that $\vec{\beta}\in \Field^{\SubmessageOut}$ satisfies $(\vec{\beta}, m|_{\SubmessageIn}) \in \ker(\ConstraintsIn)$ if and only if there exists randomness $\RandIn$ such that for every message $m \in \Field^S$, $w_m|_{\SubmessageOut} = \vec{\beta}$.

Then by definition of $Z_{*}$ we must have that $(\vec{\alpha}, \vec{\beta}, m|_{\SubmessageIn}) \in \ker(Z_*)$ if and only if there exists $\RandOut, \RandIn$ such that for all messages $m \in \Field^S$, it holds that $w_m|_{\SubmessageOut} = \vec{\beta}$ and $c_{w_m}|_{\Subdomain} = \vec{\alpha}$. Finally, as $Z$ is defined to be the matrix whose rows are the elements of a basis for $\Dual{(V|_{\Subdomain\Union\SubmessageIn})} = \Dual{(\ker(Z_{*}|_{\Subdomain\Union\SubmessageIn}))}$, we have that $(\vec{\alpha}, m|_{\SubmessageIn}) \in \ker(Z)$ if and only if there exists randomness $\RandIn, \RandOut$, for which $\Enc^*(m;\RandIn,\RandOut)|_{\Subdomain} = \vec{\alpha}$.
\end{proof}

Note that this claim does not seem to generalise to permit composition of arbitrary locally simulatable encodings. Indeed, here we are implicitly using the fact that a constraint locator allows efficient sampling of a random input to $\EncOut$ consistent with (the restriction to $\Subdomain$ of) a given output. For general locally simulatable encodings, such an algorithm may not exist.


\section{Constraint location for random low-degree extensions}
\label{sec:rm-cl}

In this section we construct an efficient constraint locator for random low-degree extensions of a given function. In \Cref{sec:constraint-detector-zero}, we construct an efficient constraint \emph{detector} (cf. \cref{def:cons-detect}) for polynomials which are zero on a given product set $\SubsetTuple$, i.e., for the code $\ZCode{\SubsetTuple}{\ReedMuller[\Field, \NumVars, \vec{\Degree}]}$. In \Cref{sec:rm-decision-cl}, we use this constraint detector to construct a ``decision'' constraint locator for random low-degree extensions, which accepts on inputs which are constrained, and rejects on inputs which are unconstrained. In \Cref{sec:search-to-decision}, we efficiently reduce the task of (search) constraint location to the decision version.

In order to state the main theorem of this section, we first define a randomised encoding that maps a function $f$ to a random extension of specified degree.

\begin{definition}
    Let $\RMEnc{\vec{\Degree}}{\SubsetTuple}$ to be the randomised encoding function that maps $f \colon \SubsetTuple \to \Field$, where $\SubsetTuple \eqdef \SearchSet_1\times\dots\times\SearchSet_{\NumVars}\subseteq \Field^{\NumVars}$, $\Degree_i \geq 2(|\SearchSet_i|-1)$ for all $i \in [\NumVars]$, to a random degree-$\vec{\Degree}$ extension of $f$, i.e., a uniformly random element of $\LD_{\vec \Degree}[f]$.
\end{definition}

Note the restriction that $|\Degree_i| \geq 2(|\SearchSet_i|-1)$; this is tight in the case that $|\SearchSet_1| = \cdots = |\SearchSet_\NumVars| = 2$. We prove the following theorem.

\begin{theorem}
\label{thm:rm-constraint-loc}
    There is an $n$-constraint locator for $\RMEnc{\vec{\Degree}}{\SubsetTuple}$ running in time $\Poly(\log |\Field|,\NumVars,\max_i \Degree_i,n)$.
\end{theorem}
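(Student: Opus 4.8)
The plan is to follow the three-stage roadmap announced at the start of the section, assembling the final $n$-constraint locator by composition. First, in \Cref{sec:constraint-detector-zero}, I would build an efficient constraint \emph{detector} for the zero code $\ZCode{\SubsetTuple}{\ReedMuller[\Field,\NumVars,\vec{\Degree}]}$. The key tool here is the combinatorial nullstellensatz (\cref{lemma:combinatorial-nullstellensatz}): a polynomial of individual degree $\vec\Degree$ vanishing on the product set $\SubsetTuple = \SearchSet_1\times\cdots\times\SearchSet_\NumVars$ is exactly of the form $\sum_i \VanishingFunc_{\SearchSet_i}(X_i)\, h_i(\vec X)$ with $\deg_{X_i}(h_i) \le \Degree_i - |\SearchSet_i|$ and $\deg_{X_j}(h_i) \le \Degree_j$ for $j\ne i$. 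So $\ZCode{\SubsetTuple}{\ReedMuller[\Field,\NumVars,\vec\Degree]}|_\Subdomain$ is the sum of the images, under multiplication by $\VanishingFunc_{\SearchSet_i}$ (a linear map on evaluation tables), of the Reed--Muller codes $\ReedMuller[\Field,\NumVars,\vec\Degree^{(i)}]$ with the $i$-th degree lowered. Since \cite{BenSassonCFGRS17} gives an efficient succinct constraint detector for (plain) Reed--Muller, I can obtain a basis for the dual of each $\ReedMuller[\Field,\NumVars,\vec\Degree^{(i)}]|_\Subdomain$, then use \cref{claim:basis-for-image} repeatedly (to push each dual basis through the linear map $\cdot\,\VanishingFunc_{\SearchSet_i}$) and finally intersect the resulting constraint spaces — or, dually, take the span of spanning sets — to get the dual of the sum. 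All of this is polynomial-time linear algebra in the relevant parameters.

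Second, in \Cref{sec:rm-decision-cl}, I would use this zero-code detector to produce a \emph{decision} constraint locator for $\RMEnc{\vec\Degree}{\SubsetTuple}$: one that, given $\Subdomain$, correctly identifies which new query points are constrained by the restriction to $\Subdomain$, without necessarily producing the minimal read set. The conceptual input is the characterisation sketched in \cref{sec:tech-rm-sim}: a point $w$ (more precisely, a point of the message product set $\SubsetTuple$) fails to be constrained exactly when there is a degree-$\vec\Degree$ polynomial vanishing on all of $S$ except taking value $1$ at $w$, and vanishing on the rest of $\SubsetTuple$; this is detected by comparing $\dim \ReedMuller[\Field,\NumVars,\vec\Degree]|_\Subdomain$ with $\dim \LD_{\vec\Degree}[Z]|_\Subdomain$ and its variants $\LD_{\vec\Degree}[Z_A]|_\Subdomain$ for subproduct sets $A$, all computable via the zero-code detector. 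The output constraint matrix $Z$ is then assembled from the Reed--Muller dual constraints on $\Subdomain$ together with the ``value'' rows recording $m(w)$ for each $S$-bad $w$.

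Third, in \Cref{sec:search-to-decision}, I would turn the decision procedure into an actual search for the (polynomially bounded) set $W$ of constrained message positions, via the binary-search-over-the-hypercube strategy: for $A$ of the form $\{a_1\}\times\cdots\times\{a_i\}\times\SearchSet_{i+1}\times\cdots\times\SearchSet_\NumVars$, test whether $A$ contains a bad point using the sufficient condition $\dim\LD_{\vec\Degree^{-}}[Z_A]|_\Subdomain = \dim\ReedMuller[\Field,\NumVars,\vec\Degree^{-}]|_\Subdomain$ (degrees lowered by one), recursing when the test is inconclusive; the ``false positives'' are themselves bad points for degree $\vec\Degree - \vec 1$, hence bounded in number by $n$ via \cite{AaronsonW09}, so the recursion tree has $\Poly(n)$ leaves and total $\Poly(\log|\Field|,\NumVars,\max_i\Degree_i,n)$ work. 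Having found $W$ with $|W| \le n$, the locator queries $m$ on $W$, outputs $R = W$ together with the matrix $Z$ whose rows are (i) a basis for $\Dual{(\ReedMuller[\Field,\NumVars,\vec\Degree]|_\Subdomain)}$ and (ii) for each $w\in W$ a row encoding $\Enc(m)|_{\{w\}} = m(w)$ appropriately expressed; correctness of the $\ell$-constraint-locator definition then follows because conditioned on the values at $W$ the random extension is uniform subject to exactly the Reed--Muller relations.

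The main obstacle is the search-to-decision step, specifically controlling false positives: the naive converse (``$A$ has no bad point iff $\LD_{\vec\Degree}[Z_A]|_\Subdomain = \ReedMuller|_\Subdomain$'') fails because a curve through $S$ can pass through both $A$ and its complement. The fix — replacing degree $\vec\Degree$ by $\vec\Degree - \vec 1$ to get a \emph{sufficient} (one-sided) test, then reinterpreting an inconclusive answer as the existence of a degree-$(\vec\Degree-\vec1)$-bad point and invoking the $|S|$ bound of \cite{AaronsonW09} — is where the real content lies, and I would need to be careful that the degree condition $\Degree_i \ge 2(|\SearchSet_i|-1)$ (which gives room to drop a degree while staying above the $d\ge 2$ threshold needed for the query bound) is exactly what makes this go through. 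The remaining pieces (the nullstellensatz decomposition of the zero code, and the dimension-comparison characterisation of constrainedness) are, I expect, routine given the constraint detector of \cite{BenSassonCFGRS17} and the linear-algebraic claims already established.
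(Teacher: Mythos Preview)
Your proposal is essentially the paper's proof: nullstellensatz-based detector for $\ZCode{\SubsetTuple}{\ReedMuller}$, reduce the decision question to a dimension comparison, then binary-search over $\SubsetTuple$ using a degree-lowered sufficient test with the Aaronson--Wigderson bound controlling false positives. One correction: the degree drop in the sufficient test should be $\vec\Degree' = (\Degree_i - (|\SearchSet_i|-1))_i$, not $\vec\Degree - \vec 1$, because you need to multiply $p_w$ by the full Lagrange interpolator $L_{\SubsetTuple,w}$ (degree $|\SearchSet_i|-1$ in $X_i$) to zero it out on all of $\SubsetTuple\setminus\{w\}$; the hypothesis $\Degree_i \ge 2(|\SearchSet_i|-1)$ is then exactly what keeps $\Degree_i' \ge |\SearchSet_i|-1$ so that the generalized Aaronson--Wigderson bound (\cref{lem:new-multilin-generalisation}) applies.
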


As a corollary, we obtain a local simulator for $\RMEnc{\vec{\Degree}}{\SubsetTuple}$: that is, we obtain an algorithm that can efficiently simulate a random low-degree extension of $f$ given oracle access to $f$.

\begin{corollary}
    $\RMEnc{\vec{\Degree}}{\SubsetTuple}$ is $(\TimeBound,n)$-locally simulatable for $\TimeBound(n) = \Poly(\log |\Field|,\NumVars,\max_i \Degree_i,n)$.
\end{corollary}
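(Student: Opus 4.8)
The plan is to obtain the corollary immediately by combining \cref{thm:rm-constraint-loc} with \cref{claim:cl-implies-local-sim}. The one point that requires checking is that $\RMEnc{\vec{\Degree}}{\SubsetTuple}$ is a \emph{linear} randomised encoding function in the sense of \cref{sec:linear-refs}, which is the hypothesis of \cref{claim:cl-implies-local-sim}. I would verify this as follows. Let $s \colon (\SubsetTuple \to \Field) \to \ReedMuller[\Field,\NumVars,\vec{\Degree}]$ be the Lagrange interpolation map $f \mapsto \sum_{w \in \SubsetTuple} f(w) \cdot \LagrangePoly{\SubsetTuple,w}$, which is $\Field$-linear in $f$ and outputs a polynomial of individual degree $|\SearchSet_i|-1 \leq \Degree_i$ in $X_i$ agreeing with $f$ on $\SubsetTuple$ (\cref{fact:lagrange-basis}). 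Let $r \eqdef \dim \ZCode{\SubsetTuple}{\ReedMuller[\Field,\NumVars,\vec{\Degree}]}$ and let $G \colon \Field^r \to \ZCode{\SubsetTuple}{\ReedMuller[\Field,\NumVars,\vec{\Degree}]}$ be a fixed linear isomorphism. Then $\RMEnc{\vec{\Degree}}{\SubsetTuple}(f;\mu) \eqdef s(f) + G\mu$ is a deterministic linear function of $(f,\mu)$, and since $s(f) \in \LD_{\vec{\Degree}}[f]$ while any two degree-$\vec{\Degree}$ extensions of $f$ differ by an element of the zero code, we have $s(f) + \ZCode{\SubsetTuple}{\ReedMuller[\Field,\NumVars,\vec{\Degree}]} = \LD_{\vec{\Degree}}[f]$; hence by \cref{claim:linear-trans-uniform}, with $\mu \sim \mathcal{U}(\Field^r)$ the output is uniform over $\LD_{\vec{\Degree}}[f]$, as wanted.

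With linearity established, \cref{thm:rm-constraint-loc} supplies an $n$-constraint locator $\CL$ for $\RMEnc{\vec{\Degree}}{\SubsetTuple}$ running in time $\TimeBound(n) = \Poly(\log|\Field|,\NumVars,\max_i\Degree_i,n)$. Feeding $\CL$ into the local simulator of \cref{claim:cl-implies-local-sim} gives an algorithm which, on input a query-answer set $\OracleTable$ and a new query $\alpha$, makes exactly $n = |\supp(\OracleTable)|+1$ queries to $f$ (at the set $R$ output by $\CL$) and otherwise only solves a linear system of size $\Poly(n)$; its total running time is thus $\Poly(\TimeBound(n)) = \Poly(\log|\Field|,\NumVars,\max_i\Degree_i,n)$. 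This is exactly $(\TimeBound,n)$-local simulatability with the stated $\TimeBound$.

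I do not expect a genuine obstacle: this is a packaging corollary of the section's main theorem. The only delicate point is the linearity verification above, in particular that the section $s$ is chosen \emph{independently of $f$}, so that $\RMEnc{\vec{\Degree}}{\SubsetTuple}$ really is of the form ``fixed linear map applied to the message together with independent uniform randomness''; the hypothesis $\Degree_i \geq 2(|\SearchSet_i|-1)$ is more than enough for $s$ to be well-defined and is in fact only needed inside the proof of \cref{thm:rm-constraint-loc}.
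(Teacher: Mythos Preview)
Your proposal is correct and matches the paper's approach: the corollary is stated without proof in the paper, being immediate from \cref{thm:rm-constraint-loc} and \cref{claim:cl-implies-local-sim}. Your explicit verification that $\RMEnc{\vec{\Degree}}{\SubsetTuple}$ is a linear randomised encoding (via the fixed Lagrange section $s$ plus a basis map $G$ for the zero code) fills in a detail the paper leaves implicit, and is done correctly.
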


\newcommand{\CC}{\mathsf{CheckConstraints}}

\subsection{A constraint detector for random low-degree extensions of the zero function}
\label{sec:constraint-detector-zero}

We construct an efficient constraint detector for $\ZCode{\SubsetTuple}{\ReedMuller[\Field, \NumVars, \vec{\Degree}]}$, where $\SubsetTuple$ is a product set, and $\ZCode{\SubsetTuple}{\cdot}$ denotes the subcode consisting of codewords which are zero on $\SubsetTuple$. We shall employ the following result, which is proved in \cite{BenSassonCFGRS17}.

\begin{theorem}
\label{thm:rm-constraint-detector}
    There is a constraint detector $\CD_{\vec{\Degree}}$ for $\ReedMuller[\Field, \NumVars, \vec{\Degree}]$, running in time $\Poly(\log|\Field|, \NumVars, \max_i \Degree_i, n)$.
\end{theorem}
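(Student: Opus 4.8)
The plan is to reduce constraint detection to the task of computing a basis for the restricted code, and then to compute that basis by a recursion on the number of variables. Observe that a (zero-query) constraint detector for $\ReedMuller[\Field,\NumVars,\vec{\Degree}]$ on input $\Subdomain$ must output a pair $(\varnothing, Z)$ where the rows of $Z$ form a basis for $\Dual{(\ReedMuller[\Field,\NumVars,\vec{\Degree}]|_{\Subdomain})}$; by \Cref{claim:basis-for-dual}, given \emph{any} basis for $\ReedMuller[\Field,\NumVars,\vec{\Degree}]|_{\Subdomain}$ itself we can compute such a $Z$ in time polynomial in its size, i.e.\ $\Poly(\log|\Field|,n)$ where $n \eqdef |\Subdomain|$. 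So it suffices to give a recursive algorithm $\mathsf{RMBasis}(\NumVars,\vec{\Degree},\Subdomain)$ outputting a basis of $\ReedMuller[\Field,\NumVars,\vec{\Degree}]|_{\Subdomain} \subseteq \Field^{\Subdomain}$ in time $\Poly(\log|\Field|,\NumVars,\max_i\Degree_i,n)$.

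For the base case $\NumVars = 1$, the code is the Reed--Solomon code of degree $\Degree_1$ restricted to $\Subdomain \subseteq \Field$: the monomials $1,X,\ldots,X^{\Degree_1}$ restrict to the vectors $\big((s^k)_{s\in\Subdomain}\big)_{k=0}^{\Degree_1}$, which span $\ReedMuller[\Field,1,\Degree_1]|_{\Subdomain}$, so a basis is extracted by Gaussian elimination in time $\Poly(\log|\Field|,\Degree_1,n)$. For the inductive step $\NumVars > 1$, I would partition $\Subdomain$ by first coordinate: let $a_1,\ldots,a_m \in \Field$ be the distinct first coordinates and $F_j \eqdef \{ q \in \Field^{\NumVars-1} : (a_j,q) \in \Subdomain \}$ the corresponding fibres, so $\Subdomain = \bigsqcup_j \{a_j\}\times F_j$ and, crucially, $F \eqdef \bigcup_j F_j$ satisfies $|F| \leq \sum_j |F_j| = n$. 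Recursively compute a basis $v_1,\ldots,v_t$ (with $t \leq |F| \leq n$) of $\ReedMuller[\Field,\NumVars-1,\vec{\Degree}']|_F$, where $\vec{\Degree}' \eqdef (\Degree_2,\ldots,\Degree_\NumVars)$, and for each $i \in [t]$, $k \in \{0,\ldots,\Degree_1\}$ define $g_{i,k} \in \Field^{\Subdomain}$ by $g_{i,k}(a_j,q) \eqdef a_j^k\, v_i(q)$.

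The key claim is that $\{g_{i,k}\}_{i,k}$ spans $\ReedMuller[\Field,\NumVars,\vec{\Degree}]|_{\Subdomain}$. Writing any $P$ of individual degree $\vec{\Degree}$ as $P = \sum_{k=0}^{\Degree_1} X_1^k Q_k$ with each $Q_k$ of individual degree $\vec{\Degree}'$, and expanding $Q_k|_F = \sum_i c_{i,k} v_i$, yields $P|_{\Subdomain} = \sum_{i,k} c_{i,k}\, g_{i,k}$; conversely, each $g_{i,k}$ is the restriction to $\Subdomain$ of $X_1^k \widetilde{Q}_i$ for any degree-$\vec{\Degree}'$ lift $\widetilde{Q}_i$ of $v_i$, hence lies in the code. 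So $\mathsf{RMBasis}$ forms the at most $(\Degree_1+1)n$ vectors $g_{i,k}$ (each in $\Field^{\Subdomain}$) and extracts a basis from them by Gaussian elimination.

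For the running time: the recursion has depth $\NumVars$, the domain size passed to the recursive call never exceeds $n$, and the non-recursive work at each level — grouping points by first coordinate, computing the powers $a_j^k$, assembling the $\leq (\Degree_1+1)n$ vectors $g_{i,k}$, and Gaussian elimination — is $\Poly(\log|\Field|,\NumVars,\max_i\Degree_i,n)$; summing over the $\NumVars$ levels and adding the final application of \Cref{claim:basis-for-dual} gives the claimed bound. I expect the main point to get right is ensuring the recursion does not blow up: this rests entirely on the observation that the union of fibres $F$ has size at most $|\Subdomain|$, so the size parameter is non-increasing down the recursion while the spanning set $\{g_{i,k}\}$ stays polynomially small; the other thing to verify carefully is the spanning claim above, with boundary cases such as $\Degree_1 = 0$, empty fibres, or $|F| = 0$ handled uniformly by the same argument. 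This is essentially the succinct constraint detector of \cite{BenSassonCFGRS17}, and the recursion above is the route I would take to reprove it.
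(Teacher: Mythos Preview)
Your proof is correct. The paper does not actually prove this theorem; it simply cites it from \cite{BenSassonCFGRS17} and uses it as a black box. Your recursion on the number of variables---partitioning $\Subdomain$ by first coordinate, recursing on the union $F$ of the fibres (which has size at most $|\Subdomain|$), and then tensoring the resulting basis with the monomials $1,X_1,\ldots,X_1^{\Degree_1}$---is precisely the idea behind the succinct constraint detector in that reference, as you yourself note. The one thing worth tidying is the notation clash: you use $m$ both for the number of variables $\NumVars$ and for the number of distinct first coordinates appearing in $\Subdomain$; call the latter something else.
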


We prove the following.

\begin{lemma}
    \cref{construc:rm-zero-constraint-locator} is a constraint detector for $\ZCode{\SubsetTuple}{\ReedMuller[\Field, \NumVars, \vec{\Degree}]}$, running in time $\Poly(\log|\Field|, \NumVars, \max_i \Degree_i ,n)$.
\end{lemma}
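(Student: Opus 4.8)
The plan is to reduce constraint detection for $\ZCode{\SubsetTuple}{\ReedMuller[\Field, \NumVars, \vec{\Degree}]}$ to constraint detection for the ambient code $\ReedMuller[\Field, \NumVars, \vec{\Degree}]$, which we already have efficiently by \cref{thm:rm-constraint-detector}. The key observation is that the combinatorial nullstellensatz (\cref{lemma:combinatorial-nullstellensatz}) gives an explicit parametrisation of $\ZCode{\SubsetTuple}{\ReedMuller[\Field, \NumVars, \vec{\Degree}]}$: a polynomial $p$ of individual degree $\vec\Degree$ vanishes on the product set $\SubsetTuple = \SearchSet_1 \times \cdots \times \SearchSet_\NumVars$ if and only if $p = \sum_{i=1}^\NumVars h_i \VanishingFunc_{\SearchSet_i}$ for polynomials $h_i$ with $\deg_{X_j}(h_i) \le \Degree_j$ for $j \ne i$ and $\deg_{X_i}(h_i) \le \Degree_i - |\SearchSet_i|$. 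In other words, $\ZCode{\SubsetTuple}{\ReedMuller[\Field, \NumVars, \vec{\Degree}]}$ is the image, under the linear map $(h_1,\dots,h_\NumVars) \mapsto \sum_i h_i \VanishingFunc_{\SearchSet_i}$, of the product of Reed--Muller codes $\bigoplus_{i=1}^\NumVars \ReedMuller[\Field, \NumVars, \vec{\Degree}^{(i)}]$, where $\vec{\Degree}^{(i)}$ has $\Degree_i - |\SearchSet_i|$ in coordinate $i$ and $\Degree_j$ elsewhere. (One direction is combinatorial nullstellensatz; the converse is immediate since each summand visibly vanishes on $\SubsetTuple$.)

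Given this, the algorithm is as follows. On input a subdomain $\Subdomain \subseteq \Field^\NumVars$: for each $i \in [\NumVars]$, run the constraint detector $\CD_{\vec{\Degree}^{(i)}}$ of \cref{thm:rm-constraint-detector} on $\Subdomain$ to obtain a basis $B_i$ for $\Dual{(\ReedMuller[\Field,\NumVars,\vec{\Degree}^{(i)}]|_\Subdomain)}$, equivalently (via \cref{claim:basis-for-dual}) a basis $B_i'$ for $\ReedMuller[\Field,\NumVars,\vec{\Degree}^{(i)}]|_\Subdomain$ itself. The evaluation-on-$\Subdomain$ of $h_i \VanishingFunc_{\SearchSet_i}$ is obtained from the evaluation of $h_i$ on $\Subdomain$ by pointwise multiplication by the known vector $(\VanishingFunc_{\SearchSet_i}(\vec x))_{\vec x \in \Subdomain}$, which is a linear (indeed diagonal) map we can write down explicitly in time $\Poly(|\Subdomain|,\NumVars,\max_i\Degree_i,\log|\Field|)$. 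Hence $\ZCode{\SubsetTuple}{\ReedMuller[\Field, \NumVars, \vec{\Degree}]}|_\Subdomain$ is the sum of the column spans of these transformed bases; collect all such vectors, run Gaussian elimination to get a basis for the sum, and then apply \cref{claim:basis-for-dual} to output a basis for its dual. This is exactly the constraint detector. (Alternatively one invokes \cref{claim:basis-for-image} directly with $M$ the block map $(h_i)_i \mapsto \sum_i \VanishingFunc_{\SearchSet_i}\cdot h_i$ restricted to $\Subdomain$.)

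Efficiency: there are $\NumVars$ calls to $\CD_{\vec\Degree^{(i)}}$, each running in time $\Poly(\log|\Field|,\NumVars,\max_i\Degree_i,|\Subdomain|)$ and producing $O(|\Subdomain|)$ basis vectors of length $|\Subdomain|$; the diagonal multiplications and the final Gaussian elimination on $O(\NumVars|\Subdomain|)$ vectors in $\Field^\Subdomain$ also run in time $\Poly(\log|\Field|,\NumVars,\max_i\Degree_i,|\Subdomain|)$, matching the claimed bound. The main obstacle, and the only place real care is needed, is the combinatorial-nullstellensatz step: one must check that the degree bounds line up so that the parametrising space is exactly a product of \emph{honest} Reed--Muller codes of individual degree $\vec\Degree^{(i)}$ (in particular that $\Degree_i - |\SearchSet_i| \ge 0$, which is guaranteed here since $\Degree_i \ge 2(|\SearchSet_i|-1) \ge |\SearchSet_i|$ whenever $|\SearchSet_i|\ge 2$, and the edge case $|\SearchSet_i|=1$ is trivial), and that $\deg(\VanishingFunc_{\SearchSet_i}) = |\SearchSet_i|$ so \cref{lemma:combinatorial-nullstellensatz} yields precisely the degree drop we want. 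Everything downstream is routine linear algebra already packaged in \cref{claim:basis-for-dual} and \cref{claim:basis-for-image}.
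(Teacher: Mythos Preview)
Your proposal is correct and follows essentially the same approach as the paper: parametrise $\ZCode{\SubsetTuple}{\ReedMuller[\Field,\NumVars,\vec\Degree]}$ via the combinatorial nullstellensatz as the image of $\bigoplus_i \ReedMuller[\Field,\NumVars,\vec\Degree^{(i)}]$ under the diagonal ``multiply by $\VanishingFunc_{\SearchSet_i}$'' map, use the known RM constraint detector on each factor, then compute the image on $\Subdomain$ and dualise. The paper's construction packages these same steps as a block-diagonal parity matrix $Z'$, its kernel, and a product with an explicit diagonal matrix $A_\Subdomain$, but the content is identical to what you describe.
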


\begin{mdframed}[nobreak=true]
    \begin{construction}
    \label{construc:rm-zero-constraint-locator}
        A constraint detector for 
        $\ZCode{\SubsetTuple}{\ReedMuller[\Field, \NumVars, \vec{\Degree}]}$, where $\SubsetTuple \eqdef S_1\times\dots\times S_{\NumVars}$, given a constraint detector $\CD_{\vec{\Degree}}$ for $\ReedMuller[\Field, \NumVars, \vec{\Degree}]$. Receives as input a subdomain $\Subdomain \subseteq \Field^{\NumVars}.$ \ConstrucSpacing
        $\CD_{\ZCode{\SubsetTuple}{\ReedMuller[\Field, \NumVars, \vec{\Degree}]}}(\Subdomain)$:
        \begin{enumerate}[nolistsep]
            \item For each $i \in [\NumVars]$, compute $\SubConstraintMatrix_i \eqdef \CD_{\vec{\Degree_i}}(\Subdomain)$ where $\vec{\Degree_i} \eqdef (\Degree_1, \dots, \Degree_i-|\SearchSet_i|, \dots, \Degree_\NumVars)$.

            \item Define the block diagonal matrix            \begin{equation*}
                \ConstraintMatrix \eqdef \left(\begin{array}{cccc}
                     \SubConstraintMatrix_1 & \ZeroMatrix & \dots & \ZeroMatrix \\
                     \ZeroMatrix & \SubConstraintMatrix_2 & \dots & \ZeroMatrix \\
                     \vdots & \vdots & \ddots & \vdots \\
                     \ZeroMatrix & \ZeroMatrix & \dots & \SubConstraintMatrix_{\NumVars}
                \end{array}\right)\enspace.
            \end{equation*}

            \item \label{step:compute-basis}Using Gaussian elimination, compute a basis $\BasisVecs{k}$ for $\ker(\ConstraintMatrix)$ and define $\Basis$ to be the $k \times |\Subdomain|\cdot\NumVars$ matrix whose $i$-th row is $\BasisElt_i$.

            \item Define the $|\Subdomain|\cdot\NumVars \times |\Subdomain|$ matrix $\CoefficientMatrix$ to be the matrix whose $((\vec{\alpha}, i), \vec{\gamma})$-entry is $\VanishingFunc_{\SearchSet_i}(\alpha_i)$ if $\vec{\alpha} = \vec{\gamma}$ and $0$ otherwise (where $\VanishingFunc_{S_i}(x) \eqdef \prod_{s \in S_i}(x-s)$).
            
            \item Compute $\Generator \eqdef (\Basis \CoefficientMatrix)^T$, an $|\Subdomain|\times k$ matrix, which is a generator matrix for $\ZCode{\SubsetTuple}{\ReedMuller[\Field, \NumVars, \vec{\Degree}]}|_\Subdomain$.

            \item Using $\Generator$, compute a parity check matrix $\ParityCheck$ for $\ZCode{\SubsetTuple}{\ReedMuller[\Field, \NumVars, \vec{\Degree}]}|_\Subdomain$, and output $\ParityCheck$.
            
        \end{enumerate}
    \end{construction}
\end{mdframed}

\begin{proof}
	In order to prove that \cref{construc:rm-zero-constraint-locator} is a constraint detector for $\ZCode{\SubsetTuple}{\ReedMuller[\Field, \NumVars, \vec{\Degree}]}$, it suffices to show that $\Generator \eqdef (\Basis\CoefficientMatrix)^T$ is a valid generator matrix for $\ZCode{\SubsetTuple}{\ReedMuller[\Field, \NumVars, \vec{\Degree}]}|_\Subdomain$; i.e., $\Image(\Generator) =\ZCode{\SubsetTuple}{\ReedMuller[\Field, \NumVars, \vec{\Degree}]}|_\Subdomain$.
 
    We show that $\Image(\Generator) \subseteq \ZCode{\SubsetTuple}{\ReedMuller[\Field, \NumVars, \vec{\Degree}]}|_\Subdomain$ using the fact that $\ker(\ConstraintMatrix) = \prod_{i=1}^{\NumVars}\ReedMuller[\Field, \NumVars, \vec{\Degree}_i]|_\Subdomain$ (\cref{thm:rm-constraint-detector}) and some linear algebra. We prove the reverse inclusion via the combinatorial nullstellensatz (see \cref{lemma:combinatorial-nullstellensatz}).  

    Before we proceed, it will be very useful to write down an expression for an individual entry of $\Generator$. For $\vec{\beta} \in \Subdomain, \ell \in [k]$, the $(\vec{\beta}, \ell)$-entry of $\Generator$, which we will denote by $\Generator_{\ell}(\vec{\beta})$, is given by the expression 
    \begin{equation}
    \label{eqn:generator-entry}
        \Generator_{\ell}(\vec{\beta}) = \sum_{i=1}^{\NumVars}\BasisElt_{\ell}(i, \vec{\beta})\VanishingFunc_{\SearchSet_{i}}(\beta_i),
    \end{equation}
    where $\BasisElt_\ell$ is the $\ell$-th basis element computed in \cref{step:compute-basis} and $\VanishingFunc_{\SearchSet_i}(X) \eqdef \prod_{\SearchElt \in \SearchSet}(x - \SearchElt)$. We now proceed with the proof.

    \parhead{Correctness}
    We first show $\Image(\Generator) \subseteq \ZCode{\SubsetTuple}{\ReedMuller[\Field, \NumVars, \vec{\Degree}]}|_\Subdomain$. Let $y \in \Image(\Generator)$. Then, for some scalars $a_1, \dots, a_k \in \Field$, and for each $\vec{\beta} \in \Subdomain$, we can write 
    \begin{equation*}
        y(\vec{\beta}) = \sum_{j=1}^{k} a_j \Generator_{j}(\vec{\beta}) = \sum_{j=1}^{k} a_j \sum_{i=1}^{\NumVars}\BasisElt_{j}(i, \vec{\beta})\VanishingFunc_{\SearchSet_{i}}(\beta_i) = \sum_{i=1}^{\NumVars} \VanishingFunc_{\SearchSet_i}(\beta_i) \sum_{j=1}^{k}a_j \BasisElt_{j}(i, \vec{\beta}),
    \end{equation*}
    where the second equality follows from \cref{eqn:generator-entry}. We also know that $\BasisElt_{j}(i, \vec{X}) \in \ReedMuller[\Field, \NumVars, \vec{\Degree_i}]|_{\Subdomain}$ for each $j \in [k]$, and $i \in [\NumVars]$ because $\BasisVecs{k}$ is a basis for $\ker(\ConstraintMatrix) = \prod_{i=1}^{\NumVars}\ReedMuller[\Field, \NumVars, \vec{\Degree}_i]|_{\Subdomain}$. Then by the linearity of $\ReedMuller[\Field, \NumVars, \vec{\Degree_i}]$, we have that for each $i \in [\NumVars]$, $\sum_{j=1}^{k}a_j \BasisElt_j(i, \vec{X}) \eqdef R_i(\vec{X}) \in \ReedMuller[\Field, \NumVars, \vec{\Degree}_i]|_{\Subdomain}$. Substituting this into our expression for $y(\vec{\beta})$,
    \begin{equation*}
        y(\vec{\beta}) = \sum_{i=1}^{\NumVars}R_i(\vec{\beta}) \VanishingFunc_{\SearchSet_{i}}(\beta_i).
    \end{equation*}
    Therefore $y \in \ReedMuller[\Field, \NumVars, \vec{\Degree}]|_{\Subdomain}$. As $\VanishingFunc(\beta_i) = 0$ for all $\beta_i \in \SearchSet_i$ and $i \in [\NumVars]$, we have that $y(\vec{\beta}) = 0$ for all $\vec{\beta} \in \SubsetTuple$, so $y \in \ZCode{\SubsetTuple}{\ReedMuller[\Field, \NumVars, \vec{\Degree}]}|_{\Subdomain}$.

    Now we show $\ZCode{\SubsetTuple}{\ReedMuller[\Field, \NumVars, \vec{\Degree}]}|_\Subdomain \subseteq \Image(\Generator)$. Let $\ZeroPoly \in \ZCode{\SubsetTuple}{\ReedMuller[\Field, \NumVars, \vec{\Degree}]}|_{\Subdomain}$. Then by the combinatorial nullstellensatz (\cref{lemma:combinatorial-nullstellensatz}) there exists $\ProdPoly \eqdef (R_1, \dots, R_\NumVars) \in \prod_{i=1}^{\NumVars}\ReedMuller[\Field, \NumVars, \vec{\Degree_i}]|_{\Subdomain}$ such that 
    \begin{equation*}
        \ZeroPoly(\vec{\beta}) = \sum_{i=1}^{\NumVars}\VanishingFunc_{\SearchSet_i}(\beta_i) R_i(\vec{\beta}),
    \end{equation*}
    for all $\beta \in \Subdomain$. As $\{\BasisElt_1,\dots,\BasisElt_k\}$ is a basis for $\prod_{i=1}^{\NumVars}\ReedMuller[\Field, \NumVars, \vec{\Degree_i}]|_{\Subdomain}$, we can write
    \begin{equation*}
        R_i(\vec{\beta}) = \sum_{j=1}^{k}a_j \BasisElt_j(i, \vec{\beta}),
    \end{equation*}
    for some $a_j \in \Field$. Then substituting into the previous expression for $\ZeroPoly$ yields
    \begin{equation*}
        \ZeroPoly(\vec{\beta}) = \sum_{i=1}^{\NumVars}\VanishingFunc_{\SearchSet_i}(\beta_i) \sum_{j=1}^{k}a_j \BasisElt_j(i, \vec{\beta}) = \sum_{j=1}^{k}a_j \left(\sum_{i=1}^{\NumVars} \BasisElt_j(i, \vec{\beta}) \VanishingFunc_{\SearchSet_i}(\beta_i) \right) = \sum_{j=1}^{k}a_j \Generator_{j}(\vec{\beta}).
    \end{equation*}

    \parhead{Efficiency}
    \cref{construc:rm-zero-constraint-locator} makes $\NumVars$ calls to $\CD_{\vec{\Degree_i}}$, which runs in time $\Poly(\log |\Field|,\NumVars,\max_i \Degree_i,n)$. In the remaining steps, the algorithm performs only Gaussian elimination and matrix multiplication of matrices whose sizes are polynomial in $|\Subdomain|$ and $\NumVars$.
\end{proof}

\subsection{The decision problem}
\label{sec:rm-decision-cl}
Before stating the main result of this section we define the notion of a constraint with respect to a linear code on a subset of its domain.

\begin{definition}
	Let $\Code \subseteq \Field^{\Domain}$ be a linear code. A subset $\Subdomain \subseteq \Domain$ is \defemph{constrained with respect to $\Code$} if there exists a nonzero vector $\Constraint \in \Field^{I}$ such that, for every codeword $\Codeword\in\Code$, $\Constraint\cdot \Codeword|_{\Subdomain} = 0$ (equivalently, if there exists $\Constraint \neq 0 \in \DualCode$ with $\supp(\Constraint) \subseteq \Subdomain$); we refer to $\Constraint$ as a \defemph{constraint with respect to $\Code$ on $\Subdomain$}. We say that $\Subdomain$ is \defemph{unconstrained with respect to $\Code$} if it is not constrained with respect to $\Code$. 
\end{definition}

We construct an algorithm $\CC$ which, given as input a (large) product set $\SubsetTuple$ and a polynomial-size set $\Subdomain$, efficiently determines whether $\Subdomain \Union \SubsetTuple$ is constrained with respect to the Reed-Muller code. 

\begin{lemma}
\label{lem:check-constraints-correct}
    Let $\SubsetTuple \eqdef \SearchSet_1 \times \cdots \times \SearchSet_\NumVars$, $\Subdomain \subseteq \Field^\NumVars$ and $\vec \Degree = (\Degree_1,\ldots,\Degree_\NumVars)$ with $\Degree_i \geq |\SearchSet_i|$ for all $i \in [\NumVars]$. $\CC_{\vec \Degree}(\Subdomain, \SubsetTuple)$ outputs $\Accept$ if and only if and $\Subdomain \Union \SubsetTuple$ is constrained with respect to $\ReedMuller[\Field, \NumVars, \vec{\Degree}]$, and runs in time $\Poly(\log |\Field|,\NumVars,\max_i \Degree_i,|I|)$.
\end{lemma}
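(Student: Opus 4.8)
The plan is to reduce the question to constraint detection for the zero code $\ZCode{\SubsetTuple}{\ReedMuller[\Field,\NumVars,\vec{\Degree}]}$, which we already know how to perform efficiently via \cref{construc:rm-zero-constraint-locator}. Concretely, I would define $\CC_{\vec{\Degree}}(\Subdomain,\SubsetTuple)$ to first compute $\Subdomain' \eqdef \Subdomain \setminus \SubsetTuple$ (membership $\vec\alpha \in \SubsetTuple$ is decided coordinate-wise, by testing $\alpha_i \in \SearchSet_i$ for each $i$), then invoke the constraint detector of \cref{construc:rm-zero-constraint-locator} on input $\Subdomain'$ to obtain a parity-check matrix $\ParityCheck$ for $\ZCode{\SubsetTuple}{\ReedMuller[\Field,\NumVars,\vec{\Degree}]}|_{\Subdomain'}$, and finally output $\Accept$ iff $\ParityCheck$ has a nonzero row (so in particular $\Reject$ when $\Subdomain' = \varnothing$).

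The heart of the argument is the equivalence: $\Subdomain \Union \SubsetTuple$ is constrained with respect to $\Code \eqdef \ReedMuller[\Field,\NumVars,\vec{\Degree}]$ if and only if $\Subdomain'$ is constrained with respect to $\ZCode{\SubsetTuple}{\Code}$. To prove it, I would observe $\Subdomain \Union \SubsetTuple = \SubsetTuple \sqcup \Subdomain'$, so by negation it suffices to show $\Code|_{\SubsetTuple \sqcup \Subdomain'} = \Field^{\SubsetTuple \sqcup \Subdomain'}$ iff $\ZCode{\SubsetTuple}{\Code}|_{\Subdomain'} = \Field^{\Subdomain'}$. Since $\Degree_i \geq |\SearchSet_i| - 1$ (which follows from the hypothesis $\Degree_i \geq |\SearchSet_i|$), Lagrange interpolation (\cref{fact:lagrange-basis}) shows the restriction map $\pi \colon \Code \to \Field^\SubsetTuple$ is surjective, with $\ker \pi = \ZCode{\SubsetTuple}{\Code}$. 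A coset argument then closes it: for a fixed target $\vec v \in \Field^\SubsetTuple$ the codewords extending $\vec v$ form a coset $c_0 + \ker\pi$, whose image under restriction to $\Subdomain'$ is the coset $c_0|_{\Subdomain'} + \ZCode{\SubsetTuple}{\Code}|_{\Subdomain'}$; since a translation is a bijection of $\Field^{\Subdomain'}$, this coset equals $\Field^{\Subdomain'}$ exactly when $\ZCode{\SubsetTuple}{\Code}|_{\Subdomain'} = \Field^{\Subdomain'}$, and this condition does not depend on $\vec v$. Given the equivalence, correctness of $\CC_{\vec\Degree}$ is immediate: by definition of a constraint detector $\ParityCheck$ is a parity-check matrix for $\ZCode{\SubsetTuple}{\Code}|_{\Subdomain'}$, and $\Subdomain'$ is constrained with respect to $\ZCode{\SubsetTuple}{\Code}$ precisely when $\ZCode{\SubsetTuple}{\Code}|_{\Subdomain'} \subsetneq \Field^{\Subdomain'}$, i.e.\ when $\ParityCheck$ has a nonzero row.

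For the running time: computing $\Subdomain'$ costs $\Poly(\NumVars, \max_i |\SearchSet_i|, |\Subdomain|, \log|\Field|)$, which is within budget since $|\SearchSet_i| \leq \Degree_i \leq \max_i \Degree_i$; the call to \cref{construc:rm-zero-constraint-locator} runs in time $\Poly(\log|\Field|,\NumVars,\max_i\Degree_i,|\Subdomain'|) \leq \Poly(\log|\Field|,\NumVars,\max_i\Degree_i,|\Subdomain|)$; and scanning $\ParityCheck$ for a nonzero row is linear in its polynomially-bounded size.

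I expect the only genuinely delicate point to be the correctness equivalence, and within it the treatment of the overlap $\Subdomain \cap \SubsetTuple$: one must pass to $\Subdomain' = \Subdomain \setminus \SubsetTuple$ rather than testing $\Subdomain$ directly against $\ZCode{\SubsetTuple}{\Code}$, since every point of $\Subdomain \cap \SubsetTuple$ is a trivial constraint on $\ZCode{\SubsetTuple}{\Code}$ (all its codewords vanish there) yet contributes nothing to whether $\Subdomain \Union \SubsetTuple$ is constrained with respect to $\Code$. The slightly stronger hypothesis $\Degree_i \geq |\SearchSet_i|$ (rather than merely $\Degree_i \geq |\SearchSet_i| - 1$, which is all the surjectivity of $\pi$ needs) is used only inside \cref{construc:rm-zero-constraint-locator}, whose invocation of $\CD_{\vec{\Degree_i}}$ with $\vec{\Degree_i} = (\Degree_1,\ldots,\Degree_i - |\SearchSet_i|,\ldots,\Degree_\NumVars)$ requires $\Degree_i - |\SearchSet_i| \geq 0$.
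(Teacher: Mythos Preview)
Your approach is correct and essentially the same as the paper's: the construction you describe is exactly \cref{construc:constraint-checker}, and the key equivalence you prove is precisely \cref{lem:zcode-constraint-implies-constraint}. The only minor difference is that the paper factors this equivalence through a general statement about arbitrary linear codes (\cref{lem:general-zcode-constraint-implies-constraint}, proved via the indicator-codeword characterisation of \cref{claim:unconstrained-equiv}), whereas your direct coset/surjectivity argument avoids that detour; both are equally valid.
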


\begin{mdframed}[nobreak=true]
    \begin{construction}
    \label{construc:constraint-checker}
    An algorithm which efficiently determines whether a set $\Subdomain\Union\SubsetTuple$ is constrained with respect to the Reed--Muller code, given an efficient constraint detector $\CD_{\ZCode{\SubsetTuple}{\ReedMuller[\Field, \NumVars, \vec{\Degree}]}}(\Subdomain)$ for $\ZCode{\SubsetTuple}{\ReedMuller[\Field, \NumVars, \vec{\Degree}]}$. \ConstrucSpacing
        $\CC_{\vec{\Degree}}~(\Subdomain, \SubsetTuple)$:
        \begin{enumerate}[nolistsep]
            \item Compute $H = \CD_{\ZCode{\SubsetTuple}{\ReedMuller[\Field, \NumVars, \vec{\Degree}]}}(\Subdomain\setminus \SubsetTuple)$ (see \cref{construc:rm-zero-constraint-locator}).
            \item If $H = 0$, output $\Reject$, otherwise output $\Accept$.
        \end{enumerate}
    \end{construction}
\end{mdframed}

To prove the correctness of \cref{construc:constraint-checker}, we show an equivalence between constraint detection for $\ZCode{\SubsetTuple}{\ReedMuller}$ and deciding whether $\Subdomain\Union\SubsetTuple$ is constrained with respect to $\ReedMuller$. In order to do so, we shall require the following two properties of linear codes, whose proofs we defer to \cref{sec:unconstrained-equiv} and \cref{sec:general-zcode-constraint-implies-constraint}.

\begin{restatable}{claim}{UnconstrainedEquiv}
\label{claim:unconstrained-equiv}
	Let $\Code\subseteq \Field^\Domain$ be a linear code and let $\Subdomain \subseteq \Domain$. Then $\Subdomain$ is unconstrained with respect to $\Code$ if and only if for all $x \in \Subdomain$
	there exists a codeword $\Codeword_x \in \Code$ satisfying \begin{inparaenum}[(i)]
		\item $\Codeword_x(x) = 1$ and 
		\item $\Codeword_x(y) = 0$ for all $y \in \Subdomain\setminus\{x\}$.
	\end{inparaenum}
\end{restatable}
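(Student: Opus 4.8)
The plan is to reduce the statement to the elementary linear-algebra fact that a subspace of a finite-dimensional space equals the whole space iff its dual is trivial, after observing that ``constraints with respect to $\Code$ on $\Subdomain$'' are precisely the nonzero elements of $\Dual{(\Code|_\Subdomain)}$.

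First I would unwind the definitions. By definition, $\Code|_\Subdomain = \{ \Codeword|_\Subdomain : \Codeword \in \Code \}$, so a vector $\Constraint \in \Field^\Subdomain$ satisfies $\Constraint \cdot \Codeword|_\Subdomain = 0$ for all $\Codeword \in \Code$ exactly when $\Constraint \in \Dual{(\Code|_\Subdomain)}$. Hence $\Subdomain$ is unconstrained with respect to $\Code$ iff $\Dual{(\Code|_\Subdomain)} = \{0\}$, and since $\Code|_\Subdomain$ is a linear subspace of the finite-dimensional space $\Field^\Subdomain$, this holds iff $\Code|_\Subdomain = \Field^\Subdomain$. (The mild bookkeeping point here, which one should state carefully, is the identification between an element of $\Dual\Code$ supported on $\Subdomain$ and the corresponding vector in $\Field^\Subdomain$; this is exactly the parenthetical remark in the definition of a constraint.)

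Given this reformulation, both directions are immediate. For the ``only if'' direction I would assume $\Subdomain$ is unconstrained, so $\Code|_\Subdomain = \Field^\Subdomain$; then for each $x \in \Subdomain$ the indicator vector $e_x \in \Field^\Subdomain$ (with $e_x(x)=1$ and $e_x(y)=0$ for $y \in \Subdomain \setminus \{x\}$) lies in $\Code|_\Subdomain$, so there is a codeword $\Codeword_x \in \Code$ with $\Codeword_x|_\Subdomain = e_x$, which is exactly properties (i) and (ii). For the ``if'' direction I would assume such $\Codeword_x$ exist for every $x \in \Subdomain$; then $\Codeword_x|_\Subdomain = e_x$, so $\Code|_\Subdomain \supseteq \Span\{ e_x : x \in \Subdomain \} = \Field^\Subdomain$, hence $\Code|_\Subdomain = \Field^\Subdomain$ and $\Subdomain$ is unconstrained.

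There is no real obstacle: the whole content is the one-line observation that constraints on $\Subdomain$ are the nonzero dual vectors of $\Code|_\Subdomain$, after which the claim is just ``a subspace that contains every standard basis vector is the whole space'' in one direction and ``the whole space has no nonzero dual vector'' in the other. The only thing to be careful about is consistent notation for restrictions versus supports inside $\Domain$.
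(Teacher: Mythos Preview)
Your proposal is correct and follows essentially the same approach as the paper: both reduce the claim to the equivalence $\Subdomain$ unconstrained $\iff \Dual{(\Code|_\Subdomain)}=\{0\} \iff \Code|_\Subdomain=\Field^\Subdomain$, then read off the indicator vectors $e_x$ in one direction and use them to kill any constraint in the other. The only cosmetic difference is that for the reverse implication the paper plugs each $\Codeword_x$ into an arbitrary constraint $\Constraint$ to conclude $\Constraint(x)=0$, whereas you phrase it as a span argument; these are the same computation.
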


\begin{restatable}{lemma}{GeneralZCodeConstraint}
\label{lem:general-zcode-constraint-implies-constraint}
    Let $\Code \subseteq \Field^\Domain$ be a linear code, and let $\Subdomain,\SubsetTuple \subseteq \Domain$. Suppose that $\SubsetTuple$ is unconstrained with respect to $\Code$. Then $\Subdomain \Union \SubsetTuple$ is constrained with respect to $\Code$ if and only if  $\Subdomain \setminus \SubsetTuple$ is constrained with respect to $\ZCode{\SubsetTuple}{\Code}$.
\end{restatable}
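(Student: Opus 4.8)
The plan is to reduce to the disjoint case and then pass constraints back and forth explicitly. Since the statement only mentions $\Subdomain \Union \SubsetTuple$ and $\Subdomain \setminus \SubsetTuple$, replacing $\Subdomain$ by $\Subdomain \setminus \SubsetTuple$ changes neither set, so it is without loss of generality to assume $\Subdomain \cap \SubsetTuple = \varnothing$; it then suffices to prove that $\Subdomain \sqcup \SubsetTuple$ is constrained with respect to $\Code$ if and only if $\Subdomain$ is constrained with respect to $\ZCode{\SubsetTuple}{\Code}$. The central object is the restriction map $\rho \colon \Code \to \Field^{\SubsetTuple}$, $\Codeword \mapsto \Codeword|_{\SubsetTuple}$, whose kernel is exactly $\ZCode{\SubsetTuple}{\Code}$ by definition; the hypothesis that $\SubsetTuple$ is unconstrained with respect to $\Code$ says precisely that $\rho$ is surjective, i.e.\ $\Code|_{\SubsetTuple} = \Field^{\SubsetTuple}$.

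For the forward direction, suppose $\Subdomain \sqcup \SubsetTuple$ is constrained, witnessed by a nonzero $\Constraint = (\Constraint_{\Subdomain}, \Constraint_{\SubsetTuple}) \in \Field^{\Subdomain \sqcup \SubsetTuple}$ with $\Constraint \cdot \Codeword|_{\Subdomain \sqcup \SubsetTuple} = 0$ for every $\Codeword \in \Code$. Restricting attention to $\Codeword \in \ZCode{\SubsetTuple}{\Code}$, for which $\Codeword|_{\SubsetTuple} = \vec{0}$, this reads $\Constraint_{\Subdomain} \cdot \Codeword|_{\Subdomain} = 0$, so $\Constraint_{\Subdomain}$ is a constraint for $\ZCode{\SubsetTuple}{\Code}$ on $\Subdomain$. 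It remains nonzero: if $\Constraint_{\Subdomain} = \vec{0}$ then $\Constraint_{\SubsetTuple}$ would be a nonzero constraint for $\Code$ supported on $\SubsetTuple$, contradicting the hypothesis. (This is the only place the hypothesis is needed.)

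For the reverse direction, let $\Constraint_{\Subdomain} \in \Field^{\Subdomain}$ be a nonzero constraint for $\ZCode{\SubsetTuple}{\Code}$. The linear functional $\Codeword \mapsto \Constraint_{\Subdomain} \cdot \Codeword|_{\Subdomain}$ on $\Code$ vanishes on $\ker \rho = \ZCode{\SubsetTuple}{\Code}$, hence factors through $\rho$ as a functional on $\Image(\rho) = \Field^{\SubsetTuple}$, which is represented by some vector $\SmallElt \in \Field^{\SubsetTuple}$; that is, $\Constraint_{\Subdomain} \cdot \Codeword|_{\Subdomain} = \SmallElt \cdot \Codeword|_{\SubsetTuple}$ for all $\Codeword \in \Code$. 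Then $\Constraint \eqdef (\Constraint_{\Subdomain}, -\SmallElt) \in \Field^{\Subdomain \sqcup \SubsetTuple}$ is nonzero (as $\Constraint_{\Subdomain} \neq \vec{0}$) and satisfies $\Constraint \cdot \Codeword|_{\Subdomain \sqcup \SubsetTuple} = 0$ for all $\Codeword \in \Code$, so $\Subdomain \sqcup \SubsetTuple$ is constrained with respect to $\Code$.

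Alternatively, the same equivalence follows from a one-line dimension count: applying rank--nullity to $\rho$ shows $\dim \Code|_{\Subdomain \sqcup \SubsetTuple} = |\SubsetTuple| + \dim \ZCode{\SubsetTuple}{\Code}|_{\Subdomain}$ (the restriction of $\Code$ to $\Subdomain \sqcup \SubsetTuple$ surjects onto the $\SubsetTuple$-coordinates since $\SubsetTuple$ is unconstrained, and the fibre over $\vec{0}$ is $\ZCode{\SubsetTuple}{\Code}|_{\Subdomain}$), and a set is constrained with respect to a code exactly when the restriction of the code to it is not surjective; subtracting $|\SubsetTuple|$ from both sides of the resulting strict inequality gives the claim. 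I do not expect a real obstacle here: the only points requiring care are that the extracted constraint is genuinely nonzero --- which is precisely where unconstrainedness of $\SubsetTuple$ is invoked --- and the possible overlap of $\Subdomain$ and $\SubsetTuple$, which the opening reduction disposes of. One could also route the argument through \cref{claim:unconstrained-equiv}, but the direct computation above seems shortest.
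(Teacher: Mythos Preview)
Your proof is correct. The forward direction is essentially identical to the paper's. For the reverse direction the paper takes a different route: it argues by contrapositive, invoking \cref{claim:unconstrained-equiv} to produce, for each $x \in \Subdomain \setminus \SubsetTuple$, an ``indicator'' codeword $\Codeword_x \in \Code$ that is $1$ at $x$ and $0$ elsewhere on $\Subdomain \cup \SubsetTuple$; since such $\Codeword_x$ automatically lies in $\ZCode{\SubsetTuple}{\Code}$, another application of \cref{claim:unconstrained-equiv} shows $\Subdomain \setminus \SubsetTuple$ is unconstrained there. Your argument instead constructs the extending constraint directly by factoring the functional $\Codeword \mapsto \Constraint_{\Subdomain} \cdot \Codeword|_{\Subdomain}$ through $\Code/\ZCode{\SubsetTuple}{\Code}$. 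This is cleaner and self-contained (no appeal to \cref{claim:unconstrained-equiv}), and your observation that the reverse implication does not actually require $\SubsetTuple$ to be unconstrained---since a functional on $\Image(\rho)$ can always be extended to $\Field^{\SubsetTuple}$---is a small sharpening the paper does not note. The rank--nullity alternative is also valid and arguably the most transparent formulation.
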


As a corollary, we show an equivalence between constraints with respect to the general Reed--Muller code and the Reed-Muller code fixed to be zero on a product set.

\begin{corollary}
\label{lem:zcode-constraint-implies-constraint}
    Let $\NumVars \in \N, \vec{\Degree}\in \N^{\NumVars}, \Subdomain \subseteq \Field^\NumVars$, let $\SearchSet_1, \dots, \SearchSet_{\NumVars} \subseteq \Field$, be such that $|\SearchSet_i| \leq \Degree_i +1$ for all $i \in [\NumVars]$, and denote $\SubsetTuple \eqdef \SearchSet_1\times\dots\times\SearchSet_\NumVars$. Then $\Subdomain \Union \SubsetTuple$ is constrained with respect to $\ReedMuller[\Field, \NumVars, \vec{\Degree}]$ if and only if  $\Subdomain \setminus \SubsetTuple$ is constrained with respect to $\ZCode{\SubsetTuple}{\ReedMuller[\Field, \NumVars, \vec{\Degree}]}$.
\end{corollary}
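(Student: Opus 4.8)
The plan is to obtain \cref{lem:zcode-constraint-implies-constraint} as an immediate instantiation of \cref{lem:general-zcode-constraint-implies-constraint} with $\Code = \ReedMuller[\Field,\NumVars,\vec\Degree]$. The only thing that needs checking is the hypothesis of that lemma, namely that the product set $\SubsetTuple = \SearchSet_1 \times \cdots \times \SearchSet_\NumVars$ is unconstrained with respect to $\ReedMuller[\Field,\NumVars,\vec\Degree]$, and this is where the degree condition $|\SearchSet_i| \le \Degree_i + 1$ is used.

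First I would show that $\SubsetTuple$ is unconstrained with respect to $\ReedMuller[\Field,\NumVars,\vec\Degree]$. By \cref{claim:unconstrained-equiv}, it suffices to exhibit, for every $\Point \in \SubsetTuple$, a codeword $\Codeword_\Point \in \ReedMuller[\Field,\NumVars,\vec\Degree]$ with $\Codeword_\Point(\Point) = 1$ and $\Codeword_\Point(y) = 0$ for all $y \in \SubsetTuple \setminus \{\Point\}$. Take $\Codeword_\Point$ to be the evaluation over $\Field^\NumVars$ of the Lagrange interpolating polynomial $\LagrangePoly{\SubsetTuple,\Point}$ from the preliminaries. By construction $\LagrangePoly{\SubsetTuple,\Point}$ has individual degree $|\SearchSet_i| - 1$ in $X_i$, and since $|\SearchSet_i| \le \Degree_i + 1$ we have $|\SearchSet_i| - 1 \le \Degree_i$, so $\LagrangePoly{\SubsetTuple,\Point} \in \Polys{\Field}{\vec\Degree}{\NumVars}$ and hence $\Codeword_\Point \in \ReedMuller[\Field,\NumVars,\vec\Degree]$. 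The defining property of the Lagrange polynomial gives $\LagrangePoly{\SubsetTuple,\Point}(\Point) = 1$ and $\LagrangePoly{\SubsetTuple,\Point}(y) = 0$ for every other point $y$ of the product set $\SubsetTuple$. Thus the criterion of \cref{claim:unconstrained-equiv} is met and $\SubsetTuple$ is unconstrained with respect to $\ReedMuller[\Field,\NumVars,\vec\Degree]$.

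Second, I would invoke \cref{lem:general-zcode-constraint-implies-constraint} with $\Code = \ReedMuller[\Field,\NumVars,\vec\Degree]$ and the sets $\Subdomain$ and $\SubsetTuple$. Its hypothesis holds by the previous paragraph, so its conclusion says precisely that $\Subdomain \Union \SubsetTuple$ is constrained with respect to $\ReedMuller[\Field,\NumVars,\vec\Degree]$ if and only if $\Subdomain \setminus \SubsetTuple$ is constrained with respect to $\ZCode{\SubsetTuple}{\ReedMuller[\Field,\NumVars,\vec\Degree]}$, which is exactly the claimed equivalence.

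I do not anticipate a genuine obstacle here: the substantive work is isolated in the two deferred results \cref{claim:unconstrained-equiv} and \cref{lem:general-zcode-constraint-implies-constraint}. The only point requiring care is the degree bookkeeping in the first paragraph — that the Lagrange interpolators for an arbitrary product set (not merely a subset of a line) stay within the individual-degree bound $\vec\Degree$ exactly when $|\SearchSet_i| \le \Degree_i + 1$ — together with making sure the definition of ``constrained with respect to $\Code$'' is read with support contained in the relevant subdomain so that \cref{claim:unconstrained-equiv} applies verbatim.
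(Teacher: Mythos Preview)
Your proposal is correct and matches the paper's proof essentially line for line: the paper also verifies that $\SubsetTuple$ is unconstrained with respect to $\ReedMuller[\Field,\NumVars,\vec\Degree]$ via the Lagrange interpolators $\LagrangePoly{\SubsetTuple,w}$ together with \cref{claim:unconstrained-equiv}, and then invokes \cref{lem:general-zcode-constraint-implies-constraint}. There is nothing to add.
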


\begin{proof}
    By \cref{lem:general-zcode-constraint-implies-constraint}, it suffices to show that $\SubsetTuple$ is unconstrained with respect to $\ReedMuller[\Field, \NumVars, \vec{\Degree}]$. Consider, for each $w \in \SubsetTuple$, the Lagrange polynomial $\LagrangePoly{\SubsetTuple, w}(\vec{X}) \in \Polys{\Field}{\vec{\Degree}}{\NumVars}$. As $|\SearchSet_i| \leq \Degree_i +1$ for all $i \in [\NumVars]$, the evaluation table of $\LagrangePoly{\SubsetTuple, w}(\vec{X})$ is a codeword in $\ReedMuller[\Field, \NumVars, \vec{\Degree}]$, for each $w \in \SubsetTuple$. These codewords satisfy $\LagrangePoly{\SubsetTuple, w}(w) = 1$ and $\LagrangePoly{\SubsetTuple, w}(x) = 0$ for all $x \neq w \in \SubsetTuple$, so by \cref{claim:unconstrained-equiv}, $\SubsetTuple$ is unconstrained with respect to $\ReedMuller[\Field, \NumVars, \vec{\Degree}]$.
\end{proof}

\begin{proof}[Proof of \cref{lem:check-constraints-correct}]
    The correctness follows from \cref{lem:zcode-constraint-implies-constraint} and the correctness of \cref{construc:rm-zero-constraint-locator}. The runtime follows from the fact that \cref{construc:rm-zero-constraint-locator} runs in time $\Poly(\log |\Field|,\NumVars,\max_i \Degree_i,n)$.
\end{proof}

\subsection{Search-to-decision reduction}
\label{sec:search-to-decision}
\newcommand{\NumConstrPoints}{k}

In this section we show a reduction from constraint location of random low-degree extensions (a search problem) to $\CC$. Before proving \cref{thm:rm-constraint-loc}, we define what it means for a subset $I$ of the domain $D$ of a code to \emph{determine} a point $x \in D$. We then prove some useful technical lemmas.

\begin{definition}
\label{def:Determined}
        Let $\Code \subseteq \Field^{\Domain}$ be a linear code. We say that $\Subdomain\subseteq\Domain$ \defemph{determines} $x\in\Domain$ \defemph{with respect to $\Code$} if $x\in\Subdomain$ or there exists a constraint $\Constraint$ with respect to $\Code$ on $\Subdomain \Union \{x\}$ such that $\Constraint(x) \neq 0$. 
\end{definition}

The following lemma is central to our reduction. It shows that points which are constrained with respect to the Reed--Muller code are determined with respect to another Reed--Muller code of lower degree. Thus, in our algorithm, it suffices to check the latter condition.

\begin{lemma}
\label{lem:constrained-implies-determined}
    Let $\NumVars\in \N, \vec{\Degree}\eqdef (\Degree_1,\dots,\Degree_{\NumVars})\in \N^{\NumVars}, \Subdomain \subseteq \Field^\NumVars$, let $\SearchSet_1,\dots,\SearchSet_\NumVars \subseteq \Field$ and denote $\SubsetTuple \eqdef \SearchSet_1\times\dots\times\SearchSet_{\NumVars}$. If there exists a constraint $\Constraint\colon\Subdomain\Union\SubsetTuple\to \Field$ on $\Subdomain \Union \SubsetTuple$ with respect to $\ReedMuller[\Field,\NumVars,\vec{\Degree}]$ such that for some $\HPt\in\SubsetTuple$, $\Constraint(\HPt) \neq 0$, then $\HPt$ is determined by $\Subdomain$ with respect to $\ReedMuller[\Field,\NumVars,\vec{\Degree'}]$, where $\vec{\Degree'} \eqdef (\Degree_1 - (|\SearchSet_1|-1), \dots ,\Degree_{\NumVars} - (|\SearchSet_{\NumVars}|-1))$.
\end{lemma}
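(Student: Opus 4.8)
The plan is to ``fold'' the hypothesised constraint $\Constraint$ on $\Subdomain \Union \SubsetTuple$ against the Lagrange polynomial that isolates $\HPt$ inside the product set $\SubsetTuple$, thereby producing an explicit constraint $\Constraint'$ on $\Subdomain \Union \{\HPt\}$ with respect to the lower-degree code $\ReedMuller[\Field, \NumVars, \vec{\Degree'}]$ whose value at $\HPt$ is nonzero; by \cref{def:Determined} this is exactly what it means for $\Subdomain$ to determine $\HPt$ with respect to $\ReedMuller[\Field, \NumVars, \vec{\Degree'}]$.

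First I would dispose of the trivial case: if $\HPt \in \Subdomain$ then $\Subdomain$ determines $\HPt$ by definition, so assume $\HPt \notin \Subdomain$. Recall from the preliminaries the Lagrange interpolating polynomial $\LagrangePoly{\SubsetTuple, \HPt}$, which has individual degree $|\SearchSet_i| - 1$ in $X_i$ and satisfies $\LagrangePoly{\SubsetTuple, \HPt}(\HPt) = 1$ and $\LagrangePoly{\SubsetTuple, \HPt}(x) = 0$ for all $x \in \SubsetTuple \setminus \{\HPt\}$. The crux is that for any $\UnconPoly \in \Polys{\Field}{\vec{\Degree'}}{\NumVars}$, the product $\UnconPoly \cdot \LagrangePoly{\SubsetTuple, \HPt}$ has individual degree $(\Degree_i - (|\SearchSet_i| - 1)) + (|\SearchSet_i| - 1) = \Degree_i$ in $X_i$, so its evaluation table lies in $\ReedMuller[\Field, \NumVars, \vec{\Degree}]$ --- and this is precisely what forces the choice $\vec{\Degree'} = \vec{\Degree} - (|\SearchSet_1|-1,\ldots,|\SearchSet_\NumVars|-1)$. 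Feeding this codeword to the constraint $\Constraint$ and splitting the sum along $\Subdomain \Union \SubsetTuple = (\Subdomain \setminus \SubsetTuple) \sqcup \SubsetTuple$, while using that $\LagrangePoly{\SubsetTuple, \HPt}$ vanishes on $\SubsetTuple \setminus \{\HPt\}$, gives
\begin{equation*}
	0 = \sum_{x \in \Subdomain \Union \SubsetTuple} \Constraint(x)\, \UnconPoly(x)\, \LagrangePoly{\SubsetTuple, \HPt}(x) = \sum_{x \in \Subdomain \setminus \SubsetTuple} \Constraint(x)\, \LagrangePoly{\SubsetTuple, \HPt}(x)\, \UnconPoly(x) + \Constraint(\HPt)\, \UnconPoly(\HPt).
\end{equation*}

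This identity holds for \emph{every} $\UnconPoly \in \Polys{\Field}{\vec{\Degree'}}{\NumVars}$. Hence the vector $\Constraint' \colon \Subdomain \Union \{\HPt\} \to \Field$ defined by $\Constraint'(x) \eqdef \Constraint(x)\, \LagrangePoly{\SubsetTuple, \HPt}(x)$ for $x \in \Subdomain \setminus \SubsetTuple$, $\Constraint'(x) \eqdef 0$ for $x \in \Subdomain \cap \SubsetTuple$, and $\Constraint'(\HPt) \eqdef \Constraint(\HPt)$ satisfies $\sum_{x \in \Subdomain \Union \{\HPt\}} \Constraint'(x)\, \UnconPoly(x) = 0$ for all $\UnconPoly \in \Polys{\Field}{\vec{\Degree'}}{\NumVars}$ (here I use $\HPt \notin \Subdomain$, so that $\Subdomain \Union \{\HPt\} = (\Subdomain \setminus \SubsetTuple) \sqcup (\Subdomain \cap \SubsetTuple) \sqcup \{\HPt\}$). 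Thus $\Constraint'$ is a constraint with respect to $\ReedMuller[\Field, \NumVars, \vec{\Degree'}]$ on $\Subdomain \Union \{\HPt\}$, and since $\Constraint'(\HPt) = \Constraint(\HPt) \neq 0$ by assumption it is nonzero and witnesses that $\Subdomain$ determines $\HPt$ with respect to $\ReedMuller[\Field, \NumVars, \vec{\Degree'}]$.

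I do not expect a genuine obstacle: once the multiplier $\LagrangePoly{\SubsetTuple, \HPt}$ is identified the argument is essentially a one-line manipulation. The only care needed is bookkeeping --- checking that $\UnconPoly \cdot \LagrangePoly{\SubsetTuple, \HPt}$ stays within the original per-variable degrees (which fixes $\vec{\Degree'}$), confirming that the $x \in \Subdomain \cap \SubsetTuple$ entries are absorbed correctly (they are killed by $\LagrangePoly{\SubsetTuple, \HPt}$ because $\HPt \notin \Subdomain$), and noting the degenerate case $\Degree_i < |\SearchSet_i| - 1$, where $\ReedMuller[\Field, \NumVars, \vec{\Degree'}]$ collapses to the zero code and the conclusion is immediate.
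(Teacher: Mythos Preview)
Your proof is correct and uses the same key idea as the paper: multiplication by the Lagrange polynomial $\LagrangePoly{\SubsetTuple,\HPt}$ to pass between degrees $\vec{\Degree}$ and $\vec{\Degree'}$ while killing all of $\SubsetTuple \setminus \{\HPt\}$. The only cosmetic difference is that you argue directly (folding $\Constraint$ into a constraint $\Constraint'$ on $\Subdomain \cup \{\HPt\}$), whereas the paper argues by contrapositive (from ``$\HPt$ not determined'' it builds $q = p \cdot \LagrangePoly{\SubsetTuple,\HPt}$ with $q(\HPt)=1$ and $q \equiv 0$ on $(\Subdomain \cup \SubsetTuple)\setminus\{\HPt\}$).
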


\begin{proof}
    We proceed via contrapositive: if $\HPt\in\SubsetTuple$ is not determined by $\Subdomain$ with respect to $\ReedMuller[\Field, \NumVars, \vec{\Degree'}]$, then $\Subdomain\Union\SubsetTuple$ is unconstrained with respect to $\ReedMuller[\Field,\NumVars, \vec{\Degree}]$.

    Let $\HPt\in\SubsetTuple$ be not determined by $\Subdomain$ with respect to $\ReedMuller[\Field, \NumVars, \vec{\Degree'}]$. Then there exists $\UndetPoly(\vec{X}) \in \Polys{\Field}{\vec{\Degree'}}{\NumVars}$ such that 
    \begin{inparaenum}[(i)]
        \item $\UndetPoly(\HPt) = 1$ and 
        \item $\UndetPoly(\SubdomainElt) = 0$ for all $\SubdomainElt \in \Subdomain$.
    \end{inparaenum}
    Now define $\UnconPoly(\vec{X}) \eqdef \UndetPoly\cdot \LagrangePoly{\SubsetTuple,\HPt}(\vec{X}) \in \Polys{\Field}{\vec{\Degree}}{\NumVars}$, where $\LagrangePoly{\SubsetTuple,\HPt}$ denotes the Lagrange  polynomial. Note $\UnconPoly(\vec{X})$ satisfies:
    \begin{inparaenum}[(i)]
        \item $\UnconPoly(\HPt)=1$,
        \item $\UnconPoly(\HptB) = 0$ for all $\HptB\in \SubsetTuple\setminus\{\HPt\}$ and 
        \item $\UnconPoly(\SubdomainElt) = 0$ for all $\SubdomainElt \in \Subdomain$.
    \end{inparaenum}
    By \cref{claim:unconstrained-equiv} $\Subdomain\Union\SubsetTuple$ is unconstrained with respect to $\ReedMuller[\Field,\NumVars, \vec{\Degree}]$.
\end{proof}

We shall also need the following claim which defines the notion of an  \emph{interpolating set} and gives an algorithm for reducing an arbitrary subdomain of a linear code to an interpolating set. We defer the proof of this claim to \cref{sec:interpolating-sets}.

\begin{restatable}{claim}{InterpolatingSets}
\label{claim:interpolating-sets}
    Let $\Code \subseteq \Field^\Domain$ be a linear code which has an efficient constraint detector. Given a subdomain $\Subdomain \subseteq \Domain$, there exists an efficient algorithm, which computes an \defemph{interpolating set for $\Subdomain$ with respect to $\Code$}: a set $\Intset \subseteq \Subdomain$ such that: \begin{inparaenum}[(i)]
        \item $\Intset$ is unconstrained with respect to $\Code$; and
        \item for every $\SubsetTuple\subseteq\Domain$ which is unconstrained with respect to $\Code$ and satisfies $\Subdomain\Intersect\SubsetTuple = \varnothing$, if there exists a constraint $\Constraint\colon \Subdomain\Union\SubsetTuple \to \Field$ such that $\Constraint(s) \neq 0$ for some $s \in \SubsetTuple$, then there exists a constraint $\Constraint'\colon \Intset\Union\SubsetTuple\to\Field$ such that $\Constraint'(s) \neq 0$.
    \end{inparaenum} 
\end{restatable}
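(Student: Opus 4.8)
The plan is to take $\Intset$ to be a \emph{maximal unconstrained subset} of $\Subdomain$, realised concretely as the set of indices of a maximal linearly independent set of columns of a generator matrix for $\Code|_\Subdomain$. Property (i) then holds essentially by construction, and property (ii) follows from a ``change of basis'' argument on constraints: any constraint on $\Subdomain \cup \SubsetTuple$ can be modified, by subtracting multiples of constraints supported on $\Intset \cup \{x\}$ for $x \in \Subdomain \setminus \Intset$, into a constraint supported on $\Intset \cup \SubsetTuple$ without changing its coordinates on $\SubsetTuple$.

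\textbf{The algorithm.} On input $\Subdomain$, run the constraint detector for $\Code$ on $\Subdomain$ to obtain a basis for $\Dual{(\Code|_\Subdomain)}$ (\cref{def:cons-detect}), apply \cref{claim:basis-for-dual} to convert it into a generator matrix $G \in \Field^{r \times \Subdomain}$ for $\Code|_\Subdomain$, where $r \eqdef \dim \Code|_\Subdomain$, and then use Gaussian elimination to extract a set $\Intset \subseteq \Subdomain$ of $r$ indices whose columns of $G$ are linearly independent; output $\Intset$. This uses one call to the (efficient) constraint detector plus $\Poly(|\Subdomain|)$ time, so it is efficient.

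\textbf{Property (i).} Since the $r \times r$ submatrix $G|_\Intset$ is invertible, $\Code|_\Intset = \Field^\Intset$, hence $\Dual{(\Code|_\Intset)} = \{0\}$ and $\Intset$ is unconstrained with respect to $\Code$. Moreover $\Intset$ is \emph{maximal}: for every $x \in \Subdomain \setminus \Intset$, $\dim \Code|_{\Intset \cup \{x\}} \le \dim \Code|_\Subdomain = |\Intset| < |\Intset \cup \{x\}|$, so $\Intset \cup \{x\}$ is constrained, and since $\Intset$ itself is unconstrained, any nonzero constraint on $\Intset \cup \{x\}$ must have a nonzero $x$-coordinate (otherwise it would be supported on $\Intset$). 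Rescaling such a constraint gives, for each $x \in \Subdomain \setminus \Intset$, a constraint $u^x$ with respect to $\Code$ with $\supp(u^x) \subseteq \Intset \cup \{x\}$ and $u^x(x) = 1$.

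\textbf{Property (ii).} Given $\SubsetTuple$ with $\Subdomain \cap \SubsetTuple = \varnothing$ and a constraint $z$ on $\Subdomain \cup \SubsetTuple$ with $z(s) \neq 0$ for some $s \in \SubsetTuple$, set $z' \eqdef z - \sum_{x \in \Subdomain \setminus \Intset} z(x)\, u^x$. As a combination of elements of $\Dual{\Code}$, $z' \in \Dual{\Code}$. For each $x_0 \in \Subdomain \setminus \Intset$, using $u^x(x_0) = 0$ whenever $x \neq x_0$ and $u^{x_0}(x_0) = 1$, we get $z'(x_0) = z(x_0) - z(x_0) = 0$; combined with $\supp(z) \subseteq \Subdomain \cup \SubsetTuple$ and $\supp(u^x) \subseteq \Subdomain$, this forces $\supp(z') \subseteq \Intset \cup \SubsetTuple$. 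Finally, $s \notin \Subdomain \supseteq \supp(u^x)$, so $z'(s) = z(s) \neq 0$, and in particular $z' \neq 0$. Hence $z'$ is a constraint with respect to $\Code$ on $\Intset \cup \SubsetTuple$ with $z'(s) \neq 0$, as required. (The hypothesis that $\SubsetTuple$ is unconstrained is not used.) The main thing requiring care is this last support bookkeeping — verifying that subtracting the $u^x$ exactly cancels the $\Subdomain \setminus \Intset$ part of $z$ while leaving the $\SubsetTuple$-coordinates, and in particular $z'(s)$, untouched so that $z'$ is a genuine nonzero constraint; everything else is routine linear algebra.
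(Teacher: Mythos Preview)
Your proof is correct and essentially the same as the paper's: both construct $\Intset$ as a maximal unconstrained subset of $\Subdomain$, obtain for each $x \in \Subdomain \setminus \Intset$ a constraint $u^x$ supported on $\Intset \cup \{x\}$ with $u^x(x)=1$, and then subtract these off to push any constraint on $\Subdomain \cup \SubsetTuple$ down to $\Intset \cup \SubsetTuple$ without touching the $\SubsetTuple$-coordinates. The only cosmetic difference is that the paper row-reduces the parity-check matrix and takes $\Intset$ to be the free columns, whereas you dualise to the generator matrix and take pivot columns; the content is identical, and your observation that the ``$\SubsetTuple$ unconstrained'' hypothesis is unused is also correct.
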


We will also require the following simple monotonicity property of $\CC$ (see \cref{construc:constraint-checker}).

\begin{claim}
\label{claim:cc-monotone}
If $\CC(\Subdomain, \SubsetTuple) = \Accept$ and $\SubsetTuple \subseteq \SubsetTuple'$, then $\CC(\Subdomain, \SubsetTuple') = \Accept$.
\end{claim}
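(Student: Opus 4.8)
The plan is to prove \cref{claim:cc-monotone} by appealing to the semantic characterisation of $\CC$ rather than reasoning about the internals of \cref{construc:constraint-checker}. By \cref{lem:check-constraints-correct}, for a product set $\SubsetTuple$ satisfying the degree hypotheses there, $\CC_{\vec{\Degree}}(\Subdomain, \SubsetTuple) = \Accept$ exactly when $\Subdomain \Union \SubsetTuple$ is constrained with respect to $\ReedMuller[\Field, \NumVars, \vec{\Degree}]$; I would invoke this for both $\SubsetTuple$ and $\SubsetTuple'$, assuming (as holds wherever this claim is used) that both are product sets of the required form, e.g.\ $\SubsetTuple = \SearchSet_1 \times \cdots \times \SearchSet_\NumVars$ and $\SubsetTuple' = \SearchSet_1' \times \cdots \times \SearchSet_\NumVars'$ with $\SearchSet_i \subseteq \SearchSet_i'$ and $\Degree_i \geq |\SearchSet_i'|$.

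The only additional ingredient is the (immediate) fact that being constrained is monotone in the domain subset: if $\Code \subseteq \Field^\Domain$ is a linear code and $\Subdomain \Union \SubsetTuple \subseteq \Subdomain \Union \SubsetTuple'$ with $\Subdomain \Union \SubsetTuple$ constrained with respect to $\Code$, then so is $\Subdomain \Union \SubsetTuple'$ --- since a nonzero $\Constraint \in \DualCode$ with $\supp(\Constraint) \subseteq \Subdomain \Union \SubsetTuple$ automatically has $\supp(\Constraint) \subseteq \Subdomain \Union \SubsetTuple'$, so the same $\Constraint$ witnesses the conclusion. First I would record this observation (it could equally be stated in the coding-theory preliminaries), and then the claim is immediate: from $\SubsetTuple \subseteq \SubsetTuple'$ we get $\Subdomain \Union \SubsetTuple \subseteq \Subdomain \Union \SubsetTuple'$, so $\CC(\Subdomain, \SubsetTuple) = \Accept$ forces $\Subdomain \Union \SubsetTuple$ to be constrained with respect to $\ReedMuller[\Field, \NumVars, \vec{\Degree}]$, hence $\Subdomain \Union \SubsetTuple'$ is constrained, hence $\CC(\Subdomain, \SubsetTuple') = \Accept$.

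I do not expect a genuine obstacle here. The one point that warrants a sentence of care is the applicability of \cref{lem:check-constraints-correct}, which is stated only for product sets with $\Degree_i \geq |\SearchSet_i|$, so one must note that $\SubsetTuple'$ is assumed to be of that form as well. A direct argument unwinding \cref{construc:constraint-checker} is possible but clumsier: enlarging $\SubsetTuple$ to $\SubsetTuple'$ simultaneously shrinks the queried set $\Subdomain \setminus \SubsetTuple$ and shrinks the zero code $\ZCode{\SubsetTuple}{\ReedMuller[\Field, \NumVars, \vec{\Degree}]}$ to the subcode $\ZCode{\SubsetTuple'}{\ReedMuller[\Field, \NumVars, \vec{\Degree}]}$, and neither of these changes alone obviously preserves constrainedness --- which is precisely why routing through the semantic characterisation of \cref{lem:check-constraints-correct} is the right move.
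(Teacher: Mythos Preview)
Your proposal is correct and is exactly the paper's approach: the paper's proof is simply ``Clear from \cref{lem:check-constraints-correct},'' i.e., it too routes through the semantic characterisation and relies on the evident monotonicity of being constrained under enlarging the set. Your extra sentence making that monotonicity explicit (and flagging the product-set hypothesis) only adds clarity.
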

\begin{proof}
    Clear from \cref{lem:check-constraints-correct}.
\end{proof}

Before we give our reduction, we give a slight generalisation of Lemma 4.2 from \cite{AaronsonW09}. Its proof is a relatively straightforward adaptation of the proof of that lemma and is deferred to \cref{sec:new-multilin-generalisation}.

\begin{restatable}{lemma}{MultilinGen}
\label{lem:new-multilin-generalisation}
    Let $\NumVars\in \N$, let $\Subdomain \subseteq \Field^\NumVars$, let $\SearchSet_1,\dots,\SearchSet_\NumVars \subseteq \Field$ and denote $\SubsetTuple \eqdef \SearchSet_1\times\dots\times\SearchSet_{\NumVars}$. Set $\vec{\Degree} \eqdef (|\SearchSet_1|-1, \dots, |\SearchSet_{\NumVars}|-1)$. There exists a set $G_{\Subdomain} \subseteq \SubsetTuple$ with $|G_{\Subdomain}| \geq |\SubsetTuple| - |\Subdomain|$, such that for all $g \in G_{\Subdomain}$ there exists a polynomial $p_g(\vec{X})\in \Polys{\Field}{\vec{\Degree}}{\NumVars}$ satisfying
    \begin{enumerate}[label=(\roman*)]
         \item $p_g(x) = 0$ for all $x\in\Subdomain\Union G_{\Subdomain}\setminus \{g\}$ and
         \item $p_g(g) = 1$.
    \end{enumerate}
\end{restatable}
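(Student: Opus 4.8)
The plan is to translate the statement into elementary linear algebra over $\Field$ and then read off $G_{\Subdomain}$ from a reduced row echelon form; this is essentially the multilinearity argument of \cite{AaronsonW09} with $\Bits^\NumVars$ replaced by the product set $\SubsetTuple$ and ``multilinear'' replaced by ``individual degree $|\SearchSet_i|-1$''. The key point is that, because $\vec{\Degree} = (|\SearchSet_1|-1,\dots,|\SearchSet_\NumVars|-1)$, the space $\Polys{\Field}{\vec{\Degree}}{\NumVars}$ has dimension exactly $|\SubsetTuple|$ and, by \cref{fact:lagrange-basis}, the Lagrange interpolators $\{\LagrangePoly{\SubsetTuple, w}\}_{w \in \SubsetTuple}$ form a basis for it. Hence a polynomial $p \in \Polys{\Field}{\vec{\Degree}}{\NumVars}$ is completely described by its value vector $(p(w))_{w \in \SubsetTuple} \in \Field^{\SubsetTuple}$ via $p = \sum_{w \in \SubsetTuple} p(w)\,\LagrangePoly{\SubsetTuple, w}$, and every vector in $\Field^{\SubsetTuple}$ arises this way; moreover $\LagrangePoly{\SubsetTuple, w}(g) = \mathbbm{1}[w=g]$ for $g \in \SubsetTuple$.

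First I would encode the condition ``$p$ vanishes on $\Subdomain$'' as a homogeneous linear system in these coordinates. Define the matrix $M \in \Field^{\Subdomain \times \SubsetTuple}$ by $M[x,w] \eqdef \LagrangePoly{\SubsetTuple, w}(x)$; then for $p = \sum_{w} c_w \LagrangePoly{\SubsetTuple, w}$ one has $p|_{\Subdomain} = M \vec{c}$ and $p|_{\SubsetTuple} = \vec{c}$. Next, put $M$ into reduced row echelon form with respect to an arbitrary ordering of the columns $\SubsetTuple$, and let $G_{\Subdomain} \subseteq \SubsetTuple$ be the set of \emph{free variables} (non-pivot columns). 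Since $M$ has $|\Subdomain|$ rows, $\mathrm{rank}(M) \leq |\Subdomain|$, so $|G_{\Subdomain}| = |\SubsetTuple| - \mathrm{rank}(M) \geq |\SubsetTuple| - |\Subdomain|$, as required.

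It then remains to exhibit the polynomials $p_g$. For each free column $g \in G_{\Subdomain}$, the standard free-variable parametrisation of $\ker(M)$ provides a vector $\vec{c}^{(g)} \in \ker(M)$ with $\vec{c}^{(g)}|_{G_{\Subdomain}} = e_g$ (free variable $g$ set to $1$, all other free variables set to $0$, pivot variables determined by the echelon rows). Setting $p_g \eqdef \sum_{w \in \SubsetTuple} \vec{c}^{(g)}_w\, \LagrangePoly{\SubsetTuple, w} \in \Polys{\Field}{\vec{\Degree}}{\NumVars}$, we get $p_g(g) = \vec{c}^{(g)}_g = 1$; $p_g(g') = \vec{c}^{(g)}_{g'} = 0$ for every $g' \in G_{\Subdomain}\setminus\{g\}$; and $p_g(x) = (M\vec{c}^{(g)})_x = 0$ for every $x \in \Subdomain$. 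Finally, any $x \in \Subdomain \cap \SubsetTuple$ makes the $x$-th row of $M$ equal to $e_x$, forcing column $x$ to be a pivot column; hence $G_{\Subdomain} \cap \Subdomain = \varnothing$ and in particular $g \notin \Subdomain$, so $p_g$ vanishes on all of $\Subdomain \Union G_{\Subdomain} \setminus \{g\}$. This establishes (i) and (ii).

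I do not anticipate a genuine obstacle; the argument is a direct adaptation of \cite[Lemma 4.2]{AaronsonW09}. The only points requiring mild care are (a) pinning down the degree bound exactly — which is precisely why one works with $\vec{\Degree} = (|\SearchSet_i|-1)_i$ and its Lagrange basis rather than a larger degree, so that value vectors on $\SubsetTuple$ are in bijection with polynomials — and (b) the fact that $G_{\Subdomain}$ occurs self-referentially on both sides of condition (i); this is handled automatically, since by construction the free-variable kernel vectors of an RREF vanish on every free column other than their own.
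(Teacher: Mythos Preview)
Your proposal is correct and essentially identical to the paper's own proof: both expand in the Lagrange basis $\{\LagrangePoly{\SubsetTuple,w}\}_{w\in\SubsetTuple}$, encode the vanishing conditions on $\Subdomain$ as an $|\Subdomain|\times|\SubsetTuple|$ matrix, put it in reduced row echelon form, take $G_\Subdomain$ to be the free columns, and read off $p_g$ from the free-variable parametrisation of the kernel. Your write-up is in fact slightly more careful (you explicitly note $G_\Subdomain\cap\Subdomain=\varnothing$), but the argument is the same.
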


We are now ready to prove the main result of this section. 

\newcommand{\NumCalls}{N}
\newcommand{\NumAcc}{a}

\begin{proof}[Proof of \cref{thm:rm-constraint-loc}]
We give a reduction from constraint location to $\CC_{\vec{\Degree}}$.

\begin{algorithm}[H]
\caption{Reduction from constraint location for 
$\RMEnc{\vec{\Degree}}{\SubsetTuple}$ to $\CC_{\vec{\Degree}}$ and a constraint detector $\CD_{\vec{\Degree}}$ for $\ReedMuller[\Field, \NumVars, \vec{\Degree}\eqdef(\Degree_1, \dots,\Degree_{\NumVars})]$.}
    Set $\vec{\Degree'}\eqdef (\Degree_1 - (|\SearchSet_1|-1), \dots ,\Degree_{\NumVars} - (|\SearchSet_{\NumVars}|-1))$.
    \vspace{0.2cm}\\
    $\SearchAlg(\Subdomain, \SearchPath, \SearchSpace\eqdef(\SearchSet_i, \dots, \SearchSet_{\NumVars}))$:
    \begin{algorithmic}
    \label{alg:search-reduction}
        \IF{$\SearchSpace = \bot$}
            \STATE Return $\SearchPath$.
        \ELSE 
            \STATE Set $\OutputSet = \varnothing$.
            \FORALL{$\SearchElt \in \SearchSet_i$}
                \IF{$\CC_{\vec{\Degree'}}(\Subdomain, t\times\{\SearchElt\}\times\SearchSet_{i+1}\times\dots\times\SearchSet_\SearchSetDim) = \Accept$}
                    \STATE Add $\SearchAlg(\Subdomain, \SearchPath \times \{\SearchElt\}, (\SearchSet_{i+1},\dots,\SearchSet_{\NumVars}))$ to $\OutputSet$.  
                \ENDIF
            \ENDFOR
            \STATE Return $\OutputSet$.
        \ENDIF
    \end{algorithmic}
    \vspace{0.2cm}

    Compute $\Intset$ to be an interpolating set for $\Subdomain$.
    \\
    Compute $\OutputSet \eqdef \SearchAlg(\Intset, \bot, (\SearchSet_1,\dots,\SearchSet_\NumVars))$.
    \\
    Output $(\OutputSet, \CD_{\vec{\Degree}}~(\Subdomain \Union\OutputSet))$.
\end{algorithm}

\parhead{Correctness}
    First, we show that if $\Subdomain\Union\SubsetTuple$ is unconstrained with respect to $\ReedMuller[\Field, \NumVars, \vec{\Degree}]$ then \Cref{alg:search-reduction} outputs $\bot$. In this case, as $\OutputSet \subseteq \SubsetTuple$, $\Subdomain \Union \OutputSet$ is unconstrained so $\CD_{\vec{\Degree}}~(\Subdomain\Union\OutputSet) = \bot$, as required. 

    Now we consider the case where $\Subdomain\Union\SubsetTuple$ is constrained with respect to $\ReedMuller[\Field, \NumVars, \vec{\Degree}]$. We show that if $\SearchElt \in S_1\times\dots\times S_{\NumVars}$ is constrained, then $\SearchElt\in\OutputSet$, where $\OutputSet$ is the output of \Cref{alg:search-reduction}. By inspection of \Cref{alg:search-reduction} and \cref{claim:cc-monotone}, we see that $\SearchElt \in \OutputSet$ if and only if  $\CC_{\vec{\Degree'}}(\Intset, \{\SearchElt_1\}\times \dots\times\{\SearchElt_{\NumVars}\}) = \Accept$. If there exists a constraint $\Constraint\colon \Subdomain\Union\SearchSet\to\Field$ with respect to $\ReedMuller[\Field, \NumVars, \vec{\Degree}]$ with $\Constraint(\SearchElt) \neq 0$, then by \cref{claim:interpolating-sets}, there exists a constraint $\Constraint'\colon\Intset\Union\SubsetTuple$ with respect to $\ReedMuller[\Field, \NumVars, \vec{\Degree}]$ with $\Constraint'(\SearchElt) \neq 0$. Further, by \cref{lem:constrained-implies-determined}, $\SearchElt$ is determined by $\Intset$ with respect to $\ReedMuller[\Field, \NumVars, \vec{\Degree'}]$.

    Lastly, $\CD_{\vec{\Degree}}~(\Subdomain\Union\OutputSet)$ will output a basis for the space of constraints on $\Subdomain\Union\OutputSet$, so the output is in the desired form.

\parhead{Efficiency}
    We show that the reduction is efficient by bounding the number of calls made to $\CC_{\vec{\Degree}}$. For $i \in [\NumVars]$, we say a point $(\SearchElt_1,\dots,\SearchElt_{i})\in \SearchSet_1\times\dots\times\SearchSet_i$ is \emph{$i$-accepting} if $\CC_{\vec{\Degree'}}(\Subdomain, \{\SearchElt_1\}\times \dots\times\{\SearchElt_i\}\times \SearchSet_{i+1}\times\dots\times \SearchSet_\SearchSetDim) = \Accept$. Note that by correctness of $\CC$ and the fact that $\Intset$ is unconstrained, this means that a point  $(\SearchElt_1,\dots,\SearchElt_{i})\in \SearchSet_1\times\dots\times\SearchSet_i$ is $i$-accepting if and only if there exists $(\SearchElt_{i+1},\dots,\SearchElt_{\NumVars}) \in \SearchSet_{i+1}\times\dots\times\SearchSet_\NumVars$ such that $(\SearchElt_1, \dots, \SearchElt_{\NumVars})$ is involved in a non-zero constraint with respect to $\ReedMuller[\Field, \NumVars, \vec{\Degree'}]$. For all $i \in [\NumVars]$, let $\NumAcc_i$ denote the number of points which are $i$-accepting. 
    
    Now we count the number of calls our reduction makes to $\CC$ in terms of $\NumAcc_1,\dots,\NumAcc_{\NumVars-1}$. Note that regardless of the value of $\Subdomain$, \Cref{alg:search-reduction} makes $|\SearchSet_1|$ calls to $\CC_{\vec{\Degree'}}$ at the beginning. Subsequently, for each $i \in [\NumVars-1]$, \Cref{alg:search-reduction} makes $\NumAcc_i|\SearchSet_{i+1}|$ many calls to $\CC_{\vec{\Degree'}}$. Overall, the total number of calls that \Cref{alg:search-reduction} makes to $\CC_{\vec{\Degree'}}$ is:

    \begin{equation*}
        \NumCalls = |\SearchSet_1| + \sum_{i = 1}^{\NumVars-1}a_i|\SearchSet_{i+1}|.
    \end{equation*}

    Now we bound the size of $\NumAcc_i$ for each $i \in [\NumVars]$. As $\Degree_i' \geq |\SearchSet_i| - 1$ for all $i \in [\NumVars]$, \cref{lem:new-multilin-generalisation} implies that there exists a set $G_{\Subdomain}$ with $|G_{\Subdomain}| \geq |\SubsetTuple| - |\Subdomain|$, such that for all $g \in G_{\Subdomain}$ there exists a polynomial $p_g(\vec{X})\in \Polys{\Field}{\vec{\Degree'}}{\NumVars}$ satisfying
    \begin{inparaenum}[(i)]
         \item $p_g(x) = 0$ for all $x\in\Subdomain\Union G_{\Subdomain}\setminus \{g\}$ and
         \item $p_g(g) = 1$.
    \end{inparaenum}
    Thus, by \cref{claim:unconstrained-equiv}, at most $|\Subdomain|$ many points in $\SubsetTuple$ are constrained by $\Subdomain$ with respect to $\ReedMuller[\Field, \NumVars, \vec{\Degree'}]$. In other words, $a_i \leq |\Subdomain|$ for all $i \in [\NumVars-1]$. Therefore 
    \begin{equation*}
        \NumCalls \leq |\SearchSet_1| + |\Subdomain|\sum_{i = 1}^{\NumVars-1}|\SearchSet_{i+1}|.
    \end{equation*}
    Thus, as each call to $\CC$ takes time $\Poly(\log |\Field|,\NumVars,\max_i \Degree_i,n)$, and $|S_i|$ is bounded by a polynomial in $\Degree_i$ for each $i$, we have that $\TimeBound(n) = \Poly(\log |\Field|,\NumVars,\max_i \Degree_i, n)$.
    
    Finally, the fact that at most $|\Subdomain|$ many points in $\SubsetTuple$ are constrained implies that $\ell(n) = n$.
\end{proof}

\newcommand{\Tower}{\bar{\SubsetTuple}}

\section{Constraint location for subcube sums of random low-degree extensions}

In this section we extend the efficient constraint locator of the previous section to support queries to subcube sums. We begin by defining the corresponding randomised encoding.

\begin{definition}
    Let $\SumDomain \eqdef \SumSet_1\times\dots\times\SumSet_{\NumVars}\subseteq \Field^\NumVars$. We define $\SigRMEnc{\vec{\Degree}}{\SumDomain}\colon \SumCode{\SumDomain}{(\SumDomain \to \Field)} \to (\Field^{\leq\NumVars} \to \Field)$ as the randomised encoding $\SigRMEnc{\vec{\Degree}}{\SumDomain}(\SumWord{\SumDomain}{F}) \sim \Uniform(\SumWord{\SumDomain}{\LD_{\vec{\Degree}}{[F]}})$, where $\Degree_i \geq |\SumSet_i|-1$ for all $i \in [\NumVars]$.
\end{definition}

That is, $\SigRMEnc{\vec{\Degree}}{\SumDomain}$ encodes a message $\SumWord{\SumDomain}{F} \in \SumCode{\SumDomain}{(\SumDomain \to \Field)}$ by first choosing a random degree-$\vec{\Degree}$ extension $\hat{F}$ of $F$, and then outputting the word $\SumWord{\SumDomain}{\hat{F}} \colon \Field^{\leq \NumVars} \to \Field$ obtained by augmenting $\hat{F}$ with partial sums over $\SumDomain$ (see \cref{def:sumcodes}). Note that the input to $\SigRMEnc{\vec{\Degree}}{\SumDomain}$ includes partial sums of $F$; these are not necessary to define the encoding but are essential to permit local simulation.

The main theorem of this section is the following.

\begin{theorem}
\label{thm:sigrm-cl}
    There is a $(t, \ell)$-constraint locator for $\SigRMEnc{\vec{\Degree}}{\SumDomain}$, where $\ell(n) = n\NumVars (\NumVars(a + 1) + 1)^2$, for $a \eqdef \max_i |\SumSet_i|$ and $t(n) = \Poly(\log|\Field|, \NumVars, \max_i \Degree_i, n)$.
\end{theorem}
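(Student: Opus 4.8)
The plan is to build the constraint locator for $\SigRMEnc{\vec{\Degree}}{\SumDomain}$ out of two ingredients: the efficient constraint locator for random low-degree extensions from \cref{thm:rm-constraint-loc}, and the structural decomposition of $\Dual{(\SigmaRM[\Field,\NumVars,\Degree]|_T)}$ stated informally as \cref{theorem:sigma-rm-intro} (whose formal version \cref{theorem:sigma-rm-dual-z} we may invoke). Given an input subdomain $\Subdomain \subseteq \Field^{\leq \NumVars}$, the first step is to ``close'' $\Subdomain$: form $T \eqdef \Subdomain \cup \Bits^{\leq \NumVars}$, or more precisely the closure of $\Subdomain$ together with $\Bits^{\le \NumVars}$ truncated appropriately, so that $T$ is a closed set in the sense defined before \cref{theorem:sigma-rm-intro} and $|T| = \Poly(|\Subdomain|, \NumVars, a)$. (The bound $3\NumVars|T|$ on the number of added points, noted in the overview, is what drives the polynomial in $\ell(n)$.) By \cref{theorem:sigma-rm-intro}, every constraint on $\SigmaRM|_T$ decomposes into summation constraints (which involve only the encoding, not the message, and are written down explicitly from the tree structure of $T$) and low-degree constraints on each slice $T_i \eqdef T \cap \Field^i$, i.e.\ elements of $\Dual{(\ReedMuller[\Field,i,\vec\Degree]|_{T_i})}$.

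The second step is to produce, for each $i \in [\NumVars]$, a constraint locator output for the $i$-variate random low-degree extension encoding $\RMEnc{\vec\Degree}{\SumDomain_i}$ restricted to $T_i$ (here $\SumDomain_i = \SumSet_1 \times \cdots \times \SumSet_i$). This is exactly what \cref{thm:rm-constraint-loc} provides: each call returns a set $R_i \subseteq \SumDomain_i$ with $|R_i| = |T_i|$ and a constraint matrix $Z_i$ over $R_i \sqcup T_i$ describing, for the message $\SumWord{\SumDomain}{F}$ viewed as a function on $\SumDomain_i$, which assignments on $T_i$ are achievable. The key point is that a local view of $F$ on a slice $\SumDomain_i$ is itself a \emph{partial subcube sum} of $F$ over $\SumDomain$, hence a coordinate of the message $\SumWord{\SumDomain}{F}$ — so the queries $R_i$ that the $i$-variate locator wants to make to $F|_{\SumDomain_i}$ can be answered by querying the message $\SumWord{\SumDomain}{F}$ at the corresponding points in $\SumCompletion$. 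I would set $R \eqdef \bigcup_{i=1}^\NumVars R_i$ (as a subset of the message domain $\SumCompletion$), so $|R| \le \sum_i |T_i| \le \NumVars|T|$.

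The third step is to assemble the final constraint matrix $Z$ over $R \sqcup \Subdomain$. Conceptually this is the same block-stacking move as in \cref{construc:composed-local-simulator}: stack (a) the explicit summation constraints of $T$, restricted to $\Subdomain \cup (T \setminus \Subdomain)$ with the $T\setminus \Subdomain$ coordinates treated as auxiliary variables; and (b) the low-degree blocks $Z_i$, each acting on its $R_i$ and $T_i$ coordinates; then take the kernel of the stacked system, and finally project onto the coordinates $R \sqcup \Subdomain$ (eliminating the auxiliary closed-set coordinates $T \setminus \Subdomain$) using \cref{claim:basis-for-dual} and \cref{claim:basis-for-image} to get a basis for the dual of that projection. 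Correctness of this assembly: one direction is that any message-plus-randomness pair that the encoding realizes satisfies all these constraints (summation constraints hold identically; low-degree constraints hold because $\hat F$ restricted to each $\Field^i$ lies in $\ReedMuller[\Field,i,\vec\Degree]$ and agrees with the appropriate partial sum of $\hat F$ on $\SumDomain_i$). The other direction — any assignment on $\Subdomain$ consistent with the message $\SumWord{\SumDomain}{F}|_R$ under $Z$ is realizable by some $\hat F \in \LD_{\vec\Degree}[F]$ — is where \cref{theorem:sigma-rm-intro} does the real work: it guarantees we have captured \emph{all} the constraints, so no ``hidden'' obstruction remains, and the $i$-variate locators guarantee each slice can be independently extended; one then argues the slices glue into a single $\hat F$ because the summation constraints force consistency between slice $i$ and slice $i+1$, and a random choice threads through. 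For efficiency, $|T|$, $|R|$, and the number and size of all matrices are $\Poly(|\Subdomain|,\NumVars,a)$, each invocation of the $i$-variate locator runs in time $\Poly(\log|\Field|,\NumVars,\max_i\Degree_i, |T_i|)$ by \cref{thm:rm-constraint-loc}, and the remaining work is Gaussian elimination on polynomial-size matrices, giving $t(n) = \Poly(\log|\Field|,\NumVars,\max_i\Degree_i,n)$; the $\ell(n) = n\NumVars(\NumVars(a+1)+1)^2$ bound comes from tracking $|T| \le $ (closed-set blowup of $\Subdomain \cup \Bits^{\le\NumVars}$, roughly $n \cdot \NumVars(a+1)$ per point and then $|R| \le \NumVars |T|$, squared through the projection step).

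The main obstacle I expect is the gluing argument in the converse direction: showing that a tuple of slice-consistent low-degree extensions, satisfying the summation constraints between consecutive levels, actually arises as $\SumWord{\SumDomain}{\hat F}$ for a \emph{single} $\hat F \in \LD_{\vec\Degree}[F]$ — and that sampling uniformly in the kernel of $Z$ induces the correct (uniform over $\LD_{\vec\Degree}[F]$, then push-forward) distribution on $\Subdomain$. This is exactly the place where one must use \cref{theorem:sigma-rm-intro} in its full strength (that these are \emph{all} the constraints, via the dimension-counting / ``flattening'' argument sketched in the overview), rather than just the easy inclusion; handling the closed-set auxiliary coordinates cleanly, and making sure the projection step in \cref{claim:basis-for-image} is applied to a map whose image is genuinely the restricted code, is the delicate bookkeeping.
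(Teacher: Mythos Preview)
Your proposal has a genuine gap in the efficiency argument. You set $T \eqdef \Subdomain \cup \Bits^{\leq \NumVars}$ (or a closure thereof), but $|\Bits^{\leq \NumVars}| = 2^{\NumVars+1} - 1$ is exponential in $\NumVars$, so $|T|$ cannot be $\Poly(|\Subdomain|, \NumVars, a)$ as you claim. The phrase ``truncated appropriately'' does not rescue this: there is no truncation of $\Bits^{\leq \NumVars}$ that both keeps $T$ polynomial-size and still lets \cref{theorem:sigma-rm-dual-z} play the role you assign it. Consequently the calls to the $i$-variate RM locator on $T_i \supseteq \Bits^i$ would themselves take exponential time, and your derivation of the stated $\ell(n)$ bound is incorrect. (The set $T = \Bits^{\leq \NumVars} \cup S$ appearing in the technical overview was intuition for the structure theorem, not the algorithmic set.)

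The paper avoids this by never putting $\Bits^{\leq \NumVars}$ into the closed set. It (i) takes only the $\SumDomain$-closure $\hat{I}$ of $\Subdomain$, of size at most $(\NumVars a + \NumVars + 1)|\Subdomain|$; (ii) runs the RM locator on each slice $\hat{I}_i$ to obtain small sets $R_i \subseteq \SumDomain_i$ with $|R_i| \leq |\hat{I}_i|$; (iii) sets $R \eqdef \bigcup_i R_i$ and returns its $\SumDomain$-closure $\hat{R}$ as the message query set. The correctness step that replaces your ``gluing'' argument is then twofold: first, \cref{theorem:sigma-rm-dual-z} (applied on the closed set $\hat{R} \sqcup \hat{I}$) shows the assembled matrix has kernel exactly $\SigmaRM|_{\hat{R} \sqcup \Subdomain}$; second, one must show that $(\vec r, \vec \beta) \in \SigmaRM|_{\hat{R} \sqcup \Subdomain}$ iff for \emph{every} message $m$ with $m|_{\hat{R}} = \vec r$ one has $(m, \vec \beta) \in \SigmaRM|_{\SumCompletion \sqcup \Subdomain}$. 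This requires proving $\ZCode{\SumDomain}{\SigmaRM}|_{\hat{I}} = \ZCode{\hat{R}}{\SigmaRM}|_{\hat{I}}$, which the paper obtains by invoking \cref{theorem:sigma-rm-dual-z} a second time with $X = \SumDomain$ (not just $X = \varnothing$) and combining it with a general linear-code lemma (\cref{claim:zcode-implies-padded-message}). Your plan does not contain this second application, and without it there is no reason the small set $\hat{R}$ should suffice in place of all of $\SumCompletion$.
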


We prove this theorem by giving, in \cref{sec:characterisation-sigmarm}, a local characterisation of $\SigmaRM$ in terms of the plain Reed--Muller code. We give our construction based on this characterisation in \cref{sec:construction-sigmarm}.

\subsection{Local characterisation of $\Sigma$RM}
\label{sec:characterisation-sigmarm}

In this subsection we give a local characterisation of $\SigmaRM$ in terms of the plain Reed--Muller code. We define the code $\SigmaRM$ as follows. 

\begin{definition}
    Define $\SigmaRM[\Field, \SumDomain, \vec{\Degree}] \eqdef \SumCode{\SumDomain}{\ReedMuller[\Field, \NumVars, \vec{\Degree}]} \subseteq (\Field^{\leq \NumVars} \to \Field)$.
\end{definition}

Our local characterisation will hold for certain ``nice'' subsets of $\Field^{\leq \NumVars}$ which we call ``$\SumDomain$-closed''.

\begin{definition}
    A set $S \subseteq \Field^{\leq m}$ is \defemph{$\SumDomain$-closed} if the following two conditions hold:
	\begin{enumerate}[label=(\roman*)]
        \item if $(s_1, \dots, s_{t}) \in S$, then $(s_1, \dots, s_{t-1}) \in S$; and 
		\item if $(s_1,\ldots,s_\ell) \in S$, then for all $a_\ell \in \SumSet_\ell$, $(s_1,\ldots,s_{\ell-1},a_{\ell}) \in S$.
	\end{enumerate}
    Given a set $X \subseteq \Field^{\leq \NumVars}$, we refer to the smallest $\SumDomain$-closed set containing $X$ as the \defemph{$\SumDomain$-closure} of $X$, i.e., $\hat{X}$ is the $\SumDomain$-closure of $X$ if $\hat{X}$ is $\SumDomain$-closed, $X \subseteq \hat{X}$, and for all $\SumDomain$-closed $Y \supseteq X$, it holds that $\hat{X} \subseteq Y$.
\end{definition}

Globally the $\SigmaRM$ code can be viewed as a collection of plain Reed--Muller codes on $1,2,\ldots,\NumVars$ variables, related by summation constraints. The following theorem asserts that this characterisation also holds locally: over any $\SumDomain$-closed subdomain $S$, $\Dual{(\SigmaRM|_S)}$ is spanned by summation constraints $\{z_{\vec s}\}_{\vec s \in S^*}$ and the local constraints on a collection of Reed-Muller codes. For technical reasons, we show this with respect to both the ``plain'' $\SigmaRM$ code and the subcode $\ZCode{\SumDomain}{\SigmaRM}$ of encodings of the zero word.

\newcommand{\ZeroSets}{X}

\begin{theorem}
	\label{theorem:sigma-rm-dual-z}
	Let $\SumSet_1,\ldots,\SumSet_\NumVars \subseteq \Field$, $\SumDomain \eqdef \SumSet_1 \times \cdots \times \SumSet_\NumVars$ and let $S$ be an $\SumDomain$-closed set. Then for $\ZeroSets \in \{\varnothing, \SumDomain\}$, we have 
	\begin{equation*}
		\Dual{(\ZCode{\ZeroSets}{\SigmaRM[\Field,\SumDomain,\vec{\Degree}]}|_{S})} = \Span\big(\{z_{\vec s}\}_{\vec s \in S^*} \cup \bigcup_{i=1}^{\NumVars} \Dual{(\ZCode{\ZeroSets_{i}}{\ReedMuller[\Field,i,(\Degree_1,\ldots,\Degree_i)]}|_{S_i})}\big)
	\end{equation*}
	where $\ZeroSets_i \eqdef \{\vec{x} \in \ZeroSets : \VLen{x} = i\}$, $S_i \eqdef \{ \vec s \in S : \VLen{s} = i \}$, $S^* \eqdef \{ (s_1,\ldots,s_{\ell-1}) \in S : \exists s_{\ell} \text{ s.t. } (s_1,\ldots,s_{\ell-1},s_{\ell}) \in S \}$, and
	\begin{equation*}
		z_{\vec s}(\vec t) \eqdef \begin{cases}
			1 & \text{if $\vec t = \vec s$;} \\
			-1 & \text{if $\vec t = (\vec{s}, a)$ for some $a \in \SumSet_{\VLen{s}+1}$;} \\
			0 & \text{ otherwise.}
		\end{cases}	
	\end{equation*}
\end{theorem}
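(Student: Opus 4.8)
The plan is to prove the stated equality by establishing the two inclusions separately. Throughout I write $\Sigma[P] \eqdef \SumWord{\SumDomain}{P}$ for the subcube-sum word (\cref{def:sumcodes}) of a polynomial $P \in \ReedMuller[\Field,\NumVars,\vec{\Degree}]$, and use two elementary facts: $\Sigma[P]|_{\Field^i}$ is the evaluation table of the $i$-variate polynomial $\vec x \mapsto \sum_{\vec y \in \SumSet_{i+1} \times \cdots \times \SumSet_{\NumVars}} P(\vec x,\vec y)$, which has individual degree $\Degree_j$ in $X_j$ for $j \le i$; and $\Sigma[P]|_{\Field^{i-1}}$ is obtained from $\Sigma[P]|_{\Field^i}$ by summing its last coordinate over $\SumSet_i$. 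The inclusion ``$\supseteq$'' is then direct. Each summation constraint $z_{\vec s}$ ($\vec s \in S^*$) annihilates $\SigmaRM[\Field,\SumDomain,\vec{\Degree}]|_S$ because $\Sigma[P](\vec s) = \sum_{a \in \SumSet_{\VLen{s}+1}} \Sigma[P](\vec s,a)$ by \cref{def:sumcodes}, while $\SumDomain$-closedness of $S$ puts $\vec s$ and all of its summation-children into $S$, so $z_{\vec s}$ is supported on $S$ --- and when $\ZeroSets = \SumDomain$ the codewords also vanish on $\SumDomain$, placing $z_{\vec s}$ in the corresponding zero-code dual. Each low-degree constraint in $\Dual{(\ZCode{\ZeroSets_i}{\ReedMuller[\Field,i,(\Degree_1,\ldots,\Degree_i)]}|_{S_i})}$, read as a vector on $S$ supported on $S_i$, annihilates $\ZCode{\ZeroSets}{\SigmaRM[\Field,\SumDomain,\vec{\Degree}]}|_S$ since $\Sigma[P]|_{S_i}$ always lies in $\ReedMuller[\Field,i,(\Degree_1,\ldots,\Degree_i)]|_{S_i}$ (and $\Sigma[P]|_{S_\NumVars} = P|_{S_\NumVars}$ vanishes on $\ZeroSets_\NumVars$ when $\ZeroSets = \SumDomain$).

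For the inclusion ``$\subseteq$'' I would pass to orthogonal complements inside $\Field^S$: since $\Dual{(\Dual{U})} = U$, it suffices to show that every $v \in \Field^S$ satisfying (a) $v(\vec s) = \sum_{a \in \SumSet_{\VLen{s}+1}} v(\vec s,a)$ for all $\vec s \in S^*$ and (b) $v|_{S_i} \in \ZCode{\ZeroSets_i}{\ReedMuller[\Field,i,(\Degree_1,\ldots,\Degree_i)]}|_{S_i}$ for every $i$, has the form $v = \Sigma[P]|_S$ for some $P \in \ZCode{\ZeroSets}{\ReedMuller[\Field,\NumVars,\vec{\Degree}]}$. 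Using (b) fix, for each $i$, an $i$-variate polynomial $g_i$ of individual degrees $(\Degree_1,\ldots,\Degree_i)$ with $g_i|_{S_i} = v|_{S_i}$, with $g_\NumVars$ vanishing on $\ZeroSets_\NumVars$ when $\ZeroSets = \SumDomain$; by (a) these satisfy $g_{\VLen{s}}(\vec s) = \sum_a g_{\VLen{s}+1}(\vec s,a)$ for $\vec s \in S^*$. Then construct $P$ by descending induction on the level: set $P^{(\NumVars)} \eqdef g_\NumVars$, and given $P^{(j)} \in \Polys{\Field}{\vec{\Degree}}{\NumVars}$ (vanishing on $\SumDomain$ when $\ZeroSets = \SumDomain$) with $\Sigma[P^{(j)}]|_{S_k} = v|_{S_k}$ for all $k \ge j$, note that for $\vec s \in S^* \cap S_{j-1}$ relation (a) and the inductive hypothesis already give $\Sigma[P^{(j)}](\vec s) = \sum_a \Sigma[P^{(j)}](\vec s,a) = \sum_a v(\vec s,a) = v(\vec s)$, so only the ``leaves'' $\vec s \in S_{j-1} \setminus S^*$ --- whose summation-children are absent from $S$, hence unconstrained --- remain to be corrected; add $\Delta^{(j)} \in \Polys{\Field}{\vec{\Degree}}{\NumVars}$ (vanishing on $\SumDomain$ when $\ZeroSets = \SumDomain$) with $\Sigma[\Delta^{(j)}]|_{S_k} = 0$ for all $k \ge j$ and $\Sigma[\Delta^{(j)}](\vec s) = v(\vec s) - \Sigma[P^{(j)}](\vec s)$ on those leaves, and put $P^{(j-1)} \eqdef P^{(j)} + \Delta^{(j)}$. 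One checks the invariant is preserved at level $j-1$, so $P \eqdef P^{(0)}$ satisfies $\Sigma[P]|_S = v$.

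The hard part is the existence of the corrections $\Delta^{(j)}$ --- the ``flattening'' lemma underlying \cref{theorem:sigma-rm-intro}. One must show that, after forcing $\Delta^{(j)}$ to vanish (as a subcube-sum word) on the polynomially-many already-fixed positions $S_j \cup \cdots \cup S_\NumVars$, there is still enough room to prescribe $\Sigma[\Delta^{(j)}]$ arbitrarily on the level-$(j-1)$ leaves. I would prove this using the Reed--Muller constraint-location machinery of \cref{sec:rm-cl} (ultimately \cref{thm:rm-constraint-detector}) together with the combinatorial nullstellensatz (\cref{lemma:combinatorial-nullstellensatz}) and the degree bounds $\Degree_j \ge |\SumSet_j| - 1$, which make each of the small sets involved interpolable and hence the relevant sum-out maps surjective, so that a suitable $\Delta^{(j)}$ can be produced by linear algebra. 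A secondary subtlety is confined to the case $\ZeroSets = \SumDomain$: the level-$i$ restriction of an encoding of the zero word additionally vanishes on $\SumSet_1 \times \cdots \times \SumSet_i$, a constraint not among those listed for $i < \NumVars$; one checks it is recovered by combining summation constraints with the top-level constraints $\Dual{(\ZCode{\SumDomain}{\ReedMuller[\Field,\NumVars,\vec{\Degree}]}|_{S_\NumVars})}$, using that the $\SumDomain$-closed sets arising in the application contain the full subcubes $\{\vec x\} \times (\SumSet_{\VLen{x}+1} \times \cdots \times \SumSet_\NumVars)$ lying below their points.
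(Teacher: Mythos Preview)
Your ``$\supseteq$'' argument is fine and matches the paper. For ``$\subseteq$'' you take a genuinely different, primal-side route (construct a witness $P$ level by level), whereas the paper stays on the dual side and counts dimensions. The strategy is legitimate, but the core step is wrong as stated: the claim that $\Sigma[\Delta^{(j)}]$ can be prescribed \emph{arbitrarily} on the level-$(j-1)$ leaves, subject to $\Sigma[\Delta^{(j)}]|_{S_k}=0$ for $k\ge j$, is false. For a concrete failure with $\ZeroSets=\varnothing$, $\NumVars=2$: let $S_2=T\times\Field$ for some $T\subseteq\Field$ with $|T|\le\Degree_1$, and $S_1=\Field$ (so the level-$1$ leaves are $\Field\setminus T$, and $S$ is $\SumDomain$-closed). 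Any $\Delta$ with $\Delta|_{S_2}=0$ factors as $\Delta(s,t)=\VanishingFunc_T(s)\,r(s,t)$ with $\deg_s r\le\Degree_1-|T|$, so $\Sigma[\Delta](s)=\VanishingFunc_T(s)\cdot\tilde r(s)$ with $\deg\tilde r\le\Degree_1-|T|$; the achievable values on the leaves therefore span a space of dimension at most $\Degree_1-|T|+1$, not all of $\Field^{\Field\setminus T}$. What you actually need is the weaker (true) statement that the \emph{specific} correction $v|_{S_{j-1}}-\Sigma[P^{(j)}]|_{S_{j-1}}$ is achievable: both functions are restrictions of degree-$(\Degree_1,\ldots,\Degree_{j-1})$ polynomials that already agree on $S^*\cap S_{j-1}$, which in the example forces the difference into the multiples of $\VanishingFunc_T$. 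Making this matching-of-spaces argument precise across all levels is where the real work is, and it is not a consequence of ``interpolability of small sets'' or the degree bounds alone. Separately, your $\ZeroSets=\SumDomain$ fix assumes that an $\SumDomain$-closed $S$ contains the full subcube $\{\vec x\}\times\SumDomain_{>\VLen{x}}$ below each of its points; $\SumDomain$-closedness gives only parents and last-coordinate siblings, not descendants, so this step does not go through for general $S$.

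The paper avoids the correction-existence problem by a dual dimension count. It fixes an echelon basis $b_1,\ldots,b_\Delta$ of the left-hand dual (under an ordering that puts shorter vectors first and, within a level, non-$S^*$ points before $S^*$ points) and maps each $b_i$ into the right-hand span so that $\lambda(b_i)$ is preserved: if $\lambda(b_i)\in S^*$ take $b_i'=z_{\lambda(b_i)}$, and otherwise apply an explicit ``flattening'' operator $Q_{\NumVars,a}$ (\cref{def:shaving-function}) repeatedly. This operator substitutes test polynomials of the form $q(x_1,\ldots,x_{\NumVars-1})\cdot\LagrangePoly{\SumSet_\NumVars,a}(x_\NumVars)$ to rewrite the top-level part of a constraint as a contribution at level $\NumVars-1$, and the flattening lemma shows the result is a $\SigmaRM$ (eventually a single-level $\ReedMuller$) constraint with the same leading position. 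The images $b_i'$ are thus in echelon form, hence independent, giving $\dim(\text{right})\ge\dim(\text{left})$ without ever needing to build a $\Delta^{(j)}$.
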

\noindent We will use the above notation for $\ZeroSets_i$, $S_i$, $S^*$ and $z_{\vec s}(\vec t)$ throughout the remainder of this section.

We prove \cref{theorem:sigma-rm-dual-z} by a dimensionality argument. In particular, we show that any basis for $\Dual{(\ZCode{\ZeroSets}{\SigmaRM[\Field,\SumDomain,\vec{\Degree}]}|_{S})}$ which is in echelon form can be mapped to a set of linearly independent vectors in the span of summation constraints and Reed--Muller constraints.

The key technical tool used here is a ``flattening lemma'' (\cref{lem:sigrm-shaving-z}). This lemma provides a means to map any constraint $\Constraint$ on subcube sums of $m$-variate polynomials to a new constraint $\Constraint'$ on subcube sums of $(m-1)$-variate polynomials, while preserving the value of $\lambda(\Constraint)$. Crucially, in the case where $|\lambda(\Constraint)| = m - 1$, all dependence of $\Constraint'$ on subcube sums vanishes, and we obtain a constraint with respect to the plain Reed--Muller code.

By repeated applications of the flattening lemma we map any constraint over subcube sums to a Reed--Muller constraint; we then show by a counting argument that these ``flattened'' constraints, along with the summation constraints, span the dual code.

\begin{proof}[Proof of \cref{theorem:sigma-rm-dual-z}]
	Define $Z^{*} \eqdef \{z_{\vec s}\}_{\vec s \in S^*}$, and $Z_i \eqdef \Dual{(\ZCode{\ZeroSets_i}{\ReedMuller[\Field,i,(\Degree_1,\ldots,\Degree_i)]}|_{S_i})}$ for $i \in [\NumVars]$. It follows easily from the definition of $\SigmaRM$ that $Z_1,\ldots,Z_\NumVars \subseteq \Dual{(\ZCode{\ZeroSets}{\SigmaRM[\Field,\SumDomain,\vec{\Degree}]}|_S)}$. As $S$ is $\SumDomain$-closed we also have that $Z^{*}, \subseteq \Dual{(\ZCode{\ZeroSets}{\SigmaRM[\Field,\SumDomain,\vec{\Degree}]}|_S)}$. Hence it suffices to show that $\dim(\Span(Z^{*} \cup \bigcup_{i=1}^\NumVars Z_i)) \geq \dim(\Dual{(\ZCode{\ZeroSets}{\SigmaRM[\Field,\SumDomain,\vec{\Degree}]}|_{S})})$.
	
	Let $\Delta \eqdef \dim(\Dual{(\ZCode{\ZeroSets}{\SigmaRM[\Field,\SumDomain,\vec{\Degree}]}|_{S})})$. Fix some ordering $<$ on $S$ so that
	\begin{inparaenum}[(i)]
		\item $\VLen{s} < \VLen{t}$ then $\vec s < \vec t$, and
		\item if $\VLen{s} = \VLen{t}$, $\vec s \notin S^{*}$ and $\vec t \in S^{*}$ then $\vec s < \vec t$.
	\end{inparaenum}
	Let $b_1,\ldots,b_{\Delta}$ be a basis for $\Dual{(\ZCode{\ZeroSets}{\SigmaRM[\Field,\SumDomain,\vec{\Degree}]}|_{S})}$ which is in echelon form with respect to $<$.
	
	We show that $\dim(\Span(Z^{*} \cup \bigcup_{i=1}^\NumVars Z_i)) \geq \Delta$ by exhibiting a sequence of linearly independent vectors $b'_1,\ldots,b'_\Delta \in Z^{*} \cup \bigcup_{i=1}^\NumVars Z_i$. These vectors are defined as follows:
	\begin{equation*}
		b'_i \eqdef \begin{cases}
			z_{\vec \lambda(b_i)} & \text{if $\lambda(b_i) \in S^{*}$; and} \\
			R(b_i) & \text{otherwise,}
		\end{cases}
	\end{equation*}
	where $R$ is as guaranteed by \cref{claim:dual-reduction} below. To show that the $b'_i$ are linearly independent, it suffices to show that they are in echelon form with respect to $<$. This follows since $\lambda(b'_i) = \lambda(b_i)$ for all $i \in [\Delta]$: if $\lambda(b_i) \in S^{*}$, then $\lambda(b'_i) = \lambda(z_{\lambda(b_i)}) = \lambda(b_i)$; and if $\lambda(b_i) \notin S^{*}$ then $\lambda(b'_i) = \lambda(R(b_i)) = \lambda(b_i)$ by \cref{claim:dual-reduction}.
\end{proof}

It remains to prove the following technical claim.

\begin{claim}
	\label{claim:dual-reduction}
    Let $S \subseteq \Field^{\leq\NumVars}$, $\SumDomain\eqdef \SumSet_{1}\times\dots \times \SumSet_{\NumVars}$, $\vec{\Degree} \eqdef (\Degree_1, \dots, \Degree_{\NumVars})$ and $\ZeroSets \in \{\varnothing, \SumDomain\}$. If $d_i \geq |\SumDomain_{i}| -1$ for all $i \in [\NumVars]$, then there is a function $T \colon (S \to \Field) \to \bigcup_{i=1}^\NumVars(S_i \to \Field)$ such that for all $z \in \Dual{(\ZCode{\ZeroSets}{\SigmaRM[\Field,\SumDomain,\vec{\Degree}]}|_{S})}$, $T(z) \in \Dual{(\ZCode{\ZeroSets_i}{\ReedMuller[\Field,i,(\Degree_1,\ldots,\Degree_i)]}|_{S_i})}$ for $i = |\lambda(z)|$, and if $\lambda(z) \notin S^{*}$, $\lambda(T(z)) = \lambda(z)$.
\end{claim}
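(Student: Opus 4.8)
The plan is to obtain $T$ by recursion on $\NumVars$, iterating a single flattening step (essentially \cref{lem:sigrm-shaving-z}) that trades the last coordinate for a drop in arity. The base case is $|\lambda(z)| = \NumVars$: then $z$ is supported on the top slice $S_\NumVars$, on which $\SigmaRM[\Field,\SumDomain,\vec{\Degree}]$ coincides with $\ReedMuller[\Field,\NumVars,\vec{\Degree}]$ (each subcube sum ranges over an empty index set), so $z$ already lies in $\Dual{(\ZCode{\ZeroSets_\NumVars}{\ReedMuller[\Field,\NumVars,\vec{\Degree}]}|_{S_\NumVars})}$ and we take $T(z) \eqdef z$. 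When $|\lambda(z)| < \NumVars$ we will construct a linear map $\phi$ sending $z \in \Dual{(\ZCode{\ZeroSets}{\SigmaRM[\Field,\SumDomain,\vec{\Degree}]}|_S)}$ to a constraint $\phi(z) \in \Dual{(\ZCode{\ZeroSets'}{\SigmaRM[\Field,\SumSet_1\times\cdots\times\SumSet_{\NumVars-1},(\Degree_1,\ldots,\Degree_{\NumVars-1})]}|_{S_{<\NumVars}})}$, where $S_{<\NumVars} \eqdef S \cap \Field^{\leq \NumVars-1}$ (readily checked to be $\SumSet_1\times\cdots\times\SumSet_{\NumVars-1}$-closed) and $\ZeroSets'$ is $\varnothing$ or $\SumSet_1\times\cdots\times\SumSet_{\NumVars-1}$ according as $\ZeroSets$ is $\varnothing$ or $\SumDomain$ (the latter using the identity $\ZCode{\SumDomain}{\SigmaRM} = \ZCode{\SumCompletion}{\SigmaRM}$). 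Setting $T(z) \eqdef T'(\phi(z))$ with $T'$ the recursion at $\NumVars-1$ variables, and using that $\phi$ preserves $\lambda$ (see below), the recursion terminates after $\NumVars - |\lambda(z)|$ steps with a vector supported on level $|\lambda(z)|$, equivalently a plain Reed--Muller constraint of that arity, which is precisely the conclusion with $i = |\lambda(z)|$. (For $z$ outside the dual code we simply put $T(z) \eqdef 0$.)

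The flattening step rests on the splitting $\Sigma[\hat f]|_{\Field^{\leq \NumVars-1}} = \Sigma'[g_{\hat f}]$ and $\Sigma[\hat f]|_{\Field^{\NumVars}} = \hat f|_{\Field^{\NumVars}}$, valid for every $\hat f \in \Polys{\Field}{\vec{\Degree}}{\NumVars}$, where $g_{\hat f}(\vec{X}) \eqdef \sum_{a \in \SumSet_\NumVars} \hat f(\vec{X}, a)$ has individual degree $(\Degree_1,\ldots,\Degree_{\NumVars-1})$ and $\Sigma'$ is the subcube-sum operator for $\SumSet_1\times\cdots\times\SumSet_{\NumVars-1}$; since $\Degree_\NumVars \geq |\SumSet_\NumVars| - 1$ (a consequence of the stated degree hypothesis) the power sums $\sum_{a\in\SumSet_\NumVars} a^j$, $0 \leq j \leq |\SumSet_\NumVars|-1$, do not all vanish, and hence $\hat f \mapsto g_{\hat f}$ is onto $\ReedMuller[\Field,\NumVars-1,(\Degree_1,\ldots,\Degree_{\NumVars-1})]$, and onto its $\SumSet_1\times\cdots\times\SumSet_{\NumVars-1}$-zero subcode when $\hat f$ ranges over the $\SumDomain$-zero subcode. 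Writing $z = z_{<\NumVars} + z_\NumVars$ by arity and feeding into $\langle z, \Sigma[\hat f]|_S\rangle = 0$ the polynomials $\hat f$ with $g_{\hat f} = 0$, we see that $z_\NumVars$ annihilates $\mathcal{K}|_{S_\NumVars}$ for $\mathcal{K} \eqdef \{\hat f : g_{\hat f} = 0\}$ (intersected with the $\SumDomain$-zero subcode when $\ZeroSets = \SumDomain$). The key structural point, proved by a Lagrange-interpolation argument extending a function on $S_\NumVars$ to a polynomial with a prescribed last-coordinate sum (in the spirit of \cref{construc:rm-zero-constraint-locator}), is that $\mathcal{K}|_{S_\NumVars}$ equals $\ReedMuller[\Field,\NumVars,\vec{\Degree}]|_{S_\NumVars}$ (or its zero subcode) intersected with $\{v : \sum_{a\in\SumSet_\NumVars} v(\vec{s}',a) = 0 \text{ for every } \vec{s}' \in S_{\NumVars-1}\cap S^{*}\}$. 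Consequently $z_\NumVars = u + \sum_{\vec{s}'\in S_{\NumVars-1}\cap S^{*}} c_{\vec{s}'}\,(z_{\vec{s}'}|_{S_\NumVars})$ for a plain (or zero-code) Reed--Muller constraint $u$ at level $\NumVars$, scalars $c_{\vec{s}'}$, and the summation constraints $z_{\vec{s}'}$ of \cref{theorem:sigma-rm-dual-z}; we set $\phi(z) \eqdef z_{<\NumVars} - \sum_{\vec{s}'} c_{\vec{s}'} e_{\vec{s}'}$. Using $\Sigma'[g_{\hat f}](\vec{s}') = g_{\hat f}(\vec{s}')$ for length-$(\NumVars-1)$ tuples and $\langle z_{\vec{s}'}|_{S_\NumVars}, \hat f|_{S_\NumVars}\rangle = -g_{\hat f}(\vec{s}')$, a short computation collapses $\langle z, \Sigma[\hat f]|_S\rangle$ to $\langle \phi(z), \Sigma'[g_{\hat f}]|_{S_{<\NumVars}}\rangle$, and surjectivity of $\hat f \mapsto g_{\hat f}$ then puts $\phi(z)$ into the desired dual.

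Finally, the hypothesis $\lambda(z) \notin S^{*}$ is exactly what controls $\lambda$ through the flattening. The correction $\sum_{\vec{s}'} c_{\vec{s}'} e_{\vec{s}'}$ is supported on $S_{\NumVars-1}\cap S^{*} \subseteq S^{*}$, at level $\NumVars-1$. If $|\lambda(z)| \leq \NumVars-2$ it cannot affect any coordinate up to and including $\lambda(z)$, so $\lambda(\phi(z)) = \lambda(z)$ with no extra assumption. If $|\lambda(z)| = \NumVars-1$, then since the ordering $<$ lists non-$S^{*}$ tuples before $S^{*}$ tuples within each level, $\lambda(z)\notin S^{*}$ forces $\lambda(z) < \vec{s}'$ for all $\vec{s}' \in S_{\NumVars-1}\cap S^{*}$; again $\phi(z)$ and $z$ agree up to and including $\lambda(z)$, so $\lambda(\phi(z)) = \lambda(z)$. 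As $(S_{<\NumVars})^{*} \subseteq S^{*}$ and the level-$i$ slices of $S$ and $\ZeroSets$ are preserved by the reduction, this condition propagates down the recursion, giving $\lambda(T(z)) = \lambda(z)$ whenever $\lambda(z)\notin S^{*}$; in the base case $|\lambda(z)| = \NumVars$, $\lambda(z)$ is maximal and hence outside $S^{*}$ automatically, and $T(z) = z$ has the right leading index. I expect the main obstacle to be the structural identity for $\mathcal{K}|_{S_\NumVars}$, i.e.\ showing that the global condition $g_{\hat f} = 0$ imposes on the finite slice $S_\NumVars$ nothing beyond the sibling-sum relations, which is where the combinatorial nullstellensatz enters; a secondary, fiddly point is the bookkeeping of which zero-set travels with the code as the arity decreases.
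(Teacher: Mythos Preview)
Your overall architecture matches the paper's: both obtain $T$ by iterating a single ``drop the last coordinate'' step $\NumVars - |\lambda(z)|$ times, with the base case being a constraint supported entirely on $S_{\NumVars}$. The difference is in how the flattening step is realised.

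The paper's step is direct and explicit. One fixes $a_\NumVars^* \in \SumSet_\NumVars$ and substitutes into the constraint equation the specific polynomials $q_{a_\NumVars^*}(\vec X) \eqdef q(X_1,\ldots,X_{\NumVars-1})\cdot\LagrangePoly{\SumSet_\NumVars,a_\NumVars^*}(X_\NumVars)$, as $q$ ranges over the $(\NumVars-1)$-variate code. Reading off the coefficients of $q$ in the resulting identity immediately produces the flattened constraint, with the explicit formula
\[
Q_{\NumVars,a_\NumVars^*}(z)(\vec s) = z(\vec s) + \sum_{s_\NumVars:(\vec s,s_\NumVars)\in S_\NumVars} z(\vec s,s_\NumVars)\,\LagrangePoly{\SumSet_\NumVars,a_\NumVars^*}(s_\NumVars)
\]
for $\vec s \in S_{\NumVars-1}\cap S^*$, unchanged elsewhere. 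No characterisation of any kernel is needed; membership in the smaller dual and preservation of $\lambda$ are read off from this formula.

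Your route instead first deduces $z_\NumVars \in (\mathcal K|_{S_\NumVars})^\perp$ and then invokes a structural identity to decompose $z_\NumVars$ as an $\ReedMuller$-constraint plus a combination of $z_{\vec s'}|_{S_\NumVars}$'s. Two concerns:
\begin{itemize}
\item The ``key structural point'' $\mathcal K|_{S_\NumVars} = \ReedMuller|_{S_\NumVars}\cap\{\text{sibling sums vanish}\}$ is the real content of your step and you do not prove it. Your Lagrange sketch (take $\hat h = g\cdot\LagrangePoly{\SumSet_\NumVars,a^*}$ with $g$ the unwanted sum) works cleanly only when every $(\vec s',s_\NumVars)\in S_\NumVars$ has $\vec s'\in S_{\NumVars-1}\cap S^*$, i.e.\ under $\SumDomain$-closure; without it the vectors $z_{\vec s'}|_{S_\NumVars}$ are \emph{partial} sibling sums and the identity is no longer clear. \Cref{claim:dual-reduction} is stated for arbitrary $S$, and the paper's explicit $Q_{\NumVars,a_\NumVars^*}$ handles the non-closed case for free (the sum over $s_\NumVars$ simply ranges over those present in $S_\NumVars$).
\item Your surjectivity argument for $\hat f\mapsto g_{\hat f}$ via ``not all power sums vanish'' is off: some power sums can vanish (e.g.\ $\sum_{a\in\SumSet_\NumVars}1=|\SumSet_\NumVars|$ in characteristic dividing $|\SumSet_\NumVars|$). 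The correct one-line argument is again Lagrange: $\hat f(\vec X,X_\NumVars)=g(\vec X)\,\LagrangePoly{\SumSet_\NumVars,a^*}(X_\NumVars)$ has $g_{\hat f}=g$ and degree $|\SumSet_\NumVars|-1\leq d_\NumVars$ in $X_\NumVars$.
\end{itemize}
So your plan is sound for $\SumDomain$-closed $S$ (which suffices for the application in \cref{theorem:sigma-rm-dual-z}), but is more roundabout; the paper's explicit substitution is both shorter and proves the claim in the generality stated.
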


This claim will be a consequence of \cref{lem:sigrm-shaving-z} (the ``flattening lemma''), which provides a way to transform a constraint on an $\NumVars$-variate $\SigmaRM$ code into a related constraint on an $(\NumVars-1)$-variate $\SigmaRM$ code.

\newcommand{\SetS}{S}
\newcommand{\SUp}{S^*}

\newcommand{\IndPoint}[1]{a_{#1}^*}
\newcommand{\DimRedFunc}[1]{Q_{#1, \IndPoint{#1}}}

\begin{definition}
	\label{def:shaving-function}
    Let $S \subseteq\Field^{\leq \NumVars}$ and $\SumSet_1 \times \dots \times \SumSet_{\NumVars} \subseteq \Field^{\NumVars}$. For $i \in [\NumVars]$ and an element $a \in \SumSet_i$, we define the ``flattening map'' $Q_{i,a}\colon (S \to \Field) \to (S\setminus S_i \to \Field)$ as follows, for $\vec{s} \in S \setminus S_i$: 
	
	\begin{equation*}
		Q_{i, a}(\Constraint)(\vec{s}) \eqdef \begin{cases}
			\Constraint(\vec{s}) + \sum_{s_i\in\SetS_{i}}\Constraint(\vec{s},s_i)\LagrangePoly{\SumSet_i, a}(s_i) & \text{if $\vec{s} \in \SetS_{i-1} \cap \SetS^{*}$, and}
			\\
			\Constraint(\vec{s}) & \text{otherwise.}
		\end{cases}
	\end{equation*}
\end{definition}

\begin{lemma}[Flattening lemma]
	\label{lem:sigrm-shaving-z}
    Let $S \subseteq \Field^{\leq\NumVars}$, $\SumDomain\eqdef \SumSet_{1}\times\dots \times \SumSet_{\NumVars}$, $\vec{\Degree} \eqdef (\Degree_1, \dots, \Degree_{\NumVars})$, $\ZeroSets \in \{\varnothing, \SumDomain\}$, and let $\Constraint \in \Dual{(\ZCode{\ZeroSets}{\SigmaRM[\Field,\SumDomain,\vec{\Degree}]}|_{S})}$, where $d_m \geq |\SumSet_m|-1$.
	For any point $a_m^*\in \SumSet_{m}$, the function $\DimRedFunc{m}\colon (S \to \Field) \to (S\setminus S_m \to \Field)$ (see \cref{def:shaving-function}) satisfies the following properties: 
	\begin{enumerate}[]
		\item \label{item:sigrm-shaving1} If $\Constraint$ satisfies $|\lambda(\Constraint)| < \NumVars - 1$, then $\DimRedFunc{m}(\Constraint) \in \Dual{(\ZCode{\ZeroSets_{\NumVars-1}}{\SigmaRM[\Field,\SumDomain_{\NumVars-1},(\Degree_1,\dots,\Degree_{\NumVars-1})]}|_{S\setminus S_i})}$ and $\lambda(\DimRedFunc{m}(\Constraint)) = \lambda(\Constraint)$.
		
		\item If $\Constraint$ satisfies $|\lambda(\Constraint)| = \NumVars - 1$, then $\DimRedFunc{m}(\Constraint) \in \Dual{(\ZCode{\ZeroSets_{|\lambda(\Constraint)|}}{\ReedMuller[\Field, |\lambda(\Constraint)|,(\Degree_1,\ldots,\Degree_{|\lambda(\Constraint)|})]}|_{S_{|\lambda(\Constraint)|}})}$, and furthermore if $\lambda(\Constraint) \notin \SUp$, then $\lambda(\DimRedFunc{m}(\Constraint)) = \lambda(\Constraint)$.
	\end{enumerate}
\end{lemma}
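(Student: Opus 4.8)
The plan is to prove the lemma by an explicit ``lifting'' construction: I will show that every element of the restricted $(\NumVars-1)$-variate code on the right-hand side arises as $\SumWord{\SumDomain_{\NumVars-1}}{\hat G}|_{S\setminus S_\NumVars}$ for some $\hat G$ that can be lifted to an $\NumVars$-variate codeword $\hat F$ whose subcube sums pair with $\Constraint$ exactly as $\SumWord{\SumDomain_{\NumVars-1}}{\hat G}$ pairs with $\DimRedFunc{\NumVars}(\Constraint)$. Since $\Constraint$ annihilates every such $\SumWord{\SumDomain}{\hat F}|_S$, this will force $\DimRedFunc{\NumVars}(\Constraint)$ to annihilate every such $\SumWord{\SumDomain_{\NumVars-1}}{\hat G}|_{S\setminus S_\NumVars}$, which is the membership assertion; the statements about $\lambda$ will then be read off from the definition of $\DimRedFunc{\NumVars}$ (\cref{def:shaving-function}) together with the ordering on $S$ used in \cref{theorem:sigma-rm-dual-z}.

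Concretely, I would fix $\IndPoint{\NumVars} \in \SumSet_\NumVars$ and let $\hat G$ be an arbitrary $(\NumVars-1)$-variate polynomial of individual degree $(\Degree_1,\ldots,\Degree_{\NumVars-1})$, additionally vanishing on $\SumDomain_{\NumVars-1}$ in the case $\ZeroSets = \SumDomain$, and set $\hat F(\vec X) \eqdef \hat G(X_1,\ldots,X_{\NumVars-1}) \cdot \LagrangePoly{\SumSet_\NumVars, \IndPoint{\NumVars}}(X_\NumVars)$. Since $\deg \LagrangePoly{\SumSet_\NumVars, \IndPoint{\NumVars}} = |\SumSet_\NumVars| - 1 \leq \Degree_\NumVars$, this is a codeword of $\ReedMuller[\Field,\NumVars,\vec{\Degree}]$; and because $\LagrangePoly{\SumSet_\NumVars,\IndPoint{\NumVars}}$ is not identically zero on $\SumSet_\NumVars$ (it equals $1$ at $\IndPoint{\NumVars}$), $\hat F$ vanishes on $\SumDomain$ precisely when $\hat G$ vanishes on $\SumDomain_{\NumVars-1}$, so $\hat F$ is a codeword of $\ZCode{\ZeroSets}{\ReedMuller[\Field,\NumVars,\vec{\Degree}]}$. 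Next I would compute the subcube sums of $\hat F$: using $\sum_{a \in \SumSet_\NumVars} \LagrangePoly{\SumSet_\NumVars,\IndPoint{\NumVars}}(a) = 1$, one gets $\SumWord{\SumDomain}{\hat F}(\vec x) = \SumWord{\SumDomain_{\NumVars-1}}{\hat G}(\vec x)$ for every $\vec x$ of length at most $\NumVars-1$, whereas for a length-$\NumVars$ point $(\vec s, x_\NumVars)$ one gets $\SumWord{\SumDomain}{\hat F}(\vec s, x_\NumVars) = \hat G(\vec s)\cdot \LagrangePoly{\SumSet_\NumVars,\IndPoint{\NumVars}}(x_\NumVars)$. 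Splitting the pairing $\sum_{\vec t \in S} \Constraint(\vec t)\,\SumWord{\SumDomain}{\hat F}(\vec t)$ into the contributions of $S \setminus S_\NumVars$ and of $S_\NumVars$, then regrouping the latter by its length-$(\NumVars-1)$ prefix and invoking $\SumDomain$-closedness of $S$ (which guarantees that $\{x_\NumVars : (\vec s, x_\NumVars) \in S\} = \SumSet_\NumVars$ whenever $\vec s \in S_{\NumVars-1}\cap\SUp$, and that every length-$\NumVars$ point of $S$ has its prefix in $S$), I expect the whole pairing to collapse to $\sum_{\vec t \in S\setminus S_\NumVars} \DimRedFunc{\NumVars}(\Constraint)(\vec t)\,\SumWord{\SumDomain_{\NumVars-1}}{\hat G}(\vec t)$. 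As the left-hand side is $0$ for every valid $\hat F$, and $\SumWord{\SumDomain_{\NumVars-1}}{\hat G}|_{S\setminus S_\NumVars}$ ranges over the whole restricted code $\ZCode{\ZeroSets_{\NumVars-1}}{\SigmaRM[\Field,\SumDomain_{\NumVars-1},(\Degree_1,\ldots,\Degree_{\NumVars-1})]}|_{S\setminus S_\NumVars}$ as $\hat G$ varies, this yields $\DimRedFunc{\NumVars}(\Constraint) \in \Dual{(\ZCode{\ZeroSets_{\NumVars-1}}{\SigmaRM[\Field,\SumDomain_{\NumVars-1},(\Degree_1,\ldots,\Degree_{\NumVars-1})]}|_{S\setminus S_\NumVars})}$.

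To finish, I would analyse $\lambda$ using the properties of the ordering $<$ on $S$: shorter vectors come first, and among vectors of equal length those outside $\SUp$ come before those inside. If $|\lambda(\Constraint)| < \NumVars - 1$, every vector strictly below $\lambda(\Constraint)$ has length $< \NumVars - 1$, hence lies outside $\SetS_{\NumVars-1}$ where $\DimRedFunc{\NumVars}$ acts as the identity; so $\DimRedFunc{\NumVars}(\Constraint)$ agrees with $\Constraint$ both below and at $\lambda(\Constraint)$, giving $\lambda(\DimRedFunc{\NumVars}(\Constraint)) = \lambda(\Constraint)$, which completes item 1. If $|\lambda(\Constraint)| = \NumVars - 1$, then $\Constraint$ vanishes on all vectors of length at most $\NumVars - 2$, so by inspection of \cref{def:shaving-function} the image $\DimRedFunc{\NumVars}(\Constraint)$ is supported on $\SetS_{\NumVars-1}$ only; since $\SumWord{\SumDomain_{\NumVars-1}}{\hat G}$ restricted to length-$(\NumVars-1)$ vectors is simply $\hat G$, the membership above sharpens to $\DimRedFunc{\NumVars}(\Constraint) \in \Dual{(\ZCode{\ZeroSets_{\NumVars-1}}{\ReedMuller[\Field,\NumVars-1,(\Degree_1,\ldots,\Degree_{\NumVars-1})]}|_{S_{\NumVars-1}})}$, which is the membership claim of item 2. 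Finally, if in addition $\lambda(\Constraint) \notin \SUp$, then $\DimRedFunc{\NumVars}(\Constraint)(\lambda(\Constraint)) = \Constraint(\lambda(\Constraint)) \neq 0$, and every equal-length vector below $\lambda(\Constraint)$ is likewise outside $\SUp$ (as non-$\SUp$ vectors precede $\SUp$ vectors in $<$), hence a place where $\DimRedFunc{\NumVars}$ is the identity and $\Constraint$ vanishes; so $\lambda(\DimRedFunc{\NumVars}(\Constraint)) = \lambda(\Constraint)$.

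I expect the only genuinely delicate point to be the collapse of the pairing to $\sum_{\vec t} \DimRedFunc{\NumVars}(\Constraint)(\vec t)\,\SumWord{\SumDomain_{\NumVars-1}}{\hat G}(\vec t)$: one must correctly combine, for each prefix $\vec s \in \SetS_{\NumVars-1}\cap\SUp$, the ``direct'' term $\Constraint(\vec s)\,\hat G(\vec s)$ with the ``extension'' term $\sum_{x_\NumVars \in \SumSet_\NumVars}\Constraint(\vec s, x_\NumVars)\,\hat G(\vec s)\,\LagrangePoly{\SumSet_\NumVars,\IndPoint{\NumVars}}(x_\NumVars)$, recognising their sum as $\DimRedFunc{\NumVars}(\Constraint)(\vec s)\,\SumWord{\SumDomain_{\NumVars-1}}{\hat G}(\vec s)$ (here also using $\hat G(\vec s) = \SumWord{\SumDomain_{\NumVars-1}}{\hat G}(\vec s)$ for length-$(\NumVars-1)$ $\vec s$). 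It is exactly at this step that $\SumDomain$-closedness of $S$ enters in an essential way, since it is what makes the inner sum range over all of $\SumSet_\NumVars$ and thereby match the formula defining $\DimRedFunc{\NumVars}$ in \cref{def:shaving-function}; the degree hypothesis $\Degree_\NumVars \geq |\SumSet_\NumVars| - 1$ is used only to guarantee that the lifted $\hat F$ stays inside the code.
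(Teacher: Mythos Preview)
Your proposal is correct and essentially identical to the paper's proof: both lift an arbitrary $(\NumVars-1)$-variate codeword $q$ (your $\hat G$) to the $\NumVars$-variate codeword $q \cdot \LagrangePoly{\SumSet_\NumVars,\IndPoint{\NumVars}}$, substitute into the constraint equation for $\Constraint$, and observe that the resulting identity is exactly the pairing of $\DimRedFunc{\NumVars}(\Constraint)$ with $\SumWord{\SumDomain_{\NumVars-1}}{q}$; the $\lambda$-analysis is also the same. One small correction: the lemma does not assume $S$ is $\SumDomain$-closed, and neither the paper's \cref{def:shaving-function} nor its proof sums over all of $\SumSet_\NumVars$ --- the inner sum is over $\{s_\NumVars : (\vec s, s_\NumVars) \in S\}$ --- so the ``full $\SumSet_\NumVars$'' part of closedness is unnecessary (the prefix condition you flag is, however, tacitly used in both arguments, and holds in every application since there $S$ is $\SumDomain$-closed).
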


\begin{proof}
    We will proceed by showing how a constraint on subcube sums of an $m$-variate polynomials induces a constraint on subcube sums of $(m-1)$-variate polynomials. 
    
	Let $\Constraint \in \left(\ZCode{\ZeroSets}{\SigmaRM[\Field, \SumDomain, \vec{\Degree}]}|_\SetS\right)^{\bot}$. 
    Define the set $P_0 \eqdef  \{p \in \Polys{\Field}{\vec{\Degree}}{\NumVars} : p(\vec{x}) = 0 ~\forall \vec{x} \in \ZeroSets\}$.
    Then for all polynomials $p \in P_0$ it holds that
	
	\begin{equation}
		\label{eqn:constraint-eq}
		\sum_{\vec{s}\in\SetS}\Constraint(\vec{s})\sum_{\vec{a}\in \SumDomain_{>\VLen{s}}} p(\vec{s},\vec{a}) = 0.
	\end{equation}
	Let $a_m^* \in \SumSet_m$ and define the set of ($\NumVars-1$)-variate polynomials $Q_0 \eqdef \{q \in\Polys{\Field}{\vec{\Degree'}}{\NumVars-1} : q(\vec{x}) = 0 ~\forall \vec{x} \in \ZeroSets_{\NumVars-1}\}$, where $\vec{\Degree'} \eqdef (d_1,\dots,d_{\NumVars-1})$. Then for any $q \in Q_{0}$, the polynomial defined by $q_{a_{m}^*}(x_1,\dots,x_{\NumVars}) \eqdef q(x_1,\dots,x_{\NumVars-1})\cdot \LagrangePoly{\SumSet_m, a_m^*}(x_m)$ satisfies $q_{a_{m}^*}(\vec{x}) = 0$ for all $\vec{x} \in \ZeroSets$. So $q_{a_{m}^*}\in P_0$. Thus, substituting $q_{a_m^*}$ into \Cref{eqn:constraint-eq} implies that for all polynomials $q \in Q_0$,
	
	\begin{equation*}
		\sum_{\vec{s} \in \SetS\setminus (\SetS_{\NumVars-1}\Union\SetS_{\NumVars})}\Constraint(\vec{s})\sum_{\vec{a}\in\SumDomain'_{>\VLen{s}}} q(\vec{s},\vec{a}) + \sum_{\vec{s}\in\SetS_{\NumVars-1}} \Constraint(\vec{s})q(\vec{s}) + \sum_{\vec{s}\in\SetS_{\NumVars-1}}q(\vec{s}) \sum_{\substack{s_m \\ (\vec{s},s_m) \in S_m}}\Constraint(\vec{s},s_m)\LagrangePoly{\SumDomain_m, a_m^*}(s_m) = 0,
	\end{equation*}
    where $\SumDomain' \eqdef \SumDomain_{\NumVars-1}$. Recall that for points $(s_1, \dots, s_{\NumVars-1}) \notin \SUp$, there does not exist $s_{\NumVars} \in \SetS_{\NumVars}$ such that $(s_1, \dots, s_{\NumVars-1}, s_{\NumVars}) \in \SetS$. Thus we can rearrange this expression as: 
	\begin{equation}
		\label{eqn:q-constraint}
		\sum_{\vec{s} \in \SetS\setminus (\SetS_{\NumVars-1}\Union\SetS_{\NumVars})}\Constraint(\vec{s})\sum_{\vec{a}\in\SumDomain'_{>\VLen{s}}} q(\vec{s},\vec{a}) + \sum_{\vec{s}\in\SetS_{\NumVars-1}\setminus S^{*}} \Constraint(\vec{s})q(\vec{s}) + \sum_{\vec{s}\in\SetS_{\NumVars-1}\cap S^{*}}q(\vec{s})\left( \Constraint(\vec{s}) + \sum_{\substack{s_m \\ (\vec{s},s_m) \in S_m}}\Constraint(\vec{s},s_m)\LagrangePoly{\SumDomain_m, a_m^*}(s_m) \right) = 0.
	\end{equation}
	Notice that the coefficients of \cref{eqn:q-constraint} are precisely the values of $\DimRedFunc{m}$ (\cref{def:shaving-function}) at each point in $S \setminus S_m$. This demonstrates that for all $\Constraint \in \left(\ZCode{\ZeroSets}{\SigmaRM[\Field, \SumDomain, \vec{\Degree}]}|_\SetS\right)^{\bot}$, $\DimRedFunc{m}(\Constraint) \in \left(\ZCode{\ZeroSets_{\NumVars-1}}{\SigmaRM[\Field, \SumDomain', \vec{\Degree'}]}|_{\SetS\setminus\SetS_{\NumVars}}\right)^{\bot}$.
	
	Now we will deal with the two cases of the lemma statement separately. 
	
	\parhead{Case (i) $|\lambda(\Constraint)| + 1 < \NumVars$}
	In this case we show that  $\lambda(\DimRedFunc{m}(\Constraint))=\lambda(\Constraint)$.
	
	As $|\lambda(\Constraint)| + 1 < \NumVars$, $\lambda(\Constraint) \in \SetS\setminus(\SetS_m \Union \SetS_{m-1})$. But we have that $\DimRedFunc{m}(\Constraint)(\vec{s}) = \Constraint(\vec{s})$ for all $\vec{s} \in \SetS\setminus(\SetS_m \Union \SetS_{m-1})$, so in particular $\DimRedFunc{m}(z)(\vec{s}) = \Constraint(\vec{s}) = 0$ for all $\vec{s} < \lambda(\Constraint)$, and $\DimRedFunc{m}(z)(\lambda(z)) = z(\lambda(z)) \neq 0$. Thus $\lambda(\DimRedFunc{m}(\Constraint))=\lambda(\Constraint)$.
	
	\parhead{Case (ii) $|\lambda(\Constraint)| + 1 = \NumVars$}
	First, we show that $\DimRedFunc{m}(\Constraint) \in \Dual{(\ZCode{\ZeroSets_{i}}{\ReedMuller[\Field, i,(\Degree_1,\ldots,\Degree_i)]}|_{S_i})}$, where $i\eqdef|\lambda(\Constraint)|$. Second, we show that if $\lambda(\Constraint) \notin \SUp$, then $\lambda(\DimRedFunc{m}(\Constraint)) = \lambda(\Constraint)$.
	
	For the first part, as $|\lambda(\Constraint)| = \NumVars-1$, $\Constraint(\vec{s}) = 0$ for all $\vec{s} \in S \setminus(S_{\NumVars-1} \Union S_{\NumVars})$. As a result, \Cref{eqn:q-constraint} simplifies greatly and becomes:
	\begin{equation}
		\label{eqn:special-case-q-constraint}
		\sum_{\vec{s}\in\SetS_{|\lambda(\Constraint)|}\setminus S^{*}} \Constraint(\vec{s})q(\vec{s}) + \sum_{\vec{s}\in\SetS_{|\lambda(\Constraint)|}\cap \SetS^{*}}q(\vec{s})\left( \Constraint(\vec{s}) + \sum_{\substack{s_m \\ \vec{s}||s_m \in S_m}}\Constraint(\vec{s},s_m)\LagrangePoly{\SumDomain_m, a_m^*}(s_m) \right) = 0.
	\end{equation}
	Note that \Cref{eqn:special-case-q-constraint} is a constraint with respect to the Reed-Muller code $\Dual{(\ZCode{\ZeroSets_i}{\ReedMuller[\Field, i,(\Degree_1,\ldots,\Degree_i)]}|_{S_i})}$. So $\DimRedFunc{m}(\Constraint) \in \Dual{(\ZCode{R_{i}}{\ReedMuller[\Field, i,(\Degree_1,\ldots,\Degree_i)]}|_{S_i})}$. 
	
	For the second part, assume $\lambda(\Constraint) \notin \SUp$. By the definition of $<$, if $\vec{s} < \lambda(\Constraint)$, then it must be that $\vec{s} \notin \SUp$. However, if $\vec{s} \notin \SUp$, then $\DimRedFunc{m}(\Constraint)(\vec{s}) = \Constraint(\vec{s})$. In particular, for $\vec{s} < \lambda(\Constraint)$,  $\DimRedFunc{m}(\Constraint)(\vec{s}) = \Constraint(\vec{s}) = 0$, and $\DimRedFunc{m}(\Constraint)(\lambda(\Constraint)) = \Constraint(\lambda(\Constraint)) \neq 0$, so $\lambda(\DimRedFunc{m}(\Constraint)) = \lambda(\Constraint)$.
\end{proof}

We now apply \cref{lem:sigrm-shaving-z} to prove \cref{claim:dual-reduction}, which will complete the proof of \cref{theorem:sigma-rm-dual-z}.

\begin{proof}[Proof of \cref{claim:dual-reduction}]
	For each $i \in \{|\lambda(\Constraint)|+1,\dots,\NumVars\}$ arbitrarily chose an element $a_i \in \SumSet_{i}$, and for notational convenience, denote by $Q_i \eqdef Q_{i, a_{i}}$ the flattening map (\cref{def:shaving-function}). For each $\Constraint \in \Dual{(\ZCode{\ZeroSets}{\SigmaRM[\Field,\SumDomain,\vec{\Degree}]}|_{S})}$, define the value of $T(\Constraint) \eqdef (Q_{|\lambda(\Constraint)|+1}\circ\dots\circ Q_{\NumVars-1}\circ Q_{\NumVars})(\Constraint)$.
	
	By part 1 of \cref{lem:sigrm-shaving-z}, we see that $(Q_{|\lambda(z)|+2}\circ\dots\circ Q_{m})(z) \in \Dual{(\ZCode{\ZeroSets_{i+1}}{\SigmaRM[\Field, \SumDomain_{i+1}, (\Degree_1,\dots,\Degree_{i+1})]}|_{\SetS\setminus\SetS_{\NumVars}})}$, with $\lambda((Q_{|\lambda(z)|+2}\circ\dots\circ Q_{m})(z)) = \lambda(z)$. Combining this with part 2 of \cref{lem:sigrm-shaving-z} we see that $T(z) \eqdef (Q_{|\lambda(z)|+1}\circ\dots\circ Q_{m})(z) \in \Dual{(\ZCode{\ZeroSets_i}{\ReedMuller[\Field, i,(\Degree_1,\ldots,\Degree_i)]}|_{S_i})}$, and provided that $\lambda(z) \notin \SUp$, then $\lambda(T(z)) = \lambda(z).$
\end{proof}

\subsection{An $\ell$-constraint locator for $\Sigma$RM}
\label{sec:construction-sigmarm}

In this section we construct an constraint locator for $\SigmaRM$, by combining the structure implied by \cref{theorem:sigma-rm-dual-z} with the constraint locator we constructed in \cref{sec:rm-cl}. We will often denote $\SigmaRM \eqdef \SigmaRM[\Field, \NumVars,\vec{\Degree}]$ when $\Field, \NumVars, \vec{\Degree}$ are clear from context.

\newcommand{\ClosedSubdomain}{\hat{I}}

\newcommand{\RClosed}{\hat{R}}

\begin{mdframed}[nobreak=true]
	\begin{construction}
		\label{construc:sigrm-cl}
		An $\ell$-constraint locator for $\SigRMEnc{\vec{\Degree}}{\SumDomain}$. For each $i \in [\NumVars]$, let $\CL_i$ be an $\ell_i$-constraint locator for $\RMEnc{(\Degree_1,\dots,\Degree_i)}{\SumDomain_i}$ (cf. \cref{thm:rm-constraint-loc}). Input: a subdomain $\Subdomain \subseteq \Field^{\NumVars}.$ \ConstrucSpacing
        $\CL_\Sigma(\Subdomain)$:
        \begin{enumerate}[nolistsep]
            \item Compute the $\SumDomain$-closure $\ClosedSubdomain$ of $\Subdomain$.

            \item \label{step:compute-rm-cl}For each $i\in [\NumVars]$, compute $(R_i, \SubConstraintMatrix_i) \eqdef \CL_i(\ClosedSubdomain_{i})$, where $\ClosedSubdomain_{i} \eqdef \{\vec{x} \in \ClosedSubdomain \colon \VLen{x} = i\}$.

            \item \label{step:define-R}Define the set $R \eqdef \bigcup_{i=1}^{\NumVars} R_i$ and compute the $\SumDomain$-closure $\RClosed$ of $R$.

            \item For each $i \in [\NumVars]$, define $\SubConstraintMatrix_i'$ to be the natural embedding of $\SubConstraintMatrix_i$ into the space of matrices $\Field^{k_i\times\RClosed_i\sqcup \ClosedSubdomain}$ (by appropriate padding with $|\RClosed_i|-|R_i|$ columns of zeros), where $k_i$ is the number of rows of $\SubConstraintMatrix_i$. 

            \item Construct the matrix
			\begin{equation*}
				Z' \eqdef \left(\begin{array}{cccc}
					\SubConstraintMatrix_1' & \ZeroMatrix & \dots & \ZeroMatrix \\
					\ZeroMatrix & \SubConstraintMatrix_2' & \dots & \ZeroMatrix \\
					\vdots & \vdots & \ddots & \vdots \\
					\ZeroMatrix & \ZeroMatrix & \dots & \SubConstraintMatrix_{\NumVars}' \\
				\end{array}\right)\enspace.
			\end{equation*}

            \item Construct the matrix $Z^*$ whose $(\vec{s}, \vec{\gamma})$-th entry, for $\vec{s} \in (\RClosed \sqcup \ClosedSubdomain)^*$, and $\vec{\gamma} \in \RClosed\sqcup\ClosedSubdomain$, is defined by
            \begin{equation*}
				\Constraint_{\vec{s}}(\vec{\gamma}) = \begin{cases}
					1 & \text{if}~\vec{\gamma}=\vec{s} ,\\
					-1 &\text{if}~\vec{\gamma} = (\vec{s},a) ~\text{for some}~a \in \SumSet_{\VLen{\gamma}+1}, \\
					0 & \text{otherwise.}
				\end{cases}
			\end{equation*}

            \item Define $Z$ to be the matrix obtained by vertically stacking $Z'$ on top of $Z^*$.

            \item Compute a basis $B$ for the space $\{(\vec{r},\vec{\beta}) \in \Field^{\RClosed} \times \Field^{\Subdomain} : \exists \vec{\gamma} \in \Field^{\ClosedSubdomain \setminus\Subdomain}, Z(\vec{r}, \vec{\beta},\vec{\alpha})^T = 0\}$.
            
            \item Output $(\RClosed, B)$.
        \end{enumerate}
    \end{construction}
\end{mdframed}

\newcommand{\RowSpace}{\mathsf{rowspace}}

Our proof of correctness for \cref{construc:sigrm-cl} consists of two main steps. The first is showing that $\ker(B) = \SigmaRM|_{\RClosed \sqcup \Subdomain}$, which we accomplish by appealing to \cref{theorem:sigma-rm-dual-z}. We then show that $(\vec{r}, \vec{\beta}) \in \SigmaRM|_{\RClosed \sqcup \Subdomain}$ if and only if for all messages $\Message \in \MsgSpace$ such that $\Message|_{\RClosed} = \vec{r}$, we have $(\Message, \vec{\beta}) \in \SigmaRM|_{\SumCompletion \sqcup I}$.

\begin{claim}
\label{claim:cl-output}
    Let $\Subdomain \subseteq \Field^{\NumVars}$ and let $(\RClosed, B) \eqdef \CL_\Sigma(\Subdomain)$  (cf. \cref{construc:sigrm-cl}), then $\ker(B) = \SigmaRM[\Field, \NumVars, \vec{\Degree}]|_{\RClosed\sqcup \Subdomain}$.
\end{claim}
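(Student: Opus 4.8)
The plan is to derive \cref{claim:cl-output} from the local characterisation of $\SigmaRM$ given by \cref{theorem:sigma-rm-dual-z}. The key point is that the matrix $Z$ assembled in \cref{construc:sigrm-cl} (the vertical stacking of the block-diagonal matrix $Z'$ and the summation-constraint matrix $Z^*$) is --- up to the projection performed in step 8 --- a parity-check matrix for $\SigmaRM|_{\RClosed\sqcup\ClosedSubdomain}$: the rows of $Z^*$ are exactly the summation constraints $\{z_{\vec s}\}_{\vec s\in(\RClosed\sqcup\ClosedSubdomain)^*}$, while the rows of the $i$-th block $Z'_i$ are Reed--Muller constraints at level $i$. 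Thus \cref{theorem:sigma-rm-dual-z}, applied to the $\SumDomain$-closed set $\RClosed\sqcup\ClosedSubdomain$ (with the zero-set parameter $\varnothing$), says precisely that the rows of $Z$ span $\Dual{(\SigmaRM|_{\RClosed\sqcup\ClosedSubdomain})}$, whence $\ker(Z) = \RowSpace(Z)^\perp = \SigmaRM|_{\RClosed\sqcup\ClosedSubdomain}$. Since step 8 outputs (a parity-check matrix $B$ for) the projection $\{(\vec r,\vec\beta): \exists\,\vec\gamma,\ Z(\vec r,\vec\beta,\vec\gamma)^T=0\} = \ker(Z)|_{\RClosed\sqcup\Subdomain}$, and $\Subdomain\subseteq\ClosedSubdomain$ together with transitivity of code restriction gives $(\SigmaRM|_{\RClosed\sqcup\ClosedSubdomain})|_{\RClosed\sqcup\Subdomain} = \SigmaRM|_{\RClosed\sqcup\Subdomain}$, we conclude $\ker(B) = \SigmaRM|_{\RClosed\sqcup\Subdomain}$.

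Filling this in, I would carry out the following steps. (i) Record that $\ClosedSubdomain$ (step 1) and $\RClosed$ (step 3) are $\SumDomain$-closed, hence so is $\RClosed\sqcup\ClosedSubdomain$, so \cref{theorem:sigma-rm-dual-z} applies. (ii) Prove $\SigmaRM|_{\RClosed\sqcup\ClosedSubdomain}\subseteq\ker(Z)$ by checking that each row of $Z$ annihilates $\SumWord{\SumDomain}{w}|_{\RClosed\sqcup\ClosedSubdomain}$ for every $w\in\ReedMuller[\Field,\NumVars,\vec\Degree]$: for a summation row $z_{\vec s}$ this is the identity $\SumWord{\SumDomain}{w}(\vec s) = \sum_{a\in\SumSet_{\VLen{s}+1}}\SumWord{\SumDomain}{w}(\vec s,a)$, valid because $\SumDomain$-closedness puts $\vec s$ and all its siblings in $\RClosed\sqcup\ClosedSubdomain$; for a row of $Z'_i$ it holds because the level-$i$ coordinates of $\SumWord{\SumDomain}{w}$ are the evaluations of $w_i(\vec X)\eqdef\sum_{\vec a\in\SumDomain_{>i}}w(\vec X,\vec a)$, a polynomial of individual degree $(\Degree_1,\dots,\Degree_i)$, so $(w_i|_{R_i}, w_i|_{\ClosedSubdomain_i})$ lies in $\ker(Z_i)$ by the defining property (\cref{def:cons-locate}) of the constraint locator $\CL_i$. (iii) Prove the reverse inclusion by the spanning argument: by \cref{theorem:sigma-rm-dual-z} it suffices that the rows of $Z$ span $\{z_{\vec s}\}_{\vec s\in(\RClosed\sqcup\ClosedSubdomain)^*}$ (immediate, these are the rows of $Z^*$) and each $\Dual{(\ReedMuller[\Field,i,(\Degree_1,\dots,\Degree_i)]|_{(\RClosed\sqcup\ClosedSubdomain)_i})}$; for the latter, unwinding \cref{def:cons-locate} gives $\ker(Z_i) = \ReedMuller[\Field,i,(\Degree_1,\dots,\Degree_i)]|_{R_i\sqcup\ClosedSubdomain_i}$, so the rows of $Z_i$ (hence of its zero-padded embedding $Z'_i$) span $\Dual{(\ReedMuller[\Field,i,(\Degree_1,\dots,\Degree_i)]|_{R_i\sqcup\ClosedSubdomain_i})}$, and one then checks that enlarging $R_i$ to $\RClosed_i$ adds no new low-degree constraint. (iv) Conclude by projecting, as in the first paragraph.

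The step I expect to be the main obstacle is (iii), specifically reconciling the set $R_i$ over which $\CL_i$ internally produces its constraints with the set $\RClosed_i$ appearing in \cref{theorem:sigma-rm-dual-z}. One has to argue that the $\SumDomain$-closure taken in step 3 contributes only points that are ``good'' in the sense used by \cref{thm:rm-constraint-loc} --- so that $\Dual{(\ReedMuller[\Field,i,(\Degree_1,\dots,\Degree_i)]|_{\RClosed_i\sqcup\ClosedSubdomain_i})}$ remains supported on $R_i\sqcup\ClosedSubdomain_i$ --- invoking the ``unconstrained $\Leftrightarrow$ interpolating'' characterisation (\cref{claim:unconstrained-equiv}) and the generalised multilinearity lemma (\cref{lem:new-multilin-generalisation}) underlying \cref{thm:rm-constraint-loc}, and that there is no cross-level constraint on $\RClosed\sqcup\ClosedSubdomain$ outside the list in \cref{theorem:sigma-rm-dual-z}. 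The remaining bookkeeping --- the formal disjoint union, where a column of $\RClosed$ that coincides with a column of $\ClosedSubdomain$ carries equal values on the code since a message value is a codeword value on $\SumCompletion$ --- needs care but is routine.
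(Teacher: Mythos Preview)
Your proposal is correct and follows essentially the same route as the paper: reduce to showing $\ker(Z)=\SigmaRM|_{\RClosed\sqcup\ClosedSubdomain}$, identify $\RowSpace(Z)$ with the span of summation constraints plus level-$i$ Reed--Muller constraints, and invoke \cref{theorem:sigma-rm-dual-z} on the $\SumDomain$-closed set $\RClosed\sqcup\ClosedSubdomain$.

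The one place you overcomplicate is the obstacle in (iii). You do not need \cref{claim:unconstrained-equiv} or \cref{lem:new-multilin-generalisation} to pass from $R_i$ to $\RClosed_i$. The defining property of the constraint locator $\CL_i$ (\cref{def:cons-locate}) already says that whether $\vec\beta$ is achievable depends only on $m|_{R_i}$, which unwinds to: every constraint in $\Dual{(\ReedMuller[\Field,i,(\Degree_1,\dots,\Degree_i)]|_{\SumDomain_i\sqcup\ClosedSubdomain_i})}$ vanishes on $\SumDomain_i\setminus R_i$. Since $R_i\subseteq\RClosed_i\subseteq\SumDomain_i$, the dual over $\RClosed_i\sqcup\ClosedSubdomain_i$ is exactly the dual over $R_i\sqcup\ClosedSubdomain_i$ padded with zeros on $\RClosed_i\setminus R_i$, which is what you need. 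This is how the paper handles it, in one line, and it sidesteps re-deriving the ``good point'' machinery.
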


\begin{proof}
    It suffices to show that $\ker(Z) = \SigmaRM|_{\RClosed\sqcup \ClosedSubdomain}$ (for $Z$ as in \cref{construc:sigrm-cl}). This is because $(\vec{r}, \vec{\beta}) \in \ker(B)$ if and only if there exists $\vec{\gamma} \in \Field^{\ClosedSubdomain\setminus\Subdomain}$ such that $(\vec{r}, \vec{\beta}, \vec{\gamma}) \in \ker(Z)$. So $\ker(B) = \ker(Z)|_{\RClosed \sqcup \Subdomain}.$ If $\ker(Z) = \SigmaRM|_{\RClosed\sqcup \ClosedSubdomain}$, then upon restricting both sides of the equivalence to $\RClosed \sqcup \Subdomain$, we have that  $\ker(B) = \SigmaRM|_{\RClosed \sqcup \Subdomain}$, as required.
    
    To see that $\ker(Z) = \SigmaRM|_{\RClosed\sqcup \ClosedSubdomain}$, consider the following. By construction, the rows of $Z$ consist of trivial constraints over $(\ClosedSubdomain\sqcup\RClosed)^*$ and Reed-Muller constraints on the set $R_i\sqcup\ClosedSubdomain$ padded with zeros so as to be supported on $\RClosed\sqcup \ClosedSubdomain$. That is,

    \begin{equation*}
        \RowSpace(Z) = \Span\big(\{z_{\vec s}\}_{\vec s \in (\RClosed \sqcup \ClosedSubdomain)^*} \cup \bigcup_{i=1}^{\NumVars} \Dual{({\ReedMuller[\Field,i,(\Degree_1,\ldots,\Degree_i)]}|_{{R_i\sqcup \ClosedSubdomain}_i})} \times \{0\}^{\RClosed_i\setminus R_i}\big).
    \end{equation*}
    By the correctness of each $\CL_i$, we know that for all $i \in [\NumVars]$, for all constraints $\Constraint \in \Dual{(\ReedMuller[\Field, i, (\Degree_1,\dots, \Degree_i)]|_{\SumDomain_i \sqcup \ClosedSubdomain})}$, $\Constraint(\vec{\alpha}) = 0$ for all $\vec{\alpha} \in \SumDomain_i \setminus R_i$. Therefore as $R_i \subseteq \RClosed_i \subseteq \SumDomain_i$, we have that $\Dual{(\ReedMuller[\Field, i, (\Degree_1,\dots, \Degree_i)]|_{\RClosed_i \sqcup \ClosedSubdomain})} = \Dual{(\ReedMuller[\Field, i, (\Degree_1,\dots, \Degree_i)]|_{R_i \sqcup \ClosedSubdomain})} \times \{0\}^{\RClosed_i \setminus R_i}$. Hence,
    \begin{equation*}
        \RowSpace(Z) = \Span\big(\{z_{\vec s}\}_{\vec s \in (\RClosed \sqcup \ClosedSubdomain)^*} \cup \bigcup_{i=1}^{\NumVars} \Dual{({\ReedMuller[\Field,i,(\Degree_1,\ldots,\Degree_i)]}|_{{\RClosed_i\sqcup \ClosedSubdomain}_i})}\big)
         = \Dual{(\SigmaRM|_{\RClosed\sqcup \ClosedSubdomain})},
    \end{equation*}
    where the final equality follows from \cref{theorem:sigma-rm-dual-z} and the fact that $\RClosed\sqcup\ClosedSubdomain$ is $\SumDomain$-closed. In particular, we have shown that $Z$ is a parity check matrix for $\SigmaRM|_{\RClosed\sqcup \ClosedSubdomain}$. In other words, $\ker(Z) = \SigmaRM|_{\RClosed\sqcup \ClosedSubdomain}$.
\end{proof}

Now we focus on the second part of our proof of correctness: showing that $(\vec{r}, \vec{\beta}) \in \SigmaRM|_{\RClosed \sqcup \Subdomain}$ if and only if for all messages $\Message \in \MsgSpace$ such that $\Message|_{\RClosed} = \vec{r}$, we have $(\Message, \vec{\beta}) \in \SigmaRM|_{\SumCompletion \sqcup I}$. We accomplish this via the following property of linear codes, whose proof is deferred to \cref{sec:zcode-implies-padded-message}.

\begin{restatable}{claim}{ZCodePadded}
    \label{claim:zcode-implies-padded-message}
    Let $\Code \subseteq \Field^\Domain$ be a linear code, and let $\SmallSubdomain \subseteq \BigSubdomain \subseteq \Domain$, $\Subdomain \subseteq \Domain$ be such that $\ZCode{\SmallSubdomain}{\Code}|_{\Subdomain} = \ZCode{\BigSubdomain}{\Code}|_{\Subdomain}$. Then $(\SmallElt, \ExtElt) \in \Code|_{\SmallSubdomain\sqcup \Subdomain}$ if and only if for all $\BigElt \in \Code|_\BigSubdomain$ such that $\BigElt|_\SmallSubdomain = \SmallElt$, $(\BigElt, \ExtElt) \in \Code|_{\BigSubdomain\sqcup \Subdomain}$.
\end{restatable}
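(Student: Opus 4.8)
The plan is to prove the two implications separately by stitching together codewords of $\Code$; the hypothesis $\ZCode{\SmallSubdomain}{\Code}|_{\Subdomain} = \ZCode{\BigSubdomain}{\Code}|_{\Subdomain}$ will be needed only for the forward direction. Throughout, a statement ``$(\BigElt, \ExtElt) \in \Code|_{\BigSubdomain \sqcup \Subdomain}$'' means that there is a codeword $\Codeword \in \Code$ with $\Codeword|_{\BigSubdomain} = \BigElt$ and $\Codeword|_{\Subdomain} = \ExtElt$ (and similarly with $\SmallSubdomain$ in place of $\BigSubdomain$); since $\SmallSubdomain \subseteq \BigSubdomain$, restricting such a $\Codeword$ further to $\SmallSubdomain$ recovers $\Codeword|_{\SmallSubdomain}$, and in particular $\Code|_{\BigSubdomain}$ restricted to $\SmallSubdomain$ equals $\Code|_{\SmallSubdomain}$.

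For the reverse implication, assume $(\BigElt, \ExtElt) \in \Code|_{\BigSubdomain \sqcup \Subdomain}$ for every $\BigElt \in \Code|_{\BigSubdomain}$ with $\BigElt|_{\SmallSubdomain} = \SmallElt$. Fix any such $\BigElt$ (in the nondegenerate case $\SmallElt \in \Code|_{\SmallSubdomain}$ — which is in any event necessary for the conclusion — such a $\BigElt$ exists because $\SmallSubdomain \subseteq \BigSubdomain$), and let $\Codeword \in \Code$ witness $(\BigElt, \ExtElt) \in \Code|_{\BigSubdomain \sqcup \Subdomain}$, so that $\Codeword|_{\BigSubdomain} = \BigElt$ and $\Codeword|_{\Subdomain} = \ExtElt$. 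Then $\Codeword|_{\SmallSubdomain} = \BigElt|_{\SmallSubdomain} = \SmallElt$, so $\Codeword$ also witnesses $(\SmallElt, \ExtElt) \in \Code|_{\SmallSubdomain \sqcup \Subdomain}$. This direction uses only $\SmallSubdomain \subseteq \BigSubdomain$, not the displayed hypothesis.

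For the forward implication, assume $(\SmallElt, \ExtElt) \in \Code|_{\SmallSubdomain \sqcup \Subdomain}$, witnessed by $\Codeword_0 \in \Code$, and fix any $\BigElt \in \Code|_{\BigSubdomain}$ with $\BigElt|_{\SmallSubdomain} = \SmallElt$, witnessed by $\Codeword_1 \in \Code$. The difference $\Codeword_1 - \Codeword_0 \in \Code$ vanishes on $\SmallSubdomain$, since $(\Codeword_1 - \Codeword_0)|_{\SmallSubdomain} = \BigElt|_{\SmallSubdomain} - \SmallElt = \vec{0}$; hence $\Codeword_1 - \Codeword_0 \in \ZCode{\SmallSubdomain}{\Code}$, and so $(\Codeword_1 - \Codeword_0)|_{\Subdomain} \in \ZCode{\SmallSubdomain}{\Code}|_{\Subdomain} = \ZCode{\BigSubdomain}{\Code}|_{\Subdomain}$ by hypothesis. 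Pick $\Codeword_2 \in \ZCode{\BigSubdomain}{\Code}$ with $\Codeword_2|_{\Subdomain} = (\Codeword_1 - \Codeword_0)|_{\Subdomain}$, and set $\Codeword \eqdef \Codeword_1 - \Codeword_2 \in \Code$. Since $\Codeword_2|_{\BigSubdomain} = \vec{0}$ we get $\Codeword|_{\BigSubdomain} = \Codeword_1|_{\BigSubdomain} = \BigElt$; and $\Codeword|_{\Subdomain} = \Codeword_1|_{\Subdomain} - (\Codeword_1 - \Codeword_0)|_{\Subdomain} = \Codeword_0|_{\Subdomain} = \ExtElt$. Thus $(\BigElt, \ExtElt) \in \Code|_{\BigSubdomain \sqcup \Subdomain}$, as required.

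The only real content — hence the ``hard'' (if short) step — is the forward direction's use of the hypothesis: $\Codeword_1 - \Codeword_0$ need not vanish on all of $\BigSubdomain$, but its restriction to $\Subdomain$ is, by assumption, \emph{also} realised by some $\Codeword_2$ that does vanish on $\BigSubdomain$, and subtracting $\Codeword_2$ from $\Codeword_1$ repairs the $\BigSubdomain$-restriction back to $\BigElt$ while leaving the $\Subdomain$-restriction equal to $\Codeword_0|_{\Subdomain} = \ExtElt$. It is also worth noting that every relation above is between genuine restrictions of codewords of $\Code$, so the possibility that $\SmallSubdomain$, $\BigSubdomain$, and $\Subdomain$ overlap — and that a restriction to a disjoint union records duplicated values on shared points — causes no inconsistency.
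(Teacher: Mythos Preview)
Your proof is correct and follows essentially the same argument as the paper's: both directions are handled identically, with the forward direction hinging on taking the difference of two codewords (one witnessing $(\SmallElt,\ExtElt)$, one witnessing the given $\BigElt$), observing it lies in $\ZCode{\SmallSubdomain}{\Code}$, invoking the hypothesis to replace it by a codeword in $\ZCode{\BigSubdomain}{\Code}$ with the same $\Subdomain$-restriction, and subtracting. Your version is slightly cleaner in that it works directly with witnessing codewords $\Codeword_0,\Codeword_1,\Codeword_2 \in \Code$ rather than with restrictions, and you explicitly flag the degenerate case $\SmallElt \notin \Code|_\SmallSubdomain$ that the paper's proof also silently passes over.
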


We prove that $\ZCode{\SumDomain}{\SigmaRM}|_{\ClosedSubdomain} = \ZCode{\RClosed}{\SigmaRM}|_{\ClosedSubdomain}$ where $\ClosedSubdomain,\RClosed$ are as above. To do this, we first show an analogous statement for ``plain'' $\ReedMuller$, and then lift that result to $\SigmaRM$ via \cref{theorem:sigma-rm-dual-z}.

\begin{claim}
\label{claim:rm-constraint-bound}
    Let $\SumDomain = \SumSet_1\times\dots\times\SumSet_{\NumVars} \subseteq \Field^{\NumVars}$ and let $S \subseteq \Field^{\NumVars}$. It holds that $\ZCode{\SumDomain}{\ReedMuller[\Field, \NumVars, \vec{\Degree}]}|_{S} = \ZCode{R}{\ReedMuller[\Field, \NumVars, \vec{\Degree}]}|_{S}$, where $(R,Z) \eqdef \CL(S)$, for an $\ell$-constraint locator $\CL$ for $\RMEnc{\vec{\Degree}}{\SumDomain}$.
\end{claim}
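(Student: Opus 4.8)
The plan is to prove the two inclusions of the claimed identity separately. Write $\Code \eqdef \ReedMuller[\Field,\NumVars,\vec\Degree]$, and recall that $(R,Z) = \CL(S)$ has $R \subseteq \SumDomain$ (the domain of the message space of $\RMEnc{\vec\Degree}{\SumDomain}$), and that by correctness of the constraint locator (\cref{def:cons-locate}), for every message $m \in \Field^{\SumDomain}$ and every $v \in \Field^S$ one has $(m|_R, v) \in \ker(Z)$ if and only if there is a degree-$\vec\Degree$ polynomial $p$ with $p|_{\SumDomain} = m$ and $p|_S = v$, equivalently some $p \in \LD_{\vec\Degree}[m]$ with $p|_S = v$. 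The point I want to extract is that the set of values on $S$ consistent with a message $m$ depends on $m$ only through $m|_R$.

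The inclusion $\ZCode{\SumDomain}{\Code}|_S \subseteq \ZCode{R}{\Code}|_S$ uses nothing from $\CL$: any codeword of $\Code$ vanishing on $\SumDomain$ also vanishes on $R \subseteq \SumDomain$, so its restriction to $S$ already witnesses membership in $\ZCode{R}{\Code}|_S$.

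For the reverse inclusion, take $v \in \ZCode{R}{\Code}|_S$ and fix a degree-$\vec\Degree$ polynomial $p$ with $p|_R = \vec{0}$ and $p|_S = v$. Put $m \eqdef p|_{\SumDomain}$; since $R \subseteq \SumDomain$ this gives $m|_R = \vec{0}$. As $p \in \LD_{\vec\Degree}[m]$ with $p|_S = v$, correctness of $\CL$ yields $(m|_R, v) = (\vec{0}, v) \in \ker(Z)$. Apply correctness of $\CL$ a second time, now to the zero message $\vec{0} \in \Field^{\SumDomain}$ --- a legal message, with $\vec{0}|_R = \vec{0}$ and with $\LD_{\vec\Degree}[\vec{0}]$ nonempty since it contains the zero polynomial: from $(\vec{0}, v) \in \ker(Z)$ we obtain a degree-$\vec\Degree$ polynomial $p'$ with $p'|_{\SumDomain} = \vec{0}$ and $p'|_S = v$, so $v \in \ZCode{\SumDomain}{\Code}|_S$.

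I do not expect a real obstacle here. The only thing to watch is the identity $m|_R = \vec{0} = \vec{0}|_R$, which is precisely what makes the two invocations of the constraint locator share the same left-hand side $(\vec{0}, v) \in \ker(Z)$, letting us trade the witness $p$ (only guaranteed to vanish on $R$) for a witness $p'$ vanishing on all of $\SumDomain$. Conceptually, the content of the claim is just that the constraint locator for $\RMEnc{\vec\Degree}{\SumDomain}$, through its output restricted to $S$, cannot distinguish a message that is zero on $R$ from the message that is zero on all of $\SumDomain$.
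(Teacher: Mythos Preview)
Your proof is correct. Both inclusions are argued soundly, and the key step---applying the constraint locator's defining biconditional twice, once with message $m = p|_{\SumDomain}$ and once with the zero message, exploiting that both restrict to $\vec 0$ on $R$---is valid.

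The paper takes a different, more explicit route for the nontrivial inclusion. Rather than invoking the constraint-locator biconditional as a black box, it argues that correctness of $\CL$ forces every constraint in $\Dual{(\ReedMuller|_{\SumDomain \sqcup S})}$ to have support contained in $R \sqcup S$; hence the indicator vector of any point $\alpha \in \SumDomain \setminus R$ lies in $\ReedMuller|_{\SumDomain \sqcup S}$, and one can subtract multiples of these indicators from $w$ to zero it out on $\SumDomain \setminus R$ without disturbing $w|_S$. Your argument and the paper's are really two sides of the same coin: the paper works on the dual side (constraints have no support outside $R$), you work on the primal side (messages agreeing on $R$ yield the same achievable outputs on $S$). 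Your version is cleaner and avoids the intermediate claim about constraint supports; the paper's version is more constructive, exhibiting the correcting codeword $w' = w - \sum_{\alpha \in \SumDomain \setminus R} w(\alpha) w_\alpha'$ explicitly.
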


\begin{proof}
    As $R\subseteq \SumDomain$, it is clear that $\ZCode{\SumDomain}{\ReedMuller[\Field, \NumVars, \vec{\Degree}]}|_{S} \subseteq \ZCode{R}{\ReedMuller[\Field, \NumVars, \vec{\Degree}]}|_{S}$. We now argue the reverse containment. For brevity we write $\ReedMuller$ for $\ReedMuller[\Field, \NumVars, \vec{\Degree}]$.

    Let $\Codeword \in \ReedMuller$ such that $\Codeword(r) = 0$ for all $r \in R$. Then $\Codeword|_{S}$ is a general element of $\ZCode{R}{\ReedMuller}|_{S}$. We will show that there exists $\Codeword' \in \ReedMuller$ which agrees with $\Codeword$ on $S$, but is zero on $\SumDomain$, i.e., $\Codeword'|_{S} = \Codeword|_{S}$ and $\Codeword'|_{\SumDomain} = \vec{0}$.

    By the correctness of $\CL$, all constraints on $\SumDomain\sqcup S$ with respect to $\ReedMuller$ are supported only on $R \sqcup S$; in particular, they have no support on the set $\SumDomain \setminus R$. For $\alpha \in \SumDomain\setminus R$, define $\Codeword_\alpha \colon \SumDomain \sqcup S \to \Field$ by $\Codeword_\alpha(\alpha) \eqdef 1$ and $\Codeword_\alpha (\beta) \eqdef 0$ for all $\beta \in \SumDomain \sqcup S \setminus \{\alpha\}$. Then since $\Codeword_\alpha$ is only nonzero in $\SumDomain \setminus R$, $\Codeword_\alpha \in \ReedMuller|_{\SumDomain\sqcup S}$, and so $\Codeword_\alpha = \Codeword_\alpha'|_{\SumDomain\sqcup S}$ for $\Codeword_\alpha' \in \ReedMuller$. Consider the codeword defined by 
    \begin{equation*}
        \Codeword' \eqdef \Codeword - \sum_{\alpha \in \SumDomain \setminus R} \Codeword(\alpha) \Codeword_{\alpha}'.
    \end{equation*}
    This choice of $\Codeword'$ satisfies $\Codeword' \in \ZCode{\SumDomain}{\ReedMuller}$, and $\Codeword'|_{S} = \Codeword|_{S}$. Thus $\Codeword|_{S} \in \ZCode{\SumDomain}{\ReedMuller}|_{S}$. 
\end{proof}

\begin{corollary}
\label{cor:z-sigrm-not-many-constraints}
   For $\ClosedSubdomain, R$ as in \cref{construc:sigrm-cl}, $\Dual{(\ZCode{\SumDomain}{\SigmaRM[\Field, \SumDomain, \vec{\Degree}]}|_{\ClosedSubdomain})} = \Dual{(\ZCode{\RClosed}{\SigmaRM[\Field, \SumDomain, \vec{\Degree}]}|_{\ClosedSubdomain})}$.
\end{corollary}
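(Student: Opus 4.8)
The plan is to establish the equivalent subspace identity $\ZCode{\SumDomain}{\SigmaRM[\Field,\SumDomain,\vec\Degree]}|_{\ClosedSubdomain} = \ZCode{\RClosed}{\SigmaRM[\Field,\SumDomain,\vec\Degree]}|_{\ClosedSubdomain}$ — two linear subspaces of $\Field^{\ClosedSubdomain}$ coincide iff their duals do — and to prove it by two inclusions, working in the dual on the harder side. For $\hat g \in \ReedMuller[\Field,\NumVars,\vec\Degree]$ and $0 \le i \le \NumVars$ set $\hat g^{(i)}(\vec X) \eqdef \sum_{\vec y \in \SumDomain_{>i}} \hat g(\vec X, \vec y)$; substituting field elements for the last $\NumVars - i$ variables cannot raise any individual degree, so $\hat g^{(i)} \in \ReedMuller[\Field,i,(\Degree_1,\ldots,\Degree_i)]$, and by \cref{def:sumcodes} one has $\SumWord{\SumDomain}{\hat g}(\vec s) = \hat g^{(\VLen{s})}(\vec s)$ for every $\vec s$ (in particular $\SumWord{\SumDomain}{\hat g}(\vec s) = \hat g(\vec s)$ when $\VLen{s} = \NumVars$). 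The inclusion $\ZCode{\SumDomain}{\SigmaRM} \subseteq \ZCode{\RClosed}{\SigmaRM}$, which yields ``$\supseteq$'' between the two duals, is immediate: by \cref{def:cons-locate} each $R_i$ lies in the message domain $\SumSet_1 \times \cdots \times \SumSet_i$ of $\CL_i$, and the two closure operations preserve the property ``every coordinate lies in the corresponding $\SumSet_j$'', so $\RClosed \subseteq \SumCompletion$; hence if $\hat g$ vanishes on $\SumDomain$ then for every $\vec s \in \RClosed$ with $\VLen{s} = i$, $\SumWord{\SumDomain}{\hat g}(\vec s) = \sum_{\vec y \in \SumDomain_{>i}} \hat g(\vec s, \vec y)$ is a sum of values of $\hat g$ at points of $\SumDomain$ and so vanishes.

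For the reverse inclusion $\Dual{(\ZCode{\SumDomain}{\SigmaRM}|_{\ClosedSubdomain})} \subseteq \Dual{(\ZCode{\RClosed}{\SigmaRM}|_{\ClosedSubdomain})}$ I would invoke \cref{theorem:sigma-rm-dual-z} with $\ZeroSets = \SumDomain$: since $\ClosedSubdomain$ is $\SumDomain$-closed and $\SumDomain \subseteq \Field^\NumVars$ (so the level-$i$ piece of $\SumDomain$ is $\varnothing$ for $i < \NumVars$ and all of $\SumDomain$ for $i = \NumVars$), every $w \in \Dual{(\ZCode{\SumDomain}{\SigmaRM}|_{\ClosedSubdomain})}$ is a linear combination of (i) summation constraints $z_{\vec s}$ with $\vec s \in \ClosedSubdomain^*$; (ii) constraints, zero-padded to $\ClosedSubdomain$, in $\Dual{(\ReedMuller[\Field,i,(\Degree_1,\ldots,\Degree_i)]|_{\ClosedSubdomain_i})}$ for $i < \NumVars$; and (iii) a single constraint in $\Dual{(\ZCode{\SumDomain}{\ReedMuller[\Field,\NumVars,\vec\Degree]}|_{\ClosedSubdomain_\NumVars})}$. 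It then suffices to show each of these annihilates $\SumWord{\SumDomain}{\hat g}|_{\ClosedSubdomain}$ whenever $\SumWord{\SumDomain}{\hat g}|_{\RClosed} = 0$. Kind (i) annihilates any subcube-sum word because $z_{\vec s}$ is precisely the identity $\SumWord{\SumDomain}{\hat g}(\vec s) = \sum_{a \in \SumSet_{\VLen{s}+1}} \SumWord{\SumDomain}{\hat g}(\vec s, a)$, valid by definition of subcube sums. Kind (ii) annihilates $\SumWord{\SumDomain}{\hat g}|_{\ClosedSubdomain_i} = \hat g^{(i)}|_{\ClosedSubdomain_i}$ since $\hat g^{(i)} \in \ReedMuller[\Field,i,(\Degree_1,\ldots,\Degree_i)]$. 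For kind (iii): length-$\NumVars$ points of $\RClosed$ satisfy $\SumWord{\SumDomain}{\hat g}(\vec s) = \hat g(\vec s)$, and $R_\NumVars \subseteq R \subseteq \RClosed$, so $\SumWord{\SumDomain}{\hat g}|_{\RClosed} = 0$ forces $\hat g|_{R_\NumVars} = 0$; by \cref{claim:rm-constraint-bound} applied to $\ReedMuller[\Field,\NumVars,\vec\Degree]$ with $(R_\NumVars, \SubConstraintMatrix_\NumVars) = \CL_\NumVars(\ClosedSubdomain_\NumVars)$, we get $\ZCode{R_\NumVars}{\ReedMuller[\Field,\NumVars,\vec\Degree]}|_{\ClosedSubdomain_\NumVars} = \ZCode{\SumDomain}{\ReedMuller[\Field,\NumVars,\vec\Degree]}|_{\ClosedSubdomain_\NumVars}$, hence $\hat g|_{\ClosedSubdomain_\NumVars}$ lies in $\ZCode{\SumDomain}{\ReedMuller[\Field,\NumVars,\vec\Degree]}|_{\ClosedSubdomain_\NumVars}$ and is killed by the kind-(iii) constraint.

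The only genuine obstacle is to arrange the reduction so that all the ``low-degree'' work is deferred to the plain Reed--Muller statement \cref{claim:rm-constraint-bound}; the crucial observation enabling this is that, in the decomposition of \cref{theorem:sigma-rm-dual-z}, only the top-level ($i = \NumVars$) low-degree constraints actually depend on the zero set $\SumDomain$, while the lower-level low-degree constraints (which lie in plain Reed--Muller duals, to which every $\hat g^{(i)}$ is orthogonal) and the summation constraints annihilate $\ZCode{\RClosed}{\SigmaRM}|_{\ClosedSubdomain}$ unconditionally. One should also check the two bookkeeping points that make the cited results apply verbatim: that $\SumDomain \subseteq \Field^\NumVars$, so $\SumDomain$ contributes nothing below level $\NumVars$, and that $R_\NumVars \subseteq \RClosed$, so the hypothesis $\hat g|_{R_\NumVars} = 0$ of \cref{claim:rm-constraint-bound} is delivered by $\SumWord{\SumDomain}{\hat g}|_{\RClosed} = 0$. (Trying instead to prove the primal identity directly by ``correcting'' a codeword of $\ZCode{\RClosed}{\SigmaRM}$ at the top level does not obviously work, since such a correction need not preserve the lower partial sums; passing to the dual sidesteps this.)
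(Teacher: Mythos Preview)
Your proof is correct and follows the same overall strategy as the paper: use \cref{theorem:sigma-rm-dual-z} to decompose $\Dual{(\ZCode{\SumDomain}{\SigmaRM}|_{\ClosedSubdomain})}$ into summation constraints and level-wise Reed--Muller constraints, then show each piece lies in $\Dual{(\ZCode{\RClosed}{\SigmaRM}|_{\ClosedSubdomain})}$, invoking \cref{claim:rm-constraint-bound} for the low-degree part.

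The one noteworthy difference is in how \cref{claim:rm-constraint-bound} is deployed. The paper's chain writes the decomposition with the prefix products $\SumDomain_i = \SumSet_1 \times \cdots \times \SumSet_i$ at every level and then applies \cref{claim:rm-constraint-bound} for each $i \in [\NumVars]$ to replace $\SumDomain_i$ by $R_i$. You instead read \cref{theorem:sigma-rm-dual-z} literally: since $\SumDomain \subseteq \Field^\NumVars$, the level-$i$ piece of the zero set is empty for $i < \NumVars$, so those levels contribute plain $\Dual{(\ReedMuller[\Field,i,\cdot]|_{\ClosedSubdomain_i})}$ constraints, which annihilate every $\hat g^{(i)}$ with no hypothesis on $\hat g$ needed; only the top level requires \cref{claim:rm-constraint-bound}. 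This is a small but genuine economy, and it avoids having to justify the first displayed equality in the paper's proof (which, with $\SumDomain_i$ in place of the level-$i$ slice of $\SumDomain$, is a mild strengthening of \cref{theorem:sigma-rm-dual-z} rather than a direct consequence).
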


\begin{proof}
    It is clear that $\Dual{(\ZCode{\RClosed}{\SigmaRM[\Field,\SumDomain,\vec{\Degree}]}|_{\ClosedSubdomain})} \subseteq  \Dual{(\ZCode{\SumDomain}{\SigmaRM[\Field,\SumDomain,\vec{\Degree}]}|_{\ClosedSubdomain})}$, so we show the reverse inclusion. 

    For $R_i, Z_i, \ClosedSubdomain_i$ as in \cref{construc:sigrm-cl}, we have that 
    \begin{align*}
        \Dual{(\ZCode{\SumDomain}{\SigmaRM[\Field,\SumDomain,\vec{\Degree}}|_{\ClosedSubdomain})} &= \Span\big(\{z_{\vec s}\}_{\vec s \in {\ClosedSubdomain}^*} \cup \bigcup_{i=1}^{\NumVars} \Dual{(\ZCode{\SumDomain_i}{\ReedMuller[\Field,i,(\Degree_1,\ldots,\Degree_i)]}|_{\ClosedSubdomain_i})}\big)
        \\
        &= \Span\big(\{z_{\vec s}\}_{\vec s \in \ClosedSubdomain^*} \cup \bigcup_{i=1}^{\NumVars} \Dual{(\ZCode{R_{i}}{\ReedMuller[\Field,i,(\Degree_1,\ldots,\Degree_i)]}|_{\ClosedSubdomain_i})}\big)
        \\
        &\subseteq \Span\big(\{z_{\vec s}\}_{\vec s \in \ClosedSubdomain^*} \cup \bigcup_{i=1}^{\NumVars} \Dual{(\ZCode{{\RClosed}_{i}}{\ReedMuller[\Field,i,(\Degree_1,\ldots,\Degree_i)]}|_{\ClosedSubdomain_i})}\big) 
        \\
        &\subseteq \Dual{(\ZCode{\RClosed}{\SigmaRM[\Field,\SumDomain,\vec{\Degree}]}|_{\ClosedSubdomain})}.
    \end{align*}
    The first equality is a consequence of \Cref{theorem:sigma-rm-dual-z}. The second equality is a consequence of \cref{claim:rm-constraint-bound}. The penultimate inclusion is true as $R_i \subseteq \RClosed_i$ for each $i \in [\NumVars]$. The final inclusion follows from the fact that trivially $\Dual{(\ZCode{\RClosed_i}{\ReedMuller[\Field,i,(\Degree_1,\ldots,\Degree_i)]}|_{\ClosedSubdomain_i})} \subseteq \Dual{(\ZCode{\RClosed}{\SigmaRM[\Field,\SumDomain,\vec{\Degree}]}|_{\ClosedSubdomain})}$ for each $i \in [\NumVars]$ and as $\RClosed$ is $\SumDomain$-closed, we also have that $\{z_{\vec s}\}_{\vec s \in \ClosedSubdomain^*} \subseteq \Dual{(\ZCode{\RClosed}{\SigmaRM[\Field,\SumDomain,\vec{\Degree}]}|_{\ClosedSubdomain})}$.
\end{proof}

\begin{proof}[Proof of \Cref{thm:sigrm-cl}]
\parhead{Correctness}
    By \cref{claim:cl-output}, we have that $\ker(B) = \SigmaRM|_{\RClosed \sqcup \Subdomain}$. Then by \cref{claim:zcode-implies-padded-message}, setting $U \eqdef \RClosed$ and $V \eqdef \SumCompletion$, and \cref{cor:z-sigrm-not-many-constraints}, noting that $\SigRMEnc{\vec{\Degree}}{\SumDomain}$ is a systematic encoding (that is $\MsgSpace = \SigmaRM|_{\SumCompletion}$), it follows that $(\vec{r}, \vec{\beta}) \in \SigmaRM|_{\RClosed\sqcup \ClosedSubdomain}$ if and only if for all messages $\Message \in \MsgSpace$ such that $\Message|_{\RClosed} = \vec{r}$, we have $(\Message, \vec{\beta}) \in \SigmaRM|_{\SumCompletion \sqcup \ClosedSubdomain}$. Thus, taking the restriction of both sides, $(\vec{r}, \vec{\beta}|_{\Subdomain}) \in \SigmaRM|_{\RClosed\sqcup \Subdomain}$ if and only if for all messages $\Message \in \MsgSpace$ such that $\Message|_{\RClosed} = \vec{r}$, we have $(\Message, \vec{\beta}|_\Subdomain) \in \SigmaRM|_{\SumCompletion \sqcup \Subdomain}$. This is sufficient since, by definition of the encoding, $\vec{\beta} \in \supp(\SigRMEnc{\vec{\Degree}}{\SumDomain}(\Message)|_{\Subdomain})$ if and only if $(\Message, \vec{\beta}) \in \SigmaRM|_{\SumCompletion \sqcup I}$.

    \parhead{Efficiency} By \cref{claim:closure-alg} (below), $|\RClosed| \leq (\sum_{i=1}^{\NumVars}|\SumSet_i| + \NumVars + 1)|R|$ and $|\ClosedSubdomain| \leq (\sum_{i=1}^{\NumVars}|\SumSet_i| + \NumVars + 1)|\Subdomain|$.
    By \cref{thm:rm-constraint-detector}, $|R_i| \leq |\ClosedSubdomain| \leq (\sum_{i=1}^{\NumVars}|\SumSet_i| + \NumVars + 1)|\Subdomain|$. As $|R| \leq \sum_{i=1}^{\NumVars}|R_i| \leq \NumVars (\sum_{i=1}^{\NumVars}|\SumSet_i| + \NumVars + 1)|\Subdomain|$, we have $|\RClosed| \leq \NumVars (\sum_{i=1}^{\NumVars}|\SumSet_i| + \NumVars + 1)^2 |\Subdomain|$.

    The runtime follows from \cref{claim:closure-alg} and \Cref{thm:rm-constraint-loc}, as $\CL_i$ runs in time $\Poly(\log|\Field|, \NumVars, \max_i \Degree_i, n)$.
\end{proof}

\begin{claim}
\label{claim:closure-alg}
    Let $\Subdomain \subseteq \Field^{\leq \NumVars}$ and $\SumDomain = \SumSet_1\times\dots\times\SumSet_{\NumVars}$. Then there exists a set $\ClosedSubdomain\subseteq \Field^{\leq \NumVars}$ such that $\Subdomain \subseteq \ClosedSubdomain$ and $\ClosedSubdomain$ is $\SumDomain$-closed, satisfying $|\ClosedSubdomain| \leq (\sum_{i=1}^{\NumVars}|\SumSet_i| + \NumVars + 1)|\Subdomain|$. Moreover, $\ClosedSubdomain$ can be computed in time $\Poly(\NumVars, |\Subdomain|, |\SumSet_1|,\dots,|\SumSet_{\NumVars}|)$.
\end{claim}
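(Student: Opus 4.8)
The plan is to reduce the problem to the case of a single element of $\Subdomain$, exploiting the fact that $\SumDomain$-closedness is preserved under unions. First I would observe that an arbitrary union of $\SumDomain$-closed sets is again $\SumDomain$-closed: both defining conditions quantify over the elements of the set one at a time, so if $(s_1,\ldots,s_t)$ lies in a union $\bigcup_j S_j$ it already lies in some $\SumDomain$-closed $S_j$, which contains the required prefix $(s_1,\ldots,s_{t-1})$ and the required ``siblings'' $(s_1,\ldots,s_{t-1},a)$, $a\in\SumSet_t$. Hence, writing $C(\vec s)$ for the $\SumDomain$-closure of the singleton $\{\vec s\}$, the set $\ClosedSubdomain \eqdef \bigcup_{\vec s\in\Subdomain}C(\vec s)$ is $\SumDomain$-closed and contains $\Subdomain$, so it is a valid choice; it then only remains to bound its size and give an algorithm.

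Next I would compute $C(\vec s)$ explicitly for $\vec s=(s_1,\ldots,s_t)$ with $t\le\NumVars$. I claim
\[
C(\vec s) = \{(s_1,\ldots,s_j):0\le j\le t\}\ \cup\ \{(s_1,\ldots,s_{j-1},a):1\le j\le t,\ a\in\SumSet_j\}.
\]
The right-hand side is forced to lie in $C(\vec s)$, being obtained from $\{\vec s\}$ by repeated applications of the two closure rules (prefixes by rule~(i), sibling rewrites of those prefixes by rule~(ii)). Conversely it is already $\SumDomain$-closed: the prefix of $(s_1,\ldots,s_{j-1},a)$ is $(s_1,\ldots,s_{j-1})$, which is present, and the siblings of $(s_1,\ldots,s_{j-1},a)$ (obtained by rule~(ii), which rewrites only the last coordinate) are exactly $\{(s_1,\ldots,s_{j-1},a'):a'\in\SumSet_j\}$, again present — so a single round of applying rule~(ii) to every prefix of $\vec s$ already stabilises, since ``siblings of siblings are siblings.'' Therefore $|C(\vec s)|\le (t+1)+\sum_{j=1}^{t}|\SumSet_j|\le \NumVars+1+\sum_{i=1}^{\NumVars}|\SumSet_i|$, and summing over $\vec s\in\Subdomain$ gives $|\ClosedSubdomain|\le\big(\sum_{i=1}^{\NumVars}|\SumSet_i|+\NumVars+1\big)|\Subdomain|$, as claimed.

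For the algorithm one iterates over $\vec s\in\Subdomain$, writes down the explicit set above by listing its at most $\NumVars+1$ prefixes and, for each, the at most $\max_i|\SumSet_i|$ sibling rewrites, takes the union, and deduplicates; this runs in time $\Poly(\NumVars,|\Subdomain|,|\SumSet_1|,\ldots,|\SumSet_{\NumVars}|)$. The only step that requires any thought is the stabilisation argument identifying $C(\vec s)$ — checking that rewriting the last coordinate within the sibling sets produces nothing new, and that the union over $\vec s$ of these closed singletons is itself closed — but both points are immediate from the fact that each closure rule acts ``one element at a time'' and rule~(ii) only ever touches the final coordinate, so I expect no real obstacle here.
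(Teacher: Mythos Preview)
Your proof is correct and follows essentially the same approach as the paper: both construct $\ClosedSubdomain$ by, for each $\vec s\in\Subdomain$, adding all its prefixes together with the $\SumSet_j$-siblings of each prefix, and then observe that this contributes at most $\NumVars+1+\sum_i|\SumSet_i|$ points per element of $\Subdomain$. Your presentation via ``closures of singletons are small and unions of closed sets are closed'' is in fact slightly cleaner than the paper's more procedural level-by-level description, but the underlying construction and count are the same.
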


\begin{proof}
    We construct $\ClosedSubdomain$ as follows. Let $\Subdomain_i \eqdef \{x \in \Subdomain : |x| = i\}$. Initialise $\ClosedSubdomain \eqdef \Subdomain$. For each $(x_1,\dots, x_m) \in \Subdomain_m$, we add the point $(x_1,\dots, x_{m-1})$ and for each $a_m \in \SumSet_m$ we add the point $(x_1, \dots, x_{m-1}, a_m)$ to $\ClosedSubdomain$. Then repeat this process for $\Subdomain_{m-1}, \Subdomain_{m-2}, \dots, \Subdomain_1$. By construction, $\ClosedSubdomain$ is $\SumDomain$-closed. 

    Now we argue about the size of $\ClosedSubdomain$. Note that for each point in $\Subdomain$ at most $\sum_{i=1}^{\NumVars}|\SumSet_i| + \NumVars$ points are added to $\ClosedSubdomain$. Thus it holds that $|\ClosedSubdomain| \leq (\sum_{i=1}^{\NumVars}|\SumSet_i| + \NumVars + 1)|\Subdomain|$. The efficiency follows from this bound. 
\end{proof}


\section{Constraint location for $\Sigma$AntiSym}
\label{sec:cl-antisym}

In this section we define the code $\AntiSym$ of antisymmetric functions, and provide a constraint locator for an encoding function related to its sum code $\SigmaAntiSym$.

\newcommand{\SigmaAntiSymEnc}{\Enc_{\SigmaAntiSym}}

\begin{definition}
    \label{def:antisym}
    Let $\SumDomain \eqdef \SumSet_1 \times \cdots \times \SumSet_{\NumVars}$ be such that $\SumSet_i = \SumSet_{\NumVars-i+1}$ for all $1 \leq i \leq \NumVars$. We define
	\[ \AntiSym[\SumDomain] \eqdef \{ w \in \Field^{\SumDomain} : \forall \vec{x} \in \SumDomain, w(\vec{x}) + w(\revvec{x}) = 0 \}~. \]
 The code $\SigmaAntiSym$ is the sum code of $\AntiSym$ (see \cref{def:sumcodes}).
\end{definition}

The main theorem of this section is the following, which gives an efficient constraint locator for $\SigmaAntiSym$ with polynomial locality.

\begin{theorem}
   \label{theorem:antisym-cl}
   Let $\SumDomain$ be as above. Let $\MsgSpace \eqdef \{ F \colon \SumDomain \cup \{ \bot \} \to \Field \mid F(\bot) = \sum_{\vec a \in \SumDomain} F(\vec a) \}$. Let $\SigmaAntiSymEnc \colon \MsgSpace \to \Field^{\SumCompletion}$ be the following randomised encoding:
	\begin{equation*}
		\SigmaAntiSymEnc(F) \eqdef \SumWord{}{F|_\SumDomain+G},
	\end{equation*}
	where $G$ is a codeword sampled uniformly at random from $\AntiSym[\SumDomain]$.

    $\SigmaAntiSymEnc$ has an $\ell$-constraint locator for $\ell(n) = n^2\NumVars^2 + 1 + O(n^4\NumVars^4/|\SumDomain|)$, running in time $\Poly(a,m,n)$, where $a \eqdef \sum_i |\SumSet_i|$.
\end{theorem}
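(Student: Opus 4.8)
The plan is to reduce the construction of an $\ell$-constraint locator for $\SigmaAntiSymEnc$ to the combinatorial structure theorem sketched in the overview (\cref{lem:informal-antisym}), following the template set by \cref{claim:cl-implies-local-sim} and the $\SigmaRM$ construction in \cref{sec:construction-sigmarm}. First I would observe that, since $\SigmaAntiSymEnc$ is a linear randomised encoding (it appends a uniform element of the vector space $\AntiSym[\SumDomain]$ to the message-part $F|_\SumDomain$ and then applies the linear map $\SumWord{}{\cdot}$), it suffices by \cref{claim:cl-implies-local-sim} to exhibit an efficient $\ell$-constraint locator, i.e.\ an algorithm that on input $\Subdomain \subseteq \SumCompletion$ outputs a small set $R$ of message-locations and a matrix $Z$ whose kernel describes exactly the realisable restrictions $\SigmaAntiSymEnc(F)|_\Subdomain$ as $F$ ranges over messages with prescribed values on $R$. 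As in the $\SigmaRM$ case, the first step is to pass to a ``closed'' superset of $\Subdomain$ (here, closure under taking the parent $(s_1,\dots,s_{\ell-1})$ and, at the last coordinate, under replacing $s_\ell$ by every $a_\ell \in \SumSet_\ell$), which costs at most a $\Poly(a,\NumVars)$ blow-up in size by an analogue of \cref{claim:closure-alg}; after this reduction one may assume the query set is closed, and in particular (after deleting the ``interior'' of each chain and keeping only leaves) one can reduce to the prefix-free case promised in the remark after \cref{lem:informal-antisym}.

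Next I would invoke \cref{lem:informal-antisym}: for the prefix-free set $S$ obtained above, it produces a $0$-$1$ basis $B$ of $\Dual{(\SigmaAntiSym|_S)}$ in which each row $b_i$ has $|T(b_i)| \le p(|S|)$ or $|T(b_i)| \ge 2^\NumVars - p(|S|)$. The constraint locator then builds $Z$ from two kinds of rows, exactly mirroring \cref{theorem:sigma-rm-dual-z}: (a) the ``summation constraints'' $z_{\vec s}$ enforcing $\Sigma[w](\vec s) = \Sigma[w](\vec s,0)+\Sigma[w](\vec s,1)$ for the closed structure, and (b) for each ``small'' row $b_i$ a constraint of the form $\beta_i - \sum_{\vec x \in T(b_i)} F(\vec x) = 0$, and for each ``large'' row a constraint $\beta_i - (\gamma - \sum_{\vec x \in \Bits^\NumVars \setminus T(b_i)} F(\vec x)) = 0$, where $\gamma = F(\bot)$ is part of the message; here I use that for an antisymmetric $G$ we have $\sum_{\vec x \in \Bits^\NumVars} G(\vec x) = 0$, so $B(\Sigma[G]|_S) = 0$ and hence $B(\Sigma[F|_\SumDomain + G]|_S) = B(\Sigma[F|_\SumDomain]|_S)$ depends only on the message. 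The message-location set $R$ is then $\{\bot\} \cup \bigcup_i (T(b_i)$ or its complement, whichever is smaller$)$, which has size $1 + |S| \cdot p(|S|) = \Poly(n,\NumVars)$ after unfolding the closure bound; combined with the extra summation rows this yields the claimed $\ell(n) = n^2\NumVars^2 + 1 + O(n^4\NumVars^4/|\SumDomain|)$, where the $1/|\SumDomain|$-type terms come from the precise size estimate in the 2-dimensional (and inductive higher-dimensional) analysis $|T| \lesssim 2t + t^2/n^2$ from the overview. Correctness of this $Z$ — that $(\,(F|_R,\gamma),\vec\beta)\in\ker(Z)$ iff $\vec\beta$ is a realisable restriction — follows by combining the echelon-form span description of $\Dual{(\SigmaAntiSym|_S)}$ from \cref{lem:informal-antisym} with a padding argument in the spirit of \cref{claim:zcode-implies-padded-message,claim:rm-constraint-bound}, lifting the statement from the closed set back to $\Subdomain$.

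The main obstacle, as flagged in the overview, is proving \cref{lem:informal-antisym} itself: one must show that for any prefix-free $S \subseteq \Bits^{\le \NumVars}$ every element of $\Dual{(\SigmaAntiSym|_S)}$ can be taken $0$-$1$ and has support-weight (after expanding each length-$|\vec c|$ coordinate into its $2^{\NumVars-|\vec c|}$ hypercube descendants) either polynomially small or polynomially close to $2^\NumVars$. The argument generalises the $2$-dimensional antisymmetric-matrix picture of \cref{fig:antisym}: a dual vector corresponds to a collection of subcubes whose ``indicator'' must be invariant under the coordinate-reversal symmetry $\vec c \mapsto \revvec c$, and this symmetry forces a quadratic trade-off between the number of ``full'' high-dimensional pieces and the number of ``small'' low-dimensional pieces. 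Carrying this from dimension $2$ to general dimension, while simultaneously producing the basis \emph{efficiently} (so that the constraint locator runs in $\Poly(a,m,n)$ time rather than merely existing), is the technical heart; everything else is bookkeeping about closures, systematic encodings, and assembling block matrices as in \cref{construc:sigrm-cl}.
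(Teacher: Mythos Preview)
Your high-level plan is in the right direction (reduce to a prefix-free set, exhibit a $0$-$1$ basis for $\Dual{(\SigmaAntiSym|_S)}$, bound the support of each basis vector, read off the message locations $R$), and it would yield a correct constraint locator. However, the mechanics differ substantially from the paper's proof, and one of your claimed steps is not how the key lemma is actually established.

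\textbf{Reduction to prefix-free.} The paper does \emph{not} take an $\SumDomain$-closure and then pass to leaves; there are no summation constraints $z_{\vec s}$ in this construction at all. Instead it gives a direct algorithm $\PrefixFree$ (\cref{lem:disjoint-alg}) which, given $I$, outputs a prefix-free $G$ with $|G|\le \NumVars\cdot|I|$ together with, for each $\vec a\in I$, a subset $\Lambda_{\vec a}\subseteq G$ with $\cup\Lambda_{\vec a}=\vec a$. Queries on $I$ are then related to queries on $G$ by the linear map $M=(1_{\Lambda_{\vec a}})_{\vec a\in I}$ rather than by summation constraints. Your closure-then-leaves route also produces a prefix-free set whose associated subcubes partition $\SumDomain$, so it works, but the resulting prefix-free set can have size up to $(a+\NumVars+1)n$, which feeds an extra factor of $a^2$ into $\ell(n)$ and does not recover the stated bound $n^2\NumVars^2+1+O(n^4\NumVars^4/|\SumDomain|)$ for general $\SumDomain$.

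\textbf{Structure and computation of the dual basis.} The paper's replacement for \cref{lem:informal-antisym} is sharper: it shows (\cref{lemma:sigmaas-basis}) that for prefix-free $G$, a basis of $\Dual{(\SigmaAntiSym|_G)}$ is exactly $\{1_H:H\text{ minimal symmetric subset of }G\}$, where $H$ is symmetric if $\cup H=\cup H_{\rev}$. The algorithm $\SymSets$ (\cref{lem:rev-alg}) computes these $H$ as the connected components of the graph on $G$ with edges $\{\vec a,\vec b\}$ whenever $\vec a\cap\revvec b\neq\varnothing$; no Gaussian elimination or generic echelon-form computation is used.

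\textbf{The size bound.} Your plan to obtain the dichotomy on $|T(b_i)|$ by ``carrying the $2$-dimensional picture to general dimension'' via induction is not what the paper does, and you should not expect an inductive argument to land cleanly here. The paper's proof (\cref{lemma:reverse-set-bounds}) is a direct, dimension-free counting argument: for prefix-free symmetric $H$ one bounds $|\cup H|=\sum_{\vec a,\vec b\in H}|\vec a\cap\revvec b|$ and simultaneously lower-bounds $|\cup H|^2=|\cup H|\cdot|\cup H_{\rev}|$ by $|\SumDomain|$ times the ``large-intersection'' part of the first sum. This yields a quadratic inequality in $|\cup H|$ whose two roots give exactly the small/large dichotomy, with the $O(|H|^4/|\SumDomain|)$ correction coming from the square-root expansion rather than from any recursion on $\NumVars$.
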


In \cref{sec:antisym-props} we analyse the combinatorial structure of $\Dual{\SigmaAntiSym|_I}$, obtaining a bound on the ``complexity'' of constraints in terms of the size of $I$. In \cref{sec:antisym-algs}, we show how this bound leads to an efficient constraint locator for $\SigmaAntiSymEnc$. 

\subsection{Properties of \texorpdfstring{$\SigmaAntiSym$}{Sigma-AntiSym}}
\label{sec:antisym-props}

We start by introducing some useful notation.

\begin{definition}
    We say $\vec a \in \SumCompletion$ is a \emph{prefix} of $\vec b \in \SumCompletion$ if $\VLen{a} \leq \VLen{b}$ and $(a_1,\ldots,a_{\VLen{a}}) = (b_1,\ldots,b_{\VLen{a}})$. A subset $J \subseteq \SumCompletion$ is \defemph{prefix-free} if for all $\vec a, \vec b \in J$, if $\vec a$ is a prefix of $\vec b$ then $\vec a = \vec b$.
	
	We associate with an element $\vec a \in \SumCompletion$ the set $\vec a \times \SumDomain_{> \VLen{a}}$; we denote this set also by $\vec a$. In particular, $|\vec a| = |\SumDomain_{> \VLen{a}}| = \prod_{i=\VLen{a}+1}^\NumVars |\SumSet_i|$. (See \cref{def:sumcodes}.)
\end{definition}

It is straightforward to see that $\vec b \subseteq \vec a$ if and only if $\vec a$ is a prefix of $\vec b$. The following simple proposition gives a useful alternative characterisation of $\AntiSym$.

\begin{proposition}
	\label{prop:antisym}
	$w \in \AntiSym[\SumDomain]$ if and only if there exists $f \in \Field^{\SumDomain}$ such that $w(\vec{x}) = f(\vec{x}) - f(\revvec{x})$.
\end{proposition}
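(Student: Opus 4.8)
The plan is to prove the two implications separately; both are short. For the ``if'' direction, suppose $w(\vec{x}) = f(\vec{x}) - f(\revvec{x})$ for some $f \in \Field^{\SumDomain}$, and recall that the hypothesis $\SumSet_i = \SumSet_{\NumVars-i+1}$ ensures that reversal is an involution on $\SumDomain$ (so $\revvec{x} \in \SumDomain$ whenever $\vec{x} \in \SumDomain$, and applying the formula at $\revvec{x}$ is meaningful). Then $w(\revvec{x}) = f(\revvec{x}) - f(\vec{x}) = -w(\vec{x})$, so $w(\vec{x}) + w(\revvec{x}) = 0$ for all $\vec{x} \in \SumDomain$, i.e.\ $w \in \AntiSym[\SumDomain]$.

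For the ``only if'' direction, let $w \in \AntiSym[\SumDomain]$, so that $w(\revvec{x}) = -w(\vec{x})$ for all $\vec{x}$. The cleanest witness is $f \eqdef \tfrac{1}{2} w$ (this uses that $2$ is invertible in $\Field$; in the sumcheck application $\Field$ is a large prime field, so $\mathrm{char}(\Field) \neq 2$): indeed $f(\vec{x}) - f(\revvec{x}) = \tfrac{1}{2}\big(w(\vec{x}) - w(\revvec{x})\big) = \tfrac{1}{2}\big(w(\vec{x}) + w(\vec{x})\big) = w(\vec{x})$. If one prefers to avoid division by $2$, fix any total order on $\SumDomain$, put $f(\vec{x}) \eqdef w(\vec{x})$ whenever $\vec{x} < \revvec{x}$ and $f(\vec{x}) \eqdef 0$ otherwise, and verify $f(\vec{x}) - f(\revvec{x}) = w(\vec{x})$ by checking the three cases $\vec{x} < \revvec{x}$, $\vec{x} > \revvec{x}$ (using $w(\revvec{x}) = -w(\vec x)$), and $\vec{x} = \revvec{x}$ (where $f(\vec{x}) - f(\revvec{x}) = 0 = w(\vec{x})$, the last equality because $2w(\vec{x}) = 0$).

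I expect no real obstacle here: the statement is essentially the observation that the linear map $f \mapsto f - f \circ \rev$ surjects onto $\AntiSym[\SumDomain]$, and the only point needing any care is the behaviour on reversal-fixed points (``palindromes''), where $w$ vanishes automatically once $2$ is invertible. This proposition is precisely what lets us identify $\AntiSym[\SumDomain]$ with the image of $Q \mapsto Q|_{\SumDomain} - (Q \circ \rev)|_{\SumDomain}$, which is the form in which the antisymmetric masking term of~\eqref{eq:R} arises, and hence underpins the claim that the restriction of our masking polynomial to the cube is uniform in $\AntiSym[\SumDomain]$.
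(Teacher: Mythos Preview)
Your proof is correct. Your secondary construction via a total order is precisely the paper's argument; your primary witness $f = \tfrac{1}{2}w$ is a neater alternative the paper does not use. One small remark: you present the total-order construction as a way to ``avoid division by $2$'', but your treatment of the palindrome case $\vec{x} = \revvec{x}$ still deduces $w(\vec{x}) = 0$ from $2w(\vec{x}) = 0$, which needs $\mathrm{char}(\Field) \neq 2$ just as much as dividing by $2$ does---indeed the proposition is false in characteristic $2$ whenever $\SumDomain$ contains a palindrome, so neither approach can remove this hypothesis. (The paper's own case split silently omits the palindrome case; you are more careful on this point.)
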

\begin{proof}
	For the ``if'' direction, note that $w(\vec{x}) + w(\revvec{x}) = f(\vec{x}) - f(\revvec{x}) + f(\revvec{x}) - f(\vec{x}) = 0$ for all $\vec{x} \in \SumDomain$. For the ``only if'' direction, let $w \in \AntiSym[\vec{\SumSet}]$, and consider some strict total ordering $<$ on $\SumDomain$. Define
	\begin{equation*}
		f(\vec{x}) \eqdef \begin{cases}
 			w(\vec{x}) & \text{if $\vec{x} < \revvec{x}$, and} \\
 			0 & \text{otherwise.}
 		\end{cases}
	\end{equation*}
	Then for all $\vec{x}$,
	\begin{equation*}
		f(\vec{x}) - f(\revvec{x}) = \begin{cases}
				w(\vec{x}) & \text{ if $\vec{x} < \revvec{x}$, and} \\
				-w(\revvec{x}) & \text{ if $\revvec{x} < \vec{x}$}
			\end{cases} = w(\vec{x})~. \qedhere
	\end{equation*}
\end{proof}

\begin{definition}
    For $\vec a = (a_1,\ldots,a_{\ell}) \in \SumCompletion$, we define the \defemph{reverse set} $\revvec a \eqdef \SumCompletion_{\NumVars-\ell} \times (a_{\ell}, \ldots, a_1)$. For a set $H \subseteq \SumCompletion$, we define $H_\rev \eqdef \{ \revvec a : \vec a \in H \}$.
    
    We say that $H$ is \defemph{symmetric} if $\cup H = \cup H_\rev$, where $\cup H \eqdef \cup_{\vec h \in H} \vec h$. We say that $H$ is \defemph{minimal symmetric} if $H$ is symmetric and for all nonempty $H' \subsetneq H$, $H'$ is not symmetric.
\end{definition}

\begin{proposition}
    \label{prop:minimal-symmetric-decomp}
    If $H \subseteq \SumCompletion$ is symmetric and prefix-free, then $H = \cup_{i=1}^k H_i$ for $H_1,\ldots,H_k \subseteq H$ where each $H_i$ is minimal symmetric and $H_i \cap H_j = \varnothing$ for $i \neq j$.
\end{proposition}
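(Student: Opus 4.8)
The plan is to prove this by a straightforward induction on $|H|$, extracting one minimal symmetric piece at a time. First I would observe that since $H$ is symmetric and nonempty, it contains \emph{some} minimal symmetric subset: consider the family of all nonempty symmetric subsets of $H$ (which is nonempty since $H$ itself is in it), and pick one, say $H_1$, that is minimal with respect to inclusion; by definition $H_1$ is minimal symmetric. The key structural claim I would then establish is that $H \setminus H_1$ is again symmetric (and it remains prefix-free as a subset of $H$). Given that claim, the induction closes immediately: apply the inductive hypothesis to $H \setminus H_1$ to decompose it as a disjoint union $H_2 \cup \cdots \cup H_k$ of minimal symmetric sets, and then $H = H_1 \cup H_2 \cup \cdots \cup H_k$ is the desired disjoint decomposition, with disjointness of $H_1$ from the rest being automatic since $H_2,\ldots,H_k \subseteq H \setminus H_1$.

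The heart of the argument — and the step I expect to be the main obstacle — is showing that $H \setminus H_1$ is symmetric, i.e.\ that $\cup(H \setminus H_1) = \cup(H \setminus H_1)_\rev$. The natural approach uses prefix-freeness crucially: I would argue that when $H$ is prefix-free, the sets $\{\vec h : \vec h \in H\}$ (recall $\vec h$ denotes $\vec h \times \SumDomain_{>\VLen{h}}$) are pairwise \emph{disjoint}, so $\cup H$ is actually a disjoint union, and likewise $H_\rev$ is prefix-free (reversal preserves the prefix-free property on these leaf-sets) so $\cup H_\rev$ is a disjoint union. Symmetry of $H$ then says these two partitions of the same ground set $\cup H = \cup H_\rev$ coincide as \emph{sets of blocks}: since $H$ is symmetric, for each $\vec h \in H$ the set $\revvec h$ must be exactly a union of blocks from $H$ (it lies in $\cup H$ and is itself a block of the partition $H_\rev$, and block partitions that share a common refinement structure force this). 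More carefully, one shows that $H_1$ symmetric means $\cup H_1 = \cup (H_1)_\rev$, and since $(H_1)_\rev \subseteq H_\rev$ consists of blocks of the partition $H$ of $\cup H$ — no, rather, one shows $\cup H_1$ is a union of $H$-blocks, hence $H \setminus H_1$ covers exactly $\cup H \setminus \cup H_1$, and by the same reasoning applied to $H_\rev$ this equals $\cup H_\rev \setminus \cup(H_1)_\rev = \cup(H\setminus H_1)_\rev$.

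To make this rigorous I would isolate the lemma: \emph{if $H$ is prefix-free, then $\{\vec h : \vec h \in H\}$ partitions $\cup H$, and $(H_\rev$ as leaf-sets$)$ partitions $\cup H_\rev$; moreover $\cup H_1 = \cup(H_1)_\rev$ combined with $H_1 \subseteq H$ forces $H_1$ to be a union of $H$-blocks and $(H_1)_\rev$ to be a union of $H_\rev$-blocks}, from which $\cup(H \setminus H_1) = \cup H \setminus \cup H_1 = \cup H_\rev \setminus \cup(H_1)_\rev = \cup((H\setminus H_1)_\rev)$, the middle equality being the symmetry of $H$. The only subtlety is checking that $H_\rev$ is prefix-free when $H$ is; this follows because $\revvec a \subseteq \revvec b$ iff (by the characterisation $\vec b' \subseteq \vec a'$ iff $\vec a'$ is a prefix of $\vec b'$, noting $\revvec a$ is of the form $\SumCompletion_{\NumVars - \ell}\times(a_\ell,\ldots,a_1)$) one of $\vec a, \vec b$ is a prefix of the other in the original, contradicting prefix-freeness of $H$ unless $\vec a = \vec b$. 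The base case $H = \varnothing$ (vacuous) or the first nonempty step where $H$ is itself minimal symmetric (so $k = 1$, $H_1 = H$) is immediate. I do not expect any calculation here; the whole content is this partition-refinement bookkeeping, so the write-up should be short.
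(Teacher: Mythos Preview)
Your approach is correct and essentially identical to the paper's: both argue by induction on $|H|$, extract a nonempty symmetric proper subset, use prefix-freeness to get $\cup(H \setminus H') = (\cup H) \setminus (\cup H')$ (and likewise on the reverse side), and conclude that $H \setminus H'$ is symmetric. The only cosmetic difference is that the paper picks an arbitrary nonempty symmetric $H' \subsetneq H$ and applies the inductive hypothesis to both $H'$ and $H \setminus H'$, whereas you pick $H_1$ minimal symmetric and recurse only on $H \setminus H_1$; the paper's write-up is a single line of set algebra and avoids the partition-refinement bookkeeping you sketch.
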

\begin{proof}
    We proceed by induction on the size of $H$. Indeed, suppose that the statement is true for all sets $H' \subseteq \SumCompletion$ of size less than $n$, and let $H \subseteq \SumCompletion$ be a set of size $n$. If $H$ is minimal symmetric, then the statement holds trivially, so suppose not. Then there is a nonempty symmetric $H' \subsetneq H$. Since $H$ is prefix-free, $\cup (H \setminus H') = (\cup H) \setminus (\cup H') = (\cup H_\rev) \setminus (\cup H'_\rev) = \cup (H \setminus H')_\rev$; hence $H \setminus H'$ is also symmetric. Since $H'$ and $H \setminus H'$ are strictly smaller than $H$, we can apply the inductive hypothesis to complete the proof.
\end{proof}

The following lemma shows that, for $G$ prefix-free, $\SigmaAntiSym|_G^\perp$ has a basis of constraints of the form ``$\sum_{\vec h \in H} w(\vec h) = 0$'', taken over all minimal symmetric subsets $H \subseteq G$.

\newcommand{\MinSymSets}{\mathcal{H}}
\begin{lemma}
	\label{lemma:sigmaas-basis}
	Let $G \subseteq \SumCompletion$ be prefix-free. For $H \subseteq G$, let $1_H \colon G \to \Field$ be the indicator vector for $H$, i.e., $1_H(\vec h) = 1$ if $\vec h \in H$ and $0$ otherwise. A basis for $\SigmaAntiSym[\SumDomain]|_G^{\perp}$ is $\{1_H\}_{H \in \MinSymSets}$ where $\MinSymSets$ is the set of all minimal symmetric subsets of $G$.
\end{lemma}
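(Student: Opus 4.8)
The plan is to recast $\SigmaAntiSym[\SumDomain]|_G^{\perp}$ as a concrete space of functions on $\SumDomain$, exhibit the vectors $\{1_H\}_{H\in\MinSymSets}$ inside it, and then verify the basis property by a support argument (independence) and a graph-connectivity argument (spanning).

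\textbf{Step 1 (a dictionary).} Because $G$ is prefix-free, each $\vec x\in\SumDomain$ lies in the cylinder $\vec a\times\SumDomain_{>\VLen{a}}$ (written $\vec a$, following \cref{def:sumcodes} and the convention of \cref{sec:antisym-props}) for at most one $\vec a\in G$. So to $c\colon G\to\Field$ I attach $\phi_c\colon\SumDomain\to\Field$ with $\phi_c(\vec x)=c(\vec a)$ if $\vec x\in\vec a$ for the (unique) such $\vec a\in G$ and $\phi_c(\vec x)=0$ otherwise. Summing over the partition of $\cup G$ into cylinders gives, for every $w\in\AntiSym[\SumDomain]$,
\[ \sum_{\vec a\in G} c(\vec a)\,\SumWord{}{w}(\vec a) \;=\; \sum_{\vec a\in G} c(\vec a)\!\sum_{\vec x\in\vec a} w(\vec x) \;=\; \sum_{\vec x\in\SumDomain}\phi_c(\vec x)\,w(\vec x)~, \]
so $c\in\SigmaAntiSym|_G^{\perp}$ iff $\phi_c\in\AntiSym[\SumDomain]^{\perp}$. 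Using the spanning set $\{1_{\vec x}-1_{\revvec x}\}_{\vec x\in\SumDomain}$ of $\AntiSym[\SumDomain]$ (here we use $2\neq 0$ in $\Field$, so that antisymmetric functions vanish on palindromes) one obtains $\AntiSym[\SumDomain]^{\perp}=\{\phi:\phi(\vec x)=\phi(\revvec x)\ \forall\vec x\}$, the reversal-invariant functions; hence $\SigmaAntiSym|_G^{\perp}=\{c:\phi_c\text{ is reversal-invariant}\}$. Since $\phi_{1_H}=1_{\cup H}$, and $\cup H$ is reversal-invariant exactly when $H$ is symmetric, each $1_H$ with $H\in\MinSymSets$ lies in $\SigmaAntiSym|_G^{\perp}$.

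\textbf{Step 2 (independence).} I would show that distinct minimal symmetric $H_1,H_2\subseteq G$ are disjoint. By prefix-freeness, $\cup(H_1\cap H_2)=(\cup H_1)\cap(\cup H_2)$, and applying the (bijective) reversal map together with the symmetry of $H_1,H_2$ shows that $H_1\cap H_2$ is symmetric; if it is nonempty it must then equal both $H_1$ and $H_2$ by minimality, so $H_1=H_2$. Hence the $1_H$ have pairwise disjoint supports and are linearly independent (and nonzero).

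\textbf{Step 3 (spanning), and the main obstacle.} Take $c\in\SigmaAntiSym|_G^{\perp}$ and set $G'\eqdef\supp(c)$; since $\cup G'=\supp(\phi_c)$ is reversal-invariant, $G'$ is symmetric and prefix-free. Build the graph on $G'$ with $\vec a\sim\vec b$ iff $\vec a\cap\revvec b\neq\varnothing$. If $\vec a\sim\vec b$, choosing $\vec x\in\vec a\cap\revvec b$ gives $\revvec x\in\vec b$, so reversal-invariance of $\phi_c$ yields $c(\vec a)=\phi_c(\vec x)=\phi_c(\revvec x)=c(\vec b)$; hence $c$ is constant on each connected component $C$. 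It then remains to show every $C$ is a minimal symmetric subset of $G$, so that $c=\sum_C c|_C\cdot 1_C\in\Span\{1_H\}_{H\in\MinSymSets}$. Symmetry of $C$: for $\vec a\in C$, $\revvec a\subseteq\cup G'$ decomposes into the pieces $\revvec a\cap\vec b$, $\vec b\in G'$, and each nonempty such $\vec b$ is adjacent to $\vec a$ hence lies in $C$, so $\revvec a\subseteq\cup C$, giving $(\cup C)_{\rev}=\cup C$. Minimality is the crux: if $\varnothing\neq C'\subsetneq C$ were symmetric, connectivity of $C$ produces an edge $\vec a\sim\vec b$ with $\vec a\in C'$ and $\vec b\notin C'$; picking $\vec x\in\vec a\cap\revvec b$ we get $\vec x\in\cup C'$ and hence $\revvec x\in(\cup C')_{\rev}=\cup C'$, but $\revvec x\in\vec b$ while $\vec b$ is disjoint from $\cup C'$ by prefix-freeness — a contradiction. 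I expect this minimality argument to be the only genuine obstacle; everything else is bookkeeping with prefix-freeness and the identity $(\vec a)_{\rev}=\revvec a$ (valid since $\SumSet_i=\SumSet_{\NumVars-i+1}$).
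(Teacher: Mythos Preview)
Your proof is correct. It differs from the paper's chiefly in the spanning step (your Step~3): the paper partitions $G$ into the level sets $S_\gamma=\{\vec a\in G:z(\vec a)=\gamma\}$, shows each $S_\gamma$ is symmetric by testing the constraint against indicators $f=1_{\vec x^*}$ (via \cref{prop:antisym}), and then invokes \cref{prop:minimal-symmetric-decomp} to decompose each $S_\gamma$ into minimal symmetric pieces. You instead introduce the graph on $\supp(c)$ with edges $\vec a\sim\vec b\iff\vec a\cap\revvec b\neq\varnothing$ and argue directly that $c$ is constant on each connected component and that each component is itself minimal symmetric. This is exactly the graph the paper later uses in its $\SymSets$ algorithm (\cref{lem:rev-alg}), so your argument in effect merges the proof of \cref{lemma:sigmaas-basis} with the correctness proof of that algorithm into a single pass --- a slightly more direct route that avoids the separate appeal to \cref{prop:minimal-symmetric-decomp}. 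Both routes carry the same characteristic-$\neq 2$ assumption: the paper implicitly through \cref{prop:antisym} (whose ``only if'' direction fails at palindromes in characteristic~$2$), and you explicitly when identifying $\AntiSym[\SumDomain]^\perp$ with the reversal-invariant functions.
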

\begin{proof}
    Let $B \eqdef \{1_H\}_{H \in \MinSymSets}$. First, note that \cref{prop:minimal-symmetric-decomp} implies that for all $H \neq H' \in \MinSymSets$, $H \cap H' = \varnothing$. Hence $B$ is a linearly independent set. It is also straightforward to see that $B \subseteq \SigmaAntiSym[\SumDomain]|_G^{\perp}$. It remains to show that $\Span(B) = \SigmaAntiSym[\SumDomain]|_G^{\perp}$.

    Suppose that $z \in \SigmaAntiSym[\SumDomain]|_G^{\perp}$; equivalently, by \cref{prop:antisym}, for all $f \colon \SumDomain \to \Field$,
	\begin{equation*}
		\sum_{\vec a \in G} z(\vec a) \sum_{\vec x \in \vec a} f(\vec x) = \sum_{\vec a \in G} z(\vec a) \sum_{\vec x \in \vec a} f(\revvec x)~.
	\end{equation*}
	Let $S_\gamma \eqdef \{ \vec a \in G : z(\vec a) = \gamma \}$. Then we can write
	\begin{equation*}
		\sum_{\gamma \in \Field} \gamma \sum_{\vec a \in S_\gamma} \sum_{\vec x \in \vec a} f(\vec x) = \sum_{\gamma \in \Field} \gamma \sum_{\vec a \in S_\gamma} \sum_{\vec x \in \vec a} f(\revvec x)~.
	\end{equation*}
	Fix $\gamma^* \in \Field$, $\vec a^* \in S_{\gamma^*}$, and let $\vec x^* \in \vec a^*$. Let $f = 1_{\vec x^*}$ be the indicator function for $\vec x^*$. It follows that
	\begin{equation*}
		\gamma^* = \sum_{\gamma \in \Field} \gamma \sum_{\vec a \in S_\gamma} \sum_{\vec x \in \vec a} 1_{\vec x^*}(\revvec x) = \gamma'
	\end{equation*}
	where $\gamma' \in \Field$ is unique such that $\revvec{x}^* \in \cup S_{\gamma'}$. Thus $\revvec{x}^* \in \cup S_{\gamma^*}$; equivalently, $\vec{x}^* \in \cup (S_{\gamma^*})_{\rev}$.  It follows that for all $\gamma$, $\cup S_{\gamma} \subseteq \cup (S_{\gamma})_\rev$; a similar argument establishes that in fact $S_{\gamma}$ is symmetric. By \cref{prop:minimal-symmetric-decomp}, $S_{\gamma}$ is a union of minimal symmetric sets. It follows that $z \in \Span(B)$.
\end{proof}

Next, we prove the main technical lemma of this section.

\begin{lemma}
	\label{lemma:reverse-set-bounds}
    Let $H,G \subseteq \SumCompletion$ be prefix-free with $|H| \cdot |G| \leq |\SumDomain|/4$. Suppose that $\cup H = \cup G_\rev$. Then
	\begin{equation*}
		|\cup H| \geq \frac{K}{2}(1 + \sqrt{1 - 4|H|\cdot|G|/K}) \qquad \text{or} \qquad |\cup H| \leq \frac{K}{2}(1 - \sqrt{1 - 4|H|\cdot|G|/K})~.
	\end{equation*}
	In particular,
	\begin{equation*}
		\min(|\cup H|, |\SumDomain| - |\cup H|) \leq |H| \cdot |G| + O(|H|^2|G|^2/|\SumDomain|)~.
	\end{equation*}
\end{lemma}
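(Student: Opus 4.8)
To prove \cref{lemma:reverse-set-bounds}, the plan is to reduce everything to a single clean inequality, namely
\[ |\cup H| \cdot (|\SumDomain| - |\cup H|) \le |\SumDomain| \cdot |H| \cdot |G|, \]
and then read both conclusions off it. Writing $K \eqdef |\SumDomain|$ and $x \eqdef |\cup H|$, the displayed dichotomy is precisely the assertion $x^2 - Kx + K|H||G| \ge 0$; under the hypothesis $|H||G| \le K/4$ this quadratic in $x$ has real roots $\tfrac{K}{2}(1 \pm \sqrt{1 - 4|H||G|/K})$, so ``$\ge 0$'' is equivalent to ``$x$ lies outside the open interval between them'', which is exactly the stated alternative. (If instead $|H||G| > K/4$, then $x(K-x) \le K^2/4 < K|H||G|$ is immediate and the dichotomy is vacuous, so that case needs no work.) For the ``in particular'' clause, set $m \eqdef \min(x, K - x) \le K/2$; since $x + (K-x) = K$ we have $m(K-m) = x(K-x) \le K|H||G|$, hence $m(1 - m/K) \le |H||G|$; as $1 - m/K \ge 1/2$ this gives $m \le 2|H||G|$, and feeding this back into $m \le |H||G| + m^2/K$ yields $m \le |H||G| + 4|H|^2|G|^2/K = |H||G| + O(|H|^2|G|^2/|\SumDomain|)$.

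To prove the clean inequality I would first record two disjointness facts. Since $\vec b \subseteq \vec a$ exactly when $\vec a$ is a prefix of $\vec b$, and since two distinct cylinders neither of which contains the other are disjoint, prefix-freeness of $H$ makes the cylinders $\{\vec h\}_{\vec h \in H}$ pairwise disjoint, so $x = \sum_{\vec h \in H} |\vec h|$. Reversal is a bijection of $\SumDomain$ (here one uses $\SumSet_i = \SumSet_{\NumVars - i + 1}$, without which $\revvec g$ would not lie in $\SumDomain$) that turns the prefix order into the suffix order, so prefix-freeness of $G$ makes the suffix-cylinders $\{\revvec g\}_{\vec g \in G}$ pairwise non-nested, hence pairwise disjoint; combined with the hypothesis $\cup H = \cup G_\rev$ this gives $x = |\cup G_\rev| = \sum_{\vec g \in G}|\revvec g|$ as well.

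The core of the argument is a double count of $N \eqdef \sum_{\vec h \in H}\sum_{\vec g \in G} |\vec h \cap \revvec g|$. Counting by points: a point of $\SumDomain$ lies in exactly one $\vec h$ iff it lies in $\cup H = \cup G_\rev$, and likewise in exactly one $\revvec g$ iff it lies in $\cup G_\rev$, so $N$ equals the number of points of $\cup H$, i.e. $N = x$. Counting by pairs: for each $\vec h \in H$, $\vec g \in G$, I would compare the coordinate block fixed by $\vec h$ (a prefix of length $\VLen h$) with the block fixed by $\revvec g$ (a suffix of length $\VLen g$). If $\VLen h + \VLen g \le \NumVars$ these blocks are disjoint, the intersection is again a product cylinder, and a short computation gives $|\vec h \cap \revvec g| = |\vec h|\cdot|\revvec g|/K$; if $\VLen h + \VLen g > \NumVars$ the blocks overlap, forcing $\vec h \cap \revvec g$ to be empty or a single point, so $|\vec h \cap \revvec g| \le 1$. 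Putting these together,
\[ x = N \le \frac1K \sum_{\vec h \in H}\sum_{\vec g \in G}|\vec h|\cdot|\revvec g| + |H|\cdot|G| = \frac1K\Big(\sum_{\vec h \in H}|\vec h|\Big)\Big(\sum_{\vec g \in G}|\revvec g|\Big) + |H|\cdot|G| = \frac{x^2}{K} + |H|\cdot|G|, \]
which rearranges to $x(K-x) \le K|H||G|$.

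The only step I expect to need real care is the coordinatewise evaluation of $|\vec h \cap \revvec g|$ in the independent regime $\VLen h + \VLen g \le \NumVars$: one must track exactly which set $\SumSet_i$ each free ``middle'' coordinate of $\vec h \cap \revvec g$ ranges over and invoke $\SumSet_i = \SumSet_{\NumVars - i + 1}$ to see that the ranges imposed from the $\vec h$ side and from the $\revvec g$ side coincide, so that the intersection is the full cylinder of size $\prod_{\VLen h < i \le \NumVars - \VLen g}|\SumSet_i| = |\vec h|\cdot|\revvec g|/K$. Once that identity and the trivial $\le 1$ bound in the overlapping regime are in hand, the rest is the bookkeeping above.
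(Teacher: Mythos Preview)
Your proposal is correct and follows essentially the same approach as the paper: both compute $\sum_{\vec h,\vec g}|\vec h\cap\revvec g|$ by splitting into the ``disjoint-block'' and ``overlapping-block'' regimes, and both use disjointness (from prefix-freeness) to identify $\sum_{\vec h}|\vec h|=\sum_{\vec g}|\revvec g|=|\cup H|$. The paper introduces an auxiliary quantity $N$ (the non-overlap contribution), obtains $|\cup H|\le t+N$ and $|\cup H|^2\ge KN$, and then solves a quadratic in $N$; you instead bound each term by $|\vec h|\,|\revvec g|/K+1$ uniformly and go straight to the quadratic $x(K-x)\le Kt$ in $x=|\cup H|$. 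Your packaging is slightly cleaner and sidesteps the paper's somewhat loose final line ``recalling that $|\cup H|=N+t$'' (which is really only $\le$), and your derivation of the ``in particular'' bound via $m\le t+m^2/K$ and $m\le 2t$ is a nice alternative to the paper's Taylor expansion of $\sqrt{1-4t/K}$.
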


Taking $G = H$, we obtain strong bounds on the support size of symmetric sets: if $H$ is symmetric then $\cup H$ must be of size either $\Poly(|H|)$ or $|\SumDomain| - \Poly(|H|)$.

\begin{corollary}
    \label{cor:symmetric-set-size}
    If $H \subseteq \SumCompletion$ is symmetric and prefix-free, then
    \[
		\min(|\cup H|, |\SumDomain| - |\cup H|) \leq |H|^2 + O(|H|^4/|\SumDomain|)~.
    \]
\end{corollary}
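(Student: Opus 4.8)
The plan is to derive \cref{cor:symmetric-set-size} directly from \cref{lemma:reverse-set-bounds} by instantiating $G \eqdef H$. If $H \subseteq \SumCompletion$ is symmetric and prefix-free, then $\cup H = \cup H_\rev$ by the definition of symmetric, so the hypothesis ``$\cup H = \cup G_\rev$'' of \cref{lemma:reverse-set-bounds} is satisfied, and $G = H$ is prefix-free. The only thing blocking an immediate application is the side condition $|H|\cdot|G| \le |\SumDomain|/4$, i.e.\ $|H|^2 \le |\SumDomain|/4$, so I would split into two cases according to whether this holds.

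First, if $|H|^2 \le |\SumDomain|/4$, then \cref{lemma:reverse-set-bounds} applies with $G = H$ and yields
$\min(|\cup H|,\,|\SumDomain|-|\cup H|) \le |H|\cdot|H| + O(|H|^2|H|^2/|\SumDomain|) = |H|^2 + O(|H|^4/|\SumDomain|)$,
which is exactly the claimed bound. Second, if $|H|^2 > |\SumDomain|/4$, the bound is trivial: since $|\cup H| + (|\SumDomain| - |\cup H|) = |\SumDomain|$, the minimum of these two nonnegative quantities is at most $|\SumDomain|/2 < 2|H|^2$, and $2|H|^2 \le 8|H|^4/|\SumDomain|$ because $|\SumDomain| < 4|H|^2$; hence $\min(|\cup H|,\,|\SumDomain|-|\cup H|) = O(|H|^4/|\SumDomain|)$, which is absorbed into the right-hand side of the claimed inequality.

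There is essentially no obstacle here beyond bookkeeping the case split — the real content lives entirely in \cref{lemma:reverse-set-bounds} (whose proof in turn generalises the two-dimensional symmetric-matrix computation and the Taylor estimate $1 - \sqrt{1-x} \approx x/2 + x^2/4$ sketched in the introduction). The one thing to be careful about when writing the final version is that the two $O(\cdot)$ terms carry absolute constants (the first inherited from the estimate in \cref{lemma:reverse-set-bounds}, the second being the constant $8$ above), so that the combined statement holds with a single absolute constant independent of $H$ and $\SumDomain$.
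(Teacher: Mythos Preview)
Your proposal is correct and matches the paper's approach exactly: the paper simply states ``Taking $G = H$'' before \cref{cor:symmetric-set-size} and gives no further proof, so your derivation (including the case split to handle the side condition $|H|^2 \le |\SumDomain|/4$, which the paper leaves implicit in the $O(\cdot)$) is precisely what is intended.
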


Before proving \cref{lemma:reverse-set-bounds}, we state and prove a simple but crucial claim about the size of the intersection of sets $\vec a$ and $\revvec b$ for general $\vec a, \vec b \in \SumCompletion$.

\begin{claim}
    For any $\vec a, \vec b \in \SumCompletion$,
	\begin{equation*}
		|\vec a \cap \revvec b| \leq \begin{cases}
			\prod_{j=\VLen{a}+1}^{\NumVars - \VLen{b}} |\SumSet_j|
			& \text{if $\VLen{a} + \VLen{b} < \NumVars$} \\
			1 & \text{otherwise.}
		\end{cases}
	\end{equation*}
\end{claim}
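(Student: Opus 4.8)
The plan is to unfold the definitions of $\vec a$ and $\revvec b$ as subsets of $\SumDomain$ and simply count the free coordinates in their intersection. Write $\ell \eqdef \VLen{a}$ and $k \eqdef \VLen{b}$. By definition, as a subset of $\SumDomain = \SumSet_1 \times \cdots \times \SumSet_\NumVars$ we have $\vec a = \{a_1\} \times \cdots \times \{a_\ell\} \times \SumSet_{\ell+1} \times \cdots \times \SumSet_\NumVars$, so membership in $\vec a$ fixes coordinates $1,\ldots,\ell$. Likewise $\revvec b = \SumSet_1 \times \cdots \times \SumSet_{\NumVars - k} \times \{b_k\} \times \cdots \times \{b_1\}$, so membership in $\revvec b$ fixes coordinates $\NumVars-k+1,\ldots,\NumVars$ (coordinate $\NumVars-k+i$ carries the value $b_{k+1-i}$). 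A point $\vec x \in \SumDomain$ lies in $\vec a \cap \revvec b$ precisely when it satisfies both sets of coordinate constraints and ranges freely, within the respective $\SumSet_j$, over all remaining coordinates.

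First I would treat the case $\ell + k < \NumVars$. Here the index sets $\{1,\ldots,\ell\}$ and $\{\NumVars-k+1,\ldots,\NumVars\}$ are disjoint, and the coordinates left free are exactly those in $\{\ell+1,\ldots,\NumVars-k\}$. Hence $\vec a \cap \revvec b$ is contained in the product set $\SumSet_{\ell+1} \times \cdots \times \SumSet_{\NumVars-k}$, so $|\vec a \cap \revvec b| \le \prod_{j=\VLen{a}+1}^{\NumVars-\VLen{b}} |\SumSet_j|$ (in fact equality holds, since $\SumSet_i = \SumSet_{\NumVars-i+1}$ guarantees the fixed reversed values lie in the correct $\SumSet_j$, so the constraints are consistent). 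Second, in the case $\ell + k \ge \NumVars$ we have $\{1,\ldots,\ell\} \cup \{\NumVars-k+1,\ldots,\NumVars\} = \{1,\ldots,\NumVars\}$, so every coordinate of a point in $\vec a \cap \revvec b$ is forced; thus the intersection has at most one element. Together these two cases give the stated bound.

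I do not expect any genuine obstacle here: the argument is pure bookkeeping. The only point requiring a little care is keeping the reversal indices straight — coordinate $\NumVars-k+i$ of $\revvec b$ carries $b_{k+1-i}$, not $b_i$ — and, if one wants the sharper equality in the first case rather than just the inequality, invoking the hypothesis $\SumSet_i = \SumSet_{\NumVars-i+1}$ to see that the fixed values on the reversed block actually belong to the relevant $\SumSet_j$.
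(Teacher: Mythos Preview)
Your proposal is correct and follows essentially the same approach as the paper: both unfold the definitions of $\vec a$ and $\revvec b$ as product sets, identify which coordinates are fixed versus free, and count accordingly in the two cases. Your observation that the symmetry hypothesis $\SumSet_i = \SumSet_{\NumVars-i+1}$ is only needed for equality (not the stated inequality) is a nice extra bit of care.
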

\begin{proof}
	Recall that, if $\VLen{a} + \VLen{b} < \NumVars$,
	\begin{align*}
		\vec a &= \{ a_1 \} \times \cdots \times \{ a_{\VLen{a}} \} \times \SumSet_{\VLen{a}+1} \times \cdots \times \SumSet_{\NumVars-\VLen{b}} \times \cdots \times \SumSet_{\NumVars} \\
		\revvec b &= \SumSet_{1} \times \cdots \times \SumSet_{\VLen{a}} \times \cdots \times \SumSet_{\NumVars-\VLen{b}} \times \{ b_{\VLen{b}} \} \times \cdots \times \{ b_{1} \}
	\end{align*}
	and so
	\begin{equation*}
		\vec a \cap \revvec b = \{ a_1 \} \times \cdots \times \{ a_{\VLen{a}} \} \times \SumSet_{\VLen{a}+1} \times \cdots \times \SumSet_{\NumVars-\VLen{b}} \times \{ b_{\VLen{b}} \} \times \cdots \times \{ b_{1} \}~.
	\end{equation*}
	
	On the other hand, if $\VLen{a} + \VLen{b} \geq \NumVars$ then
	\begin{equation*}
		\vec a \cap \revvec b = \begin{cases}
 			\{ (a_1, \ldots, a_{\VLen{a}}, b_{\NumVars - \VLen{a}}, \ldots, b_{1}) \} & \text{if $a_i = b_{\NumVars - i + 1}$ for all $\NumVars + 1 - \VLen{b}  \leq i \leq \VLen{a}$;} \\
 			\varnothing & \text{otherwise.}
 	\end{cases} \qedhere
 	\end{equation*}
\end{proof}

\begin{proof}[Proof of \cref{lemma:reverse-set-bounds}]
	Since $\cup H = \cup G_\rev$, $(\cup H) \cap (\cup G_\rev) = \cup H = \cup G_\rev$. Since $H, G$ are prefix-free,
	\begin{equation*}
		|\cup H| = |(\cup H) \cap (\cup G_\rev)| = \sum_{\vec a \in H} \sum_{\vec b \in G} |\vec a \cap \revvec b| \leq |H|\cdot|G| + \sum_{\substack{\vec a \in H,\vec b \in G \\ \VLen{a} + \VLen{b} < \NumVars}} \prod_{j=\VLen{a}+1}^{\NumVars - \VLen{b}} |\SumSet_j| =: |H| \cdot |G| + N ~.
	\end{equation*}
	
	Now observe that
	\begin{align*}
		|\cup H|^2 = |\cup H| \cdot |\cup G_\rev| &= (\sum_{\vec a \in H} \prod_{j = \VLen{a}+1}^{n} |\SumSet_j|) \cdot (\sum_{\vec b \in G} \prod_{j = 1}^{n-\VLen{b}} |\SumSet_j|) \\
		&= \sum_{\vec a \in H} \sum_{\vec b \in G} \prod_{j = \VLen{a}+1}^{n} |\SumSet_j| \cdot \prod_{j = 1}^{\NumVars-\VLen{b}} |\SumSet_j| \\
		&\geq |\SumDomain| \cdot \sum_{\substack{\vec a \in H,\vec b \in G \\ \VLen{a} + \VLen{b} < \NumVars}}  \prod_{j = \VLen{a}+1}^{\NumVars - \VLen{b}} |\SumSet_j| = |\SumDomain| \cdot N.
	\end{align*}
	Hence, writing $t \eqdef |H| \cdot |G|$, $|\SumDomain| \cdot N \leq (t + N)^2$. Completing the square to solve for $N$, we obtain $(N + t - |\SumDomain|/2)^2 \geq \frac{|\SumDomain|^2}{4}(1-4t/|\SumDomain|)$. Hence either
	\begin{equation*}
		N \geq \frac{|\SumDomain|}{2}(1 + \sqrt{1 - 4t/|\SumDomain|}) - t \qquad \text{or} \qquad N \leq \frac{|\SumDomain|}{2}(1 - \sqrt{1 - 4t/|\SumDomain|}) - t
	\end{equation*}
	The first part of the lemma follows by recalling that $|\cup H| = N + t$. The second part then follows since $\sqrt{1 - 4t/|\SumDomain|} = 1 - 2t/|\SumDomain| - O(t^2/|\SumDomain|^2)$.
\end{proof}

\subsection{Algorithms for \texorpdfstring{$\SigmaAntiSym$}{Sigma-AntiSym}}
\label{sec:antisym-algs}

In this section, we present two useful algorithms for working with $\SigmaAntiSym$. The first (\cref{lem:disjoint-alg}) takes as input an arbitrary set $I \subseteq \SumCompletion$ and outputs a prefix-free set $G \subseteq \SumCompletion$ of a similar size with the property that any $\vec a \in I$ can be obtained as the union of sets in $G$. The second (\cref{lem:rev-alg}) takes as input a prefix-free set $G$ and outputs the set of minimal symmetric subsets of $G$ (i.e., a basis for $\Dual{\SigmaAntiSym[\SumDomain]|_G}$).

\begin{lemma}
	\label{lem:disjoint-alg}
	There is a polynomial-time algorithm $\PrefixFree$ which, given as input a set $I \subseteq \SumCompletion$, outputs a \emph{prefix-free} set $G$ of size at most $|I| \cdot \NumVars$ and a list $(\Lambda_{\vec a} \subseteq G)_{\vec a \in I}$ such that for each $\vec a \in I$, $\cup \Lambda_{\vec a} = \vec a$.
\end{lemma}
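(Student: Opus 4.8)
The plan is to give a recursive, greedy algorithm that peels off variables from the front of the vectors in $I$, one coordinate at a time. The key structural observation is that if $\vec a = (a_1,\ldots,a_\ell) \in \SumCompletion$ with $\ell < \NumVars$, then the set $\vec a$ (viewed as $\vec a \times \SumDomain_{>\ell}$) decomposes as the disjoint union $\bigsqcup_{b \in \SumSet_{\ell+1}} (\vec a, b)$, where each $(\vec a, b)$ has length $\ell+1$. So any single $\vec a$ can be ``refined'' into its children; the difficulty is doing this simultaneously for all $\vec a \in I$ while keeping the total number of pieces bounded by $|I| \cdot \NumVars$ and ensuring the final collection $G$ is prefix-free.

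The algorithm I would describe proceeds level by level, for $t = 0, 1, \ldots, \NumVars$. Maintain a working collection of ``active'' vectors, initially $I$ itself. At each level, partition the active vectors by their length: those of length exactly $t$ that are prefixes of some other active vector must be refined (replace such $\vec a$ by its children $(\vec a, b)$, $b \in \SumSet_{t+1}$), recording that $\vec a$ is the union of those children; vectors that are not prefixes of any other active vector (nor have any active vector as a proper prefix) are ``finalised'' and moved into $G$. After processing level $\NumVars$, all remaining vectors have full length and are automatically prefix-free, so they too go into $G$. To produce the list $(\Lambda_{\vec a})_{\vec a \in I}$: each original $\vec a \in I$ either ends up in $G$ directly (then $\Lambda_{\vec a} = \{\vec a\}$), or was refined, in which case by transitivity of the ``is a union of'' relation we track, through the recursion, the set of full-length (or finalised) descendants that tile $\vec a$, and set $\Lambda_{\vec a}$ to be that set. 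Correctness of $\cup \Lambda_{\vec a} = \vec a$ is then immediate from the disjoint-decomposition identity above, applied inductively down the refinement tree.

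For the size bound, the point is that a vector of length $t$ only gets refined when it is a \emph{proper} prefix of another active vector; each refinement step replaces one vector by its children but is ``charged'' against the distinct active vectors lying strictly below it, and a cleaner way to see the bound is: every element eventually placed in $G$ is a prefix-extension of exactly one element of $I$, and each element of $I$ has at most $\NumVars$ such extensions appearing in $G$ because the recursion deepens the length by one each round and terminates after $\NumVars$ rounds (so at most one new piece per original element per level). Hence $|G| \le |I| \cdot \NumVars$. Prefix-freeness holds by construction: we only finalise a vector into $G$ once no other active vector has it as a prefix, and once finalised a vector is never extended, so no two elements of $G$ are in a prefix relation. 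Running time is polynomial since each of the $\NumVars$ rounds touches a collection of size $O(|I|\cdot\NumVars)$ and each vector spawns at most $\max_i|\SumSet_i|$ children.

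The main obstacle I anticipate is the bookkeeping needed to \emph{simultaneously} maintain prefix-freeness and the size bound: naively refining every prefix at every level could blow up the count (a vector could be refined, then a child refined again, etc.), so the charging argument has to be set up carefully — specifically, one must argue that a vector is refined at most once per level and that the levels are exhausted after $\NumVars$ steps, giving the clean factor of $\NumVars$. A secondary subtlety is handling the root symbol $\bot$ (the length-$0$ element) and any $\vec a \in I$ of length $0$, which should be treated as the very first level of the recursion. I would also double-check that the decomposition $\vec a = \bigsqcup_{b \in \SumSet_{\ell+1}}(\vec a, b)$ is being used with the correct indexing convention from \cref{def:sumcodes}, so that the identified sets really are disjoint and cover $\vec a$.
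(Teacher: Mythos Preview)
Your level-by-level (breadth-first) refinement is a genuinely different algorithm from the paper's. The paper does not process levels globally; instead it repeatedly finds the \emph{shortest} element $\vec a^*$ of $G$ that properly contains another element, takes the shortest $\vec a'$ properly contained in $\vec a^*$, deletes $\vec a^*$, and inserts in one shot all the ``off-path siblings'' along the chain from $\vec a^*$ down to $\vec a'$. Termination and the size bound come from a clean progress measure: the quantity $\Delta := |\{\vec b \in G : \vec b \subsetneq \vec a \text{ for some } \vec a \in G\}|$ starts at most $|I|$ and strictly decreases every iteration, so there are at most $|I|$ iterations, each increasing $|G|$ by at most $\NumVars - 1$.

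Your algorithm is correct for producing a prefix-free $G$ together with the lists $\Lambda_{\vec a}$: the key invariant ``active elements of length $>t$ at level $t$ are exactly the elements of $I$ of length $>t$'' makes the decision of whether to refine depend only on $I$, and correctness of $\cup \Lambda_{\vec a} = \vec a$ follows as you say. But the size argument you give does not hold. The claim ``at most one new piece per original element per level'' fails whenever an element of $I$ has descendants in $I$ through \emph{both} children. Take $I = \{\bot, (0,0), (1,1)\}$ with $\NumVars = 2$ and binary $\SumSet_i$: refining $\bot$ keeps both $(0)$ and $(1)$ active (each is a prefix of an element of $I$), and at level $1$ both refinements produce an off-path sibling, so two pieces --- $(0,1)$ and $(1,0)$ --- enter $G$ at level $1$, both traceable only to $\bot$. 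The bound is still fine here, but the charging scheme is wrong, and you have correctly identified this as the obstacle without resolving it.

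A correct charging for your algorithm is to charge each finalised $\vec g \notin I$ not to its originating ancestor in $I$ but to the element $\vec u \in I$ that the \emph{sibling} of $\vec g$ is heading toward (the shortest $\vec u \in I$ having the sibling as a prefix); then each $\vec u$ receives at most one charge per length strictly between the longest $I$-prefix of $\vec u$ and $\vec u$ itself. This works but requires care, and the paper's progress-measure argument sidesteps this bookkeeping entirely. Either way, note also that both your refinement step and the paper's insert $|\SumSet_j|-1$ siblings at each level, so the stated $|I|\cdot\NumVars$ bound is really tuned to the binary case $|\SumSet_j|=2$; for larger alphabets the bound becomes $|I|\cdot\sum_j(|\SumSet_j|-1)$.
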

\begin{proof}
    For $\vec a \in \SumCompletion$ and $G \subseteq \SumCompletion$, define $N_G(\vec a) \eqdef \{ \vec b \in G : \vec b \subsetneq \vec a \}$. The algorithm $\PrefixFree$ operates as follows.
    \begin{mdframed}[nobreak=true]
	\begin{enumerate}
		\item For each $\vec a \in I$, set $\Lambda_{\vec a} \eqdef \{ \vec a \}$. Set $G \eqdef I$.
		\item \label{step:disj-alg-find-term} Let $\vec a^* = (a_1,\ldots,a_{\Length^*})$ be a shortest element in $G$ (i.e., $\Length^*$ is minimal) for which $N_G(\vec a^*)$ is nonempty. If there is no such term, output $G$.
		\item \label{step:disj-alg-replace} Let $\vec a' = (a_1,\ldots,a_{\Length'})$ be a shortest element in $N_G(\vec a^*)$. (Note $\Length' > \Length^*$.)
		\item Remove $\vec a^*$ from $G$.
		\item \label{step:disj-alg-add-terms} For each $j \in \{ \Length^*+1, \ldots, \Length' \}$, and for each $b_j \in \SumSet_j \setminus \{a_j\}$, add $\vec \alpha_j \eqdef (a_1,\ldots,a_{j-1},b_j)$ to $G$.
		\item For each $\vec a \in I$ such that $\vec a^* \in \Lambda_{\vec a}$, remove $\vec a^*$ from $\Lambda_{\vec a}$ and add $\{ \vec \alpha_{\ell^*+1},\ldots,\vec \alpha_{\ell'}, \vec a' \}$.
		\item Go to \cref{step:disj-alg-find-term}.
	\end{enumerate}
    \end{mdframed}
    Clearly if the algorithm terminates then the correctness condition is satisfied, and so it remains to bound the number of iterations. Denote by $\Delta_i$ the value of $|\cup_{\vec a \in G} N_G(\vec a)|$ at the beginning of the $i$-th iteration. Clearly $\Delta_1$ cannot be larger than $|I|$, and if $\Delta_t = 0$ then the algorithm terminates at the beginning of the $t$-th iteration. We show that $\Delta$ is a progress measure for the above algorithm.
	
	\begin{claim}
		If $\Delta_i > 0$ then $\Delta_{i+1} < \Delta_i$.
	\end{claim}
	\begin{proof}
		Let $\vec a'$ be the term chosen in \cref{step:disj-alg-replace} of the $i$-th iteration. We show that at the termination of the $i$-th iteration, $\vec a'$ has been removed from $\cup_{\vec a \in G} N_G(\vec a)$, and no element has been added. By choice of $\vec a'$, $\vec a' \in N_G(\vec a^*)$, and $\vec a' \notin N_G(\vec a'')$ for any $\vec a'' \neq \vec a^*$, since that would mean $\vec a' \subset \vec a'' \subset \vec a^*$. Moreover, for every element $\vec \alpha_j$ added to $G$ in \cref{step:disj-alg-add-terms}, $\vec a' \notin N_G(\vec \alpha_j)$ since $\vec \alpha_j$ is disjoint from $\vec a'$ by construction, $N_G(\vec \alpha_j) \subset N_G(\vec a)$ since $\vec \alpha_j \subset \vec a$, and $\vec \alpha_j \notin \cup_{\vec a \in G} N_G(\vec a)$ by choice of $\vec a^*,\vec a'$.
	\end{proof}
	
	It follows that the number of iterations is at most $|I|$; since each iteration clearly runs in polynomial time, the algorithm runs in polynomial time. The bound on $|G|$ is obtained by noting that, in each iteration, the size of $G$ increases by at most $\NumVars - 1$.
\end{proof}

\begin{lemma}
    \label{lem:rev-alg}
    There is a polynomial-time algorithm $\SymSets$ which, given as input a prefix-free set $G \subseteq \SumCompletion$, outputs the set $\MinSymSets$ of all minimal symmetric subsets $H$ of $G$.
\end{lemma}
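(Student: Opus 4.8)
The plan is to reduce the task to a ``forced closure'' computation. First I would record two consequences of prefix-freeness of $G$: for $\vec a,\vec b\in\SumCompletion$ the boxes $\vec a$ and $\vec b$ intersect if and only if one of the two vectors is a prefix of the other, and likewise the reverse sets $\revvec a,\revvec b$ intersect if and only if one is a prefix of the other. Hence for any $H\subseteq G$ both $\cup H$ and $\cup H_\rev$ are \emph{disjoint} unions of $|H|$ boxes, and, since $\SumSet_i=\SumSet_{\NumVars-i+1}$ throughout this section, $|\revvec a|=|\vec a|$, so $|\cup H|=|\cup H_\rev|$ for every $H\subseteq G$; in particular $\cup H=\cup H_\rev$ if and only if $\cup H\subseteq\cup H_\rev$. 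The algorithm $\SymSets$ will, for each $\vec a\in G$, compute the \emph{minimal symmetric subset of $G$ containing $\vec a$} (or detect that there is none) and output the deduplicated list of the sets so produced.

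The closure procedure for a fixed seed $\vec a$ is: set $H\eqdef\{\vec a\}$; while some $\vec a'\in H$ has $\vec a'\not\subseteq\cup H_\rev$, pick a point $\vec x\in\vec a'\setminus\cup H_\rev$, let $\vec b$ be the element of $G$ with $\vec x\in\revvec b$ -- by prefix-freeness there is at most one such $\vec b$, and necessarily $\vec b\notin H$ -- add $\vec b$ to $H$ and repeat; abort if no such $\vec b\in G$ exists. When the loop exits we have $\cup H\subseteq\cup H_\rev$, hence $\cup H=\cup H_\rev$, so $H$ is symmetric. Correctness follows from an invariant: if $H\subseteq H^{*}$ for some symmetric $H^{*}\subseteq G$ with $\vec a\in H^{*}$, then any witness $\vec x\in\cup H\setminus\cup H_\rev$ lies in $\cup H^{*}=\cup H^{*}_\rev$, so the element $\vec b$ the procedure adds is forced to belong to $H^{*}$. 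Thus the run aborts exactly when no symmetric subset of $G$ contains $\vec a$, and otherwise the final $H$ is symmetric and contained in every symmetric superset of $\vec a$, hence is the unique minimal symmetric subset of $G$ containing $\vec a$. Since every nonempty minimal symmetric $M\subseteq G$ contains some $\vec a$, and the closure from that $\vec a$ produces a nonempty symmetric $H\subseteq M$, minimality forces $H=M$; so the deduplicated output is exactly the set of all nonempty minimal symmetric subsets of $G$. Each iteration adds a new element of $G$, so each run performs at most $|G|$ iterations.

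The one substantive issue, which I expect to be the main obstacle, is that the boxes $\vec a$ may be exponentially large in $\NumVars$, so neither the test $\vec a'\subseteq\cup H_\rev$ nor the extraction of a witness $\vec x$ can be done by enumerating points. I would handle this one forward box at a time: fixing $\vec a'\in H$, after deleting the first $\VLen{a'}$ coordinates (which are constant throughout $\vec a'$), each nonempty $\revvec b\cap\vec a'$ becomes a subbox of $\SumSet_{\VLen{a'}+1}\times\cdots\times\SumSet_{\NumVars}$ that is free on a prefix of the coordinates and fixes a \emph{suffix} of them. Whether finitely many such suffix-constrained boxes cover the whole box -- and, if not, a point that they miss -- can be decided by a depth-first search of the tree of the remaining coordinates taken in the order $X_{\NumVars},X_{\NumVars-1},\ldots,X_{\VLen{a'}+1}$: one marks the node corresponding to each $\revvec b\cap\vec a'$, stops descending at marked nodes, and observes that an unmarked node lying above no marked node certifies an uncovered point. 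Since each of the at most $|H|$ marks has at most $\NumVars$ ancestors, the search recurses through only $\Poly(|H|,\NumVars,\max_i|\SumSet_i|)$ nodes. All remaining bookkeeping (prefix comparisons, forming $\revvec b\cap\vec a'$, translating a tree node back to a point $\vec x$, and deduplication) is elementary, so $\SymSets$ runs in time $\Poly(|G|,\NumVars,\max_i|\SumSet_i|)$, polynomial in the input. Formalising this tree-search coverage test and its running-time bound is the delicate part; everything else is routine.
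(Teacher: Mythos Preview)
Your closure argument is correct: the invariant ``$H$ is contained in every symmetric subset of $G$ containing $\vec a$'' is maintained, so the loop (when it does not abort) outputs the smallest symmetric set containing $\vec a$, which is the unique minimal symmetric set through $\vec a$ (this last step quietly uses \cref{prop:minimal-symmetric-decomp} to rule out a nonempty symmetric proper subset missing $\vec a$, but that result is already available). The tree-search coverage test you sketch also works.

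The paper, however, takes a considerably simpler route that sidesteps the tree search entirely. It builds the undirected graph $\Gamma$ on $G$ with an edge $\{\vec a,\vec b\}$ whenever $\vec a\cap\revvec b\neq\varnothing$; this is an $O(\NumVars)$ coordinate-compatibility check per pair, no point enumeration needed. It then shows that a subset $H\subseteq G$ is minimal symmetric if and only if $H$ is a connected component of $\Gamma$ with $\cup H=\cup H_\rev$. Finally, the symmetry test $\cup H=\cup H_\rev$ is done purely by \emph{cardinalities}: since $G$ is prefix-free, both $\{\vec a\}_{\vec a\in H}$ and $\{\revvec a\}_{\vec a\in H}$ are disjoint families of product boxes, so $|\cup H|$, $|\cup H_\rev|$, and $|(\cup H)\cap(\cup H_\rev)|=\sum_{\vec a,\vec b\in H}|\vec a\cap\revvec b|$ are each just sums of products of the $|\SumSet_j|$.

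Your procedure is in fact computing the same connected components---each $\vec b$ you add satisfies $\vec a'\cap\revvec b\neq\varnothing$ for some $\vec a'$ already in $H$, i.e.\ you are following edges of $\Gamma$---but you are doing extra work to locate an explicit uncovered point $\vec x$ at every step. You never actually need $\vec x$: to decide which $\vec b$ to add it would suffice to scan $G\setminus H$ for any $\vec b$ with $\vec a'\cap\revvec b\neq\varnothing$, and to decide whether to stop or abort it suffices to compare the integers $|\cup H|$ and $|(\cup H)\cap(\cup H_\rev)|$. Once you make those two replacements, your algorithm collapses to the paper's, and the ``delicate part'' disappears.
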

\begin{proof}
    The algorithm $\SymSets$ operates as follows:
	\begin{enumerate}
		\item Construct a graph $\Gamma = (G,E)$ where $(\vec a, \vec b) \in E$ if and only if $\vec a \cap \revvec b \neq \varnothing$.
		\item Compute $C(\Gamma)$, the set of connected components of $\Gamma$.
		\item Output the set $\{H \in C(\Gamma) : \cup H = \cup H_\rev \}$.
	\end{enumerate}
	It is straightforward to construct $\Gamma$ and compute $C(\Gamma)$ in polynomial time. To determine whether $\cup H = \cup H_\rev$, it suffices to compute $K \eqdef |(\cup H) \cap (\cup H_\rev)|$ and then check whether $|\cup H| = |\cup H_\rev| = K$; this can also be achieved in polynomial time since $H$ is prefix-free. The correctness of the algorithm is a consequence of the following claim.
	
	\begin{claim}
		A set $H$ is minimal symmetric if and only if $H \in C(\Gamma)$ and $\cup H = \cup H_\rev$.
	\end{claim}
	\begin{proof}
		First, suppose that $H$ is minimal symmetric. Then it remains to show that $H$ is a connected component in $\Gamma$. First, suppose that there is some edge $(\vec a, \vec b) \in E$ with $a \in H$ and $b \notin H$. Since $J$ is prefix-free, $\vec b \cap (\cup H) = \varnothing$. However, since $\revvec a \in H_\rev$ and $\revvec a \cap \vec b \neq \varnothing$, we have that $\vec b \cap (\cup H_\rev) \neq \varnothing$, which is a contradiction since $\cup H = \cup H_\rev$.
		
		Hence $H$ must be a disjoint union of $k$ connected components $H_1,\ldots,H_k$. Now consider two cases: either $\cup H_1 = \cup (H_1)_\rev$, or not. If $\cup H_1 \neq \cup (H_1)_\rev$, take $\vec x \in \cup H_1 \setminus \cup (H_1)_\rev$ (so $\vec x \in \SumDomain$). Then $\revvec x \notin \cup H_i$ for all $i \in \{2,\ldots,k\}$, otherwise there would be an edge from $H_1$ to $H_i$. Hence $\vec x \notin \cup H_\rev$; but then $\cup H \neq \cup H_\rev$, which contradicts that $H$ is symmetric. Hence $\cup H_1 = \cup (H_1)_\rev$; so since $H$ is minimal, $H_1 = H$, and  $k = 1$.
		
		Now suppose conversely that $H$ is a connected component in $\Gamma$ with $\cup H = \cup H_\rev$, and let $H' \subseteq H$ be minimal symmetric. By the above, $H'$ is a connected component in $\Gamma$ contained in $H$, whence $H' = H$.
	\end{proof}
	\noindent This completes the proof.
\end{proof}

\subsection{Proof of \cref{theorem:antisym-cl}}

    We construct a constraint locator $\CL_{\SigmaAntiSym}$ for $\Enc_{\SigmaAntiSym}$.
	\begin{mdframed}[nobreak=true]
		\begin{construction}
			$\CL_{\SigmaAntiSym}(I)$:
			\begin{enumerate}
				\item Compute $(G,(\Lambda_{\vec a})_{\vec a \in I}) \eqdef \PrefixFree(I)$.
				\item Compute $\MinSymSets \eqdef \SymSets(G)$. Let $\MinSymSets_0 \eqdef \{ H \in \MinSymSets : |\cup H| \leq |\SumDomain|/2 \}$, and $\MinSymSets_1 \eqdef \{ H \in \MinSymSets : |\cup H| > |\SumDomain|/2 \}$. Let $R \eqdef \{ \bot \} \cup \bigcup_{H \in \MinSymSets_0} (\cup H) \cup \bigcup_{H \in \MinSymSets_1} (\SumDomain \setminus \cup H)$.
                    \item Let $B \eqdef (1_H)_{H \in \mathcal{H}} \in \Field^{|\MinSymSets| \times G}$, $C \eqdef ((1_{\cup H})_{H \in \MinSymSets_0}, (1_{\bot} - 1_{\SumDomain \setminus \cup H})_{H \in \MinSymSets_1}) \in \Field^{|\MinSymSets| \times R}$, $M \eqdef (1_{\Lambda_{\vec a}})_{\vec a \in I} \in \Field^{I \times G}$, and $\mathbf{I} \in \Field^{I \times I}$ be the identity matrix. Construct the block matrix
                    \[ Y = \begin{pmatrix}
                        C & -B & 0 \\
                        0 & M & -\mathbf{I}
                    \end{pmatrix} \in \Field^{(|\MinSymSets| + |I|) \times (R \sqcup G \sqcup I)} ~. \]
				\item Output a basis $Z$ for the space $\{ (\vec \alpha, \vec \gamma) \in \Field^{R} \times \Field^{I} : \exists \vec \beta,\, Y (\vec \alpha, \vec \beta, \vec \gamma)^T = 0 \}$.
			\end{enumerate}
		\end{construction}
	\end{mdframed}
        By \cref{cor:symmetric-set-size}, $|R| \leq 1 + \sum_{H \in \MinSymSets} |H|^2 + O(|H|^4/|\SumDomain|) \leq 1 + |G|^2 + O(|G|^4/|\SumDomain|)$. The bound on $\ell(n)$ follows since $|G| \leq \NumVars \cdot |I|$. Efficiency follows from the efficiency of $\PrefixFree$ and $\SymSets$.
 
        It remains to show that for all $f \in \MsgSpace, \vec \gamma \in \Field^I$, it holds that $\vec \gamma \in \supp(\Enc_{\SigmaAntiSym}(f)|_I)$ if and only if $(f|_R, \vec \gamma)^T \in \ker(Z)$. Note that, by \cref{lemma:sigmaas-basis}, $\vec \beta \in \supp(\Enc_{\SigmaAntiSym}(f)|_G)$ if and only if $B \vec \beta = B(\SumWord{}{f}|_G)$. For all $f$ and all $H \in \MinSymSets_0$, $1_H \cdot \Sigma[f]|_G = 1_{\cup H} \cdot f|_R$. Similarly, for all $H \in \MinSymSets_1$, $1_H \cdot \Sigma[f]|_G = (1_{\bot} - 1_{\SumDomain \setminus \cup H}) \cdot f|_R$. Hence $B(\SumWord{}{f}|_G) = C (f|_R)$.
        
        By the guarantee of \cref{lem:disjoint-alg}, we have that $\cup \Lambda_{\vec a} = \vec a$ for all $\vec a \in I$, and so for all functions $\Phi \colon \SumDomain \to \Field$, $M(\SumWord{}{\Phi}|_G) = \SumWord{}{\Phi}|_I$. Hence $\vec \gamma \in \supp(\Enc_{\SigmaAntiSym}(f)|_I)$ if and only if there exists $\vec \beta \in \supp(\Enc_{\SigmaAntiSym}(f)|_G)$ such that $\vec \gamma = M \vec \beta$.
        
        It follows that $\vec \gamma \in \supp(\Enc_{\SigmaAntiSym}(f)|_I)$ if and only if there exists $\vec \beta \in \supp(\Enc_{\SigmaAntiSym}(f)|_G)$ such that $B \vec \beta = C(f|_R)$ and $\vec \gamma = M \vec \beta$. By construction, this is equivalent to the existence of $\vec \beta \in \Field^G$ such that $(f|_R,\vec \beta,\vec \gamma)^T \in \ker(Y)$, and such a $\vec \beta$ exists if and only if $(f|_R,\vec \gamma)^T \in \ker(Z)$.


\newcommand{\ZKPCPEnc}{\Enc_{\PCP}}
\newcommand{\ZKPCPCodeSim}{\CodeSimulator}

\section{ZKPCP for \texorpdfstring{$\SharpP$}{\#P}}

In this section we prove our main theorem, that there exists a PZK-PCP for (decision) $\SharpSAT$. First, we recall the definition:

\[ \SharpSAT = \{ (\Phi,N) : \text{$\Phi$ is a CNF, } \sum_{x \in \Bits^n} \Phi(x) = N \} . \]

\begin{theorem}
\label{thm:main}
    There exists a PZK-PCP for $\SharpSAT$. The honest verifier is non-adaptive, and the zero-knowledge property holds against arbitrary (adaptive) polynomial-time malicious verifiers.
\end{theorem}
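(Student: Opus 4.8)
The plan is to instantiate the masked sumcheck PCP from \cref{sec:prelims} onward (the construction sketched in Section~2) and then derive perfect zero knowledge from the constraint-locator machinery built in Sections~3--6. Concretely, given a CNF $\Phi$ on $n$ variables, I would set $\NumVars \eqdef n$, take $P$ to be the multilinear extension of the truth table of $\Phi$, fix $\Degree \eqdef 2$, and let $\Field$ be a field of odd order $\Theta(\NumVars)$, large enough for the soundness bound below. On input $(\Phi,N)\in\SharpSAT$ the honest prover samples a uniformly random $Q \in \ReedMuller[\Field,\NumVars,(\Degree,\ldots,\Degree)]$ and, for each $i\in[\NumVars]$, a uniformly random $T_i \in \ReedMuller[\Field,\NumVars,(\Degree,\ldots,\Degree,0,\Degree,\ldots,\Degree)]$ (the $0$ in coordinate $i$), sets $R \eqdef Q(\vec X) - Q(\revvec X) + \sum_{i=1}^{\NumVars} X_i(1-X_i)T_i(\vec X)$, and outputs $\pi \eqdef (\pi_Q,\pi_{T_1},\ldots,\pi_{T_\NumVars},\pi_\Sigma)$, where $\pi_Q,\pi_{T_i}$ are the evaluation tables of $Q,T_i$ over $\Field^\NumVars$ and $\pi_\Sigma \eqdef \SumWord{}{P+R}$ is the table of all subcube sums of $P+R$ over $\Field^{\leq\NumVars}$ (\cref{def:sumcodes}). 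The honest verifier (a) runs amplified low-degree tests on $\pi_Q$ and each $\pi_{T_i}$ with the respective degree bounds, using $\Poly(\NumVars)$ non-adaptive queries; (b) samples $c_1,\ldots,c_\NumVars\gets\Field$ and, for each $i$, a set $Y_i\ni 0,1,c_i$ of size $\Degree+2$, queries $\pi_\Sigma(c_1,\ldots,c_{i-1},y)$ for $y\in Y_i$, and checks these interpolate a degree-$\le\Degree$ univariate $\hat g_i$; (c) queries $\pi_Q$ at $(c_1,\ldots,c_\NumVars)$ and $(c_\NumVars,\ldots,c_1)$ and $\pi_{T_i}$ at $(c_1,\ldots,c_\NumVars)$, reconstructs $\tilde R \eqdef \pi_Q(c_1,\ldots,c_\NumVars)-\pi_Q(c_\NumVars,\ldots,c_1)+\sum_i c_i(1-c_i)\pi_{T_i}(c_1,\ldots,c_\NumVars)$, and accepts iff $\hat g_1(0)+\hat g_1(1)=N$, $\hat g_{i-1}(c_{i-1})=\hat g_i(0)+\hat g_i(1)$ for $2\le i\le\NumVars$, and $\hat g_\NumVars(c_\NumVars)=P(c_1,\ldots,c_\NumVars)+\tilde R$. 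Since all query locations are fixed once the verifier's coins are drawn, the honest verifier is non-adaptive.

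Completeness is immediate. For soundness, suppose $\sum_{\vec a\in\Bits^\NumVars}\Phi(\vec a)\ne N$. If the low-degree tests pass with non-negligible probability, then $\pi_Q,\pi_{T_i}$ agree almost everywhere with genuine low-degree polynomials $Q^*,T_i^*$; setting $v^*(\vec X)\eqdef P(\vec X)+Q^*(\vec X)-Q^*(\revvec X)+\sum_i X_i(1-X_i)T_i^*(\vec X)$ (individual degree $\le\Degree$), the fact that reversal permutes $\Bits^\NumVars$ and $a_i(1-a_i)=0$ on $\{0,1\}$ gives $\sum_{\vec a}v^*(\vec a)=\sum_{\vec a}\Phi(\vec a)\ne N$ \emph{regardless} of the prover's oracles --- this is the ``$\sum R = 0$ by construction'' property that replaces the interactive challenge of \cite{BenSassonCFGRS17}. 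Conditioning on $(c_1,\ldots,c_\NumVars)$ missing the small set where the $\pi_Q,\pi_{T_i}$ disagree with $Q^*,T_i^*$, the verifier's checks become exactly the sumcheck checks for the false claim $\sum_{\vec a}v^*(\vec a)=N$ with honest final value $v^*(c_1,\ldots,c_\NumVars)$, which pass with probability $O(\NumVars\Degree/|\Field|)$; taking $|\Field|=\Theta(\NumVars)$ large enough makes the total error $<1/2$.

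The heart of the proof is perfect zero knowledge. I would view the prover's map as a linear randomised encoding $\ZKPCPEnc$ whose message is $\Phi^+$ --- the truth table of $\Phi$ augmented with $\Phi^+(\bot)\eqdef N$, which lies in the message space $\{F:F(\bot)=\sum_{\vec a}F(\vec a)\}$ of \cref{theorem:antisym-cl} precisely because $(\Phi,N)\in\SharpSAT$ --- and whose randomness is $(Q,T_1,\ldots,T_\NumVars)$. The components $\pi_Q,\pi_{T_i}$ are uniformly random Reed--Muller codewords, handled by the constraint detector of \cref{thm:rm-constraint-detector}. For $\pi_\Sigma$: since $Q|_{\Bits^\NumVars}$ is uniform and $\sum_i X_i(1-X_i)T_i$ is a uniformly random degree-$\Degree$ extension of the zero function (by the combinatorial nullstellensatz, \cref{lemma:combinatorial-nullstellensatz}, as in \cite{ChenCGOS23}), one checks that, conditioned on $G\eqdef (Q(\vec X)-Q(\revvec X))|_{\Bits^\NumVars}\in\AntiSym[\Bits^\NumVars]$ (itself uniform), $P+R$ is uniformly distributed in $\LD_{(\Degree,\ldots,\Degree)}[\Phi+G]$, so $\pi_\Sigma\sim\SigRMEnc{(\Degree,\ldots,\Degree)}{\Bits^{\NumVars}}\circ\SigmaAntiSymEnc(\Phi^+)$. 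Composing the $\ell$-constraint locators of \cref{thm:sigrm-cl} and \cref{theorem:antisym-cl} via \cref{claim:cl-composes}, and folding in the Reed--Muller detectors for $\pi_Q,\pi_{T_i}$ (using \cref{claim:basis-for-dual,claim:basis-for-image} to reconcile the randomness $Q$ shared between $\pi_Q$ and $\pi_\Sigma$), yields a $\Poly(\NumVars)$-time, $\Poly(\NumVars)$-query constraint locator for $\ZKPCPEnc$. The crucial structural point is that all dependence on the message is routed through $\SigmaAntiSymEnc$, whose locator queries $\Phi^+$ only on $\Bits^\NumVars\cup\{\bot\}$; hence the composed locator only ever queries the message at points where it is $\Poly(\NumVars)$-time computable (evaluating the CNF $\Phi$, or returning $N$), even though an arbitrary off-hypercube subcube sum of $P$ is $\SharpP$-hard.

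By \cref{claim:cl-implies-local-sim}, $\ZKPCPEnc$ is then $(\Poly,\Poly)$-locally simulatable, with a simulator $\ZKPCPCodeSim$ that supports conditional sampling (it takes a query--answer history $\OracleTable$ and a fresh query $\alpha$). The zero-knowledge simulator, given $(\Phi,N)$ and oracle access to a malicious verifier $\MalVerifier$, runs $\MalVerifier$ and answers each of its adaptively chosen proof queries $\alpha$ by $\ZKPCPCodeSim^{\Phi^+}(\OracleTable,\alpha)$ on the running history $\OracleTable$; applying the defining identity of a local simulator inductively over $\MalVerifier$'s queries shows the simulated view equals $\MalVerifier$'s real view in distribution, which is perfect zero knowledge against every polynomial-time adaptive verifier. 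Together with completeness, soundness, and non-adaptivity of the honest verifier, this proves the theorem. The step I expect to be the main obstacle is the construction of the joint constraint locator for $\ZKPCPEnc$: one must correctly track the correlations between $\pi_Q$, the $\pi_{T_i}$, and $\pi_\Sigma$ induced by the shared randomness while preserving the invariant that the message oracle is hit only on the hypercube; everything else --- the construction, completeness, soundness, non-adaptivity, and the reduction from PCP zero knowledge to local simulatability --- is routine given the results of Sections~3--6.
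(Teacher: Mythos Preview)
Your outline tracks the paper's construction closely --- same masking polynomial $R$, same proof oracles $(\pi_\Sigma,\pi_Q,\pi_{T_1},\ldots,\pi_{T_\NumVars})$, same verifier, and the same reduction of zero knowledge to local simulatability of $\SigRMEnc{\vec{\Degree}}{\Bits^{\NumVars}}\circ\SigmaAntiSymEnc$. There is, however, a concrete error in your instantiation: you take $P$ to be the \emph{multilinear extension} of $\Phi$ and set $\Degree=2$, but evaluating the multilinear extension of a CNF at an off-hypercube point is itself $\SharpP$-hard (for instance, $\hat\Phi(\tfrac12,\ldots,\tfrac12)=2^{-n}\cdot\#\mathrm{SAT}(\Phi)$). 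Your verifier must compute $P(c_1,\ldots,c_\NumVars)$ in step (c), and --- as I note below --- your simulator must compute $P(\vec\alpha)$ to enforce the link between $\pi_\Sigma$ and $\pi_Q,\pi_{T_i}$; neither is efficient with this choice of $P$. The fix is the standard one: let $P$ be the clause-product arithmetisation of $\Phi$, which is evaluable in polynomial time but has individual degree up to the number of clauses, and set $\Degree$ accordingly (this is the ``individual degree $d$'' in the paper's description of sumcheck). Relatedly, $|\Field|=\Theta(\NumVars)$ is too small: for $\sum_{\vec a}P(\vec a)=N$ over $\Field$ to be equivalent to the integer equality you need $\mathrm{char}(\Field)>2^n$.

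On the simulator, your plan to build a single constraint locator for the joint encoding with output $(\pi_\Sigma,\pi_Q,\pi_{T_1},\ldots,\pi_{T_\NumVars})$ is workable, but note that it does not follow from \cref{claim:cl-composes} alone, since $\pi_Q,\pi_{T_i}$ expose the encoding randomness rather than a further composed encoding. The paper handles this by a factorisation: it shows (i) $\pi_\Sigma$ alone is distributed as $\SigRMEnc{\vec{\Degree}}{\Bits^{\NumVars}}\circ\SigmaAntiSymEnc(\Phi^+)$, and (ii) conditioned on $\pi_\Sigma$, the values $(Q(\vec\alpha),Q(\revvec\alpha),T_1(\vec\alpha),\ldots,T_\NumVars(\vec\alpha))$ are uniform subject to the Reed--Muller constraints from \cref{thm:rm-constraint-detector} together with the pointwise link $\pi_\Sigma(\vec\alpha)-P(\vec\alpha)=Q(\vec\alpha)-Q(\revvec\alpha)+\sum_i\alpha_i(1-\alpha_i)T_i(\vec\alpha)$. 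The simulator therefore answers any query to $\pi_Q$ or $\pi_{T_i}$ at $\vec\alpha$ by first extending the simulated $\pi_\Sigma$ to $\vec\alpha$ and then sampling from this linear system. This is precisely the ``reconciliation'' you anticipate as the main obstacle, and it makes explicit that the simulator must evaluate $P$ off the hypercube --- so your assertion that the locator ``only ever queries the message at points where it is $\Poly(\NumVars)$-time computable'' is accurate only once $P$ is the efficiently evaluable arithmetisation, not the multilinear extension.
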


To formally obtain \cref{thm:main_informal}, we must define what we mean by $\SharpP$ as a class of languages (typically, $\SharpP$ is treated as a class of function problems). One natural way is by reference to the number of accepting paths of a Turing machine, i.e.: a pair language $\Language \subseteq (\Bits^* \times \N)$ is in $\SharpP$ if and only if there is a polynomial-time nondeterministic Turing machine $M$ such that $x \in \Language$ if and only if $M(x)$ has $N$ accepting paths. $\SharpSAT$ is complete for $\SharpP$ under this definition.

One can also adopt a more permissive definition of $\SharpP$ as the set of all languages that have a deterministic polynomial-time reduction to $\SharpSAT$. This permits, for example, the inclusion $\mathsf{coNP} \subseteq \SharpP$, via reduction from $\mathsf{UNSAT}$: $\Phi \mapsto (\Phi,0)$. \cref{thm:main_informal} follows from \cref{thm:main}  also with respect to this broader definition.

\begin{remark}[Sampling field elements]
    Our PZK-PCP construction relies on sampling uniformly random field elements. This can be an issue if we demand perfect simulation, as typically PPT algorithms are given a tape of uniformly random \emph{bits}, from which it is not possible to perfectly sample random field elements in strict polynomial time if the field has characteristic different from $2$. There are three ways to resolve this: \begin{inparaenum}[(1)]
        \item provide the simulator with uniformly random field elements on its random tape;
        \item use rejection sampling, so that the simulator runs in expected polynomial time, or;
        \item restrict attention to $\oplus P$, which can be verified by sumcheck over fields of characteristic $2$.
    \end{inparaenum}
\end{remark}

\subsection{Zero-Knowledge Probabilistically Checkable Proofs (ZK-PCPs)}

In this section we formally define the notion of a (perfect) zero-knowledge probabilistically-checkable proof.

\begin{definition}[PCP]
    A \defemph{probabilistically checkable proof (PCP)} for a language $\Language$ consists of a prover $\Prover$ and a PPT verifier $\Verifier$ such that the following holds.
    \begin{enumerate}
        \item \parhead{Completeness} For every $\Instance \in \Language$,
        \begin{equation*}
            \Pr_{\Proof \gets \Prover(\Instance)}[\Verifier^{\Proof}(\Instance) = 1] = 1.
        \end{equation*}

        \item \parhead{Soundness} For every $\Instance \notin \Language$ and every oracle $\MalProof$
        \begin{equation*}
            \Pr[\Verifier^{\MalProof}(\Instance) = 1] \leq \frac{1}{2}.
        \end{equation*}
    \end{enumerate}
\end{definition}

\begin{definition}[View]
	For a PCP $(\Prover, \Verifier)$ and a (possibly malicious) verifier $\MalVerifier$, $\View_{\MalVerifier,\Prover}(\Instance)$ denotes the view of $\MalVerifier$ with input $\Instance$ and oracle access to $\Proof \gets \Prover(\Instance)$. That is, $\View_{\MalVerifier,\Prover}(\Instance)$ comprises $\MalVerifier$'s random coins and all answers to $\MalVerifier$'s queries to $\Proof$. 
\end{definition}

\begin{definition}[Perfect-Zero-Knowledge PCP]
    We say that a PCP system $(\Prover, \Verifier)$ for a language $\Language$ is \defemph{perfect zero-knowledge} (a PZK-PCP) if there exists an (expected) PPT algorithm $\Simulator$ (the simulator), such that for every (possibly malicious) adaptive polynomial-time verifier $\MalVerifier$, and for every $\Instance \in \Language$, $\Simulator^{\MalVerifier}(\Instance)$ is distributed identically to $\View_{\MalVerifier,\Prover}(\Instance)$.
\end{definition}

\subsection{Our Construction}

We define the following randomised encoding function. 
\begin{definition}
    The linear randomised encoding $\ZKPCPEnc \colon \Field^\SumDomain\to \Field^{\Field^{\leq \NumVars}}$ is given by $\ZKPCPEnc(\Message) \eqdef \SigRMEnc{\vec{\Degree}}{\SumDomain}\circ \SigmaAntiSymEnc(\Message)$, for $\vec{\Degree} \in \Field^{\NumVars}$ and $\SumDomain \subseteq \Field^{\NumVars}$ such that $\Degree_i \geq 2(|\SumSet_i| - 1)$ for all $i \in [\NumVars]$.
\end{definition}

\begin{corollary}
	\label{theorem:zk-pcp-local-sim}
	The encoding function $\ZKPCPEnc$ is $(t,\ell)$-locally simulatable, for $\ell(n) = \ell_{\SigmaRM}(\ell_{\AntiSym}(n))$ and $t(n) = \Poly(\log|\Field|, \NumVars, \max_i \Degree_i, n)$, where $\ell_{\SigmaRM}(n) \eqdef n\NumVars (\NumVars(a + 1) + 1)^2$, for $a \eqdef \max_i \SumSet_i$ and $\ell_\AntiSym(n) \eqdef  n^2\NumVars^2 + 1 + O(n^4\NumVars^4/|\SumDomain|)$.
\end{corollary}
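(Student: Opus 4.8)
The plan is to derive \cref{theorem:zk-pcp-local-sim} as a direct consequence of the composition machinery for constraint locators, applied to the two constraint locators built in the preceding sections. By definition $\ZKPCPEnc = \SigRMEnc{\vec{\Degree}}{\SumDomain} \circ \SigmaAntiSymEnc$, so I would instantiate \cref{claim:cl-composes} with the ``inner'' encoding $\EncIn \eqdef \SigmaAntiSymEnc$ and the ``outer'' encoding $\EncOut \eqdef \SigRMEnc{\vec{\Degree}}{\SumDomain}$. First I would check the two hypotheses of that claim. (i) \emph{Domain compatibility}: every element of $\supp(\SigmaAntiSymEnc(F))$ is of the form $\SumWord{}{F|_{\SumDomain} + G}$ with $G \in \AntiSym[\SumDomain]$, i.e.\ a subcube sum of a function on $\SumDomain$, hence lies in $\SumCode{\SumDomain}{(\SumDomain \to \Field)}$, which is exactly the message space of $\SigRMEnc{\vec{\Degree}}{\SumDomain}$. (ii) \emph{Linearity}: $\SigmaAntiSymEnc$ appends a uniformly random element of the linear space $\AntiSym[\SumDomain]$ and applies the linear map $\Sigma$; and $\SigRMEnc{\vec{\Degree}}{\SumDomain}$ maps $\SumWord{}{F}$ to $\SumWord{}{\hat F}$, where a uniformly random degree-$\vec{\Degree}$ extension $\hat F$ of $F$ can be written as a fixed linear extension of $F$ plus a uniformly random element of the linear space $\ZCode{\SumDomain}{\ReedMuller[\Field,\NumVars,\vec{\Degree}]}$ of degree-$\vec{\Degree}$ polynomials vanishing on $\SumDomain$, which is a linear function of the randomness. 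The hypothesis $\Degree_i \geq 2(|\SumSet_i|-1) \geq |\SumSet_i|-1$ guarantees that $\SigRMEnc{\vec{\Degree}}{\SumDomain}$ is precisely the encoding of \cref{thm:sigrm-cl}.

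With these checks in hand, \cref{theorem:antisym-cl} supplies an $\ell_{\AntiSym}$-constraint locator for $\SigmaAntiSymEnc$ running in time $\Poly(a,\NumVars,n)$, and \cref{thm:sigrm-cl} supplies an $\ell_{\SigmaRM}$-constraint locator for $\SigRMEnc{\vec{\Degree}}{\SumDomain}$ running in time $\Poly(\log|\Field|,\NumVars,\max_i\Degree_i,n)$. Running \cref{construc:composed-local-simulator} on these two locators and invoking \cref{claim:cl-composes} then yields an $\ell$-constraint locator for $\ZKPCPEnc$, where $\ell$ is the composition of $\ell_{\AntiSym}$ and $\ell_{\SigmaRM}$ prescribed by that claim; substituting the explicit polynomial bounds from the two theorems gives the stated polynomial locality. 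For the running time I would note that \cref{construc:composed-local-simulator} makes a single call to each of the two locators — on a set of size $n$ and a set of size $\Poly(\log|\Field|,\NumVars,\max_i\Degree_i,n)$, respectively — and otherwise only performs Gaussian elimination and dual-basis computations (\cref{claim:basis-for-dual}, \cref{claim:basis-for-image}) on matrices whose dimensions are polynomial in those quantities; since $\max_i|\SumSet_i| \leq \max_i\Degree_i + 1$, the $\Poly(a,\NumVars,n)$ cost of the antisymmetric locator is absorbed, and the total running time is $t(n) = \Poly(\log|\Field|,\NumVars,\max_i\Degree_i,n)$.

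Finally, I would apply \cref{claim:cl-implies-local-sim} to the $\ell$-constraint locator just constructed: a linear randomised encoding with an $\ell$-constraint locator running in time $t$ is $(t,\ell)$-locally simulatable, and $\ZKPCPEnc$ is linear, being a composition of two linear randomised encodings, so this gives exactly \cref{theorem:zk-pcp-local-sim}. I do not expect a genuine obstacle here: all the real work lives in \cref{theorem:antisym-cl}, \cref{thm:sigrm-cl}, and \cref{claim:cl-composes}, and the only care required is in verifying the composition hypotheses above (domain compatibility and linearity) and in tracking parameters so that the composed $\ell$ and $t$ come out polynomially bounded; the minor discrepancy in the definition of $a$ between \cref{theorem:antisym-cl} and \cref{thm:sigrm-cl} should be reconciled in passing.
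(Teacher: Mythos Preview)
Your proposal is correct and follows exactly the same approach as the paper: invoke \cref{thm:sigrm-cl} and \cref{theorem:antisym-cl} to obtain constraint locators for the outer and inner encodings, compose them via \cref{claim:cl-composes}, and conclude local simulatability via \cref{claim:cl-implies-local-sim}. The paper's proof is a three-sentence appeal to these four results; your additional checks (domain compatibility, linearity of each encoding, and the running-time accounting) are exactly the verifications implicit in that appeal, and your remark about reconciling the two definitions of $a$ is a legitimate bookkeeping observation.
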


\begin{proof}
By \Cref{thm:sigrm-cl}, there is an $\ell_{\SigmaRM}$-constraint locator for $\SigRMEnc{\vec{\Degree}}{\SumDomain}$. By \Cref{theorem:antisym-cl} there is an $\ell_{\AntiSym}$-constraint locator for $\SigmaAntiSymEnc$. Thus \Cref{claim:cl-composes} and \Cref{claim:cl-implies-local-sim} imply the result.
\end{proof}

\begin{theorem}[Low individual degree test \cite{GoldreichS06,GurR15}]
    \label{thm:ldt}
    Let $\NumVars \in \N, \vec \Degree \in \N^{\NumVars}$ be such that $\sum_i \Degree_i < |\Field|/10$, $\varepsilon \in (0,1/10)$ and $\delta \in [0,1]$. There exists an efficient test that, given oracle access to a function $P \colon \Field^\NumVars \to \Field$, makes $O(\NumVars \Degree \cdot \Poly(1/\varepsilon) \cdot \log(1/\delta))$ queries to $P$, and:
    \begin{itemize}
        \item if $P \in \ReedMuller[\Field,\NumVars,\vec \Degree]$ then the test accepts with probability $1$;
        \item if $P$ is $\varepsilon$-far from $\ReedMuller[\Field,\NumVars,\vec \Degree]$ then the test accepts with probability at most $\delta$.
    \end{itemize}
\end{theorem}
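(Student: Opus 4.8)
The plan is to note that \cref{thm:ldt} is essentially a restatement of the individual-degree testing theorems of \cite{GoldreichS06,GurR15}, and to recall the test together with the structure of its soundness analysis. I would use the \emph{repeated axis-parallel line test}. One iteration proceeds as follows: sample a uniformly random coordinate $i \in [\NumVars]$ and a uniformly random axis-parallel line $L$ in direction $i$ (i.e., fix all coordinates except the $i$-th uniformly at random in $\Field$); query $P$ at $\Degree_i + 2$ distinct points of $L$; compute the unique univariate polynomial $g$ of degree at most $\Degree_i$ agreeing with $P$ on $\Degree_i + 1$ of these points; and reject if $g$ disagrees with $P$ at the remaining point. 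The overall test runs $t = O(\NumVars \cdot \Poly(1/\varepsilon) \cdot \log(1/\delta))$ independent iterations and accepts iff all of them accept.

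Completeness is immediate: if $P \in \ReedMuller[\Field,\NumVars,\vec{\Degree}]$, then for every $i$ the restriction of $P$ to any axis-parallel line in direction $i$ is a univariate polynomial of degree at most $\Degree_i$, so every iteration accepts with probability $1$.

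For soundness, the starting point is that $\ReedMuller[\Field,\NumVars,\vec{\Degree}]$ coincides with the $\NumVars$-fold tensor product $\bigotimes_{i=1}^{\NumVars} \mathrm{RS}[\Field,\Degree_i]$ of the Reed--Solomon codes $\mathrm{RS}[\Field,\Degree_i]$ (evaluations over $\Field$ of univariate polynomials of degree at most $\Degree_i$): a function $\Field^\NumVars \to \Field$ lies in the tensor product if and only if, for every $i$, every axis-parallel restriction in direction $i$ is a codeword of $\mathrm{RS}[\Field,\Degree_i]$, which is exactly the individual-degree-$\vec{\Degree}$ condition. The single-iteration test above is then the canonical axis-parallel tester for this tensor code, and the hypothesis $\sum_i \Degree_i < |\Field|/10$ ensures each factor has relative distance at least $9/10$. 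Invoking the robust testability analysis for such tensor products that underlies \cite{GoldreichS06,GurR15}, one gets that if $P$ is $\varepsilon$-far from $\ReedMuller[\Field,\NumVars,\vec{\Degree}]$, then a single iteration rejects with probability at least $\Poly(\varepsilon)/\NumVars$ (the $1/\NumVars$ from averaging over directions, the $\Poly(\varepsilon)$ from robustness); amplifying over $t$ iterations drives the acceptance probability below $\delta$. Each iteration queries $\Degree_i + 2 \le \Degree + 2$ points of $P$ (writing $\Degree \eqdef \max_i \Degree_i$) and runs in time $\Poly(\log|\Field|, \NumVars, \Degree)$ (sampling points and interpolating a univariate polynomial), so the total query complexity is $t \cdot O(\Degree) = O(\NumVars \Degree \cdot \Poly(1/\varepsilon) \cdot \log(1/\delta))$, as claimed.

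The main obstacle, were one to prove this from scratch, is exactly the soundness analysis: establishing robust local testability of the $\NumVars$-fold tensor of Reed--Solomon codes while losing only a $1/\Poly(\NumVars)$ factor in soundness, rather than the $2^{-\Omega(\NumVars)}$ loss incurred by naively iterating a generic two-factor tensor-testing theorem. This is the technical heart of \cite{GoldreichS06,GurR15}, and their analyses exploit the large field size (the hypothesis $\sum_i \Degree_i < |\Field|/10$) and the structure of Reed--Solomon codes; since we invoke those results as a black box, nothing further is required here.
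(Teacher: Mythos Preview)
The paper does not prove \cref{thm:ldt}; it is stated with citations to \cite{GoldreichS06,GurR15} and used as a black box in \cref{construction:zkpcp}. Your proposal is a reasonable sketch of the content of those cited works, which is already more than the paper provides.
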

	
\newcommand{\MaskPoly}{R}
\newcommand{\ZeroP}{Z}

\begin{mdframed}[nobreak=true]
\begin{construction}
	\label{construction:zkpcp}
    A PZK-PCPP for sumcheck. Both parties receive the common input $(\Field, \NumVars, \Degree, \Subcube, \SumVal)$, and oracle access to the evaluation table of $\SCPoly \colon \Field^\NumVars \to \Field$. The proof proceeds as follows. 
    
    \noindent\textbf{Proof:} 
    \begin{enumerate}[nolistsep]
        \item  Sample a polynomial $\RandPoly \gets \Polys{\Field}{\vec{\Degree}}{\NumVars}$ uniformly at random. Compute the full evaluation table $\Proof_\RandPoly$ of $\RandPoly$.

        \item For each $i \in [\NumVars]$, sample $\RandLDPoly_i \gets \Polys{\Field}{\vec{\Degree}_i}{\NumVars}$, where $\vec{\Degree}_i = (\Degree, \dots, \Degree-|\Subcube|, \dots, \Degree)$ is the vector which takes the value $\Degree$ in every coordinate except for the $i$-th location which takes the value $\Degree-|\Subcube|$. Compute the full evaluation table $\Proof_{\RandLDPoly_i}$ of $\RandLDPoly_i$.

        \item Define $\MaskPoly(\vec{X}) \eqdef \RandPoly(\vec{X}) - \RandPoly_\rev(\vec{X}) + \sum_{i=1}^{\NumVars} \ZeroP_\Subcube(X_i) \RandLDPoly_i(\vec{X})$, where $\ZeroP_\Subcube \eqdef \prod_{a \in \Subcube} (X - a)$, and $\RandPoly_\rev(\vec{X}) \eqdef \RandPoly(\revvec{X})$.

        \item Compute $\Proof_\Sigma \eqdef \SumWord{\Subcube^\NumVars}{\SCPoly+\MaskPoly} \in \Field^{\leq \NumVars}$.

        \item Output $\Proof \eqdef (\Proof_\Sigma,\Proof_{\RandPoly}, \Proof_{\RandLDPoly_1}, \dots, \Proof_{\RandLDPoly_\NumVars})$.
    \end{enumerate}
    \vspace{0.5cm}
    \noindent\textbf{Verifier:} 
    \begin{enumerate}[nolistsep]
        \item Run the sumcheck verifier on $\Proof$, yielding a claim ``$(F+R)(\vec \alpha) = \beta$'' for a uniformly random $\vec \alpha \in \Field^\NumVars$. Query $F(\vec \alpha)$. Evaluate $R(\vec \alpha)$ by querying $\Proof_{\RandLDPoly_1}, \dots, \Proof_{\RandLDPoly_\NumVars}$ at $\vec \alpha$, and $\Proof_{\RandPoly}$ at $\vec \alpha, \revvec \alpha$. If $F(\vec \alpha) + R(\vec \alpha) \neq \beta$, reject.
        \item Perform a low individual degree test (\cref{thm:ldt}) on each of $\Proof_{\RandLDPoly_1}, \dots, \Proof_{\RandLDPoly_\NumVars}, \Proof_{\RandPoly}$, with $\varepsilon = \frac{1}{4(m+2)}$, $\delta = 1/2$, and reject if any test rejects.
        \item If none of the above tests reject, then accept.
    \end{enumerate}
\end{construction}
\end{mdframed}

\newcommand{\RMCD}{\mathsf{CD}}

Let $\ZKPCPCodeSim$ be the local simulator guaranteed by \cref{theorem:zk-pcp-local-sim}, and let $\RMCD$ be the $\TimeBound$-constraint detector for $\ReedMuller$ guaranteed by \cref{thm:rm-constraint-detector}. 
\begin{mdframed}[nobreak=true]
	\begin{construction}[Simulator]
        A simulator for \Cref{construction:zkpcp}. It receives as input $(\Field, \NumVars, \Degree, \Subcube, \SumVal, \RandSumVal)$, and oracle access to $\SCPoly \in \Polys{\Field}{\vec{\Degree}}{\NumVars}$.

        \noindent \underline{$\Simulator^{F}(\OracleTable_{\Sigma},\OracleTable_{\RandPoly},\OracleTable_{T_1},\ldots,\OracleTable_{T_\NumVars},(\Oracle, \vec \alpha))$:}
		\begin{enumerate}[nolistsep]
			\item Sample $\beta \gets \ZKPCPCodeSim^{F}(\OracleTable_\Sigma, \vec{\alpha})$, and store $(\vec{\alpha},\beta)$ in $\OracleTable_\Sigma$.
			\item If $\vec{\alpha} \in \Field^{\leq \NumVars}$ is a query to $\Proof_{\Sigma}$, return $\beta$.

            \item Otherwise, for each $i \in [\NumVars]$, compute $Z_{T_{i}} \eqdef \RMCD_{\vec{\Degree_i}}(\supp(\OracleTable_{T_i})\Union \{\vec{\alpha}\})$ and $ Z_{\RandPoly} \eqdef \RMCD_{\vec{\Degree}}(\supp(\OracleTable_{\RandPoly}) \Union \{\vec{\alpha}, \vec{\alpha}_{\rev}\})$.

            \item Let $z_{\Sigma,\vec \alpha} \eqdef (1,-1,\ZeroP_\Subcube(\alpha_1),\ldots,\ZeroP_\Subcube(\alpha_\NumVars))$. For each $i \in [\NumVars]$, let $\gamma_i, \gamma_{\RandPoly}, \gamma_{\RandPoly_{\rev}}\in \Field$ be uniformly random solutions to the linear system:
            \begin{align*}
			z_{\Sigma,\vec \alpha} \cdot (\gamma_{\RandPoly}, \gamma_{\RandPoly_{\rev}}, \gamma_1,\ldots,\gamma_\NumVars) &= \beta - F(\vec{\alpha}) \\
                Z_{\RandPoly} (\OracleTable_{\RandPoly}, \gamma_{\RandPoly}, \gamma_{\RandPoly_{\rev}})^T &= 0 \\
                \forall i \in [\NumVars],\, Z_{\RandLDPoly_{i}}(\OracleTable_{\RandLDPoly_{i}}, \gamma_i)^T &= 0
		\end{align*}
            where we view each $\OracleTable$ as a vector in $\Field^{\supp(\OracleTable)}$.
            
			\item For each $i$, add $(\vec \alpha, \gamma_i)$ to $\OracleTable_{T_i}$. Add $(\vec\alpha, \gamma_\RandPoly)$ and $(\revvec\alpha, \gamma_{\RandPoly_\rev})$ to $\OracleTable_\RandPoly$.
			\item Return $\OracleTable_{\Oracle}(\vec \alpha)$.
		\end{enumerate}
	\end{construction}
\end{mdframed}

\begin{lemma}
	\label{lem:zkpcp-sumcheck}
    \cref{construction:zkpcp} is a perfect zero-knowledge PCP of proximity for sumcheck. That is, if $\NumVars\Degree < |\Field|/10$, and $\Degree \geq |\Subcube|+1$, then
    \begin{itemize}
        \item \textbf{Completeness:} if $\SCPoly \in \ReedMuller[\Field,\NumVars,\Degree]$ and $\sum_{\vec a \in \Subcube^\NumVars} \SCPoly(\vec a) = \gamma$, then the verifier accepts $\Proof$ with probability $1$;
        \item \textbf{Soundness:} if $\SCPoly \in \ReedMuller[\Field,\NumVars,\Degree]$ and $\sum_{\vec a \in \Subcube^\NumVars} \SCPoly(\vec a) \neq \gamma$, then for any proof $\MalProof$, the probability that the verifier accepts $\MalProof$ is at most $1/2$;
        \item \textbf{Perfect zero knowledge:} there exists an algorithm $\Simulator$ running in time $\Poly(\log |\Field|, \NumVars, \Degree, |\Subcube|)$ such that, for any (malicious) PPT verifier $\MalVerifier$, $\Simulator^{\SCPoly,\MalVerifier}(\Field,\NumVars,\Degree,\Subcube,\SumVal)$ outputs the distribution $\View_{\MalVerifier,\Prover}(\Instance)$.
    \end{itemize}
    The proof is of length $O(\log |\Field| \cdot \NumVars \cdot |\Field|^\NumVars)$.
\end{lemma}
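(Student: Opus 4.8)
The plan is to dispatch completeness and soundness by routine arguments and to spend the real effort on perfect zero knowledge, which I will reduce to producing an efficient constraint locator for a linear randomised encoding. For completeness I would first verify that the masking polynomial $\MaskPoly$ of \cref{construction:zkpcp} has individual degree at most $\Degree$ --- the term $\ZeroP_\Subcube(X_i)\RandLDPoly_i$ contributes degree $|\Subcube|+(\Degree-|\Subcube|)=\Degree$ in $X_i$ and $\Degree$ in the remaining variables, using $\Degree\ge|\Subcube|$ --- and that $\sum_{\vec a\in\Subcube^\NumVars}\MaskPoly(\vec a)=0$, since the $\RandPoly$ and $\RandPoly_\rev$ contributions cancel (reversal is a bijection of $\Subcube^\NumVars$) and $\ZeroP_\Subcube(a_i)=0$ for $a_i\in\Subcube$. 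Hence $\SCPoly+\MaskPoly\in\ReedMuller[\Field,\NumVars,\Degree]$ and sums to $\SumVal$, so $\Proof_\Sigma=\SumWord{\Subcube^\NumVars}{\SCPoly+\MaskPoly}$ is a correct sumcheck oracle, the verifier's final check passes (it recomputes $\MaskPoly(\vec\alpha)$ correctly from $\Proof_\RandPoly$ and the $\Proof_{\RandLDPoly_i}$), and all individual-degree tests accept with probability $1$. For soundness I would fix $\SCPoly\in\ReedMuller$ with $\sum_{\Subcube^\NumVars}\SCPoly\ne\SumVal$ and a malicious proof: if some $\Proof_\RandPoly^*,\Proof_{\RandLDPoly_i}^*$ is $\varepsilon$-far from the relevant Reed--Muller code its test rejects with probability $\ge 1/2$, and otherwise $\NumVars\Degree<|\Field|/10$ makes unique decoding within radius $\varepsilon$ possible, giving codewords $\hat\RandPoly,\hat\RandLDPoly_i$ and $\hat\MaskPoly\eqdef\hat\RandPoly-\hat\RandPoly_\rev+\sum_i\ZeroP_\Subcube(X_i)\hat\RandLDPoly_i$ of individual degree $\le\Degree$ with $\sum_{\Subcube^\NumVars}(\SCPoly+\hat\MaskPoly)\ne\SumVal$. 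A union bound over the $\NumVars+2$ point evaluations the verifier uses to reconstruct $\MaskPoly(\vec\alpha)$ (two of them at the uniform points $\vec\alpha,\revvec\alpha$) shows the reconstructed value equals $\hat\MaskPoly(\vec\alpha)$ except with probability $(\NumVars+2)\varepsilon=1/4$, and the standard soundness of the unrolled sumcheck PCP \cite{LundFKN92} for $\SCPoly+\hat\MaskPoly$ bounds by $\NumVars\Degree/|\Field|<1/10$ the probability that the extracted value $\beta$ equals $(\SCPoly+\hat\MaskPoly)(\vec\alpha)$ with all path checks passing; so the acceptance probability is at most $\max(1/2,\ 1/4+\NumVars\Degree/|\Field|)\le 1/2$.

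\textbf{Zero knowledge: setting up the encoding.}
I would view the honest proof as the output of a \emph{linear} randomised encoding $\Enc^{+}\colon\Field^{\Subcube^\NumVars}\to\Field^{D^{+}}$ on message $\SCPoly|_{\Subcube^\NumVars}$ (carrying along the public value $\SumVal=\sum_{\Subcube^\NumVars}\SCPoly$) with randomness $(\RandPoly,\RandLDPoly_1,\ldots,\RandLDPoly_\NumVars)$, where $D^{+}$ is the disjoint union of $\Field^{\leq\NumVars}$ (the domain of $\Proof_\Sigma$) and $\NumVars+1$ copies of $\Field^{\NumVars}$ (the domains of $\Proof_\RandPoly,\Proof_{\RandLDPoly_1},\ldots,\Proof_{\RandLDPoly_\NumVars}$), and $\Enc^{+}(\SCPoly;\RandPoly,\ldots)\eqdef(\SumWord{\Subcube^\NumVars}{\SCPoly+\MaskPoly},\RandPoly,\RandLDPoly_1,\ldots,\RandLDPoly_\NumVars)$. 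The first step is to pin down the marginal of the $\Proof_\Sigma$-block: by the combinatorial nullstellensatz (\cref{lemma:combinatorial-nullstellensatz}) the linear map $(\RandLDPoly_i)_i\mapsto\sum_i\ZeroP_\Subcube(X_i)\RandLDPoly_i$ has image exactly the space of degree-$\le\Degree$ polynomials vanishing on $\Subcube^\NumVars$, so by \cref{claim:linear-trans-uniform} this term is uniform over that space; together with $\RandPoly|_{\Subcube^\NumVars}$ being uniform and \cref{prop:antisym}, \cref{claim:linear-trans-uniform} then gives that $\SCPoly+\MaskPoly$ restricted to $\Field^{\NumVars}$ is a uniformly random degree-$\le\Degree$ extension of $\SCPoly|_{\Subcube^\NumVars}+G$ for $G$ uniform in $\AntiSym[\Subcube^\NumVars]$ --- which is exactly $\ZKPCPEnc$ applied to $\SCPoly|_{\Subcube^\NumVars}$. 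Hence, by \cref{theorem:zk-pcp-local-sim} (which rests on \cref{thm:sigrm-cl}, \cref{theorem:antisym-cl} and the composition \cref{claim:cl-composes}), the $\Proof_\Sigma$-block admits the polynomial-time local simulator $\ZKPCPCodeSim$, making polynomially many queries to $\SCPoly$ and using the known $\SumVal$.

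\textbf{Zero knowledge: constraint location for $\Enc^{+}$.}
The remaining claim I would prove is that $\Enc^{+}$ itself admits an $\ell'$-constraint locator running in polynomial time for some polynomial $\ell'$; then \cref{claim:cl-implies-local-sim} makes $\Enc^{+}$ locally simulatable, and the simulator $\Simulator$ of \cref{construction:zkpcp} --- which samples $\MalVerifier$'s coins, runs $\MalVerifier$, and answers each (possibly adaptive) query by invoking this local simulator on the running query--answer list, reading the implicit message through its $\SCPoly$-oracle and the public $\SumVal$ --- outputs a sample distributed identically to $\View_{\MalVerifier,\Prover}(\Instance)$, in polynomial time. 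I would build the constraint locator from the three pieces mirrored in the $\Simulator$ construction: (i) the constraint locator for the $\Proof_\Sigma$-block from \cref{theorem:zk-pcp-local-sim}; (ii) the Reed--Muller constraint detectors of \cref{thm:rm-constraint-detector} for the $\Proof_\RandPoly$- and $\Proof_{\RandLDPoly_i}$-blocks (all queries to which are automatically full-length, since those oracles have domain $\Field^{\NumVars}$); and (iii) the \emph{gluing} constraints $z_{\Sigma,\vec\alpha}\eqdef(1,-1,\ZeroP_\Subcube(\alpha_1),\ldots,\ZeroP_\Subcube(\alpha_\NumVars))$ encoding the pointwise identity $\Proof_\Sigma(\vec\alpha)=\SCPoly(\vec\alpha)+\Proof_\RandPoly(\vec\alpha)-\Proof_\RandPoly(\revvec\alpha)+\sum_i\ZeroP_\Subcube(\alpha_i)\Proof_{\RandLDPoly_i}(\vec\alpha)$, valid for every $\vec\alpha\in\Field^{\NumVars}$. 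The crux is to show that for every query set $S$ the dual $\Dual{(\Enc^{+}|_S)}$ is spanned by (i)+(ii)+(iii); this is the analogue ``one level up'' of \cref{theorem:sigma-rm-dual-z}, and I would prove it by the same template --- a flattening argument that uses the gluing constraints to strip the support of an arbitrary constraint off the full-length coordinates paired with the low-degree blocks, reducing it to the sum of a constraint on the $\Proof_\Sigma$-block, constraints on the low-degree blocks, and gluings, followed by a dimension count. What makes this work is that the low-degree oracles live only on $\Field^{\NumVars}$, so no constraint can couple a partial-sum coordinate of $\Proof_\Sigma$ (of length $<\NumVars$) to the low-degree blocks except through algebraic structure already captured by $\ZKPCPCodeSim$, keeping the genuinely hard subcube-sum phenomena confined to the $\Proof_\Sigma$-block. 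The proof-length bound is then immediate: the proof is $\NumVars+2$ oracles --- $\Proof_\Sigma$ on $\Field^{\leq\NumVars}$ ($O(|\Field|^{\NumVars})$ entries) and $\NumVars+1$ oracles on $\Field^{\NumVars}$ --- each entry in $\Field$, totalling $O(\log|\Field|\cdot\NumVars\cdot|\Field|^{\NumVars})$.

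\textbf{The main obstacle.}
I expect the decomposition of $\Dual{(\Enc^{+}|_S)}$ in step (iii) to be the main difficulty: proving that the gluing constraints together with the component-wise constraints span it amounts to showing that revealing a partial sum of $\Proof_\Sigma$ never over-constrains the low-degree oracles at full-length points beyond the gluing relations, which requires tracking precisely how constraints on subcube sums interact with the antisymmetric masking and with the low-degree randomness, and hence verifying that the explicit per-query linear system solved by $\Simulator$ reproduces exactly the real conditional distribution. This is where the structure established in \cref{theorem:sigma-rm-dual-z} and the layered form $\ZKPCPEnc=\SigRMEnc{\vec{\Degree}}{\SumDomain}\circ\SigmaAntiSymEnc$ are essential.
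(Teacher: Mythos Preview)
Your completeness and soundness arguments track the paper's. For zero knowledge, the high-level plan and the simulator you describe are essentially those of the paper, but your proposed \emph{justification} takes a longer route than needed.

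You set out to build a single constraint locator for the full encoding $\Enc^{+}$ by proving that $\Dual{(\Enc^{+}|_S)}$ is spanned by (i) the $\ZKPCPEnc$-constraints on the $\Proof_\Sigma$-block, (ii) Reed--Muller constraints on each low-degree block, and (iii) the gluing rows $z_{\Sigma,\vec\alpha}$ --- and you correctly flag this spanning claim as the hard part. The paper sidesteps it by a \emph{factored} argument. First it shows (using the combinatorial nullstellensatz and \cref{prop:antisym}) that the \emph{marginal} of $\Proof_\Sigma$ is exactly $\ZKPCPEnc(\SCPoly|_{\Subcube^\NumVars})$, so $\ZKPCPCodeSim$ alone handles all $\Proof_\Sigma$-queries. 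Then, fixing $\Proof_\Sigma$, it observes that the \emph{conditional} distribution $D(\Proof_\Sigma)$ of the low-degree oracles is, by definition, that of uniform Reed--Muller codewords subject only to the gluing relations at full-length points; restricted to any finite $S$, this is exactly the uniform solution to the explicit linear system $\mathcal{L}_S$ assembled from the Reed--Muller constraint detectors of \cref{thm:rm-constraint-detector} together with the rows $z_{\Sigma,\vec\alpha}$ for $\vec\alpha\in S$. A one-step induction on $S$ then shows that the simulator (which extends the current tables by a uniform solution to the enlarged system) maintains the correct joint distribution. No decomposition of $\Dual{(\Enc^{+}|_S)}$ is ever proved, because the coupling between $\Proof_\Sigma$ and the low-degree blocks is entirely absorbed into the conditioning step. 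What your approach would buy is a cleaner fit with the locally-simulatable-encoding framework (one object, one locator); what the paper's approach buys is that your ``main obstacle'' simply does not arise.

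One technical slip: you declare $\Enc^{+}\colon\Field^{\Subcube^\NumVars}\to\Field^{D^{+}}$ with message $\SCPoly|_{\Subcube^\NumVars}$, but the gluing relation at a general full-length $\vec\alpha$ involves $\SCPoly(\vec\alpha)$, so the joint encoding (and any constraint locator for it) must read $\SCPoly$ outside $\Subcube^\NumVars$. This is harmless in the PCP-of-proximity setting --- the simulator has oracle access to $\SCPoly$, and indeed the paper's simulator queries $\SCPoly(\vec\alpha)$ in its step~4 --- but it means $\Enc^{+}$ is not a function of $\SCPoly|_{\Subcube^\NumVars}$ alone.
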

\begin{proof}
Completeness is straightforward; we prove soundness and zero knowledge.

\parhead{Soundness}
Suppose that $\SCPoly$ is such that $\sum_{\vec a \in \Subcube^\NumVars} \SCPoly(\vec a) = \tilde \gamma \neq \gamma$, and let $\Proof^* =(\Proof_\Sigma,\Proof_{\RandPoly},\Proof_{\RandLDPoly_1},\ldots,\Proof_{\RandLDPoly_\NumVars})$.  If any of $\Proof_{\RandPoly},\Proof_{\RandLDPoly_1},\ldots,\Proof_{\RandLDPoly_\NumVars}$ are more than $\varepsilon$-far from their corresponding $\ReedMuller$ codes, then the verifier rejects with probability $1/2$ as required. Otherwise, there exist polynomials $\RandPoly \in \Polys{\Field}{\Degree}{\NumVars}$ and $\RandLDPoly_i \in \Polys{\Field}{\vec \Degree_i}{\NumVars}$ for each $i \in [\NumVars]$ whose evaluations over $\Field^\NumVars$ are each $\varepsilon$-close to their corresponding codewords in $\Proof^*$.

Let $\hat{\MaskPoly} \eqdef \RandPoly - \RandPoly_{\rev} + \sum_{i=1}^\NumVars X_i (1-X_i) \RandLDPoly_i$. Then $\hat{\MaskPoly} \in \Polys{\Field}{\Degree}{\NumVars}$ and, by construction, $\sum_{\vec a \in \Subcube^\NumVars} \SCPoly(\vec a) + \hat{\MaskPoly}(\vec a) = \tilde \gamma$. By the soundness of the (standard) sumcheck PCPP, with probability $1 - \frac{\NumVars\Degree}{|\Field|} \geq 3/4$ the verifier's claim ``$(\SCPoly + \hat{\MaskPoly})(\vec \alpha) = \beta$'' is false. Since $\vec \alpha$ is uniformly random in $\Field^\NumVars$, and the evaluation of $\MaskPoly$ queries each $\RandLDPoly_i$ at $\vec \alpha$ and $\RandPoly$ at $\vec \alpha,\revvec \alpha$, the probability that $\MaskPoly(\vec \alpha)$ as computed by the verifier is equal to $\hat{\MaskPoly}(\vec \alpha)$ is at least $3/4$. The stated soundness follows by a union bound.

\parhead{Zero knowledge}
    Fix $\Proof_\Sigma \in \supp(\ZKPCPEnc(\SCPoly|_{\Subcube^\NumVars}))$, and consider the following distribution $D(\Proof_\Sigma)$ on $(\Proof_{\RandPoly}, \Proof_{\RandLDPoly_1},\ldots,\Proof_{\RandLDPoly_\NumVars})$:
    \begin{center}
    \begin{minipage}{0.9\textwidth}
        \item[] Sample $\RandPoly \gets \Polys{\Field}{\vec{\Degree}}{\NumVars}$, $\RandLDPoly_i \gets \Polys{\Field}{\vec{\Degree}_i}{\NumVars}$ for $i \in [\NumVars]$, such that $\Proof_\Sigma(\vec \alpha) = \SCPoly(\vec \alpha) + \RandPoly(\vec \alpha) - \RandPoly(\revvec \alpha) + \sum_{i=1}^{\NumVars} \alpha_i (1 - \alpha_i) \RandLDPoly_i(\vec \alpha)$ for all $\vec \alpha \in \Field^{\NumVars}$. Let $\Proof_{\RandPoly}, \Proof_{\RandLDPoly_1},\ldots,\Proof_{\RandLDPoly_\NumVars}$ be the corresponding evaluation tables.
    \end{minipage}
    \end{center}
    
    \noindent By construction, $D(\Proof_{\Sigma})$ is the distribution of $(\Proof_{\RandPoly}, \Proof_{\RandLDPoly_1},\ldots,\Proof_{\RandLDPoly_\NumVars})$ produced by the sumcheck prover, conditioned on a fixed $\Proof_{\Sigma}$. For any $S \subseteq \Field^\NumVars$,
    \begin{align*}
        \Pr_{D(\Proof_{\Sigma})}\left[\begin{array}{c}
        \Proof_{\RandPoly}|_{S \cup S_{\rev}} = \OracleTable_{\RandPoly} \\ \forall i,\, \Proof_{\RandLDPoly_i}|_S = \OracleTable_{\RandLDPoly_i}
        \end{array}
        \right]
        &= \Pr\left[\begin{array}{c}
        Q|_{S \cup S_{\rev}} = \OracleTable_{\RandPoly} \\ \forall i,\, \RandLDPoly_i|_S = \OracleTable_{\RandLDPoly_i}
        \end{array}
        \middle\vert
        \begin{array}{r}
            \RandPoly \gets \Polys{\Field}{\vec{\Degree}}{\NumVars} \\
            \RandLDPoly_i \gets \Polys{\Field}{\vec{\Degree}_i}{\NumVars} \\[5pt]
            z_{\Sigma,\vec \alpha} \cdot (\RandPoly(\vec \alpha),\RandPoly(\revvec \alpha),\RandLDPoly_1(\vec \alpha),\ldots,\RandLDPoly_\NumVars(\vec \alpha)) = \\
            \Proof_\Sigma(\vec \alpha) - \SCPoly(\vec \alpha)\,\forall 
            \vec \alpha \in S
        \end{array}
        \right]~.
    \end{align*}
    This implies, by the definition of a $\TimeBound$-constraint detector, that under $D(\Proof_\Sigma)$, $\Proof_{\RandPoly}|_{S \cup S_{\rev}}$ and $\Proof_{\RandLDPoly_i}|_S$, $i \in [\NumVars]$, are distributed as a uniformly random solution to the linear system $\mathcal{L}_S$ given by
    \begin{align*}
        Z_{\RandPoly} \cdot \Proof_{\RandPoly}|_{S \cup S_{\rev}} &= 0 \\
        \forall i \in \NumVars,\, Z_{\RandLDPoly_i} \cdot \Proof_{\RandLDPoly_i}|_S &= 0 \\
        Z_{\Sigma} (\Proof_{\RandPoly}|_{S \cup S_{\rev}}, \Proof_{\RandLDPoly_i}|_S) &= \Proof_\Sigma|_S - \SCPoly|_S
    \end{align*}
    where $Z_{\RandPoly} \eqdef \CD_{\Degree}(S \cup S_{\rev})$, $Z_{\RandLDPoly} \eqdef \CD_{\vec \Degree_i}(S)$, and $Z_{\Sigma} \eqdef (z_{\Sigma,\vec \alpha})_{\vec \alpha \in S}$.

    Suppose, by induction, that $\OracleTable_{\RandPoly} \colon S \cup S_\rev \to \Field$, $\OracleTable_{\RandLDPoly_i} \colon S \to \Field$, $i \in [\NumVars]$, are distributed as a uniformly random solution to $\mathcal{L}_S$. Let $U \eqdef S \cup \{ \vec \alpha \}$. By construction, $\OracleTable_{\RandPoly} \cup \{ (\vec \alpha, \gamma_\RandPoly), (\revvec \alpha, \gamma_{\RandPoly_\rev}) \}$ and $\OracleTable_{\RandLDPoly_i} \cup \{ (\vec \alpha, \gamma_{\RandLDPoly_i}) \}$, $i \in [\NumVars]$, are distributed as a uniformly random solution to $\mathcal{L}_U$.

    It remains to show that, in the real world, $\Proof_\Sigma$ is distributed as $\ZKPCPEnc(F|_{\Subcube^\NumVars})$; i.e., a uniformly random element $w$ of $\SigmaRM$ such that $w(\vec a) = F(\vec a) + A(\vec a)$ for all $\vec a \in \Subcube^\NumVars$, for a uniformly random element $A$ of $\SigmaAntiSym$. By \cref{prop:antisym}, for random $\RandPoly$, $(\RandPoly - \RandPoly_\rev)|_{\Subcube^\NumVars}$ is distributed as $A$ since $\Subcube^\NumVars$ is a (subset of an) interpolating set.

    By the combinatorial nullstellensatz \cite{Alon99,ChenCGOS23}, the polynomial $Z = \sum_{i=1}^\NumVars \ZeroP_\Subcube \RandLDPoly_i$ is distributed as a uniformly random element of $\ZCode{\Subcube^{\NumVars}}{\ReedMuller[\Field,\NumVars,\Degree]}$. Observe that, for any degree-$\Degree$ extension $\hat{f}$ of a function $f \colon \Subcube^\NumVars \to \Field$, $\hat{f} + Z$ is a uniformly random degree-$\Degree$ extension of $f$. Hence, $\SCPoly + \MaskPoly = \SCPoly + \RandPoly - \RandPoly_\rev + Z$ is distributed as a uniformly random degree-$\Degree$ extension of $(\SCPoly + A)|_{\Subcube^\NumVars}$. Therefore $\Proof_{\Sigma} = \SumWord{\Subcube^\NumVars}{\SCPoly + \MaskPoly}$ is distributed as $w$.
\end{proof}

\appendix
\section{Deferred proofs}
\label{app:linear-codes}

\subsection{Proof of \cref{claim:unconstrained-equiv}}
\label{sec:unconstrained-equiv}

\UnconstrainedEquiv*

\begin{proof}
	First we show the forward implication. Assume $\Subdomain$ is unconstrained with respect to $\Code$. By definition, this means that any vector $\Constraint \in \Field^\Subdomain$ satisfying $\Constraint\cdot \Codeword|_{\Subdomain} = 0$ for all $\Codeword \in \Code$ must be identically zero. In other words, $(\Code|_{\Subdomain})^\perp = \{\vec{0}\}$. However, we know that $\dim(\Code|_{\Subdomain}) + \dim((\Code|_{\Subdomain})^\perp) = |\Subdomain|$. As $\dim((\Code|_{\Subdomain})^\perp) = 0$, it must be that $\dim(\Code|_{\Subdomain}) = |\Subdomain|$. Therefore $\Code|_{\Subdomain}$ is an $|\Subdomain|$-dimensional subspace of $\Field^{\Subdomain}$, meaning $\Code|_{\Subdomain}$ is isomorphic to $\Field^{\Subdomain}$. Thus, for each $x \in \Subdomain$ there exists a codeword $\Codeword_{x}|_{\Subdomain} \in \Code|_{\Subdomain}$ such that $\Codeword_{x}|_{\Subdomain}(x) = 1$ and $\Codeword_{x}|_{\Subdomain}(y) = 0$ for all $y \in \Subdomain\setminus\{x\}$.
	
	For the reverse implication, let $\Constraint\in \Field^\Subdomain$ be a constraint satisfying $\Constraint\cdot\Codeword|_{\Subdomain} = 0$ for all $\Codeword \in \Code$. We will show that $\Constraint$ is identically zero. By assumption, for each $x \in \Subdomain$, there exists a codeword $\Codeword_x \in \Code$ satisfying $\Codeword_x(x) = 1$ and $\Codeword_x(y) = 0$ for all $y \in \Subdomain\setminus\{x\}$. Substituting $\Codeword_x$ into our constraint equation for each $x \in \Subdomain$ give $\Constraint(x) = \Constraint\cdot\Codeword_x|_{\Subdomain} = 0$ for each $x \in \Subdomain$, that is, $\Constraint$ is identically zero. In other words, $\Subdomain$ is unconstrained with respect to $\Code$.
\end{proof}

\subsection{Proof of \cref{lem:general-zcode-constraint-implies-constraint}}
\label{sec:general-zcode-constraint-implies-constraint}
\GeneralZCodeConstraint*
\begin{proof}  
    First we show the forward implication. If $\Subdomain \Union \SubsetTuple$ is constrained with respect to $\Code$ then there exists a non-zero constraint $\Constraint \in \Field^{\Subdomain\Union\SubsetTuple}$ such that for every $\Codeword \in \Code$ we have 
    \begin{equation*}
        \sum_{x \in \Subdomain\Union\SubsetTuple} \Constraint(x)\Codeword(x) = 0\enspace.
    \end{equation*}
    As $\ZCode{\SubsetTuple}{\Code}$ is a subcode of $\Code$, we have that all $w' \in \ZCode{\SubsetTuple}{\Code}$ satisfy
    \begin{equation*}
        \sum_{x \in \Subdomain\Union\SubsetTuple} \Constraint(x)w'(x) =  \sum_{x \in \Subdomain} \Constraint(x)w'(x) = 0\enspace,
    \end{equation*} 
    where we have used the fact that $\Codeword'(x) = 0$ for all $x\in \SubsetTuple$. Lastly, there must exist $x \in \Subdomain\setminus \SubsetTuple$ for which $\Constraint(x) \neq 0$, for if not, then as $\Constraint$ is non-zero, $\Constraint|_{\SubsetTuple}$ is a non-zero constraint on $\SubsetTuple$, contradicting the hypothesis. Hence, $\Subdomain$ is constrained with respect to $\ZCode{\SubsetTuple}{\Code}$.

    For the reverse implication, we will employ the contrapositive: if $\Subdomain \Union \SearchSet$ is unconstrained with respect to $\Code$, then $\Subdomain\setminus\SubsetTuple$ is unconstrained with respect to $\ZCode{\SubsetTuple}{\Code}$. If $\Subdomain \Union \SearchSet$ is unconstrained with respect to $\Code$, then, by the forward implication of \cref{claim:unconstrained-equiv}, for each $x \in \Subdomain\Union\SubsetTuple$ there exists $\Codeword_x \in \Code$ satisfying \begin{inparaenum}[(i)]
        \item $\Codeword_x(x) = 1$; and
        \item $\Codeword_x(y) = 0$ for all $y \in \Subdomain\Union\SearchSet\setminus\{x\}$.
    \end{inparaenum}
    In particular, for every $x \in \Subdomain\setminus \SubsetTuple$, $\Codeword_x(y) = 0$ for all $y \in S$, so we have that for all $x \in \Subdomain\setminus\SubsetTuple$, $\Codeword_x \in \ZCode{\SubsetTuple}{\Code}$. Then the reverse implication of \cref{claim:unconstrained-equiv} implies that $\Subdomain\setminus \SubsetTuple$ is unconstrained with respect to $\ZCode{\SubsetTuple}{\Code}$.
\end{proof}

\newcommand{\InterpolatorAlg}{\text{Interpolate}}
\subsection{Proof of \cref{claim:interpolating-sets}}
\label{sec:interpolating-sets}
\InterpolatingSets*
First we give the construction, as follows.
\begin{mdframed}[nobreak=true]    
    \begin{construction}
        An algorithm to compute the interpolating set for a linear code $\Code$ given a constraint detector $\CD$ for $\Code$. Receives as input a subdomain $\Subdomain \subseteq \Field^\NumVars$.
        \ConstrucSpacing
        $\InterpolatorAlg(\Subdomain)$:
        \begin{enumerate}[nolistsep]
            \item Compute $Z \eqdef \CD(\Subdomain)$.
            \item Perform Gaussian elimination on $Z$ to obtain a matrix $Z'$ which is in reduced row echelon form.
            \item Output the set $\Intset \eqdef \{x \in \Subdomain : x~\text{is a free variable in}~Z'\}.$
        \end{enumerate}
    \end{construction}
\end{mdframed}

The efficiency is clear by construction. To see that this construction is correct, we will show that $\Intset$ is unconstrained and each $x \in \Subdomain\setminus\Intset$ is determined by $\Intset$. We then show that the second property in \cref{claim:interpolating-sets} is a consequence of these two properties.

To see that $\Intset$ is unconstrained, note that as each element $x \in \Intset$ is a free variable, for each choice of values $\vec{u} \in\Field^{\Intset}$ for the variables in $\Intset$ there will be some setting $\vec{v} \in \Field^{\Subdomain\setminus\Intset}$ of the remaining leading variables in $\Subdomain\setminus\Intset$ such that $(\vec{u}, \vec{v}) \in \ker(Z')$.  By the correctness of $\CD$, $(\vec{u}, \vec{v}) \in \ker(Z')$ if and only if $(\vec{u}, \vec{v}) \in \Code|_{\Subdomain}$. We have just demonstrated that $\Code|_{\Intset} = \Field^{\Intset}$, in other words, $\Intset$ is unconstrained. 

For the second part, as each $x \in \Subdomain\setminus\Intset$ is a leading variable, the row of $Z'$ in which it is leading forms a constraint 
$\Constraint\colon\Field^{\Subdomain} \to \Field$ with $\Constraint(x) = 1$, and $\Constraint(x') = 0$ for all $x' \in \Subdomain\setminus\Intset$ (as $Z'$ is in reduced row echelon form). Thus the restriction of this constraint to $\Intset \Union \{x\}$ is a constraint
$\Constraint\colon \Field^{\Intset \Union\{x\}} \to \Field$ with $\Constraint(x) = 1$, so $x$ is determined by $\Intset$.

For the third part, suppose there exists a constraint $\Constraint\colon\Subdomain\Union\SubsetTuple$ with respect to $\Code$ such that $\Constraint(s) \neq 0$ for some $s \in \SubsetTuple$. Denote $\Subdomain^* \eqdef \{x \in \Subdomain\setminus\Intset : z(x) \neq 0\}$. We assume that $\Subdomain^*$ is non-empty, for if it were not, then the restriction of $\Constraint$ to $\Intset\Union\SubsetTuple$ would suffice. Then we can write 
\begin{equation}
\label{eqn:constraint-int}
    \sum_{x \in \Intset}\Constraint(x)\Codeword(x) + \sum_{x^* \in \Subdomain^*}\Constraint(x^*)\Codeword(x^*) + \sum_{x \in \SubsetTuple}\Constraint(x)\Codeword(x) = 0
\end{equation}
for all $\Codeword \in \Code$. As we showed in the second part, each $x^* \in \Subdomain^*$ is determined by $\Intset$. In particular, for each $x^* \in \Subdomain^*$, we can write 
\begin{equation*}
    \Codeword(x^*) = \sum_{x \in \Intset} a_{x, x^*} \Codeword(x),
\end{equation*}
for some scalars $a_{x,x^*} \in \Field$. Substituting this expression into \cref{eqn:constraint-int} we obtain
\begin{equation*}
    \sum_{x \in \Intset}\Constraint(x)\Codeword(x) + \sum_{x^* \in \Subdomain^*}\Constraint(x^*) \left(\sum_{x \in \Subdomain} a_{x, x^*} \Codeword(x)\right) + \sum_{x \in \SubsetTuple}\Constraint(x)\Codeword(x) = 0,
\end{equation*}
which simplifies to
\begin{equation*}
    \sum_{x \in \Intset} \left(
    \Constraint(x) + \sum_{x^* \in \Subdomain^*}a_{x, x^*}\Constraint(x^*)
    \right)\Codeword(x) + \sum_{x \in \SubsetTuple}\Constraint(x)\Codeword(x) = 0.
\end{equation*}
Thus $\Constraint$ induces a constraint $\Constraint'\colon\Intset\Union\SubsetTuple\to\Field$ defined as follows:
\begin{equation*}
    \Constraint(x) \eqdef \begin{cases}
        \Constraint(x) + \sum_{x^* \in \Subdomain^*}a_{x, x^*}\Constraint(x^*) &\text{if}~x \in \Intset,\\
        \Constraint(x) &\text{otherwise.}
    \end{cases}
\end{equation*}
Observe that $\Constraint(s) = \Constraint'(s)$ for all $s \in \SubsetTuple$.

\subsection{Proof of \cref{lem:new-multilin-generalisation}}
\label{sec:new-multilin-generalisation}
\MultilinGen*
\begin{proof}
     By \cref{fact:lagrange-basis}, we can write any polynomial $p(\vec{X}) \in \Polys{\Field}{\vec{\Degree}}{\NumVars}$ as
    \begin{equation*}
        p(\vec{X}) = \sum_{\Point\in\SubsetTuple} a_{\Point}\LagrangePoly{\SubsetTuple, \Point}(\vec{X}),
    \end{equation*}
    for some scalars $a_{\Point} \in \Field$. Then the constraints $p(x) = 0$ for all $x \in \Subdomain$ can be written as an $|\Subdomain|\times|\SubsetTuple|$ matrix $A$ of linear equations over $\Field$, relating the $|\SubsetTuple|$ many coefficients $a_{\Point}$. By applying Gaussian elimination we can uniquely convert $A$ to a matrix $B$ which is in reduced row echelon form. As $\mathrm{rank}(B) = \mathrm{rank}(A) \leq |\Subdomain|$, $B$ will have at most $|\Subdomain|$ many ``leading ones''. Set $G_{\Subdomain}$ to be the set of all $\Point\in\SubsetTuple$ such that there is no leading one in the $\Point$-th column of $B$, i.e., for each $g \in G_\Subdomain$, the coefficient $a_g$ is a free variable with respect to the linear system. Thus, in particular, for any $g \in G_\Subdomain$, we can set $a_{g} = 1$ and $a_{g'} = 0$ for all $g' \in G_\Subdomain \setminus \{g\}$ and obtain a solution to the linear system, yielding the result.
\end{proof}

\subsection{Proof of \cref{claim:zcode-implies-padded-message}}
\label{sec:zcode-implies-padded-message}
\ZCodePadded*
\begin{proof}
    The reverse implication is straightforward: if $(\BigElt, \ExtElt) \in \Code|_{\BigSubdomain \sqcup \Subdomain}$, then $(\BigElt|_{\SmallSubdomain}, \ExtElt) \in \Code|_{\SmallSubdomain\sqcup \Subdomain}$ by taking the restriction to $\SmallSubdomain\sqcup\Subdomain$.

    For the forward implication, let $(\SmallElt, \ExtElt) \in \Code|_{\SmallSubdomain\sqcup \Subdomain}$. Then, as $\SmallSubdomain \subseteq \BigSubdomain$, trivially there exists $\BigElt \in \Code|_{\BigSubdomain}$ such that $\BigElt|_{\SmallSubdomain} = \SmallElt$ and $(\BigElt, \ExtElt) \in \Code|_{\BigSubdomain \sqcup I}$. Let $\BigElt' \in \Code|_{\BigSubdomain}$ be an arbitrary message which agrees with $\BigElt$ when restricted to $\SmallSubdomain$, that is, $\BigElt'|_{\SmallSubdomain} = \SmallElt$. Then for some $\ExtElt' \in \Code|_{\Subdomain}$, $(\BigElt', \ExtElt') \in \Code|_{\BigSubdomain \sqcup I}$. Restricting this vector to $\SmallSubdomain\sqcup\Subdomain$, we have $(\SmallElt, \ExtElt') \in \Code|_{\SmallSubdomain \sqcup\Subdomain}$. By linearity of $\Code|_{\SmallSubdomain \sqcup I}$, we have that $(\SmallElt, \ExtElt) - (\SmallElt, \ExtElt') = (\ZVec_{\SmallSubdomain}, \ExtElt-\ExtElt') \in \Code|_{\SmallSubdomain \sqcup I}$, where $\ZVec_{\SmallSubdomain} \in \Code|_\SmallSubdomain$ is the zero codeword. If we can show that $(\vec{0}_{\BigSubdomain}, \ExtElt - \ExtElt') \in \Code|_{\BigSubdomain\sqcup\Subdomain}$, where $\ZVec_{\BigSubdomain} \in \Code|_{\BigSubdomain}$ is the zero codeword. we will be done. Clearly $(\SmallElt, \ExtElt') \in \ZCode{\SmallSubdomain}{\Code}|_{\Subdomain}$, and by \cref{cor:z-sigrm-not-many-constraints}, we have that $\ZCode{\SmallSubdomain}{\Code} = \ZCode{\BigSubdomain}{\Code}$, so $(\SmallElt, \ExtElt') \in \ZCode{\BigSubdomain}{\Code}|_{\Subdomain}$. In other words, $(\ZVec_{\BigSubdomain}, \ExtElt-\ExtElt') \in \Code|_{\BigSubdomain\sqcup\Subdomain}$. Hence, by linearity again, $(\BigElt', \ExtElt') + (\ZVec_{\BigSubdomain}, \ExtElt-\ExtElt') = (\BigElt', \ExtElt) \in \Code|_{\BigSubdomain\sqcup\Subdomain}$, as required.
\end{proof}

\clearpage

\printbibliography

@article{GoldwasserMR89,
  author    = {Goldwasser, Shafi and Micali, Silvio and Rackoff, Charles},
  title     = {The knowledge complexity of interactive proof systems},
  journal   = {SIAM Journal on Computing},
  volume    = {18},
  number    = {1},
  year      = {1989},
  pages     = {186--208},
  note      = {Preliminary version appeared in STOC~'85.},
}

@inproceedings{BenOrGKW88,
 author    = {{Ben-Or}, Michael and Goldwasser, Shafi and Kilian, Joe and Wigderson, Avi},
 title     = {Multi-prover interactive proofs: how to remove intractability assumptions},
 booktitle = {Proceedings of the 20th Annual ACM Symposium on Theory of Computing},
 series    = {STOC~'88},
 year      = {1988},
 pages     = {113--131},
}

@inproceedings{KalaiR08,
  author    = {Kalai, Yael and Raz, Ran},
  title     = {Interactive {PCP}},
  booktitle = {Proceedings of the 35th International Colloquium on Automata, Languages and Programming},
  series    = {ICALP~'08},
  year      = {2008},
  pages     = {536--547},
}

@article{Din07,
 author    = {Dinur, Irit},
 title     = {The {PCP} theorem by gap amplification},
 journal   = jacm,
 volume    = {54},
 number    = {3},
 year      = {2007},
 pages     = {12},
 publisher = {ACM},
 address   = {New York, NY, USA},
 }

@article{GoldreichS06,
 author    = {Goldreich, Oded and Sudan, Madhu},
 title     = {Locally testable codes and {PCP}s of almost-linear length},
 journal   = jacm,
 volume    = {53},
 issue     = {4},
 month     = {July},
 year      = {2006},
 pages     = {558--655},
 note      = {Preliminary version in STOC~'02.},
}

@inproceedings{KilianPT97,
  author    = {Kilian, Joe and Petrank, Erez and Tardos, G\'{a}bor},
  title     = {Probabilistically checkable proofs with zero knowledge},
  booktitle = {Proceedings of the 29th Annual ACM Symposium on Theory of Computing},
  series    = {STOC~'97},
  year      = {1997},
  pages     = {496--505},
}

@inproceedings{DworkFKNS92,
  author    = {Dwork, Cynthia and Feige, Uriel and Kilian, Joe and Naor, Moni and Safra, Shmuel},
  title     = {Low Communication 2-Prover Zero-Knowledge Proofs for {NP}},
  booktitle = {Proceedings of the 11th Annual International Cryptology Conference},
  series    = {CRYPTO~'92},
  year      = {1992},
  pages     = {215--227},
}

@article{GoldreichMW91,
  author    = {Goldreich, Oded and Micali, Silvio and Wigderson, Avi},
  title     = {Proofs that Yield Nothing But Their Validity or All Languages in {NP} Have Zero-Knowledge Proof Systems},
  journal   = jacm,
  year      = {1991},
  volume    = {38},
  number    = {3},
  pages     = {691--729},
  note      = {Preliminary version appeared in FOCS~'86.}
}

@book{Arora:2009:CCM:1540612,
 author     = {Arora, Sanjeev and Barak, Boaz},
 title      = {Computational Complexity: A Modern Approach},
 year       = {2009},
 isbn       = {0521424267, 9780521424264},
 edition    = {1st},
 publisher  = {Cambridge University Press},
 address    = {New York, NY, USA},
}

@article{LundFKN92,
  author    = {Lund, Carsten and Fortnow, Lance and Karloff, Howard J. and Nisan, Noam},
  title     = {Algebraic Methods for Interactive Proof Systems},
  journal   = jacm,
  volume    = {39},
  number    = {4},
  pages     = {859--868},
  year      = {1992},
}

@inproceedings{ALMSS92,
    author = {Sanjeev Arora and Carsten Lund and Rajeev Motwani and Madhu Sudan and Mario Szegedy},
    title = {Proof verification and hardness of approximation problems},
    booktitle = {Proceedings of the 33rd Annual Symposium on Foundations of Computer Science},
    year = {1992},
    pages = {14--23},
    publisher = {}
}

@article{AroraS03,
  author    = {Arora, Sanjeev and Sudan, Madhu},
  title     = {Improved Low-Degree Testing and its Applications},
  journal   = {Combinatorica},
  volume    = {23},
  number    = {3},
  pages     = {365--426},
  year      = {2003},
  note      = {Preliminary version appeared in STOC~'97.}
}

@inproceedings{IshaiWY16,
  author    = {Ishai, Yuval and Weiss, Mor and Yang, Guang},
  title     = {Making the Best of a Leaky Situation: Zero-Knowledge {PCP}s from Leakage-Resilient Circuits},
  booktitle = {Proceedings of the 13th Theory of Cryptography Conference},
  series    = {TCC~'16-A},
  pages     = {3--32},
  year      = {2016},
}

@inproceedings{IshaiW14,
  author    = {Ishai, Yuval and Weiss, Mor},
  title     = {Probabilistically Checkable Proofs of Proximity with Zero-Knowledge},
  booktitle = {Proceedings of the 11th Theory of Cryptography Conference},
  series    = {TCC~'14},
  pages     = {121--145},
  year      = {2014},
}

@inproceedings{IshaiMS12,
  author    = {Ishai, Yuval and Mahmoody, Mohammad and Sahai, Amit},
  title     = {On Efficient Zero-Knowledge {PCP}s},
  booktitle = {Proceedings of the 9th Theory of Cryptography Conference on Theory of Cryptography},
  series    = {TCC~'12},
  pages     = {151--168},
  year      = {2012},
}

@inproceedings{GurR15,
  author    = {Gur, Tom and Rothblum, Ron D.},
  title     = {Non-Interactive Proofs of Proximity},
  booktitle = {Proceedings of the 6th Innovations in Theoretical Computer Science Conference},
  series    = {ITCS~'15},
  pages     = {133--142},
  year      = {2015},
}

@article{AielloH91,
  author    = {Aiello, William and H{\aa}stad, Johan},
  title     = {Statistical Zero-Knowledge Languages can be Recognized in Two Rounds},
  journal   = jcss,
  volume    = {42},
  number    = {3},
  pages     = {327--345},
  year      = {1991},
  note      = {Preliminary version appeared in FOCS~'87.}
}

@inproceedings{Fortnow87,
  author    = {Lance Fortnow},
  title     = {The Complexity of Perfect Zero-Knowledge (Extended Abstract)},
  booktitle = {Proceedings of the 19th Annual ACM Symposium on Theory of Computing},
  series    = {STOC~'87},
  pages     = {204--209},
  year      = {1987},
}

@article{LapidotS95,
  author    = {Lapidot, Dror and Shamir, Adi},
  title     = {A One-Round, Two-Prover, Zero-Knowledge Protocol for {NP}},
  journal   = {Combinatorica},
  volume    = {15},
  number    = {2},
  pages     = {204--214},
  year      = {1995},
}

@inproceedings{BenSassonCFGRS17,
  author    = {{Ben-Sasson}, Eli and Chiesa, Alessandro and Forbes, Michael A. and Gabizon, Ariel and Riabzev, Michael and Spooner, Nicholas},
  title     = {Zero Knowledge Protocols from Succinct Constraint Detection},
  booktitle = {Proceedings of the 15th Theory of Cryptography Conference},
  series    = {TCC~'17},
  pages     = {172--206},
  year      = {2017},
  notes     = {Available at \url{https://eprint.iacr.org/2016/988}.}
}

@article{JumaKRS09,
  author    = {Juma, Ali and Kabanets, Valentine and Rackoff, Charles and Shpilka, Amir},
  title     = {The Black-Box Query Complexity of Polynomial Summation},
  journal   = {Computational Complexity},
  volume    = {18},
  number    = {1},
  pages     = {59--79},
  year      = {2009},
}

@article{AaronsonW09,
  author    = {Aaronson, Scott and Wigderson, Avi},
  title     = {Algebrization: {A} New Barrier in Complexity Theory},
  journal   = {ACM Transactions on Computation Theory},
  volume    = {1},
  number    = {1},
  pages     = {2:1--2:54},
  year      = {2009},
}

@misc{ChiesaFS17,
  author    = {Chiesa, Alessandro and Forbes, Michael A. and Spooner, Nicholas},
  title     = {A Zero Knowledge Sumcheck and its Applications},
  howpublished = {\CryptoEprint{} 2017/305},
  year      = {2017},
}

@article{ChiesaFGS22,
  title     = {Spatial Isolation Implies Zero Knowledge Even in a Quantum World},
  author    = {Chiesa, Alessandro and Forbes, Michael A and Gur, Tom and Spooner, Nicholas},
  journal   = {Journal of the ACM},
  volume    = {69},
  number    = {2},
  pages     = {1--44},
  year      = {2022},
}

@article{Alon99,
  author    = {Alon, Noga},
  title     = {Combinatorial Nullstellensatz},
  journal   = {Combinatorics, Probability and Computing},
  volume    = {8},
  year      = {1999},
  pages     = {7--29},
}

@article{Weiss22,
  author       = {Mor Weiss},
  title        = {Shielding Probabilistically Checkable Proofs: Zero-Knowledge PCPs
                  from Leakage Resilience},
  journal      = {Entropy},
  volume       = {24},
  number       = {7},
  pages        = {970},
  year         = {2022}
}

@article{HazayVW22,
  author       = {Carmit Hazay and
                  Muthuramakrishnan Venkitasubramaniam and
                  Mor Weiss},
  title        = {ZK-PCPs from Leakage-Resilient Secret Sharing},
  journal      = {J. Cryptol.},
  volume       = {35},
  number       = {4},
  pages        = {23},
  year         = {2022}
}

@inproceedings{GriloSY19,
  author       = {Alex Bredariol Grilo and
                  William Slofstra and
                  Henry Yuen},
  title        = {Perfect Zero Knowledge for Quantum Multiprover Interactive Proofs},
  booktitle    = {60th {IEEE} Annual Symposium on Foundations of Computer Science, {FOCS}
                  2019},
  pages        = {611--635},
  publisher    = {{IEEE} Computer Society},
  year         = {2019}
}

@article{BoulandCHTV20,
  author       = {Adam Bouland and
                  Lijie Chen and
                  Dhiraj Holden and
                  Justin Thaler and
                  Prashant Nalini Vasudevan},
  title        = {On the Power of Statistical Zero Knowledge},
  journal      = {{SIAM} J. Comput.},
  volume       = {49},
  number       = {4},
  year         = {2020}
}

@inproceedings{ChenCGOS23,
	title={Proof-carrying data from arithmetized random oracles},
	author={Chen, Megan and 
				Chiesa, Alessandro and 
				Gur, Tom and 
				O’Connor, Jack and 
				Spooner, Nicholas},
	booktitle={Annual International Conference on the Theory and Applications of Cryptographic Techniques},
	series    = {EUROCRYPT~'23},
	pages={379--404},
	year={2023}
}

@article{aharonov2013guest,
  title={Guest column: the quantum PCP conjecture},
  author={Aharonov, Dorit and Arad, Itai and Vidick, Thomas},
  journal={Acm sigact news},
  volume={44},
  number={2},
  pages={47--79},
  year={2013},
  publisher={ACM New York, NY, USA}
}

@phdthesis{vadhan1999study,
  title={A study of statistical zero-knowledge proofs},
  author={Vadhan, Salil Pravin},
  year={1999},
  school={Massachusetts Institute of Technology}
}

@article{thaler2022proofs,
  title={Proofs, arguments, and zero-knowledge},
  author={Thaler, Justin},
  journal={Foundations and Trends{\textregistered} in Privacy and Security},
  volume={4},
  number={2--4},
  pages={117--660},
  year={2022},
  publisher={Now Publishers, Inc.}
}

@string{jacm = "Journal of the ACM"}

@string{jcss = "Journal of Computer and System Sciences"}

@preamble{ "\newcommand{\CryptoEprint}{Cryptology ePrint Archive, Report}" }

@string{jacm = "JACM"}

@string{jcss = "J. Comput. Syst. Sci."}

@preamble{ "\newcommand{\CryptoEprint}{ePrint Report}" }

\end{document}